\renewcommand{\arraystretch}{1.2}
\newdimen\normalarrayskip              % skip between lines
\newdimen\minarrayskip                 % minimal skip between lines
\newif\ifold             \oldtrue            \def\new{\oldfalse}
\def\arraymode{\ifold\relax\else\displaystyle\fi} % mode of array entries
\def\eqnumphantom{\phantom{(\theequation)}}     % right phantom in eqnarray
\def\@arrayskip{\ifold\baselineskip\z@\lineskip\z@
     \else
     \baselineskip\minarrayskip\lineskip2\minarrayskip\fi}
\def\@arrayclassz{\ifcase \@lastchclass \@acolampacol \or
\@ampacol \or \or \or \@addamp \or
   \@acolampacol \or \@firstampfalse \@acol \fi
\edef\@preamble{\@preamble
  \ifcase \@chnum
     \hfil$\relax\arraymode\@sharp$\hfil
     \or $\relax\arraymode\@sharp$\hfil
     \or \hfil$\relax\arraymode\@sharp$\fi}}
\def\@array[#1]#2{\setbox\@arstrutbox=\hbox{\vrule
     height\arraystretch \ht\strutbox
     depth\arraystretch \dp\strutbox
     width\z@}\@mkpream{#2}\edef\@preamble{\halign
\noexpand\@halignto
\bgroup \tabskip\z@ \@arstrut \@preamble \tabskip\z@ \cr}%
\let\@startpbox\@@startpbox \let\@endpbox\@@endpbox
  \if #1t\vtop \else \if#1b\vbox \else \vcenter \fi\fi
  \bgroup \let\par\relax
  \let\@sharp##\let\protect\relax
  \@arrayskip\@preamble}
\def\eqnarray{\stepcounter{equation}%
              \let\@currentlabel=\theequation
              \global\@eqnswtrue
              \global\@eqcnt\z@
              \tabskip\@centering
              \let\\=\@eqncr
%              $$%
 \halign to \displaywidth\bgroup
    \eqnumphantom\@eqnsel\hskip\@centering
    $\displaystyle \tabskip\z@ {##}$%
    \global\@eqcnt\@ne \hskip 2\arraycolsep
         %\hfil                                  %my correction, Stas
         $\displaystyle\arraymode{##}$\hfil
    \global\@eqcnt\tw@ \hskip 2\arraycolsep
         $\displaystyle\tabskip\z@{##}$\hfil
         \tabskip\@centering
    &{##}\tabskip\z@\cr}
\newcounter{app}
\def\app{\setcounter{equation}{0}
\def\theequation{A\Roman{app}.\arabic{equation}}\par
   \addvspace{4ex}
   \@afterindentfalse
  \secdef\@app\@dapp}
\newcommand\@app{\@startsection {app}{1}{0ex}%
                                   {-3.5ex \@plus -1ex \@minus -.2ex}%
                                   {2.3ex \@plus.2ex}%
                                   {\normalfont\Large\bf}}
\def\@dapp#1{%
{\parindent \z@ \raggedright  \bf #1}\par\nobreak}
\def\l@app#1#2{\ifnum \c@tocdepth >\z@
    \addpenalty\@secpenalty
    \addvspace{1.0em \@plus\p@}%
    \setlength\@tempdima{8.5em}%
    \begingroup
      \parindent \z@ \rightskip \@pnumwidth
      \parfillskip -\@pnumwidth
      \leavevmode \bfseries
      \advance\leftskip\@tempdima
      \hskip -\leftskip
      #1\nobreak\hfil \nobreak\hb@xt@\@pnumwidth{\hss #2}\par
    \endgroup\fi}
\newcounter{sapp}[app]
\def\sapp{\def\theequation{A\arabic{app}.\arabic{equation}}\par
   \@afterindentfalse
  \secdef\@sapp\@dsapp}
\newcommand\@sapp{\@startsection{sapp}{2}{\z@}%
                                     {-3.25ex\@plus -1ex \@minus -.2ex}%
                                     {1.5ex \@plus .2ex}%
                                     {\normalfont\large\bfseries}}
\def\@dsapp#1{%
{\parindent \z@ \raggedright  \bf #1}\par\nobreak}
\newcommand{\l@sapp}{\@dottedtocline{2}{1.5em}{3em}}
\def\draft{\oddsidemargin -.5truein
        \def\@oddfoot{\sl preliminary draft \hfil
        \rm\thepage\hfil\sl\today\quad\militarytime}
        \let\@evenfoot\@oddfoot \overfullrule 3pt
        \let\label=\draftlabel
        \let\marginnote=\draftmarginnote
   \def\@eqnnum{(\theequation)\rlap{\kern\marginparsep\tt\@eqnlabel}%
\global\let\@eqnlabel\@vacuum}  }
\def\be{\begin{eqnarray}}
\def\ee{\end{eqnarray}}
\def\nn{\nonumber}
\def\p{\partial}
\def\beq{\begin{equation}}
\def\eeq{\end{equation}}
\def\ba{\beq\new\begin{array}{c}}
\def\ea{\end{array}\eeq}
\def\be{\ba}
\def\ee{\ea}
\def\Tr{{\rm Tr}\,}
\def\dim{{\rm dim}\,}
\def\res{{\rm res}\,}
\def\diag{{\rm diag}\,}
\def\sppan{{\rm span}\,}
\newfont{\Bbbb}{msbm7 scaled 1\@ptsize00}
\newcommand{\z}{\raise-1pt\hbox{$\mbox{\Bbbb Z}$}}
\newcommand{\cc}{\raise-1pt\hbox{$\mbox{\Bbbb C}$}}
\newcommand{\rr}{\raise-1pt\hbox{$\mbox{\Bbbb R}$}}
\def\Gr{{\rm Gr}\,}
\newcommand{\<}{\left <}
\renewcommand{\>}{\right >}
\def\lvac{\left <0\right |}
\def\rvac{\left |0\right >}
\newcommand{\id}{{\mathrm{id}}}
\newfont{\alef}{msbm10 at 11pt}
\newfont {\goth}{eufm10 at 11pt}
\def\mathbb#1{\hbox{{\alef #1}}}
\DeclareMathOperator{\GL}{GL}
\DeclareMathOperator{\sll}{sl}
\DeclareMathOperator{\gl}{gl}
\DeclareMathOperator{\Ai}{Ai}
\DeclareMathOperator{\Bi}{Bi}
\DeclareMathOperator{\He}{He}
\let\@@savethanks\thanks
\def\thanks#1{\gdef\thefootnote{\alph{footnote}}\@@savethanks{#1}}
\newtheorem{theorem}{Theorem}%[section]
\newtheorem{lemma}{Lemma}[section]
\newtheorem{proposition}[lemma]{Proposition}
\newtheorem{corollary}[lemma]{Corollary}
\newtheorem{remark}{Remark}[section]
\newtheorem{conjecture}{Conjecture}[section]
\newtheorem*{theorem*}{Theorem}
\newtheorem{definition}{Definition}
\numberwithin{equation}{section}
\g@addto@macro \normalsize {%
 \setlength\abovedisplayskip{14pt plus 3pt minus 3pt}%
 \setlength\belowdisplayskip{14pt plus 3pt minus 3pt}%
  \setlength\abovedisplayshortskip{11pt plus 3pt minus 3pt}%
 \setlength\belowdisplayshortskip{11pt plus 3pt minus 3pt}%
}
\title{
\bigskip
{\bf KP integrability of triple Hodge integrals. II.\\
Generalized Kontsevich matrix model } \vspace{.5cm}}
\author{{\bf Alexander Alexandrov}\thanks{E-mail:  {\tt alexandrovsash at gmail.com}}
\date{ } \\
{\small {\it Center for Geometry and Physics, Institute for Basic Science (IBS), Pohang 37673, Korea}}
%{\small {\it ITEP, Moscow, Russia}}\\
}
\begin{document}

\setcounter{footnote}{0}

\setcounter{tocdepth}{3}

\maketitle

\vspace{-8.0cm}

\begin{center}
%\hfill ITEP/TH-18/16
\end{center}

\vspace{6.5cm}

\begin{center}
\today
\end{center}

\begin{abstract} 
In this paper we introduce a new family of the KP tau-functions. This family can be described by a deformation of the generalized 
Kontsevich matrix model. We prove that the simplest representative of this family describes a generating function of the cubic Hodge integrals satisfying the Calabi-Yau condition, and claim that the whole family describes its generalization for the higher spin cases. To investigate this family we construct a new description of the Sato Grassmannian in terms of a canonical pair of the Kac-Schwarz operators. 
\end{abstract}

\bigskip

%{Keywords: enumerative geometry, matrix models, tau-functions, KP hierarchy, Virasoro constraints, cut-and-join operator}\\

\bigskip

{\small \bf MSC 2020 Primary: 37K10, 14N35, 81R10, 14N10; Secondary:  81T32.}

\begin{comment}
(	
37K10 Completely integrable infinite-dimensional Hamiltonian and Lagrangian systems, integration methods, integrability tests, integrable hierarchies (KdV, KP, Toda, etc.)
14N35 Gromov-Witten invariants, quantum cohomology, Gopakumar-Vafa invariants, Donaldson-Thomas invariants (algebro-geometric aspects) 
81R10 Infinite-dimensional groups and algebras motivated by physics, including Virasoro, Kac-Moody, W-algebras
and other current algebras and their representations 
81R12 Groups and algebras in quantum theory and relations with integrable systems
14H70 Relationships between algebraic curves and integrable systems
81T32 Matrix models and tensor models for quantum field theory
05A15 Exact enumeration problems, generating functions
14N10 Enumerative problems (combinatorial problems) in algebraic geometry
15B52 Random matrices (algebraic aspects) 
81T40 Two-dimensional field theories, conformal field theories, etc. in quantum mechanics
\end{comment}

\newpage

\tableofcontents

\def\thefootnote{\arabic{footnote}}
\section{Introduction}
%\addcontentsline{toc}{section}{Introduction}
%\def\theequation{\arabic{equation}}
\setcounter{equation}{0}

Methods of integrable systems play an important role in the modern mathematical physics and enumerative geometry.  Although numerous applications have proven their universality, new models often desire new, underdeveloped elements of the general theory of integrable systems. Hence, simultaneously with investigation of new applications it is necessary to develop new methods of integrable systems. Moreover, integrability of generating functions is often closely related to other universal structures which include matrix models, Virasoro constraints, and quantum spectral curves. So it is necessary to better understand the role of these elements in the general scheme of  integrability.

The main goal of this paper is to investigate a new, infinite-dimensional family of tau-functions of the Kadomtsev-Petviashvili (KP) hierarchy. This family  can be described by a deformation of the generalized Kontsevich model (GKM) with polynomial potential, so we call it {\em deformed generalized Kontsevich model}. The deformed potential is an infinite formal series of certain type, parametrized by the auxiliary parameters. The choice of deformation is motivated by a connection between its simplest representative  and the generating function of the cubic Hodge integrals satisfying the Calabi-Yau condition, which we prove in this paper.

The deformed GKM shares a lot of common properties with the polynomial GKM.  However, the standard methods of GKM suitable for the  polynomial or antipolynomial potential, are not always convenient for the cases, when the potential is a Taylor or Laurent series, convergent or formal. Hence, one has to develop new methods, suitable for the investigation of tau-functions of GKM with non-polynomials potential. While the matrix integral description for the deformed model is more sophisticated then one for the pure GKM, we can apply the methods of the KP hierarchy, in particular those associated with the Kac-Schwarz operators, to investigate this deformed GKM. 

%%%%%%%%%%%%%%%%%%%%%%%%%%%

\subsection{Sato Grassmannian and Kac-Schwarz operators}
It is well-known that the tau-functions of the KP hierarchy can be described by the elements of $\GL(\infty)$ group or its central extension. However, this is not one-to-one correspondence at all -- there are many group elements corresponding to a tau-function. According to Sato's theorem, among all these group elements there exists a canonical one, the so-called Sato's group element. This element is particularly convenient for the description of the Sato Grassmannian -- the space of solutions of the KP hierarchy. Namely, for any point of the Sato Grassmannian, ${\mathcal W}\in \Gr_+^{(0)}$, the Sato group element is of the form
\be
{\mathtt G}_{\mathcal W}\in 1+{\mathcal D}_-,
\ee
where ${\mathcal D}_- = z^{-1}{\mathbb C}[[z^{-1}]][[\frac{\p}{\p z}]]$. Existence of the Sato group element allows us to introduce a canonical pair of the Kac-Schwarz (KS) operators. Let ${\mathcal D}={\mathbb C}((z^{-1}))[[\frac{\p}{\p z}]]$.
\begin{definition}\label{defPQ}
For any ${\mathcal W}\in  \Gr^{(0)}_+$, we define a {\em canonical pair} of the KS operators associated to~${\mathcal W}$,
\be
\rho({\mathcal W}): = \left({\mathtt P}_{\mathcal W},{\mathtt Q}_{\mathcal W}\right) \in {\mathcal D}^{2},
\ee
where
\begin{equation}
\begin{split}
\label{PQdef}
{\mathtt P}_{\mathcal W} & :=  {\mathtt G}_{\mathcal W} \,\frac{\p}{\p z}\, {\mathtt G}_{\mathcal W}^{-1} ,\\
{\mathtt Q}_{\mathcal W} & :=   {\mathtt G}_{\mathcal W}\, z\,  {\mathtt G}_{\mathcal W}^{-1}. 
\end{split}
\end{equation}
\end{definition}
This pair of operators provides a complete description of the Sato Grassmannian.  Consider the space 
\be\label{GrD}
\Gr_{\mathcal D}:= \left\{\left({\mathtt P},{\mathtt Q}\right)\in {\mathcal D}^2 \Big| \left[{\mathtt P},{\mathtt Q}\right]=1, {\mathtt P}-\frac{\p}{\p z} \in z^{-1}  {\mathcal D}_-,{\mathtt Q}-z\in {\mathcal D}_-\right\}.
\ee
Then one of the main results of this paper is the following:
\begin{theorem}\label{bijecl}
The map $\rho$ describes  a bijection between $\Gr^{(0)}_+$ and $\Gr_{\mathcal D}$
\be
\Gr^{(0)}_+ \ni {\mathcal W} \mapsto 
\rho({\mathcal W})=\left({\mathtt P}_{\mathcal W},{\mathtt Q}_{\mathcal W}\right) \in \Gr_{\mathcal D}.
\ee
\end{theorem}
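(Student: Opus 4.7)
The plan is to verify the three standard parts: well-definedness, injectivity, and surjectivity of $\rho$; the last is the main work.

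\emph{Well-definedness and injectivity.} For ${\mathcal W}\in\Gr_+^{(0)}$, write ${\mathtt G}_{\mathcal W}=1+g$ with $g\in\mathcal{D}_-$; a geometric series gives ${\mathtt G}_{\mathcal W}^{-1}\in 1+\mathcal{D}_-$, using closure of $\mathcal{D}_-$ under the associative product. The identities
\[
{\mathtt Q}_{\mathcal W}-z=[g,z]\,{\mathtt G}_{\mathcal W}^{-1},\qquad {\mathtt P}_{\mathcal W}-\partial_z=[g,\partial_z]\,{\mathtt G}_{\mathcal W}^{-1},
\]
together with the elementary containments $[\mathcal{D}_-,z]\subseteq\mathcal{D}_-$ and $[\mathcal{D}_-,\partial_z]\subseteq z^{-1}\mathcal{D}_-$, give the required asymptotics; the relation $[{\mathtt P}_{\mathcal W},{\mathtt Q}_{\mathcal W}]=1$ is immediate from $[\partial_z,z]=1$ by conjugation. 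For injectivity, suppose $\rho({\mathcal W}_1)=\rho({\mathcal W}_2)$; then $H:={\mathtt G}_{{\mathcal W}_2}{\mathtt G}_{{\mathcal W}_1}^{-1}\in 1+\mathcal{D}_-$ commutes with both $z$ and $\partial_z$, and expanding $H=1+\sum_{a\ge 1,\,b\ge 0}h_{a,b}z^{-a}\partial_z^b$ and successively imposing the two commutation conditions forces all $h_{a,b}=0$. Hence ${\mathtt G}_{{\mathcal W}_1}={\mathtt G}_{{\mathcal W}_2}$, and ${\mathcal W}_1={\mathcal W}_2$ by uniqueness of the Sato element.

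\emph{Surjectivity.} Given $({\mathtt P},{\mathtt Q})\in\Gr_{\mathcal{D}}$, we construct ${\mathtt G}\in 1+\mathcal{D}_-$ satisfying the dressing relations ${\mathtt G}\,z={\mathtt Q}\,{\mathtt G}$ and ${\mathtt G}\,\partial_z={\mathtt P}\,{\mathtt G}$. Endow $\mathcal{D}_-$ with the weight grading in which $z^{-a}\partial_z^b$ has weight $a+b$; this grading is preserved by the associative product. Setting $g={\mathtt G}-1$, $q={\mathtt Q}-z$, $p={\mathtt P}-\partial_z$, the two equations take the equivalent forms
\[
[g,z]=q+qg,\qquad [g,\partial_z]=p+pg.
\]
The commutator $[\,\cdot\,,z]$ sends the weight-$(n{+}1)$ part of $\mathcal{D}_-$ onto the weight-$n$ part with one-dimensional kernel spanned by $z^{-(n+1)}$; hence the first equation recursively determines $g$ up to one free coefficient per weight, namely the coefficient $\alpha_n$ of $z^{-n}$ in the weight-$n$ piece of $g$. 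These residual coefficients are in turn uniquely fixed by the pure $z^{-n}$ components of the second equation, using $[z^{-(n-1)},\partial_z]=(n-1)z^{-n}$. Once ${\mathtt G}$ has been built, setting ${\mathcal W}:={\mathtt G}\cdot\mathbb{C}[z]$ produces a point of $\Gr_+^{(0)}$ whose Sato element is ${\mathtt G}$ and which satisfies $\rho({\mathcal W})=({\mathtt P},{\mathtt Q})$.

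The main obstacle lies in the consistency of the two recursions: the remaining components of the second equation (beyond the pure $z^{-n}$ pieces used to pin down the $\alpha_n$) must be automatically satisfied by the $g$ produced from the first equation. This is precisely where the canonical commutation $[{\mathtt P},{\mathtt Q}]=1$, equivalently $[\partial_z,q]+[p,z]+[p,q]=0$, is essential: expanded weight by weight, this compatibility identity on $(p,q)$ generates exactly the missing relations that force $[g,\partial_z]-p-pg=0$ component by component. The delicate part of the proof is the bookkeeping that shows this compatibility closes the interleaved recursion cleanly at every weight.
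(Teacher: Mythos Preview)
Your argument is correct in outline but follows a genuinely different route from the paper. You reconstruct the Sato element ${\mathtt G}$ directly by solving the two dressing equations ${\mathtt G}z={\mathtt Q}{\mathtt G}$ and ${\mathtt G}\partial_z={\mathtt P}{\mathtt G}$ through a weight-graded recursion. The paper instead builds the point ${\mathcal W}$ first: it solves ${\mathtt P}\cdot\Psi=0$ for the unique monic $\Psi\in 1+H_-$ (the quantum spectral curve lemma), sets ${\mathcal W}$ equal to the span of $\Psi,{\mathtt Q}\Psi,{\mathtt Q}^2\Psi,\dots$, checks that $({\mathtt P},{\mathtt Q})$ stabilize this ${\mathcal W}$ (here $[{\mathtt P},{\mathtt Q}]=1$ is used only to see ${\mathtt P}\cdot{\mathtt Q}^k\Psi=k\,{\mathtt Q}^{k-1}\Psi$), and then invokes uniqueness of the canonical KS pair in $\Gr_{\mathcal D}\cap{\mathcal A}_{\mathcal W}^2$ to conclude $\rho({\mathcal W})=({\mathtt P},{\mathtt Q})$. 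This completely avoids the interleaved consistency problem you correctly flag as the delicate step.

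In your write-up that step is only asserted, not carried out. A clean way to close it, staying within your framework: once some ${\mathtt G}_0\in 1+{\mathcal D}_-$ solves the ${\mathtt Q}$-equation, set $\tilde{\mathtt P}:={\mathtt G}_0\,\partial_z\,{\mathtt G}_0^{-1}\in\partial_z+z^{-1}{\mathcal D}_-$; then $[{\mathtt P}-\tilde{\mathtt P},{\mathtt Q}]=0$, and since the centralizer of ${\mathtt Q}={\mathtt G}_0 z{\mathtt G}_0^{-1}$ in ${\mathcal D}$ is ${\mathtt G}_0\,{\mathbb C}((z^{-1}))\,{\mathtt G}_0^{-1}$, one gets ${\mathtt P}-\tilde{\mathtt P}=f({\mathtt Q})$ for some $f\in z^{-2}{\mathbb C}[[z^{-1}]]$. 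Replacing ${\mathtt G}_0$ by ${\mathtt G}_0\exp\bigl(\int f\bigr)\in 1+{\mathcal D}_-$ preserves the ${\mathtt Q}$-equation (multiplication by a function of $z$) while forcing the ${\mathtt P}$-equation, since $(\log h)'=f$. This packages your ``bookkeeping'' into a single centralizer computation. The paper's route is more conceptual and reuses the wave-function and KS machinery developed there; your direct construction is more self-contained and yields explicit control of ${\mathtt G}$ itself.
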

Operator ${\mathtt P}_{\mathcal W}$, by construction, always annihilates the wave function
\be
{\mathtt P}_{\mathcal W} \cdot \Psi=0,
\ee 
and it can be considered as the {\it quantum spectral curve} operator. Operator ${\mathtt Q}_{\mathcal W}$ is the raising operator, which generates the {\em distinguished basis} for a point of the Sato Grassmannian 
\be
{\mathcal W}=\sppan_{\cc}\{\Psi,{\mathtt Q}_{\mathcal W}\cdot \Psi,{\mathtt Q}_{\mathcal W}^2\cdot \Psi,\dots\}.
\ee

%%%%%%%%%%%%%%%%%%%%%%%%%%%%%%%%%%%%%%%%%

\subsection{Generalized Kontsevich model and its deformation}

GKMs were intensively investigated in the early nineties, in particular because of their relation to topological strings and intersection theory on the moduli spaces \cite{Kh,IZ,KMMMZ,KMMM,MMS,KM,AM,LG,F,KS,Sch,EY,Takasaki,W2}.  This family generalizes the Kontsevich model \cite{Konts}, which describes the intersection theory on the moduli spaces of Riemann surfaces with punctures, and helps to prove Witten's conjecture \cite{W}. Generalized Kontsevich model is defined by the asymptotic expansion of the $N\times N$ matrix integral
\be\label{mati}
Z_U(\Lambda):={\mathcal C}^{-1}\int[ d\Phi] \exp\left(-\frac{1}{\hbar}\Tr(V(\Phi)-\Phi V'(\Lambda))\right).
\ee
The model is labeled by the second derivative of the potential $V$,
\be
U(z):=\frac{1}{\hbar} V''(z).
\ee
It is easy to show that for any formal series $U(z)\in {\mathbb C}((z^{-1}))$ with non-trivial positive part, the asymptotic expansion of the matrix integral (\ref{mati}) defines a tau-function of the KP hierarchy in the Miwa parametrization. For this tau-function  we found explicitly the canonical pair of the KS operators (Lemma \ref{Qform}): 
\begin{equation}
\begin{split}
{\mathtt P}_U&=\frac{1}{\hbar}\left(V'({\mathtt Q}_U)-V'(z)\right),\\
{\mathtt Q}_U&=z+\frac{1}{U(z)}\frac{\p}{\p z} -\frac{U'(z)}{2 U(z)^2}.
\end{split}
\end{equation}
Operator ${\mathtt Q}_U$ coincides with the well-known KS operator for the GKM, and, for the monomial potential was described by Kac and Schwarz \cite{KS}. For polynomial $U$ operator ${\mathtt P}_U$ is closely related to another well known KS operator, 
\be
{\mathtt X}_U=V'(z),
\ee
namely ${\mathtt P}_U=\hbar^{-1}\left(V'({\mathtt Q}_U)-{\mathtt X}_U\right)$.

To extend the realm of GKM, for any polynomial potential 
\be
U_0=\frac{1}{\hbar}(z^{n}+b_{n-1}z^{n-1}+\dots+b_1 z)
\ee
we introduce its deformation with small parameters $w_1,\dots,w_{n+1}$:
\begin{definition}\label{DefPot}
The {\em deformed potential} is given by
\be\label{gendefor}
U=\frac{1}{\hbar}\frac{z^{n}+b_{n-1}z^{n-1}+\dots+b_1 z}{(1-w_1z)(1-w_2z)\dots(1-w_{n+1}z)} .
\ee
\end{definition}
Deformation of the matrix integral associated to the potential (\ref{gendefor}) is rather non-trivial. In particular, for the deformed GKM the matrix integral (\ref{mati}) describes a tau-function only as $N \rightarrow \infty$. However, using Sato Grassmannian description and canonical KS operators (\ref{PQdef}) one can investigate the deformed GKM and show that it always describes a tau-function of the KP hierarchy. 

The deformed GKM can be related to a pure one. Let $f(z)$ be a change of local parameter $z$ such that 
\be
V_0'(f(z))=V'(z),
\ee
with $f(z)\in z+ {\mathbb C}(z)[[{\bf w}]]$ and $\left.f(z)\right|_{{\bf w}={\bf 0}}=z$. It can be represented as
\be\label{ffun}
f(z)=e^{-\sum_{k \in {\z}} a_k z^{k+1}\frac{\p}{\p z} } \cdot z,
\ee
for some $a_k$. Let us also introduce the coefficients
\be\label{uk}
u_k=[z^k] \int^z (f(\eta)-\eta)U(\eta) d\eta.
\ee  
Consider the element of the Heisenberg-Virasoro subgroup of symmetry group of KP hierarchy
\be\label{defconjop}
\widehat{{ G}} = e^{\sum_{k \in {\z}}  u_k \widehat{J}_k}  e^{\sum_{k \in {\z}} a_k \widehat{L}_k }.
\ee
One of the main results of this paper describes a relation between pure and deformed GKMs.
\begin{theorem}\label{taufr}
\be
\tau_U= {C}_U\,  \widehat{{ G}} \cdot \tau_{U_0},
\ee
where ${C}_U$ does not depend on $\bf t$.
\end{theorem}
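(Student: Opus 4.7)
The plan is to reduce the claimed identity, via Theorem~\ref{bijecl}, to a comparison of canonical KS pairs. Since a tau-function is determined by its Sato Grassmannian point up to an overall constant (reflecting the central extension of $\GL(\infty)$), if the canonical pair of $\widehat{G}\cdot\tau_{U_0}$ agrees with the pair $({\mathtt P}_U,{\mathtt Q}_U)$ furnished by Lemma~\ref{Qform} for $\tau_U$, then the two tau-functions must coincide up to a ${\bf t}$-independent factor, and this factor is precisely $C_U$.

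First I would make the classical action of $\widehat{G}$ on $\Gr^{(0)}_+$ explicit. The generators $\widehat{L}_k$ act on wave functions as the vector fields $-z^{k+1}\partial_z$, dressed by a half-density factor, so $\exp\bigl(\sum a_k\widehat{L}_k\bigr)$ implements the change of local coordinate $z\mapsto f(z)$ from (\ref{ffun}), twisted by $\sqrt{f'(z)}$. The generators $\widehat{J}_k$ act as multiplication by $z^k$, so $\exp\bigl(\sum u_k\widehat{J}_k\bigr)$ acts as multiplication by $\exp\bigl(\sum u_k z^k\bigr)$. Hence the total classical action on a wave function $\Psi(z)\in{\mathcal W}_0$ is
\begin{equation}
\Psi(z)\longmapsto \exp\Bigl(\sum_{k\in\z} u_k z^k\Bigr)\sqrt{f'(z)}\;\Psi(f(z)),
\end{equation}
and the same operator conjugates $({\mathtt P}_{U_0},{\mathtt Q}_{U_0})$ into the canonical KS pair of the image Grassmannian point.

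Next I would push this conjugation through $({\mathtt P}_{U_0},{\mathtt Q}_{U_0})$. Differentiating $V_0'(f(z))=V'(z)$ yields $U_0(f(z))f'(z)=U(z)$, so the substitution $z\mapsto f(z)$ converts $V_0'(z)$ into $V'(z)$ and $U_0(z)^{-1}\partial_z$ into $U(z)^{-1}\partial_z$, producing the leading shape of $({\mathtt P}_U,{\mathtt Q}_U)$ from Lemma~\ref{Qform}. Two further things must be verified: (i) the multiplicative twist by $\exp\bigl(\sum u_k z^k\bigr)\sqrt{f'(z)}$ removes every non-negative power of $z$ from the resulting operators, so that the new pair truly lies in $\Gr_{\mathcal D}$; and (ii) the same twist combines the $-\frac{U_0'(f)}{2 U_0(f)^2}$ summand and the Schwarzian-type contributions from $\sqrt{f'(z)}$ into exactly the subleading correction $-\frac{U'(z)}{2 U(z)^2}$ demanded by Lemma~\ref{Qform}.

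The main obstacle is this normalization step, and it is where the specific choice of $u_k$ in (\ref{uk}) is used: conjugation by $\exp\bigl(\sum u_k z^k\bigr)$ shifts ${\mathtt P}$ by $\bigl(\sum u_k z^k\bigr)'$, which by (\ref{uk}) equals the positive-power part of $(f(z)-z)U(z)$; integrating by parts, this is precisely the positive part of $\hbar^{-1}\bigl(V'(f(z))-V'(z)\bigr)$ needed to cancel what the substitution introduces into ${\mathtt P}$. Once both canonical pairs are shown to coincide, Theorem~\ref{bijecl} forces the Sato Grassmannian points of $\widehat{G}\cdot\tau_{U_0}$ and $\tau_U$ to agree, yielding $\tau_U=C_U\,\widehat{G}\cdot\tau_{U_0}$ with $C_U$ independent of $\bf t$, as required.
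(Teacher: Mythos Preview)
Your strategy---compare KS operators after conjugation by the Heisenberg--Virasoro element, then invoke a uniqueness statement---is exactly the paper's strategy. But you aim at the wrong target pair, and this makes your step~(i) fail.

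You plan to show that the conjugated pair lands in $\Gr_{\mathcal D}$ and coincides with the canonical $({\mathtt P}_U,{\mathtt Q}_U)$ ``from Lemma~\ref{Qform}.'' For the \emph{deformed} potential $U$ of (\ref{gendefor}), however, the formulas of Lemma~\ref{Qform} do \emph{not} give the canonical pair: the operator $z+\frac{1}{U}\partial_z-\frac{U'}{2U^2}$ is precisely what the paper calls $\tilde{\mathtt Q}_U$, and the Remark after (\ref{atrec}) stresses that in general $\tilde{\mathtt Q}_U-z\notin{\mathcal D}_-$. Concretely, from (\ref{1Uexp}) one has $\tilde{\mathtt Q}_U=z+(c_1 z+c_2)\partial_z+\tfrac{c_1}{2}+\cdots$ with $c_1=\hbar\prod_j(-w_j)$, so whenever all $w_j\neq 0$ the term $c_1 z\partial_z$ is of degree $0$ and the pair you produce is \emph{not} in $\Gr_{\mathcal D}$. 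Hence Theorem~\ref{bijecl} cannot be invoked, and your verification~(i) cannot succeed as stated. The true canonical pair for the deformed model is the much more complicated (\ref{Pgeg})--(\ref{Qgeg}), involving an unknown series $\tilde R'$; matching that directly is not feasible.

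The paper sidesteps this by conjugating the pair $({\mathtt X}_{U_0},{\mathtt Q}_{U_0})$ instead of the canonical pair. A direct computation (Lemma~\ref{deformc}) gives $\tilde{\mathtt G}\,{\mathtt Q}_{U_0}\,\tilde{\mathtt G}^{-1}=\tilde{\mathtt Q}_U$ and $\tilde{\mathtt G}\,{\mathtt X}_{U_0}\,\tilde{\mathtt G}^{-1}={\mathtt X}_U$; these are not canonical, but Lemma~\ref{lem_unib} (for polynomial $U_0$) and Lemma~\ref{deforlemma} (for deformed $U$) show that each pair uniquely determines its Grassmannian point. That replaces your appeal to Theorem~\ref{bijecl}, and then (\ref{1Vir}) yields the tau-function relation with the undetermined constant $C_U$. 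Your argument is easily repaired by switching to this pair and citing those two lemmas instead.
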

Using KS description, we derive a family of the Heisenberg-Virasoro constraints for the deformed GKM (Proposition \ref{lem_defvir}). Moreover, for the simplest deformation of the monomial potential, Proposition \ref{Theor1} provides an explicit form of the modified quantum spectral curve:
\be\label{pochqq}
e^{w^{n+1}\hat{x} +\sum_{k=1}^n\frac{w^k}{k}\hat{y}^k}\left(1-w\hat{y}-\frac{\chi w}{2}\right)\cdot \tilde{\Psi}=\tilde{\Psi}.
\ee

%%%%%%%%%%%%%%%%%%%%%%%%%%%%%%%%%%%%%

\subsection{Hodge integrals}\label{S_int_H}

The main motivation to consider the deformed GKM is its relation to the generating functions of Hodge integrals.
Mumford's profound  result \cite{M} allows us to describe Hodge integrals on the moduli spaces of Riemann surfaces in terms of the Kontsevich-Witten tau-function. This relation was extended to a more general context of the Gromov-Witten theory by Faber and Pandharipande \cite{FP}, and can be naturally described by an element of the Givental group \cite{Giv1,Giv2}.

Let $\overline{\cal M}_{g,n}$ be the Deligne-Mumford compactification of the moduli space of stable complex curves of genus $g$ with $n$ distinct marked points. We consider {\em Hodge integrals} 
\be\label{Corr}
\left<\lambda_{j_1}\lambda_{j_2}\cdots  \lambda_{j_k} \tau_{m_1}\tau_{m_2}\ldots\tau_{m_n} \right>_g=\int_{\overline{\cal M}_{g,n}}\lambda_{j_1}\lambda_{j_2}\cdots  \lambda_{j_k}\psi_1^{m_1}\psi_2^{m_2}\cdots\psi_n^{m_n} \in {\mathbb Q},
\ee
where  $\psi_i$ is the first Chern class of the line bundle corresponding to the cotangent space of the curve at the $i$-th marked point, and $\lambda_i$ is the $i$-th Chern class of the Hodge bundle ${\mathbb E}$. 
These integrals are trivial, unless the corresponding complex dimensions coincide
\be\label{dimc}
\sum_{l=1}^{k}j_l +\sum_{i=1}^n m_i =\dim_{\cc}\,\overline{\cal M}_{g,n},
\ee
where $\dim_{\cc}\,\overline{\cal M}_{g,n}=3g-3+n$. 
The moduli space $\overline{\cal M}_{g,n}$ is defined to be empty unless the stability condition
\be
2g - 2 + n > 0
\ee
is satisfied. Let 
\be
\Lambda_g(q)=\sum_{i=0}^g q^i \lambda_i
\ee
be the Chern polynomial of ${\mathbb E}$. By linearity we extend the correlation notation (\ref{Corr}) to $\Lambda_g$'s.

The simplest case -- linear Hodge integrals -- is of particular interest. Kazarian proved \cite{Kaza}, that the generating function of all linear Hodge integrals, after a simple linear transformation of variables, yields a tau-function of the KP integrable hierarchy aka Hodge tau-function. Since the $\GL(\infty)$ group acts freely on the space of solutions of the KP hierarchy, it follows that the Hodge tau-function is related to the Kontsevich-Witten tau-function by an element of $\GL(\infty)$. In \cite{A2} the author conjectured that the corresponding group element  belongs the Heisenberg-Virasoro subgroup of $\GL(\infty)$. This conjecture was proved in \cite{A1} and the group element (actually, an infinite dimensional family of equivalent group elements) was constructed explicitly up to a normalization. Moreover, matrix integral representations and the Heisenberg-Virasoro constraints for the Hodge tau-function were derived there. 

In \cite{Wang1} Liu and Wang prove the relation between two tau-functions by a different method. Namely, they find a direct relation between the corresponding elements of the Givental and $\GL(\infty)$ groups. This approach allow the authors to prove that the normalization of the $\GL(\infty)$ element is $1$, as conjectured in \cite{A1}.

In the companion paper \cite{H3_1} we consider a general relation between the Givental group of rank one and the Heisenberg-Virasoro subgroup of $\GL(\infty)$. We prove that a simple relation between the elements of two groups exists only for a two-dimensional family of the Givental operators. Moreover, this family of operators describes the generating functions of the triple Hodge integrals with the Calabi-Yau condition. In this paper we prove that this family of generating functions can be described by the deformed GKM.

Let us consider the case of cubic Hodge integrals 
\be
\< \Lambda_g (u_1) \Lambda_g (u_2) \Lambda_g (u_3) \tau_{m_1}\tau_{m_2}\ldots\tau_{m_n}\>_g=\int_{\overline{\cal M}_{g,n}} \Lambda_g (u_1) \Lambda_g (u_2) \Lambda_g (u_3) \psi_1^{m_1}\psi_2^{m_2}\cdots\psi_n^{m_n} 
\ee
with an additional Calabi-Yau condition
\be\label{specialcond}
\frac{1}{u_1}+\frac{1}{u_2}+\frac{1}{u_3}=0.
\ee
It is convenient to use the parametrization
\be\label{speccond}
u_1=-p, u_2=-q, u_3=\frac{pq}{p+q}.
\ee

Consider the generating function
\be
Z_{q,p}({\bf T})=e^{\sum_{g=0}^\infty \sum_{n=0}^\infty \hbar^{2g-2+n} {\mathcal F}_{g,n}},
\ee
where
\be
{\mathcal F}_{g,n}=\sum_{a_1,\dots,a_n}\frac{\prod T_{a_i}}{n!}\<  \Lambda_g (-q) \Lambda_g (-p) \Lambda_g (\frac{pq}{p+q}) \tau_{a_1}\tau_{a_2}\ldots\tau_{a_n}\>_g.
\ee
Let the change of variables be described by the recursion
\be\label{changeof}
T_0^{q,p}({\bf t})=t_1,\\
T_k^{q,p}({\bf t})=\left(q  \sum_{k=1}^\infty k t_k \frac{\p}{\p t_{k}}+\frac{2q+p}{\sqrt{p+q}} \sum_{k=1}^\infty k t_k \frac{\p}{\p t_{k-1}}+ \sum_{k=1}^\infty k t_k \frac{\p}{\p t_{k-2}}\right)T_{k-1}^{q,p}({\bf t}).
\ee
In the companion paper \cite{H3_1} using the identification of Givental and $\GL(\infty)$ group elements we prove that
\be
\tau_{q,p}({\bf t})=Z_{q,p}({{\bf T}^{q,p}(\bf t)})
\ee
is a tau-function of the KP hierarchy. 

Let us denote
\begin{equation}
\begin{split}
x(z)&=\frac{p+q}{pq}\log\left(1+\frac{qz}{\sqrt{p+q}}\right)-\frac{1}{p}\log(1+\sqrt{p+q}z),\\
y(z) &=\frac {\sqrt {p+q}}{p}   \left( \log  \left( 1+\sqrt {p+q}z \right)  -\log  \left(1+\frac{qz}{ \sqrt {p+q}} \right) \right)\\
\end{split}
\end{equation}
and consider the coefficients
\be
\tilde{v}_k=[z^k]\int_{0}^z (\eta-y(\eta)d x(\eta).
\ee
In this paper we prove a relation between tau-function $\tau_{q,p}({\bf t})$ and the deformed GKM with potential
\be\label{pqpot}
U=\frac{1}{\hbar}\frac{z}{(1+\sqrt{p+q}z)(1+q z/\sqrt{p+q})}.
\ee
These two tau-functions are related by a translation of variables.
\begin{theorem}\label{H3U}
Deformed GKM with potential (\ref{pqpot}) is related to the generating function of the cubic Hodge integrals with the Calabi-Yau condition by a shift of variables
\be
\tau_{q,p}({\bf t})=\tilde{C} e^{\sum \tilde{v}_k \frac{\p}{\p t_k}}\tau_{U}({\bf t}).
\ee
Here $\tilde{C}$ does not depend on ${\bf t}$.
\end{theorem}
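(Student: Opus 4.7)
The plan is to realize both tau-functions as points on the Heisenberg-Virasoro orbit of the Kontsevich-Witten tau-function, and to match the corresponding group elements modulo the claimed time shift. The undeformed potential underlying (\ref{pqpot}) is $U_0 = z/\hbar$, with antiderivative $V_0(z) = z^3/6$, so $\tau_{U_0} = \tau_{KW}$. Theorem~\ref{taufr} then yields
\begin{equation}
\tau_U = C_U\,\widehat{G}_U \cdot \tau_{KW}, \qquad \widehat{G}_U = e^{\sum_k u_k \widehat{J}_k}\,e^{\sum_k a_k \widehat{L}_k},
\end{equation}
where $f(z)$ is determined by $f(z)^2/2 = V'(z)$ and $a_k, u_k$ by (\ref{ffun})--(\ref{uk}). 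Direct integration of (\ref{pqpot}) identifies $V'(z) = x(z)$, so that $dx(z) = \hbar U(z)\,dz$.

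On the Hodge side, the companion paper \cite{H3_1}, through the identification of Givental and $\GL(\infty)$ elements for the triple Hodge Calabi-Yau theory, provides an explicit Heisenberg-Virasoro element $\widehat{G}_{q,p}$ with $\tau_{q,p} = C_{q,p}\,\widehat{G}_{q,p}\cdot \tau_{KW}$. Since $e^{\sum \tilde{v}_k \widehat{J}_k}$ acts on any KP tau-function as the time shift $e^{\sum \tilde{v}_k \partial/\partial t_k}$, the theorem is equivalent to the group-theoretic identity
\begin{equation}
\widehat{G}_{q,p} = \tilde{C}\,e^{\sum_k \tilde{v}_k \widehat{J}_k}\,\widehat{G}_U.
\end{equation}
Because the positive-mode Heisenberg generators commute among themselves, this identity splits, after normal ordering in the form $e^{\text{Heisenberg}}\,e^{\text{Virasoro}}$, into matching the Virasoro parts and separately matching the Heisenberg parts.

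The Virasoro match rests on identifying the change of local parameter used in the two constructions, which is ultimately controlled by the spectral-curve relation $dx(z) = z\,dy(z)$ (a one-line differentiation from the explicit formulas for $x$ and $y$) combined with the GKM relation $f(z)^2/2 = x(z)$. Once the Virasoro parts are aligned, the difference of Heisenberg coefficients must equal the claimed $\tilde{v}_k = [z^k]\int_0^z(\eta - y(\eta))\,dx(\eta)$; this follows from (\ref{uk}) and the corresponding formula extracted from \cite{H3_1}, using $dx = \hbar U\,dz$ and an integration by parts. The principal obstacle lies in the first stage: translating the abstract Givental-to-$\GL(\infty)$ presentation of $\widehat{G}_{q,p}$ from \cite{H3_1} into a form directly comparable with the $(f, a_k, u_k)$ data of Theorem~\ref{taufr}. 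Once a common coordinate system is fixed via $dx = z\,dy$, the residual Heisenberg identity reduces to a short one-variable integration-by-parts computation, and the constant $\tilde{C}$ absorbs the scalar factors arising along the way.
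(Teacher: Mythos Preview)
Your approach is correct and essentially identical to the paper's: both reduce to comparing the Heisenberg--Virasoro group elements from Theorem~\ref{taufr} (applied to $U_0=z/\hbar$, so $\tau_{U_0}=\tau_{KW}$) and from the companion-paper formula (\ref{RBY}) for $\tau_{q,p}$, noting that the Virasoro factors coincide because both are governed by the same $f$ with $f(z)^2/2=V'(z)=x(z)$, and then subtracting the Heisenberg (translation) parts. The only remark is that you slightly overcomplicate two steps: the Virasoro match needs nothing beyond $f^2/2=x$ (the relation $dx=z\,dy$ is not required there), and the Heisenberg difference is the bare subtraction $\int(f-y)\,dx-\int(f-\eta)\,dx=\int(\eta-y)\,dx$ with no integration by parts.
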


GKM is believed to provide a universal description of the intersection theory on the moduli spaces and 2d gravity. This
paper provides another confirmation of the old idea of universality of GKM.

%%%%%%%%%%%%%%%%%%%%%%%%%%%%%%%%%%%%%%%

\subsection{Organization of the paper}

The present paper is organized as follows. In Section \ref{KSsec} we describe the space of the tau-functions of the KP hierarchy in terms of Sato Grassmannian. Using Sato's theorem we introduce a canonical pair of the Kac-Schwarz operators and prove that the space of such pairs provides an alternative description of the Sato Grassmannian. Section \ref{S:GKM} is devoted to the description of generalized Kontsevich model in the context of Sato Grassmannian. In Section \ref{Secdef} we introduce and investigate a new class of the deformed generalized Kontsevich models. In Section \ref{Sec:Hodge} we describe a relation of this deformation with the Hodge integrals. 

%%%%%%%%%%%%%%%%%%%%%%%%%%%%%%%%%%%%%%%

\subsection*{Notation}

By $\cdot$ we denote the action of the operator on the function to distinguish it from the product of operators. We denote the sets of parameters or variables, finite or infinite, by bold letters, for example ${\bf t}=\{t_1,t_2,t_3,\dots\}$. We will slightly abuse the notation and call $\GL(\infty)$ both versions of the group with and without central extension.

%%%%%%%%%%%%%%%%%%%%%%%%%%%%%%%%%%%%%%

\subsection*{Acknowledgments}
The author is grateful to A. Mironov and S. Shadrin for useful discussions.  This work was supported by  IBS-R003-D1 and by RFBR grant 18-01-00926. 
%The author would also like to thank Vasily Pestun for his hospitality at IHES supported by the European Research Council under the European Union's Horizon
%2020 research and innovation programme, QUASIFT grant agreement 677368.

%%%%%%%%%%%%%%%%%%%%%%%%%%%%%%%%%%%%%

\section{KP hierarchy, Kac-Schwarz algebra and quantum spectral curve}\label{KSsec}

In this section we give a short reminder of some basic properties of the KP hierarchy, which we will apply below. We also describe some less known or new elements of the general construction, including Sato's group element, a canonical pair of the Kac-Schwarz operators and a distinguished basis.

\subsection{KP hierarchy and Sato Grassmannian}

Let us briefly summarize by now standard Sato Grassmannian and Lax-Orlov-Schulman descriptions of the KP hierarchy; for more detail see, e.g., \cite{Sato,Sato1,Babelon,MJ,Segal}
 and references therein.

The KP hierarchy was introduced by Sato \cite{Sato}. It
can be represented in terms of tau-function $\tau({\bf t})$ by the Hirota bilinear identity
\be\label{HBE}
\oint_{\infty} e^{\xi({\bf t-t'},z)}
\tau ({\bf t}-[z^{-1}])\tau ({\bf t'}+[z^{-1}])dz =0,
\ee
which encodes all nonlinear equations of the KP hierarchy.
Here we use the standard short-hand notations
\be
{\bf t}\pm [z^{-1}]:= \bigl \{ t_1\pm   
z^{-1}, t_2\pm \frac{1}{2}z^{-2}, 
t_3 \pm \frac{1}{3}z^{-3}, \ldots \bigr \}
\ee
and
\be
\xi({\bf t},z)=\sum_{k>0} t_k z^k.
\ee

Let us consider the description of the space of solutions for the KP hierarchy, introduced by Sato in \cite{Sato} and further developed by Segal and Wilson in \cite{Segal}. We work within the formal series setup, $\tau({\bf t})\in {\mathbb C}[[t_1,t_2,t_3,\dots]]$. Hence,
we focus on Sato's version of the construction. Let us consider the space $H=H_+\oplus H_-$, where the subspaces 
\be
H_-=z^{-1}{\mathbb C}[[z^{-1}]]
\ee and 
\be
H_+={\mathbb C}[z]
\ee 
are generated by negative and nonnegative powers of $z$ respectively. Then the Sato Grassmannian $\rm{Gr}$ consists of all closed linear spaces $\mathcal{W}\in H $, which are compatible with $H_+$. Namely, an orthogonal projection $\pi_+ : \mathcal{W} \to H_+ $ should be a Fredholm operator, i.e. both the kernel ${\rm ker}\, \pi_+ \in \mathcal{W}$ and the 
cokernel ${\rm coker}\, \pi_+ \in H_+$ should be finite-dimensional vector spaces. 
The Grassmannian $\Gr$ consists of components $\Gr^{(k)}$, parametrized by an index of the operator $\pi_+$. We need only the component $\Gr^{(0)}$; other components have an equivalent description. Moreover, we will consider only the big cell $\Gr^{(0)}_+$ of $\Gr^{(0)}$, which is defined by the constraint ${\rm ker}\, \pi_+ = {\rm coker}\, \pi_+=0$. We call $\Gr^{(0)}_+$ the {\em Sato Grassmannian} for simplicity. Most of the analyses in this section can be easily extended to a general case. There exists a bijection between the points of the Sato Grassmannian ${\mathcal W}\in\Gr^{(0)}_+$ and the tau-functions with $\tau({\bf 0})=1$.  Below we put $\tau({\bf 0})=1$ for simplicity.

A point of the Sato Grassmannian ${\mathcal W}\in \rm{Gr}^{(0)}_+$ can be described by an  {\emph {admissible basis}} $\{\Phi_1^{\mathcal W},\Phi_2^{\mathcal W},\Phi_3^{\mathcal W},\dots\}$,
\be
{\mathcal W}=\sppan_{\cc}\{\Phi_1^{\mathcal W},\Phi_2^{\mathcal W},\Phi_3^{\mathcal W},\dots\}.
\ee
The crucial property of the admissible bases is that if $\{\Phi_j^{\mathcal W}\}$ and $\{ {\Phi'}_j ^{\mathcal W}\}$ are two admissible bases of ${\mathcal W}$, then the matrix which relates them is of the kind that has a determinant, or, equivalently, this matrix differs from the identity by an operator of trace class \cite{Segal}:

\begin{definition}\label{Defadm} 
 $\{\Phi_j^{\mathcal W}\}$ is an admissible basis for ${\mathcal W}\in \rm{Gr}^{(0)}_+$,  if
\begin{enumerate}
\item the linear map $H_+ \rightarrow H$ which takes $z^{j-1}$ to $\Phi_j^{\mathcal W}$ is injective and has image $\mathcal W$, and
\item the matrix, relating $\pi_+ (\Phi_j^{\mathcal W})$ to $z^{j-1}$ differs from the identity by an operator of trace class.  
\end{enumerate}
\end{definition}

\begin{remark}
We use conventions, inverse to standard (see i.e. \cite{Babelon}) and call the point of the Sato Grassmannian, what is usually called the {\em dual} (or adjoint) point of the Sato Grassmannian
and vice verse. So, all objects we consider, including Baker-Akhiezer function, Lax operator and Kac-Schwarz operators are usually called the dual ones. We apologize for any possible inconvenience. 
\end{remark}

We call the element of $H$ {\em monic} if its leading coefficient is equal to 1. Any point of the Sato Grassmannian has an admissible basis of the monic elements of the form
\be\label{goodbas}
\Phi_j^{\mathcal W}=z^{j-1}\left(1+O(z^{-1})\right);
\ee
of course, such basis is not unique. Let $\Lambda =\diag (\lambda_1,\dots,\lambda_N)$ be a diagonal matrix.
For any function $f$, dependent on the infinite set of variables ${\bf t}=(t_1,t_2,t_3,\dots)$, let
\be\label{Miwa}
f\left(\left[\Lambda^{-1}\right]\right):=f({\bf t})\Big|_{t_k=\frac{1}{k}\Tr \Lambda^{-k}}
\ee
be the {\em Miwa parametrization}. For any basis (\ref{goodbas}) the tau-function of the KP hierarchy in the Miwa parametrization is equal to the ratio of the determinants
\be\label{miwatau}
\tau_{\mathcal W}([\Lambda^{-1}])=\frac{\det_{i,j=1}^N \Phi^{\mathcal W}_i(\lambda_j)}{\Delta(\lambda)},
\ee
where 
\be
\Delta(\lambda):=\prod_{i<j} (\lambda_j-\lambda_i)
\ee
is the {\em Vandermonde determinant}. 
Moreover, if for some function $\tau_{\mathcal W}$ equation (\ref{miwatau}) holds for all $N \in {\mathbb Z}_{\geq 0}$, then  $\tau_{\mathcal W}$ is a tau-function of the KP hierarchy.

The Hirota bilinear identity (\ref{HBE}) can be reformulated as the orthogonality condition of the {\em Baker-Akhiezer} (BA) function 
\be
\label{BA}
\Psi(z,{\bf t}):=e^{- \xi({\bf t},z)}\frac{\tau_{\mathcal W}({\bf t}+ [z^{-1}])}{\tau_{\mathcal W}({\bf t})}
\ee
and its dual 
\be
\Psi^{\bot}(z,{\bf t}):=e^{ \xi({\bf t},z)}\frac{\tau_{\mathcal W}({\bf t}- [z^{-1}])}{\tau_{\mathcal W}({\bf t})},
\ee
namely
\be\label{HBI}
\oint_{\infty} \Psi (z,{\bf t}) \Psi^{\bot}(z,{\bf t'})\, dz =0.
\ee
Functions $\Psi$ and $\Psi^\bot$ give equivalent description of the solution of the KP hierarchy.

To distinguish the special role played by $t_1$, in this section we identify it with $x$, $x\equiv t_1$  when necessary, and use one or another notation interchangeably.  The KP hierarchy and Sato Grassmannian can be naturally described by the pseudodifferential operators
\be
a_0\p^n+a_1 \p^{n-1}+\dots + a_{n+1}\p^{-1} +a_{n+2}\p^{-2}+\dots, 
\ee
where $\p:=\frac{\p}{\p x}$.
Such operator is called {\em monic} if the leading coefficient $a_0=1$. For an algebra of pseudodifferential operators there is a natural anti-homomorphism, given by the formal adjoint
\be\label{adjoint}
(x^k \p^m)^*:=(-\p)^m x^k
\ee
and linearity, such that
\be\label{adjoint1}
\oint_\infty f(x) o\cdot g(x) dx = \oint_\infty g(x) o^* \cdot f(x) dx
\ee
for any functions $f$ and $g$ and arbitrary operator $o$. 

Consider the {\em Lax operator} 
\begin{align}
L_+({\bf t})&=\p+ u_2({\bf t})\p^{-1}+u_3({\bf t}) \p^{-2}+\dots
\end{align}
and its dual $L_-({\bf t})$.
This is a first order pseudodifferential operator such that
\begin{equation}
\begin{split}
L_{+}({\bf t}) \cdot \Psi(z,{\bf t})  &= z\, \Psi (z,{\bf t}),\\
L_{-}({\bf t}) \cdot \Psi^\bot(z,{\bf t})  &= z\, \Psi^{\bot}(z,{\bf t}).
\end{split}
\end{equation}	
The Lax operator can be expressed in terms of the {\em dressing operator} $S({\bf t})$,
\begin{equation}
\begin{split}
L_+({\bf t}) &=- S^*({\bf t})^{-1} \,\p \, S^*({\bf t}),\\
L_-({\bf t}) &=S({\bf t}) \,\p \, S({\bf t})^{-1},
\end{split}
\end{equation}
where $S({\bf t})$
is a monic pseudodifferential operator of order $0$.
The BA function and its dual can also be expressed in terms of the dressing operator
\begin{equation}
\begin{split}
  \Psi(z,{\bf t}) & =S^*({\bf t})^{-1} \cdot e^{-\xi({\bf t},z)},\\
 \Psi^\bot(z,{\bf t}) & =S({\bf t}) \cdot e^{\xi({\bf t},z)}.
\end{split}
\end{equation}
Using the operator $S({\bf t})$ one can also introduce the  {\em Orlov-Schulman operator} and its dual \cite{OS,Grin}:
\begin{equation}
\begin{split}
M_+({\bf t})&= -S^*({\bf t})^{-1}\left(x +\sum_{k=2}^\infty k t_k (-\p)^{k-1} \right) {S^*({\bf t})},\\
M_-({\bf t})&= S({\bf t})\left(x +\sum_{k=2}^\infty k t_k \p^{k-1} \right) {S({\bf t})}^{-1},
\end{split}
\end{equation}
such that
\begin{equation}
\begin{split}
M_{+}({\bf t}) \cdot \Psi(z,{\bf t})&=\frac{\p}{\p z}\, \Psi(z,{\bf t}),\\
M_{-}({\bf t}) \cdot \Psi^\bot(z,{\bf t})&=\frac{\p}{\p z}\, \Psi^\bot(z,{\bf t}).
\end{split}
\end{equation}
The Lax and Orlov-Schulman operators satisfy the commutation relation $[L_{\pm}({\bf t}),M_{\pm}({\bf t})]=1$.

\begin{remark}
Let us consider the tau-function dependent of the times with the opposite sign, $\tau_{\mathcal W}(-{\bf t})$. Then, for such tau-function, the BA function is closely related to the dual BA function of the original tau-function, and vice versa.
\end{remark}

%%%%%%%%%%%%%%%%%%%%%%%%%%%%%%%%%%%%

\subsection{Sato's theorem and Sato's group elements}

A BA function $\Psi(z,{\bf t})$ defines a point of the Sato Grassmannian.
This point can  be described in terms of it specialization, which we also call the Baker-Akhiezer  function 
\be\label{BAspec}
\left.\Psi(z,x):=\Psi (z,{\bf t})\right|_{t_k=\delta_{k,1}x},
\ee
namely
\be\label{Grser}
\Gr^{(0)}_+\ni {\mathcal W}=\sppan_{\cc}\{\Psi(z,0), -\left.\frac{\p}{\p x}\Psi(z,x)\right|_{x=0},\dots\}.
\ee
We also introduce its dual
\be
\left.\Psi^{\bot}(z,x):=\Psi^\bot(z,{\bf t})\right|_{t_k=\delta_{k,1}x},
\ee
with
\be
\Gr^{(0)}_+\ni {\mathcal W}^{\bot}=\sppan_{\cc}\{\Psi^{\bot}(z,0), \left.\frac{\p}{\p x}\Psi^{\bot}(z,x)\right|_{x=0},\dots\}.
\ee

The {\em wave function}
\be
\Psi(z):=\Psi(z,0)
\ee
plays a special role. It is a unique element of ${\mathcal W}\cap 1+z^{-1}{\mathbb C}[[z^{-1}]]$.
The wave function equals to the {\em principal specialization} of the tau-function $\Psi(z)=\tau_{\mathcal W}([z^{-1}])$.

Let  $\left.L:=L_+({\bf t})\right|_{t_k=\delta_{k,1}x}$ be a reduction of the Lax operator.
This operator can be obtained by a conjugation of $-\p$ with the dressing operator
\be\label{pseudodif}
\Gamma=1+\sum_{k=1}^\infty  s_k(x) (-\p)^{-k},
\ee
where  $\left.\Gamma:=S^*({\bf t})^{-1}\right|_{t_k=\delta_{k,1}x}$ and $s_k(x)\in {\mathbb C}[[x]]$. Namely
\be\label{Lax}
L= - \Gamma\,  \p \, \Gamma^{-1}.
\ee
The BA function (\ref{BAspec}) obeys
\be
\Psi(z,x) =\Gamma\cdot e^{-xz}=e^{-xz}\,\left(1+ \sum_{k=1}^\infty s_k(x)z^{-k}\right).
\ee
The reduced version of the Orlov-Schulman operator $\left.M:=M_+({\bf t})\right|_{t_k=\delta_{k,1}x} $ can be obtained by a conjugation of $x$:
\be\label{OS}
M= -\Gamma \,x\,{\Gamma}^{-1}.
\ee

Let us outline a description of the KP hierarchy, obtained by a Fourier transform of this Lax-Orlov-Schulman construction. 
Consider the ring of differential operators with coefficients formal Laurent series in the variable $z^{-1}$
\be
{\mathcal D}:={\mathbb C}((z^{-1}))[[\frac{\p}{\p z}]]
\ee
and its subrings ${\mathcal D}_\pm:=H_\pm[[\frac{\p}{\p z}]]$. %Sometimes ${\mathcal D}$ is also called  the $w_{1+\infty}$ algebra.
A natural direct sum decomposition holds
\be
{\mathcal D}={\mathcal D}_+ \oplus{\mathcal D}_-.
\ee

\begin{remark}
Below we will allow the coefficients of operators in ${\mathcal D}$ to depend  also on parameters including $\hbar$, $w$ etc. We will not specify this unless necessary.
\end{remark}

Ring ${\mathcal D}$ possess a filtration
\be
\dots\supset  {\mathcal D}^{(n+1)}  \supset {\mathcal D}^{(n)} \supset  {\mathcal D}^{(n-1)} \supset\dots,
\ee
where
\be
 {\mathcal D}^{(n)}=\left\{ \sum a_{km} z^{k} \frac{\p^m}{\p z^m} \, |\,  a_{km}=0 \,\,{\text{for all}} \,\, k-m>n\right\}.
\ee
%It induces a filtration on ${\mathcal D}_-$
%\be
%{\mathcal D}_- ={\mathcal D}_-^{(-1)} \supset\dots \supset {\mathcal D}_-^{(n+1)}  \supset {\mathcal D}_-^{(n)} \supset  {\mathcal D}_-^{(n-1)} \supset\dots.
%\ee
We use the typewriter font for elements of ${\mathcal D}$.
We say ${\mathtt A} \in {\mathcal D}$ has {\em degree} $n$ if ${\mathtt A} \in {\mathcal D}^{(n)} \setminus {\mathcal D}^{(n-1)}$. 
In particular, 
\be
\deg\,z^{-1}=\deg\, \frac{\p}{\p z}=-1,
\ee
so that the degrees of all operators in $\mathcal D$ are restricted from above.

Let 
\be
 {\mathcal G}:= \left\{ {\mathtt G}\in {\mathcal D} \big| {\mathtt G}-1\in  {\mathcal D}_-\right\}.
\ee
be the group of monic operators of order $0$,
\be
{\mathtt G}=1+\sum_{k=0}^\infty a_k(z^{-1})\frac{\p^k}{\p z^k}.
\ee 
There is a bijection between ${\mathcal G}$ and the space of monic pseudodifferential operators (\ref{pseudodif}) of order $0$, $\Gamma \leftrightarrow {\mathtt G}
$ such that the coefficients of $\Gamma$ and $\mathtt G$ are related by  $1+\sum_{k=1}^\infty s_k(x) z^{-k} =1+  \sum_{k=0}^\infty a_k (z^{-1}) (-x)^k$.
Under this bijection
\be
 {\mathtt G} \cdot e^{-xz} = \Gamma \cdot e^{-xz}.
\ee

This bijection allows us to provide the following version of the important {\em Sato's theorem} \cite{Sato1,Motohico}, which describes the Sato Grassmannian:
\begin{theorem*}[Sato]\label{TheoremS}
There is a bijection between the Sato Grassmanninan $\Gr^{(0)}_+$  and the group ${\mathcal G}$, 
\begin{equation}
\begin{split}
{\mathcal G}& \rightarrow  \Gr^{(0)}_+\\
{\mathcal G} \ni {\mathtt G}_{\mathcal {W}} \, & \mapsto \,  \mathcal{W} = {\mathtt G}_{\mathcal W} \cdot H_+ \in  \Gr^{(0)}_+.
\end{split}
\end{equation}
\end{theorem*}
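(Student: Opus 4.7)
The strategy is to establish this bijection via three verifications: the map is well-defined (its image lies in $\Gr^{(0)}_+$), injective, and surjective. Well-definedness and injectivity follow directly from the form ${\mathtt G} = 1 + \sum_{k\geq 0} a_k(z^{-1})\partial_z^k$ with $a_k \in z^{-1}{\mathbb C}[[z^{-1}]]$; surjectivity is where the construction genuinely uses Sato theory, and I would build ${\mathtt G}_{\mathcal W}$ from the Baker-Akhiezer function via the dressing operator.

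For well-definedness, the direct computation
\begin{equation*}
{\mathtt G}\cdot z^{j-1} \,=\, z^{j-1} + \sum_{k=0}^{j-1}\frac{(j-1)!}{(j-1-k)!}\, a_k(z^{-1})\, z^{j-1-k}
\end{equation*}
shows that $z^{j-1}$ appears with coefficient $1$ and no strictly higher powers of $z$ appear, because every $a_k(z^{-1})$ lies in $z^{-1}{\mathbb C}[[z^{-1}]]$. Hence the matrix expressing $\{\pi_+({\mathtt G}\cdot z^{j-1})\}$ in the basis $\{z^{j-1}\}$ of $H_+$ is upper triangular with $1$'s on the diagonal, so $\pi_+\colon {\mathtt G}\cdot H_+\to H_+$ is an isomorphism and ${\mathtt G}\cdot H_+\in\Gr^{(0)}_+$. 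For injectivity, if ${\mathtt G}_1\cdot H_+={\mathtt G}_2\cdot H_+$, then ${\mathtt G}:={\mathtt G}_2^{-1}{\mathtt G}_1\in{\mathcal G}$ preserves $H_+$; I argue by induction that all $a_k$ vanish. The base case ${\mathtt G}\cdot 1 = 1+a_0(z^{-1})\in H_+$ forces $a_0=0$ because $a_0$ has no constant term, and, inductively, with $a_0=\cdots=a_{k-1}=0$ one has ${\mathtt G}\cdot z^k = z^k + k!\,a_k(z^{-1})$, whose membership in $H_+$ forces $a_k=0$.

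For surjectivity, given ${\mathcal W}\in\Gr^{(0)}_+$ I would take the reduced BA function $\Psi(z,x)=e^{-xz}\bigl(1+\sum_{k\geq 1}s_k(x)z^{-k}\bigr)$ of the associated tau-function and the dressing operator $\Gamma=1+\sum_{k\geq 1}s_k(x)(-\partial_x)^{-k}$, which satisfies $\Gamma\cdot e^{-xz}=\Psi(z,x)$. Via the bijection $\Gamma\leftrightarrow{\mathtt G}$ recalled in the excerpt, which simply repackages the single formal series $1+\sum_k s_k(x)z^{-k}=1+\sum_k a_k(z^{-1})(-x)^k$ as a differential operator in $z$ rather than a pseudodifferential operator in $x$, I obtain ${\mathtt G}_{\mathcal W}\in{\mathcal G}$ with ${\mathtt G}_{\mathcal W}\cdot e^{-xz}=\Psi(z,x)$. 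Since ${\mathtt G}_{\mathcal W}$ acts only on the $z$-variable it commutes with $\partial_x$, giving
\begin{equation*}
{\mathtt G}_{\mathcal W}\cdot z^k \,=\, (-\partial_x)^k\,{\mathtt G}_{\mathcal W}\cdot e^{-xz}\Big|_{x=0} \,=\, (-\partial_x)^k\Psi(z,x)\Big|_{x=0}.
\end{equation*}
By the span description \eqref{Grser}, these elements span ${\mathcal W}$ as $k$ ranges over ${\mathbb Z}_{\geq 0}$, so ${\mathcal W}={\mathtt G}_{\mathcal W}\cdot H_+$.

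The main subtlety is in the surjectivity step, where one relies on two facts from Sato theory already recalled above: the existence of the BA function attached to ${\mathcal W}$, and the coherence of the correspondence $\Gamma\leftrightarrow{\mathtt G}$ as two equivalent presentations of a single formal bivariate series in $x$ and $z^{-1}$. With these in hand the argument is essentially a matter of careful assembly; the algebraic heart of the theorem is the triangularity computation for ${\mathtt G}\cdot z^{j-1}$ used above, which simultaneously underlies well-definedness and injectivity.
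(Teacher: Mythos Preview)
The paper does not supply its own proof of this statement: it is quoted as Sato's theorem with references \cite{Sato1,Motohico}, and the paper immediately proceeds to use it. So there is no paper proof to compare against.

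Your argument is correct and self-contained within the framework the paper has already set up. A couple of remarks on how it aligns with material that does appear in the paper. Your injectivity step (showing that ${\mathtt G}\in{\mathcal G}$ with ${\mathtt G}\cdot H_+\subset H_+$ forces ${\mathtt G}=1$) is essentially the special case of the Sato--Noumi Lemma~\ref{LemmaAHP} restricted to elements of ${\mathcal G}$; the paper states that lemma just after the theorem and sketches the same idea (look at the top-degree term of the ${\mathcal D}_-$ part). Your surjectivity step is exactly the construction the paper records in equation~\eqref{psiasop} and the discussion of the bijection $\Gamma\leftrightarrow{\mathtt G}$ preceding the theorem; you have simply made explicit that this construction furnishes the inverse map. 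The one input you are importing, as you correctly flag, is the existence of the tau-function (hence of the BA function and the dressing operator) for an arbitrary ${\mathcal W}\in\Gr^{(0)}_+$; in the paper's logical order this comes from the determinant formula~\eqref{miwatau}, which requires only an admissible basis, so there is no circularity.
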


Let ${\mathtt G}_{\mathcal W} \in {\mathcal G}$ be the group element, corresponding to ${\mathcal W} \in \Gr^{(0)}_+$. We call it the {\em Sato's group element}. This element completely describes the point of the Sato Grassmannian. 
In particular, the BA function and dual BA function obey
\begin{equation}
\begin{split}\label{psiasop}
\Psi(z,x)&={\mathtt G}_{\mathcal W} \cdot e^{-xz},\\
\Psi^{\bot}(z,x)&={({\mathtt G}_{\mathcal W}^*)}^{-1} \cdot  e^{xz},
\end{split}
\end{equation}
where the adjoint operator in $z$-variable is defined in the same way as the adjoint operator in $x$, see (\ref{adjoint}).
These formulas immediately allow us to show  that the Hirota bilinear identity (\ref{HBI}) holds at $t_k=t'_k=0$ for $k\geq 2$
\begin{equation}
\begin{split}
\oint_{\infty} \Psi (z,x) \Psi^{\bot}(z,x')\, dz
&=\oint_{\infty} ({\mathtt G}_{\mathcal W} \cdot e^{-xz}) ({({\mathtt G}_{\mathcal W}^*)}^{-1}  \cdot e^{x'z})\, dz\\
&=\oint_{\infty} ({\mathtt G}_{\mathcal W}^{-1} {\mathtt G}_{\mathcal W} \cdot e^{-xz})   e^{x'z}\, dz\\
&=\oint_{\infty}  e^{(x'-x)z}   \, dz\\
&=0,
\end{split}
\end{equation}
where we have used (\ref{adjoint1}) and the identity $({\mathtt G}^*)^{-1}=({\mathtt G}^{-1})^*$.

\begin{remark}\label{rmk_ferm}
In this paper we do not discuss free fermion description of the  KP hierarchy in detail. However, let us comment on the relation of this description and Sato's group element. For any ${\mathtt G}_{\mathcal W}=:e^{{\mathtt g}_{\mathcal W}}\in {\mathcal G}$ we consider a fermionic operator 
\be
W_{\mathcal W}:=\res_z \left(z^{-1} \psi(z) z {{\mathtt g}_{\mathcal W}} z^{-1} \psi^*(z)\right).
\ee
Then the tau-function, corresponding to ${\mathtt G}_{\mathcal W}$, is $\tau_{\mathcal W}({\bf t})=\lvac e^{J_+({\bf t})} e^{W_{\mathcal W}}\rvac$.
Here we use notations of \cite{A1}. The matrix, describing this bilinear fermionic operator $W_{\mathcal W}$, is strictly upper diagonal.
\end{remark}

%%%%%%%%%%%%%%%%%%%%%%%%%%%%%%%%%%%%%%%%%%%%%%%%%%%%%%%%%%%%%

\subsection{Canonical Kac-Schwarz operators and quantum spectral curve}

It is well known that the tau-functions of the KP hierarchy that appear in mathematical physics and enumerative geometry often have a nice description in terms of the so-called Kac-Schwarz operators \cite{KS}.  Unfortunately, the general theory of the Kac-Schwarz algebras is not fully developed yet.  However, some elements of this theory and many interesting examples are considered in \cite{AMSM,Sch,ALS,AM,ABGW,KS,Kh,KMMMZ,KMMM,LG,MMS,ACM,A1,Aopen2,AP1,F,ACEH}.

In this section we construct a new description of the Sato Grassmannian in terms of a pair of canonical Kac-Schwarz (KS) operators. We argue that one of these operators can be considered as a quantum spectral curve operator.

\begin{definition}
For any point of the Sato Grassmannian ${\mathcal W}$ the {\em Kac-Schwarz algebra}
\be
{\mathcal A}_{\mathcal W}:=\left\{{\mathtt A}\in {\mathcal D}\,\big | \,{\mathtt A}\cdot {\mathcal W} \subset {\mathcal W} \right\}
\ee
 is the algebra of the differential operators which stabilize this point.
\end{definition}

\begin{remark}
Sometimes it may be convenient to consider more general classes of KS operators. For example, the operators from ${\mathbb C}((z^{-1}))[[z\frac{\p}{\p z}]]$ appear naturally in description of Hurwitz tau-functions \cite{A1,ACEH}.
\end{remark}

Below we use the lemma of Sato and Noumi:
\begin{lemma}[\cite{Motohico,Sato1}]\label{LemmaAHP}
An operator ${\mathtt A} \in {\mathcal D}$ belongs to ${\mathcal D}_+$ if and only if it stabilizes $H_+$, or
\be
{\mathcal A}_{H_+}={\mathcal D}_+.
\ee
\end{lemma}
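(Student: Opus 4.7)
The plan is to prove both set inclusions ${\mathcal D}_+\subseteq {\mathcal A}_{H_+}$ and ${\mathcal A}_{H_+}\subseteq {\mathcal D}_+$ separately, using the direct sum decomposition ${\mathcal D} = {\mathcal D}_+\oplus {\mathcal D}_-$ already recalled in the excerpt.

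First I would verify the easy inclusion ${\mathcal D}_+\subseteq {\mathcal A}_{H_+}$. Every element of ${\mathcal D}_+$ is a combination of monomials $z^k\partial_z^m$ with $k,m\geq 0$. Such a monomial sends $z^n\in H_+$ to $\frac{n!}{(n-m)!}z^{k+n-m}$ for $n\geq m$ and to $0$ otherwise, so its image lies in $H_+$. When acting on a fixed $z^n$ only the terms with $m\leq n$ contribute, so the infinite series in $m$ is actually a finite sum and the action is unambiguously an element of $H_+$.

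Next I would establish the reverse inclusion ${\mathcal A}_{H_+}\subseteq {\mathcal D}_+$ by showing that the only operator in ${\mathcal D}_-$ which stabilizes $H_+$ is $0$. Given ${\mathtt A}\in {\mathcal A}_{H_+}$, decompose ${\mathtt A}={\mathtt A}_+ + {\mathtt A}_-$ with ${\mathtt A}_\pm\in {\mathcal D}_\pm$; since the first direction gives ${\mathtt A}_+\in {\mathcal A}_{H_+}$, the difference ${\mathtt A}_- = {\mathtt A} - {\mathtt A}_+$ also stabilizes $H_+$. Write
\[
{\mathtt A}_- = \sum_{m\geq 0} b_m(z)\,\partial_z^m,\qquad b_m(z)\in H_-,
\]
and proceed by induction on $n\geq 0$ to show $b_n=0$. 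For $n=0$ one has ${\mathtt A}_-\cdot 1 = b_0(z)$, which lies in $H_-$ but is also required to lie in $H_+$, hence in $H_+\cap H_-=\{0\}$. Assuming $b_0=\cdots=b_{n-1}=0$, the action on $z^n$ reduces to
\[
{\mathtt A}_-\cdot z^n = \sum_{m=0}^{n}\frac{n!}{(n-m)!}\,b_m(z)\,z^{n-m} = n!\,b_n(z),
\]
which again lies in $H_-$ but must lie in $H_+$, forcing $b_n=0$. Hence ${\mathtt A}_-=0$ and ${\mathtt A}={\mathtt A}_+\in{\mathcal D}_+$.

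The argument is essentially a matching of powers of $z$ once the decomposition ${\mathcal D} = {\mathcal D}_+\oplus {\mathcal D}_-$ is in hand, so I do not expect a genuine obstacle. The one point worth double-checking is that the infinite series in $\partial_z$ genuinely acts on polynomials with no convergence issue; this is fine because $\partial_z^m$ annihilates any polynomial of degree less than $m$, so only finitely many summands act nontrivially on any given $z^n$, and the computation of ${\mathtt A}_-\cdot z^n$ at the inductive step truly reduces to the single surviving term $n!\,b_n(z)$.
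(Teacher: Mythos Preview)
Your proof is correct and follows the same outline the paper sketches: show ${\mathcal D}_+$ stabilizes $H_+$, then use the decomposition ${\mathtt A}={\mathtt A}_++{\mathtt A}_-$ to reduce to proving that any ${\mathtt A}_-\in{\mathcal D}_-$ stabilizing $H_+$ must vanish. The paper phrases the last step as ``consideration of the action of the top degree term,'' while your induction on $n$ via ${\mathtt A}_-\cdot z^n = n!\,b_n(z)$ is a clean explicit realization of that same idea.
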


The idea of the proof is the following. It is easy to see that ${\mathcal D}_+$ stabilizes $H_+$. Then, if some operator in  ${\mathcal D}$ 
stabilizes $H_+$, than its projection to ${\mathcal D}_+$ also does, thus, its projection to  ${\mathcal D}_-$ also should stabilizes $H_+$. But this is possible only if this projection is trivial. It  follows from the consideration of the action of the top degree term of this projection on $H_+$.

\begin{corollary}\label{mnegat}
For any ${\mathcal W} \in  \Gr^{(0)}_+$ there are no non-trivial KS operators in ${\mathcal D}_-$, ${\mathcal A}_{\mathcal W} \cap {\mathcal D}_- =0$.
\end{corollary}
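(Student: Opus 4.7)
The plan is to reduce the statement to Lemma \ref{LemmaAHP} by conjugation with the Sato group element. Fix ${\mathcal W} \in \Gr^{(0)}_+$ and, using Sato's theorem, write ${\mathcal W} = {\mathtt G}_{\mathcal W} \cdot H_+$ with ${\mathtt G}_{\mathcal W} \in {\mathcal G} = 1 + {\mathcal D}_-$. For any ${\mathtt A} \in {\mathcal A}_{\mathcal W}$, the condition ${\mathtt A} \cdot {\mathcal W} \subset {\mathcal W}$ is equivalent, after applying ${\mathtt G}_{\mathcal W}^{-1}$, to
\begin{equation*}
\widetilde{\mathtt A} := {\mathtt G}_{\mathcal W}^{-1}\, {\mathtt A} \, {\mathtt G}_{\mathcal W} \in {\mathcal A}_{H_+}.
\end{equation*}
By Lemma \ref{LemmaAHP} this is exactly ${\mathcal D}_+$. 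Thus, if I can additionally show that the assumption ${\mathtt A} \in {\mathcal D}_-$ implies $\widetilde{\mathtt A} \in {\mathcal D}_-$, then the direct sum decomposition ${\mathcal D} = {\mathcal D}_+ \oplus {\mathcal D}_-$ forces $\widetilde{\mathtt A} = 0$, and invertibility of ${\mathtt G}_{\mathcal W}$ gives ${\mathtt A} = 0$.

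The argument therefore reduces to showing that conjugation by elements of ${\mathcal G}$ preserves ${\mathcal D}_-$. I would establish this by verifying two sub-claims. First, ${\mathcal D}_-$ is closed under composition in ${\mathcal D}$: composing $a(z^{-1})\, \p^m/\p z^m$ with $b(z^{-1})\, \p^n/\p z^n$ for $a,b \in H_- = z^{-1}{\mathbb C}[[z^{-1}]]$ produces, by the Leibniz rule, terms with coefficients $a(z^{-1}) b^{(k)}(z^{-1})$; since $\p/\p z$ maps $z^{-1}{\mathbb C}[[z^{-1}]]$ into itself (in fact deepens the vanishing at infinity), each such coefficient remains in $H_-$. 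Second, ${\mathcal G}^{-1} \subset {\mathcal G}$: writing ${\mathtt G}_{\mathcal W} = 1 + {\mathtt D}$ with ${\mathtt D} \in {\mathcal D}_-$, the operator ${\mathtt D}$ has filtration degree at most $-1$ (since $\deg\,z^{-1} = \deg\,\p/\p z = -1$), so the geometric series $\sum_{k \ge 0}(-{\mathtt D})^k$ converges in the filtered topology of ${\mathcal D}$ and provides an inverse in $1 + {\mathcal D}_-$.

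Combining the two sub-claims, for any ${\mathtt A} \in {\mathcal D}_-$ we have ${\mathtt G}_{\mathcal W}^{-1} \cdot {\mathtt A} \in {\mathcal D}_- + {\mathcal D}_- \cdot {\mathcal D}_- \subset {\mathcal D}_-$, and then right-multiplying by ${\mathtt G}_{\mathcal W}$ preserves this for the same reason, so $\widetilde{\mathtt A} \in {\mathcal D}_-$. This completes the reduction to $\widetilde{\mathtt A} \in {\mathcal D}_+ \cap {\mathcal D}_- = 0$. I do not anticipate a real obstacle; the only subtlety is careful bookkeeping of the composition rule in ${\mathcal D}$, where $z$-derivatives act on coefficients that are themselves formal Laurent series in $z^{-1}$, but working with the filtration $\{{\mathcal D}^{(n)}\}$ introduced in the paper keeps the convergence of series such as ${\mathtt G}_{\mathcal W}^{-1} = \sum_k(-{\mathtt D})^k$ transparent.
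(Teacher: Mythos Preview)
Your proof is correct and follows essentially the same approach as the paper: conjugate by the Sato group element, invoke Lemma~\ref{LemmaAHP} to get $\widetilde{\mathtt A}\in{\mathcal D}_+$, and observe $\widetilde{\mathtt A}\in{\mathcal D}_-$ to conclude. The only difference is that you spell out in detail why conjugation by an element of ${\mathcal G}$ preserves ${\mathcal D}_-$ (closure of ${\mathcal D}_-$ under composition and ${\mathcal G}^{-1}\subset{\mathcal G}$), whereas the paper simply asserts ${\mathtt G}_{\mathcal W}^{-1}{\mathtt A}{\mathtt G}_{\mathcal W}\in{\mathcal D}_-$ without further comment.
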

\begin{proof}
Assume ${\mathtt A}\in {\mathcal D}_-$ is a KS operator for some ${\mathcal W} \in  \Gr^{(0)}_+$. Then ${\mathtt G}_{\mathcal W}^{-1} {\mathtt A} {\mathtt G}_{\mathcal W}$  is a KS operator for $H_+ \in  \Gr^{(0)}_+ $. But ${\mathtt G}_{\mathcal W}^{-1} {\mathtt A} {\mathtt G}_{\mathcal W} \in {\mathcal D}_-$, so it vanishes. Hence ${\mathtt A}=0$.
\end{proof}

From Definition \ref{defPQ} of operators ${\mathtt P}_{\mathcal W}$ and ${\mathtt Q}_{\mathcal W}$ it immediately follows that for any  ${\mathcal W} \in  \Gr^{(0)}_+$ these operators belong to ${\mathcal A}_{\mathcal W}$. These operators will play an important role in our consideration. By construction, 
\be\label{PQop}
({\mathtt P}_{\mathcal W},{\mathtt Q}_{\mathcal W}) \in \Gr_{\mathcal D}.
\ee 
where $\Gr_{\mathcal D}$ is given by (\ref{GrD}).

Any two points ${\mathcal W}$ and ${\mathcal W}'$ of the Sato Grassmannian are related by some operator ${\mathtt G}_{{\mathcal W} {\mathcal W}'} \in {\mathcal G}$, 
\be\label{enoth}
{\mathcal W}'={\mathtt G}_{{\mathcal W} {\mathcal W}'} \cdot {\mathcal W}.
\ee
Such an operator is unique,  ${\mathtt G}_{{\mathcal W} {\mathcal W}'}={\mathtt G}_{ {\mathcal W}'}{\mathtt G}_{{\mathcal W}}^{-1}$, or
\be\label{GGcon}
{\mathtt G}_{{\mathcal W}'}={\mathtt G}_{{\mathcal W} {\mathcal W}'} \, {\mathtt G}_{{\mathcal W}}.
\ee
The canonical pair of the KS operators for these two points are related to each other by the conjugation
\be\label{GGcon1}
\rho({\mathcal W}')= {\mathtt G}_{{\mathcal W} {\mathcal W}'} \,\rho({\mathcal W})\, {\mathtt G}_{{\mathcal W} {\mathcal W}'}^{-1}.
\ee

Operators $({\mathtt P}_{\mathcal W},{\mathtt Q}_{\mathcal W})$ can be considered as the counterparts of the Lax (\ref{Lax}) and Orlov-Schulman (\ref{OS}) operators on the Sato Grassmannian side
\begin{equation}
\begin{split}
{\mathtt P}_{\mathcal W} \cdot \Psi(z,x) & = -x\, \Psi(z,x),\\
{\mathtt Q}_{\mathcal W} \cdot \Psi(z,x) & = -\frac {\p}{\p x} \,\Psi(z,x).
\end{split}
\end{equation}
At $x=0$ the first equation reduces to
\be\label{QSCeq}
{\mathtt P}_{\mathcal W}\cdot  \Psi=0.
\ee
We call it the {\em quantum spectral curve} equation associated with ${\mathcal W}$. We also call ${\mathtt P}_{\mathcal W}$ the {\em quantum spectral curve operator}. 

\begin{lemma}\label{QSCLem}
For any ${\mathtt P}\in \frac{\p}{\p z}+z^{-1}{\mathcal D}_-$ equation 
\be\label{likeqsc}
{\mathtt P}\cdot \Psi=0
\ee
has a unique monic solution in $H$. This solution is of the form $\Psi=1+O(z^{-1})$. Hence, the quantum spectral curve equation (\ref{QSCeq}) uniquely specifies the wave function.
\end{lemma}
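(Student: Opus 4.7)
My plan is to decompose $\mathtt{P}=\frac{\p}{\p z}+\mathtt{R}$ with $\mathtt{R}\in z^{-1}\mathcal{D}_-$ and analyse the equation $\mathtt{P}\cdot\Psi=0$ coefficient-by-coefficient in the $z$-expansion. The two degree estimates I rely on are: $\frac{\p}{\p z}$ lowers the $z$-degree by exactly one, sending a leading term $a_N z^N$ to $Na_N z^{N-1}$; while every element of $z^{-1}\mathcal{D}_-$ lowers $z$-degree by at least two, because each of its $\frac{\p^m}{\p z^m}$-coefficients lies in $z^{-2}\mathbb{C}[[z^{-1}]]$. Consequently, for any monic $\Psi\in H$ of degree $N$, the leading term of $\mathtt{P}\cdot\Psi$ is $Nz^{N-1}$. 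Since this must vanish, $N=0$, which already forces the shape $\Psi=1+\sum_{k\geq 1}c_k z^{-k}$ claimed in the lemma.

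Next I would establish existence and uniqueness of the coefficients $c_k$. Writing $\mathtt{R}=\sum_{m\geq 0}r_m(z^{-1})\frac{\p^m}{\p z^m}$ with $r_m\in z^{-2}\mathbb{C}[[z^{-1}]]$ and extracting the coefficient of $z^{-j}$ from $\mathtt{P}\cdot\Psi=0$, the cases $j=0,1$ hold automatically by the degree bounds above, while for $j\geq 2$ one gets
\be
-(j-1)c_{j-1}+[z^{-j}](\mathtt{R}\cdot\Psi)=0.
\ee
A short bookkeeping argument shows that only triples $(m,k,l)$ with $m+k+l=j$, $k\geq 2$, $l\geq 0$ contribute to $[z^{-j}]$, so at most finitely many $m$ appear and $l\leq j-2$. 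Thus $[z^{-j}](\mathtt{R}\cdot\Psi)$ is a well-defined polynomial in $c_0=1,c_1,\dots,c_{j-2}$, giving a triangular recursion
\be
c_{j-1}=\frac{1}{j-1}[z^{-j}](\mathtt{R}\cdot\Psi)
\ee
that, seeded by $c_0=1$, uniquely determines every $c_j$ and produces a genuine solution in $1+z^{-1}\mathbb{C}[[z^{-1}]]$.

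The only subtle point I would double-check is the well-definedness of $\mathtt{R}\cdot\Psi$ as an element of $H$, since a priori $\mathtt{R}$ contains infinitely many derivatives $\frac{\p^m}{\p z^m}$; but the degree estimates above guarantee that each monomial $z^{-j}$ receives only finitely many nonzero contributions, so there is no convergence issue. The final clause of the lemma follows immediately: by (\ref{PQop}) the canonical operator $\mathtt{P}_{\mathcal W}$ satisfies the hypotheses, and the wave function $\Psi\in 1+z^{-1}\mathbb{C}[[z^{-1}]]$ is monic in $H$, so it is pinned down uniquely as the solution of the quantum spectral curve equation $\mathtt{P}_{\mathcal W}\cdot\Psi=0$.
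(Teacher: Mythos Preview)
Your proof is correct and follows the same approach as the paper's, which simply observes that the leading-term analysis forces $\Psi$ to have degree $0$ and that the remaining coefficients are then determined recursively. You have merely spelled out the degree bookkeeping and the triangularity of the recursion in more detail than the paper does.
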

\begin{proof}
For any operator ${\mathtt P}\in \frac{\p}{\p z}+z^{-1}{\mathcal D}_-$ the equation (\ref{likeqsc}) has a non-trivial solution of the form $\sum_{k=-\infty}^{k_0} a_k z^k$ for some finite $k_0$ if and only if $a_0\neq 0$ and $a_k=0$ for positive $k$. Then all coefficients $a_k$ for negative $k$ can be obtained recursively in $k$. 
\end{proof}

\begin{conjecture}
We expect, that this definition of the quantum spectral curve is consistent with other concepts in the literature, see for example \cite{DHS,BE,GS,Sch2}. In particular, when the tau-function has a nice genus or topological expansion, we expect that a semi-classical limit of the quantum spectral curve equation (\ref{QSCeq}) after its conjugation with the unstable contributions (for example, see Section \ref{S_topex}) should lead to the classical spectral curve. Let us stress that in our approach the classical spectral curve appears as a dequantization of the quantum spectral curve, so the latter is more fundamental.  
\end{conjecture}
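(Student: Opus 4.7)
The plan is to split the conjecture into its two clauses: that the semi-classical limit of ${\mathtt P}_{\mathcal W}\cdot\Psi=0$ produces the classical spectral curve, and that ${\mathtt P}_{\mathcal W}$ agrees, up to a known normalization, with the quantum curves constructed in \cite{DHS,BE,GS,Sch2}. The first is the mathematical heart of the conjecture; the second is essentially a dictionary check once the semi-classical limit is understood.

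The route I would follow is a WKB analysis of the wave function. Assume $\tau_{\mathcal W}$ admits a topological expansion $\log\tau_{\mathcal W}=\sum_{g,n}\hbar^{2g-2+n}F_{g,n}/n!$. Substituting into (\ref{BA}) and specializing ${\bf t}=0$ yields the ansatz
\[
\Psi(z)=\exp\!\Bigl(\hbar^{-1}S_0(z)+S_1(z)+\hbar S_2(z)+\cdots\Bigr),
\]
where $S_0$ is built from the genus-$0$ one-point correlators and $S_1$ from the genus-$0$ two-point and genus-$1$ one-point correlators — precisely the ``unstable'' pieces singled out in the conjecture. Conjugate, setting $\widetilde{\mathtt P}_{\mathcal W}:=e^{-(\hbar^{-1}S_0+S_1)}\,{\mathtt P}_{\mathcal W}\,e^{\hbar^{-1}S_0+S_1}$, so that $\frac{\p}{\p z}$ is effectively shifted by $\hbar^{-1}S_0'+S_1'$. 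After rescaling $\hbar\frac{\p}{\p z}\to y$ and $z\to x$, the leading $\hbar\to 0$ term of $\widetilde{\mathtt P}_{\mathcal W}$ becomes a (formal) polynomial $P_{\rm cl}(x,y)$, and the equation $P_{\rm cl}(x,y)=0$ is the candidate classical spectral curve, admitting the distinguished uniformization $x=z$, $y=S_0'(z)$.

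To establish the consistency clause I would test this recipe against the paper's internal examples. For the polynomial GKM, the identity ${\mathtt P}_U=\hbar^{-1}(V'({\mathtt Q}_U)-V'(z))$ from Lemma \ref{Qform} makes the WKB limit an explicit calculation: expanding ${\mathtt Q}_U=z+U(z)^{-1}\frac{\p}{\p z}+\ldots$ and taking leading order in $\hbar$ should reproduce the classical spectral curve of the generalized Kontsevich model known from \cite{BE}. For the Hodge case, formula (\ref{pochqq}) combined with Theorem \ref{H3U} provides an explicit exponential operator whose leading symbol should recover the Calabi--Yau spectral curve of the triple Hodge integrals. A general match to \cite{DHS,BE,GS,Sch2} would then reduce to showing that canonical Weyl-type quantization of $P_{\rm cl}$, together with the $-\chi/2$ shift already visible in (\ref{pochqq}), reproduces ${\mathtt P}_{\mathcal W}$ up to conjugation by $e^{S_1}$.

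The main obstacle is that the whole construction must be made to work inside formal power series rather than an analytic category. Two concrete technical points drive this. First, the conjugating operator $e^{\hbar^{-1}S_0+S_1}$ is not literally in ${\mathcal G}$ because of the $\hbar^{-1}$ in its exponent, so it must be reinterpreted as a formal intertwiner between the $\hbar$-graded pieces of ${\mathcal D}$. Second, compatibility of the $\hbar$-filtration with the degree filtration on ${\mathcal D}$ induced by $\deg z^{-1}=\deg\frac{\p}{\p z}=-1$ must be checked term by term, so that the leading symbol of $\widetilde{\mathtt P}_{\mathcal W}$ is genuinely well defined and independent of how one splits the exponent between $S_0$ and $S_1$. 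Uniqueness of the wave function in Lemma \ref{QSCLem} applied to $\widetilde{\mathtt P}_{\mathcal W}$ should rigidify the splitting, but turning this genus-by-genus bookkeeping into a clean general statement — valid beyond the examples treated here — is exactly where the conjecture now stands.
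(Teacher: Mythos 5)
The statement you are addressing is labeled as a Conjecture, and the paper offers no proof of it; there is nothing on the paper's side to compare your argument against. What the paper does provide is supporting evidence: in Section \ref{S_topex} it carries out, for the GKM family, exactly the conjugation by the unstable contributions $S_0,S_1$ and the passage from the modified quantum spectral curve operator to the classical curve $V'(y)=x$, and the deformed examples (e.g.\ (\ref{pochqq}) and its semi-classical limit) are worked out case by case. Your proposal correctly reconstructs this mechanism and correctly identifies $S_0$ and $S_1$ with the genus-zero one-point and the genus-zero two-point/genus-one one-point data.

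However, your text is a research program rather than a proof, and you should not present it as one. The general clause of the conjecture --- that for an arbitrary $\mathcal W$ admitting a topological expansion the leading symbol of $e^{-(\hbar^{-1}S_0+S_1)}\,{\mathtt P}_{\mathcal W}\,e^{\hbar^{-1}S_0+S_1}$ is well defined and coincides with the spectral curves of \cite{DHS,BE,GS,Sch2} --- is precisely what remains open, and your own final paragraph concedes that the two technical points driving it (making sense of the $\hbar^{-1}$-conjugation inside the formal ring $\mathcal D$, and proving compatibility of the $\hbar$-filtration with the degree filtration so that the leading symbol is independent of the splitting of the exponent) are not resolved. A further gap is that ``consistent with other concepts in the literature'' is not a single precise mathematical statement: the constructions in \cite{DHS}, \cite{BE}, \cite{GS} and \cite{Sch2} are not identical to one another, so before any proof can begin one must fix which notion of quantum curve is being matched and in what normalization. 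Verifying the GKM and Hodge examples, as you propose, is valuable evidence but does not close either gap; the conjecture stands.
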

KS algebra should be useful for the investigation of the classical spectral curve, WKB expansion of the basis vectors and correlation functions and, as a result, construction of topological recursion in a way similar to  \cite{ACEH,ACEH1}. We expect, that the quantum and classical spectral curves introduced in this paper should describe the topological recursion.

From Corollary  \ref{mnegat} we have
\begin{lemma}\label{uniquecan}
For any ${\mathcal W} \in \Gr^{(0)}_+$ the canonical KS operators $({\mathtt P}_{\mathcal W},{\mathtt Q}_{\mathcal W})$ provide a unique pair of the KS operators belonging to $\Gr_{\mathcal D}$,
\be
\Gr_{\mathcal D}\cap {\mathcal A}_{\mathcal W}^2= ({\mathtt P}_{\mathcal W},{\mathtt Q}_{\mathcal W}).
\ee
\end{lemma}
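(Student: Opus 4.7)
The plan is to exploit the triviality of the negative part of the KS algebra, as just established in Corollary~\ref{mnegat}, and reduce the uniqueness claim to a simple subtraction argument.

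First I would observe that one containment is immediate: by construction ${\mathtt P}_{\mathcal W},{\mathtt Q}_{\mathcal W}\in {\mathcal A}_{\mathcal W}$ (since conjugation by ${\mathtt G}_{\mathcal W}$ of the stabilizers $\p/\p z$ and $z$ of $H_+$ produces stabilizers of ${\mathcal W}={\mathtt G}_{\mathcal W}\cdot H_+$), and the defining asymptotics $\mathtt P_{\mathcal W}-\p/\p z\in z^{-1}{\mathcal D}_-$, $\mathtt Q_{\mathcal W}-z\in {\mathcal D}_-$ together with $[\mathtt P_{\mathcal W},\mathtt Q_{\mathcal W}]=[{\mathtt G}_{\mathcal W}\p/\p z\, {\mathtt G}_{\mathcal W}^{-1},{\mathtt G}_{\mathcal W}\, z\,{\mathtt G}_{\mathcal W}^{-1}]={\mathtt G}_{\mathcal W}[\p/\p z,z]{\mathtt G}_{\mathcal W}^{-1}=1$ show $({\mathtt P}_{\mathcal W},{\mathtt Q}_{\mathcal W})\in \Gr_{\mathcal D}$, as already recorded in \eqref{PQop}.

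For the nontrivial direction, let $({\mathtt P},{\mathtt Q})\in \Gr_{\mathcal D}\cap {\mathcal A}_{\mathcal W}^2$ be any other such pair. The plan is to look at the differences
\[
\Delta_P := {\mathtt P}-{\mathtt P}_{\mathcal W},\qquad \Delta_Q := {\mathtt Q}-{\mathtt Q}_{\mathcal W}.
\]
Since ${\mathcal A}_{\mathcal W}$ is closed under subtraction (it is an algebra), both $\Delta_P,\Delta_Q\in {\mathcal A}_{\mathcal W}$. On the other hand, the $\Gr_{\mathcal D}$ conditions give $\mathtt P-\p/\p z$ and $\mathtt P_{\mathcal W}-\p/\p z$ both in $z^{-1}{\mathcal D}_-$, so $\Delta_P\in z^{-1}{\mathcal D}_-\subset {\mathcal D}_-$; likewise $\Delta_Q\in {\mathcal D}_-$.

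The hard work has already been done in Corollary~\ref{mnegat}, which asserts ${\mathcal A}_{\mathcal W}\cap {\mathcal D}_-=0$. Applying it to $\Delta_P$ and $\Delta_Q$ forces $\Delta_P=\Delta_Q=0$, i.e.\ ${\mathtt P}={\mathtt P}_{\mathcal W}$ and ${\mathtt Q}={\mathtt Q}_{\mathcal W}$, completing the uniqueness. The only conceptual obstacle is the one already overcome in that corollary, where Sato's theorem and Lemma~\ref{LemmaAHP} are invoked to push $\Delta_P,\Delta_Q$ across ${\mathtt G}_{\mathcal W}$ to the $H_+$ side and conclude that a purely negative differential operator cannot stabilize $H_+$; given that input, the present lemma is essentially a bookkeeping statement.
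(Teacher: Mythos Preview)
Your proof is correct and follows essentially the same approach as the paper: form the differences with the canonical pair, observe they lie in ${\mathcal A}_{\mathcal W}\cap{\mathcal D}_-$, and apply Corollary~\ref{mnegat} to conclude they vanish. The paper phrases it as a proof by contradiction rather than a direct argument, but the substance is identical.
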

\begin{proof}
Assume that for some ${\mathcal W} \in \Gr^{(0)}_+$ there is another pair $ (\tilde{\mathtt P}_{\mathcal W},\tilde{\mathtt Q}_{\mathcal W}) \in \Gr_{\mathcal D}$ which stabilizes ${\mathcal W}$, and  $ (\tilde{\mathtt P}_{\mathcal W},\tilde{\mathtt Q}_{\mathcal W})\neq  ({\mathtt P}_{\mathcal W},{\mathtt Q}_{\mathcal W})$. Then operator $\tilde{\mathtt P}_{\mathcal W} - {\mathtt P}_{\mathcal W}\in {\mathcal D}_-$ is a KS operator, so by Corollary \ref{mnegat} it vanishes. The same argument shows that $\tilde{\mathtt Q}_{\mathcal W}={\mathtt Q}_{\mathcal W}$, which contradicts the assumption. This completes the proof.
\end{proof}

\begin{proof}[{\bf Proof of Theorem \ref{bijecl}}]
Definition \ref{defPQ} describes $\rho({\mathcal W})=\left({\mathtt P}_{\mathcal W},{\mathtt Q}_{\mathcal W}\right) \in \Gr_{\mathcal D}$ for any ${\mathcal W}\in  \Gr^{(0)}_+ $. 
Let us describe $\rho^{-1}: \Gr_{\mathcal D}  \rightarrow \Gr^{(0)}_+$. Namely, for some $\left({\mathtt P},{\mathtt Q}\right) \in  \Gr_{\mathcal D} $ consider equation (\ref{likeqsc}).
From Lemma \ref{QSCLem} this equation has a unique monic solution $\Psi$ in $H$. Then 
\be
\Gr^{(0)}_+ \ni \rho^{-1}\left({\mathtt P},{\mathtt Q}\right) := \sppan_{\cc}\{\Psi, {\mathtt Q}\cdot\Psi, {\mathtt Q}^2 \cdot \Psi,\dots \}.
\ee
By construction ${\mathtt P}$ and ${\mathtt Q}$ are the KS operators for $\rho^{-1}\left({\mathtt P},{\mathtt Q}\right) $. From Lemma \ref{uniquecan} 
such pair is unique, thus $\rho \rho^{-1} = \id$ on $\Gr_{\mathcal D}$. 

Similarly, if $\rho({\mathcal W})=\left({\mathtt P}_{\mathcal W},{\mathtt Q}_{\mathcal W}\right)$, then the equation for the wave function ${\mathtt P}_{\mathcal W}\cdot \Psi=0$ defines a wave function. Then ${\mathcal W}=\sppan_{\cc}\{\Psi, {\mathtt Q}_{\mathcal W}\cdot\Psi, {\mathtt Q}_{\mathcal W}^2 \cdot \Psi,\dots \}$, hence  $\rho^{-1} \rho = \id$ on $\Gr^{(0)}_+$. This completes the proof. 
\end{proof}

From Lemma \ref{LemmaAHP} it follows that
\begin{proposition}
For any ${\mathcal W} \in  \Gr^{(0)}_+$ the canonical pair of the KS operators generate the KS algebra
\be
{\mathcal A}_{\mathcal W}={\mathbb C}[{\mathtt Q}_{\mathcal W}][[{\mathtt P}_{\mathcal W}]].
\ee
\end{proposition}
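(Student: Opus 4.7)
The plan is to reduce the statement to the model case ${\mathcal W}=H_+$ via conjugation by the Sato group element ${\mathtt G}_{\mathcal W}$, where Lemma \ref{LemmaAHP} already identifies the KS algebra.

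First, I would observe that the KS algebra transforms covariantly under the action of ${\mathcal G}$ on $\Gr^{(0)}_+$. Concretely, since ${\mathcal W}={\mathtt G}_{\mathcal W}\cdot H_+$ by Sato's theorem, an operator ${\mathtt A}\in{\mathcal D}$ stabilizes ${\mathcal W}$ if and only if ${\mathtt G}_{\mathcal W}^{-1}\,{\mathtt A}\,{\mathtt G}_{\mathcal W}$ stabilizes $H_+$. Hence
\begin{equation*}
{\mathcal A}_{\mathcal W}={\mathtt G}_{\mathcal W}\,{\mathcal A}_{H_+}\,{\mathtt G}_{\mathcal W}^{-1}.
\end{equation*}

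Next, by Lemma \ref{LemmaAHP}, ${\mathcal A}_{H_+}={\mathcal D}_+={\mathbb C}[z][[\tfrac{\p}{\p z}]]$. By Definition \ref{defPQ} we have ${\mathtt Q}_{\mathcal W}={\mathtt G}_{\mathcal W}\,z\,{\mathtt G}_{\mathcal W}^{-1}$ and ${\mathtt P}_{\mathcal W}={\mathtt G}_{\mathcal W}\,\tfrac{\p}{\p z}\,{\mathtt G}_{\mathcal W}^{-1}$, so conjugation by ${\mathtt G}_{\mathcal W}$ sends the generators $z,\tfrac{\p}{\p z}$ to ${\mathtt Q}_{\mathcal W},{\mathtt P}_{\mathcal W}$ and preserves the canonical commutation relation. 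It therefore sends any polynomial in $z$ to the corresponding polynomial in ${\mathtt Q}_{\mathcal W}$, and any power of $\tfrac{\p}{\p z}$ to the same power of ${\mathtt P}_{\mathcal W}$. This gives the inclusion ${\mathcal A}_{\mathcal W}\supseteq{\mathbb C}[{\mathtt Q}_{\mathcal W}][[{\mathtt P}_{\mathcal W}]]$, and conversely any element of ${\mathcal A}_{\mathcal W}$ is, after conjugation back by ${\mathtt G}_{\mathcal W}^{-1}$, a formal series $\sum_k p_k(z)(\tfrac{\p}{\p z})^k\in{\mathcal D}_+$, and then re-conjugation expresses it as $\sum_k p_k({\mathtt Q}_{\mathcal W})\,{\mathtt P}_{\mathcal W}^k$.

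The only point that needs care — and which I expect to be the main (though minor) obstacle — is to verify that the re-assembly of $\sum_k p_k({\mathtt Q}_{\mathcal W})\,{\mathtt P}_{\mathcal W}^k$ makes sense as an element of ${\mathcal D}$, i.e.\ that the formal series in ${\mathtt P}_{\mathcal W}$ converges in the filtration of ${\mathcal D}$ introduced above. For this one checks that ${\mathtt P}_{\mathcal W}-\tfrac{\p}{\p z}\in z^{-1}{\mathcal D}_-$ lowers the filtration degree by at least one (this is built into the definition of $\Gr_{\mathcal D}$), so each homogeneous component in the filtration only receives contributions from finitely many terms of the series; the same holds for ${\mathtt Q}_{\mathcal W}$. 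Together with the preservation of the commutation relation $[{\mathtt P}_{\mathcal W},{\mathtt Q}_{\mathcal W}]=1$, this yields the claimed equality ${\mathcal A}_{\mathcal W}={\mathbb C}[{\mathtt Q}_{\mathcal W}][[{\mathtt P}_{\mathcal W}]]$.
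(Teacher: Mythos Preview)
Your proof is correct and follows essentially the same route as the paper: conjugate by the Sato group element to reduce to $H_+$, invoke Lemma \ref{LemmaAHP} to identify ${\mathcal A}_{H_+}={\mathcal D}_+={\mathbb C}[z][[\tfrac{\p}{\p z}]]$, and conjugate back. The paper's version is a one-line compression of your argument; your added discussion of convergence in the filtration is a helpful elaboration but not a different idea.
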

\begin{proof}
Let ${\mathtt A} \in {\mathcal A}_{\mathcal W}$. Then  ${\mathtt G}_{\mathcal W}^{-1} {\mathtt A} {\mathtt G}_{\mathcal W} \in {\mathcal D}_+$, hence ${\mathtt A} \in {\mathtt G}_{\mathcal W}  {\mathcal D}_+ {\mathtt G}_{\mathcal W}^{-1}= {\mathbb C}[{\mathtt Q}_{\mathcal W}][[{\mathtt P}_{\mathcal W}]]$.
\end{proof}

A question arises: When does the KS algebra ${\mathcal A}_{\mathcal W}$ contain the differential operators of finite order? In such cases corresponding operators, acting on the tau-functions (see Section \ref{Virsec}), are finite order differential operators in $t_k$.  In some cases the operator ${\mathtt P}_{\mathcal W}$ or ${\mathtt Q}_{\mathcal W}$  (or both) is a differential operator of finite order. A rich class of such tau-functions associated to the generalized Kontsevich model will be discussed in Sections \ref{S:GKM} and \ref{Secdef}. Unfortunately, we do not know a general answer yet. 

\begin{remark} It is also interesting to consider a more restrictive subclass of the KS operators, namely ${\mathbb C}[z,z^{-1},\frac{\p}{\p z}] \cap  {\mathcal A}_{\mathcal W}$.
This subalgebra is directly related to the polynomial solutions for the Douglas string equation \cite{Douglas}. 
\end{remark}

For the dual point of the Sato Grassmannian, ${\mathcal W}^{\bot}$, the canonical KS operators are
\begin{equation}
\begin{split}\label{PQdual}
{\mathtt P}_{{\mathcal W}^\bot} & =  ({\mathtt G}_{\mathcal W}^{-1})^* \,\frac{\p}{\p z}\, {\mathtt G}_{\mathcal W}^{*} =-{\mathtt P}_{\mathcal W}^* ,\\
{\mathtt Q}_{{\mathcal W}^\bot} & =  ({\mathtt G}_{\mathcal W}^{-1})^*\, z\,  {\mathtt G}_{\mathcal W}^{*}={\mathtt Q}_{\mathcal W}^*, 
\end{split}
\end{equation}
hence \cite{ACM}
\be
{\mathcal A}_{{\mathcal W}^{\bot}}={\mathcal A}_{\mathcal W}^*.
\ee

%%%%%%%%%%%%%%%%%%%%%%%%%%%%%%%%%%%%%%%
\subsection{Distinguished basis}
Let us introduce a convenient basis, which describes a point of the Sato Grassmannian and its dual
\begin{definition}
For ${\mathcal W}\in \Gr^{(0)}_+$ let
\be\label{canbas}
\check\Phi^{\mathcal W}_k:={\mathtt G}_{\mathcal W} \cdot z^{k-1}
\ee 
be the {\em distinguished basis}. The {\em dual distinguished basis} is given by
\be\label{canbasd}
\check\Phi^{{\mathcal W}^\bot}_k:=({\mathtt G}_{\mathcal W}^*)^{-1} \cdot z^{k-1}. 
\ee 
\end{definition}

From Sato's Theorem it follows that (\ref{canbas}) for $k>0$ constitute an admissible basis in ${\mathcal W}$. Actually, (\ref{canbas}) and (\ref{canbasd}) are well-defined for $k\in {\mathbb Z}$, so each set describes the adapted basis of $H$. They are orthonormal
\be
\frac{1}{2\pi i}\oint_{\infty} \check\Phi^{\mathcal W}_k  \check\Phi^{{\mathcal W}^\bot}_m\, dz=\frac{1}{2\pi i}\oint_{\infty} z^{k+m-2}\, dz = \delta_{k+m,1}.
\ee
 
 Using operator ${\mathtt G}_{\mathcal W}$ one can also describe the {\em canonical basis}. Namely,
 \be\label{canon}
 \Phi^{\mathcal W}_k={\mathtt G}_{\mathcal W}\cdot \pi_+ ({\mathtt G}_{\mathcal W}^{-1}\cdot z^{k-1})
 \ee
 is an admissible basis in ${\mathcal W}$ of the form (\ref{goodbas}) by construction. Moreover, this basis satisfies
 \be
  \Phi^{\mathcal W}_k=z^k+O(z^{-1}).
 \ee
 The {\em dual canonical basis} is given by
 \be\label{dcanon}
   \Phi^{{\mathcal W}^\bot}_k=({\mathtt G}_{\mathcal W}^*)^{-1}\cdot \pi_+ ({\mathtt G}_{\mathcal W}^*\cdot z^{k-1}).
 \ee

Operator ${\mathtt Q}_{\mathcal W}$ is invertible with inverse  ${\mathtt Q}_{\mathcal W}^{-1}=z^{-1}+\dots \in {\mathcal D}_-$. ${\mathtt Q}_{\mathcal W}^{-1}$ is not a KS operator, but it yields an alternative expression for the distinguished adapted basis, for $k \in {\mathbb Z}$
\be\label{phitpsi}
\check\Phi_k^{\mathcal W}(z) =  {\mathtt Q}_{\mathcal W}^{k-1}\cdot \Psi.
\ee
For the distinguished basis we have
\begin{equation}
\begin{split}
{\mathtt P}_{\mathcal W}\cdot \check\Phi_k^{\mathcal W}(z)&=(k-1) \check\Phi_{k-1}^{\mathcal W}(z),\\
{\mathtt Q}_{\mathcal W}\cdot \check\Phi_k^{\mathcal W}(z)&=  \check\Phi_{k+1}^{\mathcal W}(z).
\end{split}
\end{equation}

The BA function (\ref{psiasop}) can be obtained from the wave function
\begin{equation}
\begin{split}
\Psi(z,x)&={\mathtt G}_{\mathcal W}\, e^{-xz} \, {\mathtt G}_{\mathcal W}^{-1}\,  {\mathtt G}_{\mathcal W} \cdot 1 \\
&=e^{-x {\mathtt Q}_{\mathcal W}} \cdot \Psi(z).
\end{split}
\end{equation}
so the distinguished basis arises in expansion of the BA function in $x$ (\ref{Grser}),
\be
\check{\Phi}_k^{\mathcal W} =\left.\left(-\frac{\p}{\p x}\right)^{k-1} \Psi(z,x)\right|_{x=0}.
\ee

Distinguished basis is closely related to the coefficients of the Sato group operator. Let
\be
{\mathtt G}_{\mathcal W}=:\sum_{k=0}^\infty {\mathtt G}_k\frac{\p^k}{\p z^k}.
\ee
\begin{lemma}\label{Gtocan}
\be\label{defconstr}
{\mathtt G}_k=\sum_{m+j=k} \frac{(-z)^{m}}{m! j!} \check{\Phi}^{\mathcal W}_{j+1}.
\ee
The distinguished basis is a unique basis, such that the right hand side of this expression belongs to $H_-+\delta_{k,0}$.
\end{lemma}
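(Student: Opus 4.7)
My plan is to derive both halves of the lemma from a double expansion of the Baker-Akhiezer function $\Psi(z,x)={\mathtt G}_{\mathcal W}\cdot e^{-xz}$, together with the identity $\check\Phi_{n+1}^{\mathcal W}=(-\p/\p x)^n \Psi(z,x)|_{x=0}$ recorded just above the lemma. On the one hand, acting termwise with ${\mathtt G}_{\mathcal W}=\sum_{k\geq 0}{\mathtt G}_k \p^k/\p z^k$ on the exponential, and using $\p_z^k\cdot e^{-xz}=(-x)^k e^{-xz}$, gives
\be
\Psi(z,x)=e^{-xz}\sum_{k\geq 0}{\mathtt G}_k(-x)^k.
\ee
On the other hand, Taylor expansion in $x$ combined with the formula for $\check\Phi_{n+1}^{\mathcal W}$ rewrites the same function as $\sum_{n\geq 0}(-x)^n \check\Phi_{n+1}^{\mathcal W}/n!$.

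Equating the two expressions, multiplying by $e^{xz}$, and expanding $e^{xz}=\sum_m (xz)^m/m!$ produces a Cauchy product in which $(-1)^m z^m=(-z)^m$ appears; extracting the coefficient of $(-x)^k$ then yields the formula (\ref{defconstr}) directly. The argument is a purely formal manipulation of generating series in $x$ and presents no analytic issues. For the uniqueness claim I would run the calculation in reverse: suppose $\{\tilde\Phi_j\}_{j\geq 1}$ is any basis of ${\mathcal W}$ such that the quantities $\tilde{\mathtt G}_k:=\sum_{m+j=k}\frac{(-z)^m}{m!\,j!}\tilde\Phi_{j+1}$ lie in $H_-+\delta_{k,0}$. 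Assembling them into the operator $\tilde{\mathtt G}:=\sum_{k\geq 0}\tilde{\mathtt G}_k\p^k/\p z^k$ places $\tilde{\mathtt G}$ in ${\mathcal G}$ by construction, and running the same generating-function identity backwards shows $\tilde{\mathtt G}\cdot z^j=\tilde\Phi_{j+1}$, so that $\tilde{\mathtt G}\cdot H_+={\mathcal W}$; the uniqueness clause of Sato's theorem then forces $\tilde{\mathtt G}={\mathtt G}_{\mathcal W}$, and hence $\tilde\Phi_j=\check\Phi_j^{\mathcal W}$.

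No step presents a real obstacle. The existence half reduces to bookkeeping with exponential generating functions — the cleanest route is to keep $(-x)^k$ as the expansion variable throughout, so that signs and factorials automatically collapse into the pattern $(-z)^m/(m!\,j!)$ — and the uniqueness half is a one-line application of Sato's theorem.
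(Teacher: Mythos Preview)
Your proof is correct and follows essentially the same route as the paper: both derive the formula from the identity $e^{xz}\Psi(z,x)=\sum_{k\geq 0}(-x)^k{\mathtt G}_k$ by expanding $\Psi(z,x)$ in powers of $x$ via the distinguished basis. For uniqueness the paper simply declares it ``obvious'', whereas you spell out the argument by reconstructing an operator $\tilde{\mathtt G}\in{\mathcal G}$ from the hypothetical basis and invoking the bijectivity in Sato's theorem; this is a welcome clarification but not a different idea.
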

\begin{proof}
From (\ref{psiasop}) one has
\be
e^{xz}\Psi(z,x)=\sum_{k=0}^\infty (-x)^k {\mathtt G}_k .
\ee
The relation between ${\mathtt G}_k$ and $ \check{\Phi}^{\mathcal W}_{j}$ follows directly from the $x$-series expansion of this identity. It is also obvious, that 
a basis, such that right hand side of (\ref{defconstr}) belongs to $H_-+\delta_{k,0}$ is unique.
\end{proof}
In particular,
\begin{equation}
\begin{split}
{\mathtt G}_0&=\check{\Phi}_1^{\mathcal W},\\
{\mathtt G}_1&=\check{\Phi}_2^{\mathcal W}-z\check{\Phi}_1^{\mathcal W},\\
{\mathtt G}_2&=\frac{1}{2}\check{\Phi}_3^{\mathcal W}-z\check{\Phi}_2^{\mathcal W}+\frac{z^2}{2}\check{\Phi}_1^{\mathcal W}.
\end{split}
\end{equation}

Let ${\mathtt B}=\sum_{k=0}^\infty {\mathtt B}_k\frac{\p^k}{\p z^k}$ be an operator, then:
\begin{lemma}\label{l_GuniQ}
If ${\mathtt B}_0=\Psi$ and ${\mathtt Q}_{\mathcal W}{\mathtt B}={\mathtt B}z$, then 
\be
{\mathtt B}={\mathtt G}_{\mathcal W}
\ee
is Sato's group element.
\end{lemma}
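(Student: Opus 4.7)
The plan is to show that ${\mathtt C} := {\mathtt G}_{\mathcal W}^{-1} {\mathtt B}$ is forced to be the identity operator, by separately extracting the differential part (from the intertwining relation) and the $\p^0$-term (from the normalization ${\mathtt B}_0 = \Psi$).

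First, I would translate the hypothesis ${\mathtt Q}_{\mathcal W}{\mathtt B}={\mathtt B} z$ using ${\mathtt Q}_{\mathcal W}={\mathtt G}_{\mathcal W}\,z\,{\mathtt G}_{\mathcal W}^{-1}$ from Definition~\ref{defPQ}. Conjugating, this becomes $z\,{\mathtt C} = {\mathtt C}\,z$, i.e.\ ${\mathtt C}\in{\mathcal D}$ commutes with $z$. Writing ${\mathtt C}=\sum_{k\ge 0} c_k(z)\,\p^k$ with $c_k\in{\mathbb C}((z^{-1}))$ and using the basic identity $[z,\p^k]=-k\,\p^{k-1}$, the commutator expands as
\begin{equation*}
[z,{\mathtt C}] \;=\; -\sum_{k\ge 1} k\, c_k(z)\,\p^{k-1}.
\end{equation*}
Vanishing of this sum forces $c_k=0$ for all $k\ge 1$, so ${\mathtt C}$ reduces to multiplication by a single function $c_0(z)\in{\mathbb C}((z^{-1}))$.

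Next I would pin down $c_0$ using the condition ${\mathtt B}_0=\Psi$. Since ${\mathtt B}_0$ is the $\p^0$-coefficient of ${\mathtt B}$, it coincides with the value ${\mathtt B}\cdot 1$. From ${\mathtt B}={\mathtt G}_{\mathcal W}{\mathtt C}$ and ${\mathtt C}\cdot 1 = c_0(z)$, this gives ${\mathtt B}_0 = {\mathtt G}_{\mathcal W}\cdot c_0(z)$. On the other hand, Sato's element satisfies ${\mathtt G}_{\mathcal W}\cdot 1 = \Psi$ by (\ref{psiasop}) at $x=0$. Hence
\begin{equation*}
{\mathtt G}_{\mathcal W}\cdot\bigl(c_0(z)-1\bigr) \;=\; 0,
\end{equation*}
and the invertibility of ${\mathtt G}_{\mathcal W}\in{\mathcal G}$ (guaranteed by Sato's theorem) forces $c_0(z)=1$. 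Therefore ${\mathtt C}=1$ and ${\mathtt B}={\mathtt G}_{\mathcal W}$, as claimed.

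The only subtle point is bookkeeping: one must verify that ${\mathtt C}={\mathtt G}_{\mathcal W}^{-1}{\mathtt B}$ indeed lies in ${\mathcal D}$ so that the commutator computation with $z$ is valid term-by-term, and that the extraction ${\mathtt B}_0={\mathtt B}\cdot 1$ agrees with composing ${\mathtt G}_{\mathcal W}$ with the multiplication operator $c_0(z)$. Both are routine once one writes out the composition of differential operators with coefficients in ${\mathbb C}((z^{-1}))$, so I do not expect a genuine obstacle.
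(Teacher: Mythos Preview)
Your proof is correct and takes a genuinely different route from the paper. The paper argues directly by recursion on the coefficients: it rewrites the hypothesis as $({\mathtt Q}_{\mathcal W}-z){\mathtt B}=[{\mathtt B},z]$, extracts the $\p^k$-coefficient to obtain $(k+1){\mathtt B}_{k+1}=({\mathtt Q}_{\mathcal W}-z)\cdot{\mathtt B}_k+\cdots$, and observes that this recursion determines ${\mathtt B}$ uniquely from ${\mathtt B}_0$; since ${\mathtt G}_{\mathcal W}$ satisfies the same relation with the same initial condition ${\mathtt G}_0=\Psi$, the two agree. Your argument instead conjugates by ${\mathtt G}_{\mathcal W}^{-1}$ to reduce to $[z,{\mathtt C}]=0$, forcing ${\mathtt C}$ to be a multiplication operator, and then uses the action on $1$ to pin down that multiplier. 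Your approach is slightly more conceptual and exploits the invertibility of ${\mathtt G}_{\mathcal W}$ in ${\mathcal G}$ directly; the paper's recursion is more hands-on and sidesteps the need to check that the composite ${\mathtt G}_{\mathcal W}^{-1}{\mathtt B}$ is a well-defined element of ${\mathcal D}$. That well-definedness does hold here (for each $\p^m$-coefficient only finitely many $l\le m$ contribute, and for each such $l$ the remaining sum over derivative order converges in ${\mathbb C}((z^{-1}))$ because $g_k\in H_-$ for $k\ge 1$), so your ``routine bookkeeping'' remark is justified.
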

\begin{proof}
Let us rewrite the relation satisfied by ${\mathtt B}$ as follows
\be
\left({\mathtt Q}_{\mathcal W} -z\right){\mathtt B} =\left[{\mathtt B},z\right].
\ee
If we take a coefficient of the $\frac{\p^k}{\p z^k}$ in the both sides of this relation we get a recursive relation for $ {\mathtt B}_k$
\be
\left({\mathtt Q}_{\mathcal W} -z\right)\cdot  {\mathtt B}_{k}+\dots=(k+1) {\mathtt B}_{k+1}
\ee
where by $\dots$ we denote the terms containing $ {\mathtt B}_{k-1}$,\dots, $ {\mathtt B}_{0}$. This recursive relation completely specifies $ {\mathtt B}$ for a given initial condition $ {\mathtt B}_0$. Then the statement of the lemma follows from the identity ${\mathtt G}_0=\Psi$ and from the
second identity in (\ref{PQdef}). 
\end{proof}

%%%%%%%%%%%%%%%%%%%%%%%%%%%%%%%%%%%%%%%%%%%%%%%%%%%%%%%%%%%
\subsection{Cauchy-Baker-Akhieser Kernel}

Let us introduce the {\em Cauchy-Baker-Akhieser kernel} \cite{Grin} associated to a point of the Sato Grassmannian
\be\label{Kdefin}
K_{\mathcal W}(z,w,x):=e^{x(w-z)}\frac{\tau_{\mathcal W}(x+[z^{-1}]-[w^{-1}])}{(w-z)\,\tau_{\mathcal W}(x)}
\ee
and
\be
K_{\mathcal W}(z,w):=K_{\mathcal W}(z,w,0).
\ee
It contains complete information about the tau-function. In particular, for arbitrary KP tau-function all connected and disconnected correlation functions are certain (reguralized) polynomials of the Cauchy-Baker-Akhieser kernel \cite{Zhou,ACEH1}. Coefficients of the series expansion of $K_{\mathcal W}(z,w,0)$ coincide with the coefficients in the fermionic group element (or affine coordinates of the point of the Sato Grassmannian = the coefficients of the canonical basis (\ref{canon})), see, e.g., Section 3.3.3 of \cite{AZ}. By definition
\be\label{kerexp}
K_{\mathcal W}(z,w)-\frac{1}{w-z}\in z^{-1}w^{-1}{\mathbb C}[[z^{-1},w^{-1}]].
\ee

According to the differential Fay identity \cite{AM}
\be\label{Fay}
\p\, K_{\mathcal W}(z,w,x)=\Psi (z,x) \Psi^{\bot} (w,x).
\ee
Let ${\mathtt G}_z$ be the Sato group element ${\mathtt G}_{\mathcal W}$ acting on the $z$ variable, and ${\mathtt G}_w^*$ be the adjoint Sato group element ${\mathtt G}_{\mathcal W}^*$, acting on the $w$ variable.
\begin{proposition}\label{Kernel_pr}
\be
K_{\mathcal W}(z,w,x)= {\mathtt G}_z ({\mathtt G}_w^*)^{-1} \cdot \frac{e^{x(w-z)}}{w-z}.
\ee
\end{proposition}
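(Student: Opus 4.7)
My plan is to verify the identity by showing both sides have the same $x$-derivative (via the differential Fay identity~\eqref{Fay}), and then matching initial values at $x=0$.

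\emph{Step 1 (matching $x$-derivatives).} Since ${\mathtt G}_z$ acts only on the variable $z$ and $({\mathtt G}_w^*)^{-1}$ only on $w$, these two operators commute with each other and with $\partial/\partial x$. Using $\partial_x \tfrac{e^{x(w-z)}}{w-z} = e^{x(w-z)} = e^{-xz}e^{xw}$, I compute
\[
\partial_x\bigl[{\mathtt G}_z({\mathtt G}_w^*)^{-1}\cdot\tfrac{e^{x(w-z)}}{w-z}\bigr]
=({\mathtt G}_z\cdot e^{-xz})\bigl(({\mathtt G}_w^*)^{-1}\cdot e^{xw}\bigr)
=\Psi(z,x)\,\Psi^\bot(w,x),
\]
where the last equality uses~\eqref{psiasop}. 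By the differential Fay identity~\eqref{Fay}, this equals $\partial_x K_{\mathcal W}(z,w,x)$, hence the two sides of the proposition differ by an $x$-independent function $f(z,w)$.

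\emph{Step 2 (initial condition).} To show $f\equiv 0$, I would evaluate at $x=0$ and prove
\[
K_{\mathcal W}(z,w)={\mathtt G}_z({\mathtt G}_w^*)^{-1}\cdot\tfrac{1}{w-z}.
\]
Expanding $\tfrac{1}{w-z}=\sum_{k\geq 0}z^k w^{-k-1}$ as a formal series in the region $|w|>|z|$, and applying the operators term-by-term using~\eqref{canbas} and~\eqref{canbasd}, the right-hand side becomes
\[
\sum_{k\geq 0}\check\Phi^{\mathcal W}_{k+1}(z)\,\check\Phi^{{\mathcal W}^\bot}_{-k}(w).
\]
This is precisely the bilinear expansion of the Cauchy-Baker-Akhieser kernel in the distinguished basis and its dual, which on the fermionic side (cf.\ Remark~\ref{rmk_ferm}) corresponds to the matrix of the bilinear operator $W_{\mathcal W}$ built from Sato's group element ${\mathtt G}_{\mathcal W}$. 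Matching this series coefficient-by-coefficient against the expansion of~\eqref{Kdefin} (whose structure is constrained by~\eqref{kerexp}) then yields the equality with $K_{\mathcal W}(z,w)$, closing the argument. As a consistency check, both sides have the same principal part $\tfrac{1}{w-z}$ at $z=w$: the operators ${\mathtt G}_z-1,\,({\mathtt G}_w^*)^{-1}-1\in{\mathcal D}_-$ only add lower-order corrections in $z^{-1},w^{-1}$.

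\emph{Main obstacle.} Step~1 is a short bookkeeping computation once~\eqref{psiasop} is invoked. The real content is Step~2: one must justify that the termwise action of the infinite-order differential operators ${\mathtt G}_z,\,({\mathtt G}_w^*)^{-1}$ on the formal series $\sum_{k\geq 0}z^k w^{-k-1}$ defines a convergent double series --- which follows from the degree filtration on $\mathcal D$, ensuring each coefficient of $z^{-m}w^{-n}$ is a finite sum --- and then identify the resulting bilinear series with the fermionic/affine-coordinate expansion of $K_{\mathcal W}(z,w)$ arising from~\eqref{Kdefin}. The cleanest route is through the fermionic description of Remark~\ref{rmk_ferm}: Sato's group element directly determines the matrix whose entries are the coefficients of $K_{\mathcal W}(z,w)$, so matching that matrix with $\check\Phi^{\mathcal W}_{k+1}(z)\check\Phi^{{\mathcal W}^\bot}_{-k}(w)$ reduces to unwinding the definitions~\eqref{canbas}–\eqref{canbasd}.
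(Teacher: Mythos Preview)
Your Step~1 is correct and matches the paper: the differential Fay identity~\eqref{Fay} combined with~\eqref{psiasop} shows the two sides have the same $x$-derivative, so they differ by an $x$-independent function $\tilde K_{\mathcal W}(z,w)$.

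Step~2, however, has a genuine gap. The identity
\[
K_{\mathcal W}(z,w)=\sum_{k\geq 0}\check\Phi^{\mathcal W}_{k+1}(z)\,\check\Phi^{{\mathcal W}^\bot}_{-k}(w)
\]
is not an independently established fact in the paper --- it is stated \emph{after} the proposition, as a corollary of it. Invoking it here is circular. Your appeal to Remark~\ref{rmk_ferm} does not close the gap either: that remark only records how ${\mathtt g}_{\mathcal W}=\log{\mathtt G}_{\mathcal W}$ yields a fermionic bilinear; translating this into the claimed expansion of $K_{\mathcal W}$ would require the full boson--fermion correspondence, which the paper does not develop and which you do not carry out. ``Matching this series coefficient-by-coefficient against the expansion of~\eqref{Kdefin}'' is exactly the missing computation, not a substitute for it.

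The paper proves $\tilde K_{\mathcal W}=0$ by a different, self-contained argument. First, comparing both sides at $x=0$ with~\eqref{kerexp} gives $\tilde K_{\mathcal W}\in z^{-1}w^{-1}\mathbb C[[z^{-1},w^{-1}]]$. Then one observes directly from the definition~\eqref{Kdefin} that, in the expansion region $|z|>|w|$,
\[
\iota_{|z|>|w|}K_{\mathcal W}(z,w,x)=-z^{-1}\Psi(z,x)\,\Psi^{\bot}(w,x+[z^{-1}]),
\]
which as a function of $w$ lies in $\mathcal W^\bot$. Hence ${\mathtt G}_w^*$ applied to it contains only non-negative powers of $w$. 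The same holds for ${\mathtt G}_w^*$ applied to $\iota_{|z|>|w|}\bigl({\mathtt G}_z\cdot\tfrac{e^{x(w-z)}}{w-z}\bigr)$. Therefore ${\mathtt G}_w^*\cdot\tilde K_{\mathcal W}$ must also contain only non-negative powers of $w$; but since $\tilde K_{\mathcal W}\in z^{-1}w^{-1}\mathbb C[[z^{-1},w^{-1}]]$ and ${\mathtt G}_w^*-1\in\mathcal D_-$, the result lies in $w^{-1}z^{-1}\mathbb C[[z^{-1},w^{-1}]]$. The only way both conditions hold is $\tilde K_{\mathcal W}=0$. This argument uses only the definition of the kernel and the structure of the Sato group element, avoiding any appeal to a basis expansion of $K_{\mathcal W}$.
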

\begin{proof}
From (\ref{Fay}) and (\ref{psiasop}) it follows that
\be\label{Ktop}
K_{\mathcal W}(z,w,x)= {\mathtt G}_z ({\mathtt G}_w^*)^{-1} \cdot \frac{e^{x(w-z)}}{w-z}+\tilde{K}_{\mathcal W}(z,w).
\ee
for some $\tilde{K}_{\mathcal W}$ independent of $x$. Let us prove that $\tilde{K}_{\mathcal W}$ vanishes.

Comparing this expression at $x=0$ with the expansion (\ref{kerexp}), we conclude that $\tilde{K}_{\mathcal W}(z,w)\in w^{-1}z^{-1}{\mathbb C}[[z^{-1},w^{-1}]]$. Let $\iota_{|z|>|w|}$ be the operation of the Laurent series expansion in the region $|z|>|w|$.  We can rewrite the kernel (\ref{Kdefin}) as
\begin{equation}
\begin{split}
\iota_{|z|>|w|} K_{\mathcal W}(z,w,x)&= -z^{-1}\, e^{-xz}  \frac{\tau_{\mathcal W}(x+[z^{-1}])}{\tau_{\mathcal W}(x)}\, e^{xw +\sum_{k=1}^\infty \frac {w^k}{k z^k}}\frac{\tau_{\mathcal W}(x+[z^{-1}]-[w^{-1}])}{\tau_{\mathcal W}(x+[z^{-1}])} \\
&=-z^{-1}\, \Psi (z,x)\Psi^\bot(w,x+[z^{-1}]). 
\end{split}
\end{equation}
As a function of $w$ this is an element of ${\mathcal W}^\bot$, hence 
\be
{\mathtt G}_w^* \cdot \left( \iota_{|z|>|w|} K_{\mathcal W}(z,w,x) \right) \in {\mathbb C}[[x,z,z^{-1},w]]
\ee
contains only non-negative powers of $w$. This is also true for the result of the action of operator ${\mathtt G}_w^*$ on the first term in the right hand side of (\ref{Ktop}),
$ \iota_{|z|>|w|}   \left({\mathtt G}_z  \cdot \frac{e^{x(w-z)}}{w-z}\right )  \in {\mathbb C}[[x,z,z^{-1},w]]$, but ${\mathtt G}_w^* \cdot \left( \iota_{|z|>|w|}  \tilde{K}_{\mathcal W}(z,w)\right)={\mathtt G}_w^* \cdot \tilde{K}_{\mathcal W}(z,w) \in w^{-1}z^{-1}{\mathbb C}[[z^{-1},w^{-1}]]$. Hence $\tilde{K}_{\mathcal W}$ identically vanishes, and this complets the proof.
\end{proof}

In particular, for $x=0$ we have
\be
K_{\mathcal W}(z,w)= {\mathtt G}_z ({\mathtt G}_w^*)^{-1} \cdot \frac{1}{w-z}.
\ee

For any operator $A \in{\mathcal A}_{{\mathcal W}}\oplus {\mathcal A}_{{\mathcal W}^{\bot}}$ we have
\be
A({\mathtt Q}_{{\mathcal W}},{\mathtt P}_{{\mathcal W}},{\mathtt Q}_{{\mathcal W}^\bot},{\mathtt P}_{{\mathcal W}^\bot}) \cdot K_{\mathcal W}(z,w)= {\mathtt G}_z ({\mathtt G}_w^*)^{-1} A\left(z,\frac{\p}{\p z},w,\frac{\p}{\p w}\right) \cdot \frac{1}{w-z},
\ee
where operators ${\mathtt Q}_{{\mathcal W}}$, ${\mathtt P}_{\mathcal W}$ act on the $z$ variable, while the operators ${\mathtt Q}_{{\mathcal W}^\bot}$, ${\mathtt P}_{{\mathcal W}^\bot}$ act on $w$. For example, the following equations hold
\begin{equation}
\begin{split}
\left({\mathtt P}_{{\mathcal W}}+{\mathtt P}_{{\mathcal W}^\bot}\right)  \cdot K_{\mathcal W}(z,w,x) &= 0,\\
\left({\mathtt Q}_{{\mathcal W}^\bot}-{\mathtt Q}_{{\mathcal W}}\right)  \cdot K_{\mathcal W}(z,w,x) &= \Psi(z,x)\, \Psi^{\bot} (w,x).
\end{split}
\end{equation}

Basis expansions of the Cauchy-Baker-Akhieser kernel have a simple form both for distinguished and canonical basis.
Namely, for the extended distinguished basis and its dual one has
\begin{equation}
\begin{split}
\iota_{|w|>|z|}  K_{\mathcal W}(z,w)&=\sum_{k=0}^\infty \check\Phi_k^{\mathcal W} (z)\, \check\Phi_{-k-1}^{{\mathcal W}\bot}(w),\\ 
\iota_{|z|>|w|}  K_{\mathcal W}(z,w)&=-\sum_{k=0}^\infty \check\Phi_{k}^{{\mathcal W}\bot}(w) \, \check\Phi_{-k-1}^{\mathcal W} (z).
\end{split}
\end{equation}
For the canonical basis (\ref{canon}) and its dual (\ref{dcanon}) expansion is given by
\begin{equation}
\begin{split}
\iota_{|w|>|z|}  K_{\mathcal W}(z,w)&=\sum_{k=1}^\infty \frac{1}{w^{k}} \Phi_k^{\mathcal W}(z),\\
\iota_{|z|>|w|}  K_{\mathcal W}(z,w)&=-\sum_{k=1}^\infty \frac{1}{z^{k}} \Phi_k^{{\mathcal W}^\bot}(w).
\end{split}
\end{equation}

%%%%%%%%%%%%%%%%%%%%%%%%%%%%%%%%%%%%%%%%%%%%%%%%%%%%%%%%%%%

\subsection{Symmetries of KP hierarchy and Heisenberg-Virasoro algebra}\label{Virsec}

It is well known that a certain central extension of the algebra $\gl(\infty)$ and corresponding group $\GL(\infty)$ acts on the space of KP tau-functions \cite{F,MJ}. From the boson-fermion correspondence we also know how to identify the operators, acting on the tau-function with the operators, acting on the Sato Grassmannian. In this section we will remind the reader the properties of the important Heisenberg-Virasoro subalgebra of $\gl(\infty)$; for more details see, e.g., \cite{A1}. We also discuss how the Sato group element,  the canonical pair of KS operators, and the distinguished basis are transformed under the action of certain subgroups of $\GL(\infty)$.

The {\em Heisenberg-Virasoro subalgebra} of $\gl(\infty)$ is generated by the operators
\be
\widehat{J}_k =
\begin{cases}
\displaystyle{\frac{\p}{\p t_k} \,\,\,\,\,\,\,\,\,\,\,\, \mathrm{for} \quad k>0},\\[2pt]
\displaystyle{0}\,\,\,\,\,\,\,\,\,\,\,\,\,\,\,\,\,\,\, \mathrm{for} \quad k=0,\\[2pt]
\displaystyle{-kt_{-k} \,\,\,\,\,\mathrm{for} \quad k<0,}
\end{cases}
\ee
unit, and
\be
\label{virfull}
\widehat{L}_m=\frac{1}{2} \sum_{a+b=-m}a b t_a t_b+ \sum_{k=1}^\infty k t_k \frac{\p}{\p t_{k+m}}+\frac{1}{2} \sum_{a+b=m} \frac{\p^2}{\p t_a \p t_b}.
\ee
These operators satisfy the commutation relations
\begin{align}\label{comre}
\left[\widehat{J}_k,\widehat{J}_m\right]&=k \delta_{k+m,0},\nn\\
\left[\widehat{L}_k,\widehat{J}_m\right]&=-m \widehat{J}_{k+m},\\
\left[\widehat{L}_k,\widehat{L}_m\right]&=(k-m)\widehat{L}_{k+m}+\frac{1}{12}\delta_{k,-m}(k^3-k).\nn
\end{align}
The Heisenberg-Virasoro group ${\mathcal V}$ is generated by the operators $\widehat{J}_k$,  $\widehat{L}_k$ and a unit,
\be
{\mathcal V}:=\left.\{C \, e^{\sum a_k \widehat{J}_k +b_{k}\widehat{L}_k} \right| a_k,b_k,C \in {\mathbb C}\}.
\ee

To describe the action of algebra (\ref{comre}) on the Sato Grassmannian let us introduce
\be\label{virw}
{\mathtt j}_m=z^m,\\
{\mathtt l}_m=-z^m\left(z\frac{\p}{\p z}+\frac{m+1}{2}\right).
\ee
These operators satisfy the commutation relations (\ref{comre}) with omitted central term.
Then, the general relation of the action of the $\GL(\infty)$ and its central extension leads to the following relation between their subgroups: 
 If two points ${\mathcal W}$ and $\tilde{\mathcal W}$ of the Sato Grassmannian are related by the action of the group element
\be
\tilde{\mathcal W}=e^{\sum a_k {\mathtt j}_k +b_{k}{\mathtt l}_k}\cdot {\mathcal W}, 
\ee 
then the tau-functions are related by the element of the Heisenberg-Virasoro group
\be\label{1Vir}
\tau_{\tilde{\mathcal W}}=C \, e^{\sum a_k \widehat{J}_k +b_{k}\widehat{L}_k}\cdot \tau_{\mathcal W}
\ee
for some $C$ independent of ${\bf t}$. We do not address the question of convergence, which,  in general, can arise here.

Moreover, for any operator ${\mathtt a}\in {\mathcal A}_{\mathcal W}$ there is an operator ${\widehat A}$, acting on the space of functions of ${\bf t}$ such, that the tau-function is an eigenfunction of this operator,  $\widehat{A} \cdot \tau_{\mathcal W} = \tilde{C} \tau_{\mathcal W}$. Operator ${\widehat A}$ can be constructed with the boson-fermion correspondence, see, e.g., \cite{F,A1,KS}. For the Heisenberg-Virasoro subalgebra the identification is particularly simple:
\begin{lemma}\label{Lem_Vir}
If $\sum_{k\in\z} (a_k {\mathtt j _k}+b_k {\mathtt l _k})\in {\mathcal A}_{\mathcal W}$, then
\be
\sum_{k\in\z} (a_k {\widehat J}_k+b_k {\widehat L}_k)\cdot \tau_{\mathcal W}=\tilde{C}  \tau_{\mathcal W},
\ee
where $\tilde{C}$ does not depend on ${\bf t}$.
\end{lemma}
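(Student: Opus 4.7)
My plan is to reduce the statement to the already recorded group-theoretic correspondence \eqref{1Vir} by passing to a one-parameter subgroup. Set
\be
{\mathtt a}:=\sum_{k\in\z}\bigl(a_k {\mathtt j}_k+b_k{\mathtt l}_k\bigr)\in{\mathcal A}_{\mathcal W},\qquad \widehat{A}:=\sum_{k\in\z}\bigl(a_k\widehat{J}_k+b_k\widehat{L}_k\bigr),
\ee
and consider the formal one-parameter deformation ${\mathcal W}_s:=e^{s{\mathtt a}}\cdot{\mathcal W}$, treated as a point of the Sato Grassmannian over the formal parameter $s$.

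First I would show that ${\mathcal W}_s={\mathcal W}$. Because ${\mathtt a}\cdot{\mathcal W}\subset{\mathcal W}$ by the KS assumption, iterating gives ${\mathtt a}^n\cdot{\mathcal W}\subset{\mathcal W}$ for every $n\ge 0$, so the formal series $e^{s{\mathtt a}}\cdot{\mathcal W}$ is contained in ${\mathcal W}$. The inverse $e^{-s{\mathtt a}}$ lies in the same Kac-Schwarz algebra by the same argument, so the reverse inclusion also holds and ${\mathcal W}_s={\mathcal W}$ identically in $s$.

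Next I would apply the correspondence \eqref{1Vir} to the pair $({\mathcal W},{\mathcal W}_s)$: there exists a scalar $C(s)\in{\mathbb C}[[s]]$, with $C(0)=1$, such that
\be
\tau_{{\mathcal W}_s}=C(s)\,e^{s\widehat{A}}\cdot\tau_{\mathcal W}.
\ee
Since ${\mathcal W}_s={\mathcal W}$ the left hand side equals $\tau_{\mathcal W}$, so $\tau_{\mathcal W}=C(s)\,e^{s\widehat{A}}\cdot\tau_{\mathcal W}$. Differentiating in $s$ at $s=0$ yields
\be
0=C'(0)\,\tau_{\mathcal W}+\widehat{A}\cdot\tau_{\mathcal W},
\ee
so $\widehat{A}\cdot\tau_{\mathcal W}=\tilde{C}\,\tau_{\mathcal W}$ with $\tilde{C}=-C'(0)$, a constant independent of $\bf t$, as required.

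The main subtlety that the argument relies on is the validity of \eqref{1Vir} for our one-parameter family, i.e.\ the fact that the infinite sums defining ${\mathtt a}$ and $\widehat{A}$ act meaningfully and that the correspondence between the Heisenberg-Virasoro action on $H$ and on tau-functions extends to them. For finite linear combinations this is the content of the boson-fermion correspondence and is already invoked in Section \ref{Virsec}; for the general case one should either impose that only finitely many $a_k,b_k$ are non-zero or check convergence in the relevant formal topology on ${\mathbb C}[[t_1,t_2,\dots]]$. In the applications of this lemma made later in the paper (e.g.\ the Heisenberg-Virasoro constraints for the deformed GKM) the relevant sums are locally finite when acting on $\tau$, so this subtlety does not obstruct the argument.
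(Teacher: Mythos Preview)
Your argument is correct. The paper does not give an explicit proof of this lemma: it simply records the statement as a specialisation of the general fact, derived from the boson--fermion correspondence and cited to \cite{F,A1,KS}, that any KS operator corresponds to an operator on tau-functions with the tau-function as an eigenvector.

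Your route is genuinely different and more self-contained within the paper's own framework. Rather than invoking the boson--fermion correspondence directly, you bootstrap from the group statement \eqref{1Vir} already recorded in Section~\ref{Virsec}: the KS condition forces the one-parameter subgroup $e^{s{\mathtt a}}$ to fix ${\mathcal W}$, so by \eqref{1Vir} the operator $e^{s\widehat A}$ acts on $\tau_{\mathcal W}$ by a scalar $C(s)^{-1}$, and differentiating at $s=0$ gives the eigenvalue equation. This is a clean ``infinitesimal version of the group action'' argument. What it buys is that one never needs to unpack the fermionic realisation; what the paper's approach buys is that it applies uniformly to arbitrary KS operators, not only Heisenberg--Virasoro ones. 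Your closing caveat about convergence of the infinite sums is well placed and matches the paper's own disclaimer after \eqref{1Vir}.
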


How do the operators  ${\mathtt G}_{\mathcal W}$ and $({\mathtt P}_{\mathcal W}$, ${\mathtt Q}_{\mathcal W})$  transform under the action of the operators from $\GL(\infty)$ group? Note that ${\mathtt j}_m\in {\mathcal D}_-$ and ${\mathtt l}_{m-1}\in {\mathcal D}_-$ for $m<0$, hence a group generated by this subalgebra is a subgroup of $\mathcal G$, and a corresponding action on the Sato Grassmannian and a canonical pair of the KS operators is described by (\ref{GGcon}) and (\ref{GGcon1}). For a general operator $\tilde{\mathtt G} \in \GL(\infty)$, that does not belong to ${\mathcal G}$, a transformation corresponding to the transformation ${\tilde{\mathcal W}}=\tilde{\mathtt G} \cdot {\mathcal W}$ can be highly non-trivial. However, for a subalgebra, generated by $\frac{\p^k}{\p z^k}$ and $z\frac{\p^k}{\p z^k}$, $k\geq 0$, and corresponding group we can find explicit relations. Namely, this algebra stabilizes $H_+$, thus, if  $\tilde{\mathtt G}=\exp\left(\sum_{k \in \z_{>0}} \left(r_k+s_k z\right)\frac{\p^k}{\p z^k}\right)$, then the Sato's group elements are related to each other by conjugation 
\be
{\mathtt G}_{\tilde{\mathcal W}}=\tilde{\mathtt G} \, {\mathtt G}_{{\mathcal W}} \,  \tilde{\mathtt G}^{-1} \in {\mathcal G}.
\ee
For  $\tilde{\mathtt G}=\exp\left(\sum_{k \in \z_{>0}} r_k \frac{\p^k}{\p z^k}\right)$ one has
\begin{equation}
\begin{split}
{\mathtt P}_{\tilde{\mathcal W}}&={\mathtt G}_{\tilde{\mathcal W}} \frac{\p}{\p z} {\mathtt G}_{\tilde{\mathcal W}}^{-1}\\
&=\tilde{\mathtt G} \, {\mathtt G}_{{\mathcal W}} \,  \tilde{\mathtt G}^{-1}\frac{\p}{\p z}
\tilde{\mathtt G} \, {\mathtt G}_{{\mathcal W}}^{-1} \,  \tilde{\mathtt G}^{-1}\\
&= \tilde{\mathtt G} \,{\mathtt P}_{{\mathcal W}} \, \tilde{\mathtt G}^{-1}
\end{split}
\end{equation}
and
\begin{equation}
\begin{split}
{\mathtt Q}_{\tilde{\mathcal W}}&={\mathtt G}_{\tilde{\mathcal W}}\, z\,  {\mathtt G}_{\tilde{\mathcal W}}^{-1}\\
&=\tilde{\mathtt G} \, {\mathtt G}_{{\mathcal W}} \,  \tilde{\mathtt G}^{-1}\, z\,
\tilde{\mathtt G} \, {\mathtt G}_{{\mathcal W}}^{-1} \,  \tilde{\mathtt G}^{-1}\\
&=\tilde{\mathtt G} \, {\mathtt G}_{{\mathcal W}} \,  \left(z- \sum_{k=1}^\infty k r_k \frac{\p^{k-1}}{\p z^{k-1}}\right) \,
 {\mathtt G}_{{\mathcal W}}^{-1} \,  \tilde{\mathtt G}^{-1}\\
 &= \tilde{\mathtt G} \,\left({\mathtt Q}_{{\mathcal W}} - \sum_{k=1}^\infty k r_k {\mathtt P}_{{\mathcal W}}^{k-1} \right)\, \tilde{\mathtt G}^{-1}.
\end{split}
\end{equation}
In particular, if $\tilde{\mathtt G}=\exp\left(r \frac{\p}{\p z}\right)=\exp\left(-r \mathtt{l}_{-1}\right)$, then
\be
{\mathtt G}_{\tilde {\mathcal W}}={\mathtt G}_{{\mathcal W}}\left(z+r,\frac{\p}{\p z}\right)
\ee
and
\begin{equation}
\begin{split}\label{conn0}
{\mathtt P}_{\tilde{\mathcal W}}&= {\mathtt P}_{{\mathcal W}}\left(z+r, \frac{\p}{\p z}\right),\\
{\mathtt Q}_{\tilde{\mathcal W}}&={\mathtt Q}_{{\mathcal W}}\left(z+r, \frac{\p}{\p z}\right)- r.
\end{split}
\end{equation}
The distinguished basis transforms according to 
\begin{equation}\label{shifto}
\begin{split}
\check\Phi^{\tilde{\mathcal W}}_k&=e^{r \frac{\p}{\p z}} {\mathtt G}_{{\mathcal W}} e^{-r \frac{\p}{\p z}} \cdot z^{k-1}\\
&= e^{r \frac{\p}{\p z}} {\mathtt G}_{{\mathcal W}}\cdot (z-r)^{k-1}\\
&= \sum_{j=0}^{k-1} \frac{(k-1)!}{j!(k-1-j)!} (-r)^{k-j}
   \check\Phi_{j+1}^{\mathcal W}(z-r).
\end{split}
\end{equation}

Similarly, for $\tilde{\mathtt G}=\exp\left(\sum_{k\in \z_{>0}} s_k z \frac{\p^k}{\p z^k}\right)$ one has
\begin{equation}
\begin{split}\label{conn1}
{\mathtt P}_{\tilde{\mathcal W}}&= \tilde{\mathtt G} \,F_1({\mathtt P}_{{\mathcal W}}) \, \tilde{\mathtt G}^{-1},\\
{\mathtt Q}_{\tilde{\mathcal W}}&= \tilde{\mathtt G} \,{\mathtt Q}_{{\mathcal W}}F_2({\mathtt P}_{{\mathcal W}}) \, \tilde{\mathtt G}^{-1}
\end{split}
\end{equation}
for some $F_k(z) \in {\mathbb C}[[z]]$. In particular, if $\tilde{\mathtt G}=\exp\left(s z \frac{\p}{\p z}\right)=\exp\left(-s({\mathtt l}_0+1/2)\right)$, then
\begin{equation}
\begin{split}\label{psd}
{\mathtt P}_{\tilde{\mathcal W}}&=e^{s}\, {\mathtt P}_{{\mathcal W}}\left(e^sz, e^{-s}\frac{\p}{\p z}\right),\\
{\mathtt Q}_{\tilde{\mathcal W}}&=e^{-s}\, {\mathtt Q}_{{\mathcal W}}\left(e^sz, e^{-s}\frac{\p}{\p z}\right),
\end{split}
\end{equation}
and
\begin{equation}\label{shift1}
\begin{split}
\check\Phi^{\tilde{\mathcal W}}_k&=e^{s z \frac{\p}{\p z}} {\mathtt G}_{{\mathcal W}} e^{-s z \frac{\p}{\p z}} \cdot z^{k-1}\\
&= e^{-s(k-1)}  \check\Phi_{k}^{\mathcal W}\left(e^s z\right).
\end{split}
\end{equation}

Subalgebra $\sll(2)$ of the Virasoro algebra (\ref{virfull}) describes a particularly simple class of transformations of the tau-functions. It is generated by three operators $\widehat{L}_m$ with $m\in\left\{-1,0,1\right\}$, given by the first order differential operators: $\widehat{L}_m=\sum_{k=1}^\infty k t_k \frac{\p}{\p t_{k+m}}.$
Hence corresponding group operators act by the homogeneous linear transformations of the time variables {\bf t}.  Namely, 
\be\label{Gtr}
e^{-u \widehat{L}_m} \cdot f({\bf t})= f({\bf t}^{(m)}),
\ee
where 
\begin{equation}
\begin{split}\label{Tchant}
{\bf t}^{(-1)}_k&=\frac{1}{k}\sum_{j=0}^\infty \frac{(k+j)!(k+j) (-u)^j}{k! j!} t_{k+j},\\
{\bf t}^{(0)}_k&=e^{-ku} t_k, \\
{\bf t}^{(1)}_k&=\sum_{j=1}^k \frac{ (k-1)! (-u)^{k-j}}{(j-1)! (k-j)!} t_j.
\end{split}
\end{equation}
For $m=-1$ and $m=0$ the action of the group elements on the canonical pair of the KS operators and the distinguished basis is given by (\ref{conn0}), (\ref{conn1}), (\ref{shifto}), (\ref{shift1}). The third operator ${\mathtt l}_1$ has positive degree, hence its action on the basis vector of form (\ref{goodbas}) gives an infinite combination of the new basis vectors of the same form.

%%%%%%%%%%%%%%%%%%%%%%%%%%%%%%%%%%%%%%%%%%%%%%%%%%%%%%
\section{Generalized Kontsevich model}\label{S:GKM}

In this section we give a brief introduction to the theory of the generalized Kontsevich model (GKM). Many ingredients of this description are well known for an interesting class of polynomial potentials, more details can be found in \cite{Kh,IZ,KMMMZ,KMMM,MMS,KM,AM,ABGW,LG,F,KS,Sch}. We will consider a more general class of potentials, that includes arbitrary Laurent formal series. The main goal of this section is to describe how GKM fits into the Sato-Kac-Schwarz picture of the previous section. In particular, we discuss
 the Sato group element and the canonical pair of the KS operators.  
 
 %%%%%%%%%%%%%%%%%%%%%%%%%%%%%%%%%%%%%%%%%%%%%%%%%%%%
 \subsection{Matrix integral}

For any {\emph{potential}} $V(z)$ let us introduce the generalized Kontsevich matrix integral
\be\label{GKMM}
Z_U(\Lambda):={\mathcal C}^{-1}\int[ d\Phi] \exp\left(-\frac{1}{\hbar}\Tr(V(\Phi)-\Phi V'(\Lambda))\right).
\ee
Here $\hbar$ is a small parameter, and we assume that $V$ does not depend on it, 
\be\label{NormalC}
{\mathcal C}:=\hbar^{\frac{N(N-1)}{2}}\frac{\Delta(\lambda)}{\Delta(V'(\lambda))} \sqrt{\det\left(\frac{2\pi \hbar}{V''(\Lambda)}\right)}\,e^{\frac{1}{\hbar}\Tr(\Lambda V'(\Lambda)-V(\Lambda))}
\ee
and $\Lambda=\diag (\lambda_1,\dots,\lambda_N)$ is a diagonal matrix.
\be\label{flatmeas}
[ d\Phi]:=\frac{1}{\prod_{k=1}^{N-1}k!} \prod_{i<j}d \Im \Phi_{ij} \,d \Re \Phi_{ij}\, \prod_{i=1}^N d \Phi_{ii}
\ee
is the flat measure on the space of $ N\times N$ hermitian matrices. Let
\be
U:=\frac{1}{\hbar}V''(z).
\ee
We assume that $U$ is not identically equal to zero.

\begin{remark}
Sometimes it is necessary to consider more general types of measures, in particular of the form
\be
[d\Phi]  \det \left(\frac{\Phi}{\Lambda}\right)^M e^{\sum_{k=1}^\infty s_k \left(\Tr \Phi^{-k}-\Tr \Lambda^{-k}\right)}, 
\ee
where $M$ and $s_i$ are parameters \cite{KMMM,Takasaki}. This type of generalized measures is important for the recent developments, in particular for the Kontsevich-Penner model for open intersection numbers \cite{Aopen1, Aopen2, Aopen3}. 
We expect that such generalization of the deformed GKM considered in Section \ref{Secdef} is related to the generating function of the Hodge integrals on the open moduli spaces. 
\end{remark}

\begin{remark}
It is clear that the generating function (\ref{GKMM}) does not depend on the constant and linear terms or the potential $V$. Thus, it depends only on $U=\hbar^{-1}V''$, which justifies the notation. Below we assume that, given some $U$,  $h^{-1} V'$ and  $\hbar^{-1} V$ can be reconstructed as its first and second antiderivatives such that $V$ does not contain constant and linear terms in $z$. Abusing the terminology, we will call $U$ the {\em potential}, when it will not cause any confusion.
\end{remark}

The archetypal example of the potential is a monomial, 
\be\label{monpot}
V(z)=\frac{z^k}{k(k-1)}
\ee 
with $k\geq 3$. In particular, the original Kontsevich matrix model \cite{Konts} corresponds to the simplest case $k=3$. In applications there also appear certain deformations of the monomial potentials.  In particular, for the models of two-dimensional topological gravity, the potential $V(z)$ is a polynomial with lower terms considered as a perturbation of (\ref{monpot}), \cite{AM,IZ,KMMMZ,KMMM,Konts,W,W2,KM}.

Another interesting class of GKMs, that we will not consider in this paper, corresponds to the anti-polynomial potentials. In this case the description is slightly different and is not given by  (\ref{GKMM}), see, e.g., \cite{MMS, ABGW} for more details.  Deformation of the GKM with  anti-polynomial potentials will be considered elsewhere.

%%%%%%%%%%%%%%%%%%%%%%%%%%%%%%%%%%%%%%%%%%%%

\subsection{Determinant representation of GKM}

Applying the Harish-Chandra-Itzykson-Zuber formula one can reduce the GKM integral (\ref{GKMM}) to the ratio of two determinants
\be\label{detform}
Z_U(\Lambda)=\frac{\det_{i,j=1}^N \Phi^U_i(\lambda_j)}{\Delta(\lambda)}.
\ee
Here
\be\label{basis}
\Phi_k^U(z):=
\sqrt{\frac{V''(z)}{2\pi\hbar}}e^{\frac{1}{\hbar}\left(V(z)-zV'(z)\right)}\int_{\gamma(U)} d\varphi \, \varphi^{k-1}e^{-\frac{1}{\hbar}\left(V(\varphi)-\varphi V'(z)\right)}.
\ee

Here and below by definition we consider an asymptotic expansion of the integral, obtained by the steepest descent method at $|z|\to \infty$. The stationary points $\varphi_0$ are the solutions of the equation
\be\label{critpoint}
V'(\varphi_0)-V'(z)=0
\ee
and, by definition, we consider only the contribution from the vicinity of the point 
\be\label{critp}
\varphi_0=z. 
\ee 
Thus, we choose the contour $\gamma(U)$ to be a small arc of the steepest descent contour, which goes through the point $\varphi_0$.  The direction of the contour is consistent with the sign of the root in the factor (\ref{basis}) and leads to Definition \ref{Phidef} below. We expect that it is always possible to extend the contours so that they will become closed without change of the asymptotic expansion of the integrals. However, the description of the closed contours depends on the properties of the potential. Therefore we prefer to work with local, but universal contour. 

For the asymptotic expansion of the integral (\ref{basis}) in the neighbourhood of the critical point (\ref{critp}) it is convenient to introduce a new integration variable
\be
\varphi \mapsto \frac{\varphi}{\sqrt{U(z)}}+z,
\ee
then the integrals are given by
\begin{definition}\label{Phidef}
\be\label{phidef}
\Phi_k^U(z):=\frac{1}{\sqrt{2\pi}}\int_{\rr} d {\varphi} \left(z+\frac{{\varphi}}{\sqrt{U(z)}}\right)^{k-1}e^{-\frac{{\varphi}^2}{2}-\sum_{j=3}^\infty \frac{U^{(j-2)}(z)}{j!U(z)^{j/2}}{ \varphi}^j},
\ee
where the integral is considered to be a perturbation of the Gaussian one.
\end{definition}
Here $U^{(m)}(z):=\frac{\p^m}{\p z^m} U(z)$.
In particular,
\begin{equation}
\begin{split}
\Phi_1^U(z)&=\frac{1}{\sqrt{2\pi}}\int_{\rr} d {\varphi}\,\, e^{-\frac{{\varphi}^2}{2}-\sum_{j=3}^\infty \frac{U^{(j-2)}(z)}{j!U(z)^{j/2}}{\varphi}^j}\\
&=\frac{1}{\sqrt{2\pi}}\int_{\rr} d {\varphi} \,\,e^{-\frac{{\varphi}^2}{2}}\,\,\sum_{j=0}^\infty p_j({\bf t^*}) \varphi^j\\
&=1+\sum_{j=2}^\infty (2j-1)!! \,\,p_{2j} ({\bf t^*}),\\
\Phi_2^U(z)&=\frac{1}{\sqrt{2\pi}}\int_{\rr} d {\varphi}\,\,\left(z+\frac{{\varphi}}{\sqrt{U(z)}}\right) e^{-\frac{{\varphi}^2}{2}-\sum_{j=3}^\infty \frac{U^{(j-2)}(z)}{j!U(z)^{j/2}}{ \varphi}^j}\\
&=z\left(1+\sum_{j=2}^\infty (2j-1)!! \,\,p_{2j} ({\bf t^*})\right)+\frac{1}{\sqrt{U(z)}} \sum_{j=1}^\infty (2j+1)!! \,\,p_{2j+1} ({\bf t^*}),
\end{split}
\end{equation}
where 
\be
t_j^*:=
\begin{cases}
\displaystyle{-\frac{U^{(j-2)}(z)}{j! U(z)^{j/2}} \,\,\,\,\,\,\,\,\,\,\,\, \mathrm{for} \quad j>2},\\[10pt]
\displaystyle{0}\,\,\,\,\,\,\,\,\,\,\,\,\,\,\,\,\,\,\, \mathrm{for} \quad j\leq 2,
\end{cases}
\ee
and $p_j({\bf t})$ are the elementary Schur functions
\be
e^{\sum_{k=1}^\infty t_k z^k}=:\sum_{j=0}^\infty p_j({\bf t})z^j.
\ee
In general,
\be\label{phigen}
\Phi_k^U(z)=z^{k-1}\left(1+\sum_{j=2}^\infty (2j-1)!! \,\,p_{2j} ({\bf t^*})\right)+\sum_{m=1}^{k-1}\frac{z^{k-1-m}}{U(z)^{m/2}} c_m,
\ee
where $c_m\in {\mathbb C}[[{\bf t }^*]]$. So, by definition for $k\geq 1$
\be\label{phistruct}
\Phi_k^U(z)\in\mathbb{Q}[z,U(z)^{-1/2}][[\frac{U'(z)}{U(z)^{3/2}},\frac{U''(z)}{U(z)^{2}},\dots,\frac{U^{(j)}(z)}{U(z)^{j/2+1}},\dots]]
\ee
with certain additional properties (e.g., it contains only integer powers of $U(z)^{-1}$). Moreover, $\Phi_k^U$ does not depend on  $N$, the size of the integration matrix.

%%%%%%%%%%%%%%%%%%%%%%%%%%%%%%%%%%%%%%%%%%%%%%%%%%%%

\subsection{GKM as a tau-function of KP hierarchy}\label{Sec_GGKM}

So far we have not specified the properties of $U$. Our goal  is to investigate the GKM within the framework of Sato Grassmannian description of the KP hierarchy. So we want (\ref{basis}) to give an admissible basis  for some point of the Sato Grassmannian. From Definition \ref{Phidef} it is easy to see that
this is the case if $U$ is a formal Laurent series of the form
\be\label{Vseries}
U(z)=\frac{1}{\hbar}\sum_{k=-\infty}^{n} b_{k} z^{k}, \,\,\,\,\,\,\,\,\,\,\,\,\,\, n \geq -1,
\ee
with $b_{n}\neq 0$. In this case (\ref{phidef}) satisfy (\ref{goodbas}).

Hence, we work with the space  of formal Laurent series in $z^{-1}$,  $U(z)\subset{\mathbb C}((z^{-1}))$. In particular, when $U(z)$ is presented by a function, we only consider its series expansion at $z=\infty$. We take (\ref{Vseries}) as a definition of $U$ for the rest of this section. In the next section we introduce a further deformation of this simple setup.

\begin{remark}
What we consider here is a natural extension of the standard case, when $V(z)$ is a polynomial. In particular, $n=-1$ case of (\ref{Vseries}) corresponds to the leading term of the potential $V(z)$
of the form $z(\log z-1)$. The subleading terms of $V(z)$ can also include $\log z$. This kind of potentials appears in particular in the Penner model for the Euler characteristics of moduli spaces \cite{Penner}, and
 the Eguchi-Yang model for the stationary sector of the Gromov-Witten invariants on ${\mathbb P}^1$ \cite{EY,EY1}. 
\end{remark}

Hence we immediately conclude
\begin{proposition}\label{lemmaphi}
 (\ref{phidef}) is an admissible basis for a point of the big cell of the Sato Grassmannian
\be\label{GKMgr}
\Gr^{(0)}_+\ni  {\mathcal W}_U:= \sppan_{\cc}\{\Phi^U_1,\Phi^U_2,\Phi^U_3,\dots\}.
\ee
\end{proposition}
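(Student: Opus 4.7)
The plan is to verify both conditions of Definition \ref{Defadm} directly from the perturbative expansion (\ref{phidef}) and the hypothesis (\ref{Vseries}). First I would establish that each $\Phi_k^U$ lies in $H={\mathbb C}((z^{-1}))$. Expanding $(z+\varphi/\sqrt{U(z)})^{k-1}$ binomially and $\exp(-\sum_{j\geq 3} t_j^*\varphi^j)$ as a multinomial series in $\varphi$, only even total powers of $\varphi$ survive Gaussian integration against $e^{-\varphi^2/2}$. This parity constraint forces all half-integer powers of $U(z)$ to cancel, leaving coefficients that are polynomials in $U(z)^{-1}$ and the derivatives $U^{(j)}(z)$---this is the ``only integer powers of $U(z)^{-1}$'' assertion in (\ref{phistruct}). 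Under (\ref{Vseries}), $U(z)^{-1}$ is itself a formal Laurent series in $z^{-1}$ with leading term $\hbar b_n^{-1}z^{-n}$, so every surviving monomial is a well-defined element of $H$.

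Next, a short degree count shows that $\Phi_k^U = z^{k-1}(1+O(z^{-1}))$. A generic surviving contribution has the form $z^{k-1-m}\,U(z)^{-p}\prod_{j\geq 3}(U^{(j-2)}(z))^{n_j}$ times a numerical constant, subject to the parity condition $2p=m+\sum_{j\geq 3}j\,n_j$. Using $\deg U = n$ and $\deg U^{(j-2)}=n-j+2$, the $z$-degree of such a term equals $k-1+(n+2)(\sum_j n_j - p)$. The identity $2(\sum_j n_j - p) = -m-\sum_{j\geq 3}(j-2)n_j\le 0$, with equality only if $m=0$ and all $n_j=0$, combined with $n+2\geq 1$ from the assumption $n\geq -1$, shows that every nontrivial correction strictly lowers the $z$-degree.

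With the normal form $\Phi_k^U=z^{k-1}+O(z^{k-2})$ in hand, admissibility is immediate: the linear map $H_+\to H$ sending $z^{j-1}\mapsto \Phi_j^U$ is upper triangular with unit diagonal, hence injective with image $\mathcal{W}_U$, establishing condition~(1) of Definition \ref{Defadm}; the same matrix differs from the identity by a strictly upper triangular operator, satisfying the trace-class condition~(2) in the formal series setup. Moreover, $\pi_+$ restricts to a bijection $\mathcal{W}_U\to H_+$, so $\ker\pi_+=\mathrm{coker}\,\pi_+=0$ and hence $\mathcal{W}_U\in \Gr^{(0)}_+$. The principal technical hurdle is the combined parity/degree analysis of the second paragraph; once the leading term is isolated, admissibility reduces to inspection of the matrix entries.
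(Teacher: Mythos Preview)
Your proof is correct and follows the same route the paper takes: verify that $\Phi_k^U = z^{k-1}(1+O(z^{-1}))$ under the hypothesis (\ref{Vseries}) and then note that any basis of this form is automatically admissible for a point of $\Gr^{(0)}_+$. The paper does not carry out the degree count explicitly---it simply asserts that (\ref{phidef}) satisfy (\ref{goodbas}) and that the proposition follows immediately---so your argument supplies precisely the details the paper leaves to the reader.
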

Here and below we use $U$ as an index instead of ${\mathcal W}_U$ to simplify the notation.

From this Proposition and determinant formula (\ref{miwatau}) it follows that the matrix model (\ref{GKMM}) defines a tau-function $\tau_U({\bf t})$ of the KP hierarchy by
\be
\tau_U([\Lambda^{-1}]):=Z_U(\Lambda).
\ee
The tau-function $\tau_U$ depends on $\hbar$ only through the dependence of $U$.

%%%%%%%%%%%%%%%%%%%%%%%%%%%%%%%%%%%%%%%%%%%%%%%%%%%%%
\subsection{Canonical operators for GKM}

Let us assume that $n\geq1$ in (\ref{Vseries}). In this case one can derive general and simple expression for the Sato group element ${\mathtt G}_U$ and the canonical pair of the KS operators $({\mathtt P}_{U},{\mathtt Q}_{U})$.

Let the contour $\tilde{\gamma}(U)$ be the contour ${\gamma}(U)$ shifted by $z$, $\tilde{\gamma}(U)={\gamma}(U)-z$, so that the saddle point on the new contour is located at the origin. Then, we define
\be\label{aver}
\left<\dots\right>_U:=\sqrt{\frac{U(z)}{2\pi}}e^{\frac{1}{\hbar}V(z) }\int_{\tilde{\gamma}(U)} d\varphi \, e^{\frac{1}{\hbar}\left(\varphi V'(z)-V(\varphi+z)\right)} \dots.
\ee
Let us note that the position of $\dots$  in the integral is important, because below we consider, in particular, it to be operator-valued. 
Let us describe the distingushed basis and the Sato group element for the tau-function $\tau_U$:
\begin{lemma}\label{lemma_Gel}
Definition \ref{Phidef} describes a distinguished basis for ${\mathcal W}_U$. Sato group element is given by
\be\label{Gav}
{\mathtt G}_U=\left< e^{\varphi\frac{\p}{\p z}}\right>_U.
\ee
\end{lemma}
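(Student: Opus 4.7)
My plan is to verify the two assertions simultaneously by (i) performing the change of variables that turns the bracket $\langle e^{\varphi \partial/\partial z}\rangle_U$ acting on $z^{k-1}$ into the integral defining $\Phi^U_k$, (ii) checking that the operator $\mathtt{G}_U := \langle e^{\varphi\partial/\partial z}\rangle_U$ lies in the group $\mathcal{G} = 1+\mathcal{D}_-$, and then (iii) invoking Sato's theorem together with the admissibility statement of Proposition~\ref{lemmaphi} to identify $\mathtt{G}_U$ with the Sato group element of $\mathcal{W}_U$.

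\emph{Step 1: identification of the basis.} Expanding $e^{\varphi \partial/\partial z}$ formally and using that $e^{\varphi\partial/\partial z}\cdot z^{k-1}=(z+\varphi)^{k-1}$, one gets
\[
\mathtt{G}_U \cdot z^{k-1} \;=\; \sqrt{\tfrac{U(z)}{2\pi}}\,e^{V(z)/\hbar}\int_{\tilde\gamma(U)}d\varphi\,(z+\varphi)^{k-1}\,e^{\frac{1}{\hbar}(\varphi V'(z)-V(\varphi+z))}.
\]
Rescaling $\varphi\mapsto\varphi/\sqrt{U(z)}$ produces the Jacobian $U(z)^{-1/2}$, which cancels the prefactor $\sqrt{U(z)/(2\pi)}$, and the identity
\[
V(z+\varphi/\sqrt{U(z)}) \;=\; V(z)+V'(z)\,\tfrac{\varphi}{\sqrt{U(z)}}+\tfrac{\hbar}{2}\varphi^2+\hbar\sum_{j\ge3}\tfrac{U^{(j-2)}(z)}{j!\,U(z)^{j/2}}\varphi^{j}
\]
(using $V''=\hbar U$, $V^{(j)}=\hbar U^{(j-2)}$) turns the weight exactly into the one appearing in Definition \ref{Phidef}; the linear-in-$\varphi$ terms cancel and $V(z)/\hbar$ cancels with the prefactor $e^{V(z)/\hbar}$. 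This yields $\mathtt{G}_U\cdot z^{k-1}=\Phi^U_k$.

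\emph{Step 2: membership in $\mathcal{G}$.} Writing $\mathtt{G}_U=\sum_{k\ge0}\tfrac{\langle \varphi^k\rangle_U}{k!}\tfrac{\partial^k}{\partial z^k}$, one must check that $\langle 1\rangle_U-1\in z^{-1}\mathbb{C}[[z^{-1}]]$ and $\langle\varphi^k\rangle_U\in z^{-1}\mathbb{C}[[z^{-1}]]$ for $k\ge1$. From the rescaled form
\[
\langle \varphi^k\rangle_U \;=\; \frac{1}{\sqrt{2\pi}\,U(z)^{k/2}}\int d\varphi\,\varphi^k\, e^{-\varphi^2/2}\,e^{-\sum_{j\ge3}c_j(z)\varphi^j},\qquad c_j(z)=\tfrac{U^{(j-2)}(z)}{j!\,U(z)^{j/2}},
\]
the Wick-type expansion of the non-Gaussian factor yields a sum of terms of the form $U(z)^{-k/2}\prod_i c_{j_i}(z)$ with $k+\sum j_i$ even. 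Under the assumption $n\ge 1$ in (\ref{Vseries}), $U(z)^{-1}=\hbar z^{-n}(1+O(z^{-1}))$, and a direct degree count using $\deg c_j = (n+2)(1-j/2)$ shows that every such monomial has total $z$-degree equal to a strictly negative integer, so half-integer powers cancel systematically and the resulting series lies in $z^{-1}\mathbb{C}[[z^{-1}]]$.

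\emph{Step 3: identification via Sato's theorem.} By Steps 1 and 2, $\mathtt{G}_U\in\mathcal{G}$ and $\mathtt{G}_U\cdot H_+ = \operatorname{span}_{\mathbb{C}}\{\Phi^U_k\}_{k\ge 1}=\mathcal{W}_U$ (the second equality coming from Proposition~\ref{lemmaphi}). By the bijectivity part of Sato's theorem, $\mathtt{G}_U$ must be the Sato group element of $\mathcal{W}_U$, which gives (\ref{Gav}); consequently $\Phi^U_k=\mathtt{G}_U\cdot z^{k-1}$ coincides with the distinguished basis $\check\Phi^{\mathcal{W}_U}_k$ by its very definition (\ref{canbas}).

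The only nontrivial technical point is Step 2: while Step 1 is essentially a change-of-variable identity, Step 2 requires verifying that the formal Gaussian expansion produces only integer (in fact strictly negative) powers of $z$ in each coefficient, so that $\mathtt{G}_U$ really lies in $\mathcal{D}$ rather than in a larger ring containing $z^{1/2}$. The condition $n\ge 1$ is what makes this parity/degree bookkeeping work; when $n=-1$ (the logarithmic case) the same scheme would need a separate justification.
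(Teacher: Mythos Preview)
Your proof is correct and essentially the same approach as the paper's, just organized differently. The paper's one-line argument is that the combinations $\sum_{m+j=k}\frac{(-z)^m}{m!j!}\Phi^U_{j+1}$ lie in $H_-+\delta_{k,0}$ and then invokes Lemma~\ref{Gtocan}; but those combinations are precisely the coefficients $\langle\varphi^k\rangle_U/k!$ of your operator $\mathtt{G}_U$, so your Step~2 (membership in $\mathcal{G}$) is exactly the same verification, and your Step~3 via Sato's theorem is logically equivalent to the uniqueness part of Lemma~\ref{Gtocan}. Your version has the merit of making explicit the degree bookkeeping that the paper leaves as ``easy to see'', and your observation that the hypothesis $n\ge 1$ is what guarantees strict negativity is a useful sanity check; the paper indeed assumes $n\ge 1$ at the start of that subsection.
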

\begin{proof}
It is easy to see that for the basis (\ref{phidef}) the right hand side of (\ref{defconstr}) belongs to $H_-+\delta_{k,0}$ for $k\geq 1$. Hence this basis is canonical. Then (\ref{Gav}) follows from Lemma \ref{Gtocan}.
\end{proof}
Basis (\ref{phidef}) is distinguished for ${\mathcal W}_U$, therefore we will denote it by $\check{\Phi}^U_k$.
Actually, Definition \ref{Phidef} is meaningful for all $k\in{\mathbb Z}$ and it provides the distinguished adapted basis (\ref{canbas}).
Moreover, the BA function (\ref{psiasop}) is given by
\begin{equation}
\begin{split}
\Psi(z,x)=\left<e^{-x(z+\varphi)}\right>_U.
\end{split}
\end{equation}

Now we can construct the canonical pair of the KS operators:
\begin{lemma}\label{Qform}
\be\label{Qforgkm}
{\mathtt Q}_U=z+\frac{1}{U(z)}\frac{\p}{\p z} -\frac{U'(z)}{2 U(z)^2},
\ee
and, if $b_{-1}=0$, 
\be\label{Pgenn}
{\mathtt P}_U=\frac{1}{\hbar}\left(V'({\mathtt Q}_U)-V'(z)\right).
\ee
\end{lemma}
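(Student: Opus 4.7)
The plan is to invoke the uniqueness Lemma~\ref{uniquecan}: it suffices to exhibit operators in $\Gr_{\mathcal{D}}$ that stabilize $\mathcal{W}_U$, since Lemma~\ref{uniquecan} then identifies them with the canonical pair. Thus the task reduces to checking three properties for the candidates in (\ref{Qforgkm}) and (\ref{Pgenn}): (i) both stabilize $\mathcal{W}_U$, (ii) the filtration conditions ${\mathtt Q}_U - z \in \mathcal{D}_-$ and ${\mathtt P}_U - \frac{\p}{\p z} \in z^{-1}\mathcal{D}_-$, and (iii) the commutator $[{\mathtt P}_U,{\mathtt Q}_U]=1$.

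For the ${\mathtt Q}_U$ formula I apply the candidate to the distinguished basis $\check{\Phi}^U_k$ of Definition~\ref{Phidef}. Differentiating under the integral sign using $\frac{\p}{\p z}\bigl(e^{-(V(\varphi)-\varphi V'(z))/\hbar}\bigr) = U(z)\,\varphi\, e^{-(V(\varphi)-\varphi V'(z))/\hbar}$ together with $\frac{\p}{\p z}\log A(z) = \frac{U'(z)}{2U(z)} - zU(z)$ for the prefactor $A(z) = \sqrt{U(z)/(2\pi)}\, e^{(V(z)-zV'(z))/\hbar}$, I obtain
\[
\frac{\p}{\p z}\check{\Phi}^U_k = \left(\frac{U'(z)}{2U(z)} - zU(z)\right)\check{\Phi}^U_k + U(z)\,\check{\Phi}^U_{k+1},
\]
which rearranges to ${\mathtt Q}_U\cdot\check{\Phi}^U_k = \check{\Phi}^U_{k+1}$. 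Hence ${\mathtt Q}_U$ stabilizes $\mathcal{W}_U$ and shifts the adapted basis; the filtration ${\mathtt Q}_U - z \in \mathcal{D}_-$ follows at once from $n\geq 1$, since $1/U(z) \in z^{-n}\mathbb{C}[[z^{-1}]]$ and $U'(z)/U(z)^2 \in z^{-n-1}\mathbb{C}[[z^{-1}]]$ both lie in $H_-$.

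For the ${\mathtt P}_U$ formula I integrate by parts in $\varphi$: termwise in the perturbative Gaussian expansion the identity $\int d\varphi\, \frac{\p}{\p\varphi}\bigl(\varphi^{k-1}\, e^{-(V(\varphi)-\varphi V'(z))/\hbar}\bigr)=0$ gives
\[
(k-1)\int \varphi^{k-2} e^{\dots}\,d\varphi = \frac{1}{\hbar}\int \varphi^{k-1}\,(V'(\varphi) - V'(z))\, e^{\dots}\,d\varphi.
\]
Multiplying by $A(z)$ and expanding $V'(\varphi)=\sum_j a_j \varphi^j$ coefficient by coefficient, each term $A(z)\int \varphi^{k-1+j}e^{\dots}d\varphi = \check{\Phi}^U_{k+j}$, which by the previous step equals ${\mathtt Q}_U^j\cdot\check{\Phi}^U_k$. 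The right-hand side therefore becomes $\hbar^{-1}(V'({\mathtt Q}_U)-V'(z))\cdot \check{\Phi}^U_k$, showing ${\mathtt P}_U\cdot \check{\Phi}^U_k = (k-1)\check{\Phi}^U_{k-1}$ and hence that ${\mathtt P}_U$ stabilizes $\mathcal{W}_U$. The hypothesis $b_{-1}=0$ enters precisely here, ensuring that $V'$ contains no $\log z$ term and thus is a Laurent series in $z$, so $V'({\mathtt Q}_U)$ is a well-defined element of $\mathcal{D}$.

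It remains to verify (ii) and (iii) for ${\mathtt P}_U$. Writing ${\mathtt Q}_U = z+{\mathtt R}$ with ${\mathtt R}\in\mathcal{D}_-$, the formal expansion $V'(z+{\mathtt R})-V'(z) = V''(z)\,{\mathtt R} + (\text{strictly lower-degree terms})$ has leading part $\hbar U(z)\cdot U(z)^{-1}\frac{\p}{\p z} = \hbar\,\frac{\p}{\p z}$, so ${\mathtt P}_U - \frac{\p}{\p z} \in z^{-1}\mathcal{D}_-$. For the commutator, $V'({\mathtt Q}_U)$ commutes with ${\mathtt Q}_U$, so $[{\mathtt P}_U,{\mathtt Q}_U] = -\hbar^{-1}[V'(z),{\mathtt Q}_U] = -\hbar^{-1}[V'(z),U(z)^{-1}\frac{\p}{\p z}] = V''(z)/(\hbar U(z)) = 1$. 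Lemma~\ref{uniquecan} now completes the argument. I expect the main obstacle to be justifying the termwise manipulation of the possibly infinite Laurent series $V'(\varphi)=\sum_j a_j\varphi^j$ inside the perturbative Gaussian integral and its identification with the operator $V'({\mathtt Q}_U)$, though the degree filtration on $\mathcal{D}$ together with the structural description~(\ref{phistruct}) should guarantee convergence, since ${\mathtt Q}_U^j$ has degree $j$ and $V'$ has bounded positive degree.
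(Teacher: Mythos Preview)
Your proof is correct and follows essentially the same route as the paper: show that the two candidate operators act on the basis $\check\Phi^U_k$ by the shift $k\mapsto k+1$ (for ${\mathtt Q}_U$) and by $(k-1)$ times the shift $k\mapsto k-1$ (for ${\mathtt P}_U$), check the $\Gr_{\mathcal D}$ conditions, and conclude by Lemma~\ref{uniquecan}. Your write-up is considerably more explicit than the paper's, which merely says ``using integration by parts in (\ref{phidef}) it is easy to show''; in particular your verification of the filtration condition ${\mathtt P}_U-\frac{\p}{\p z}\in z^{-1}\mathcal D_-$ and of the commutator is spelled out, and your closing remark about the degree filtration correctly addresses the well-definedness of $V'({\mathtt Q}_U)$ when $V'$ is an infinite Laurent series.
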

\begin{proof}
Using the integration by parts in (\ref{phidef}), it is easy to show that 
\be\label{KSaaction}
\left(z+\frac{1}{U(z)}\frac{\p}{\p z} -\frac{U'(z)}{2 U(z)^2}\right) \cdot \check\Phi_k^U(z)=\check\Phi_{k+1}^U(z),
\ee
hence ${\mathtt Q}_U\in {\mathcal A}_U$. 

Let us assume that $b_{-1}=0$. Using integration by parts in (\ref{phidef}) it is  easy to show that $\hbar^{-1}(V'({\mathtt Q}_U)-V'(z))$ is a KS operator. Operators (\ref{Qforgkm}) and (\ref{Pgenn}) constitute a pair in $\Gr_D$, (\ref{GrD}), and from Lemma \ref{uniquecan} it follows that they constitute a canonical KS pair.
\end{proof}
Operator ${\mathtt Q}_U$ is the original KS operator, introduced by Kac and Schwarz  in \cite{KS}.

\begin{remark}
If $b_{-1}\neq0$, then $\hbar^{-1}V'(z)$ contains logarithmic term $b_{-1}\log z$. Thus, to construct operator  ${\mathtt P}_U$ in this case, one should deal with the operator of multiplication by $\exp\left(\pm{b_{-1}^{-1}}V'(z)\right)$. Here we will not consider this case, so below we assume that $b_{-1}=0$.
\end{remark}

\begin{proposition}
The  point of the Sato Grassmannian, dual to the one for the GKM with potential $U,$ is given by the GKM with the inverse sign of $U$,
\be
{\mathcal W}_{-U}={\mathcal W}_U^\bot.
\ee
\end{proposition}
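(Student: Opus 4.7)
The plan is to invoke the bijection $\rho$ of Theorem~\ref{bijecl} and reduce the statement to an equality of canonical Kac--Schwarz pairs: ${\mathcal W}_{-U}={\mathcal W}_U^\bot$ follows once we verify $\rho({\mathcal W}_{-U})=\rho({\mathcal W}_U^\bot)$. Both sides of this identity are explicitly computable from what we already have. On the left, Lemma~\ref{Qform} applied with the replacement $U\mapsto -U$ (consistently with $V\mapsto -V$, since $U=\hbar^{-1}V''$) yields
\be
{\mathtt Q}_{-U} = z - \frac{1}{U(z)}\frac{\p}{\p z} + \frac{U'(z)}{2U(z)^2}, \qquad {\mathtt P}_{-U} = -\frac{1}{\hbar}\bigl(V'({\mathtt Q}_{-U}) - V'(z)\bigr),
\ee
while on the right, formula (\ref{PQdual}) gives $\rho({\mathcal W}_U^\bot)=\bigl(-{\mathtt P}_U^{\,*},\,{\mathtt Q}_U^{\,*}\bigr)$.

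The next step is the adjoint computation. Using $(z^k\p^m)^* = (-\p)^m z^k$ term by term, and in particular the relation $(-\p)\cdot U(z)^{-1} = -U(z)^{-1}\p + U'(z)/U(z)^2$, one obtains
\be
{\mathtt Q}_U^{\,*} = z - \frac{1}{U(z)}\frac{\p}{\p z} + \frac{U'(z)}{U(z)^2} - \frac{U'(z)}{2U(z)^2} = {\mathtt Q}_{-U}.
\ee
Since the adjoint is an anti-homomorphism and each coefficient of $V'$ is a scalar, $\bigl(V'({\mathtt Q}_U)\bigr)^{*} = V'({\mathtt Q}_U^{\,*}) = V'({\mathtt Q}_{-U})$, while $V'(z)^*=V'(z)$ as a multiplication operator. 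Therefore $-{\mathtt P}_U^{\,*} = -\hbar^{-1}\bigl(V'({\mathtt Q}_{-U})-V'(z)\bigr) = {\mathtt P}_{-U}$, so the two pairs coincide and Theorem~\ref{bijecl} closes the argument.

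The only subtlety is that the explicit formula for ${\mathtt P}_U$ in Lemma~\ref{Qform} requires $b_{-1}=0$; when $b_{-1}\neq 0$ the logarithmic term in $V'$ breaks this shortcut. In that case I would instead match Sato's group elements directly via Lemma~\ref{lemma_Gel}: the average $\langle e^{\varphi\p/\p z}\rangle_{-U}$, obtained by flipping the sign of $V$ in (\ref{aver}), can be compared with $({\mathtt G}_U^{*})^{-1}$ by a change of integration variable $\varphi\mapsto -\varphi$ inside the contour integral, producing precisely the adjoint transpose of the operator kernel. This is the step I expect to be mildly technical (tracking the sign of the square-root prefactor and the orientation of $\tilde\gamma$), but it is essentially bookkeeping and delivers the same conclusion at the level of Sato group elements, hence at the level of Sato Grassmannian points.
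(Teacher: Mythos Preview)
Your proof is correct and follows essentially the same route as the paper: compute ${\mathtt Q}_{-U}={\mathtt Q}_U^{\,*}$ and ${\mathtt P}_{-U}=-{\mathtt P}_U^{\,*}$ from the explicit formulas of Lemma~\ref{Qform}, compare with (\ref{PQdual}), and conclude via Theorem~\ref{bijecl}. The paper states the two adjoint identities without writing out the intermediate steps, whereas you supply them; your extra paragraph on the $b_{-1}\neq 0$ case is unnecessary here since the paper has already restricted to $b_{-1}=0$ before this proposition.
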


\begin{proof}
From (\ref{Qforgkm}) and (\ref{Pgenn}) it follows that
\begin{equation}
\begin{split}
{\mathtt Q}_{-U}&={\mathtt Q}_{U}^*,\\
{\mathtt P}_{-U}&=-{\mathtt P}_{U}^*.
\end{split}
\end{equation}
Then, from (\ref{PQdual}) and Theorem \ref{bijecl} the statement of lemma immediately follows.
\end{proof}
In particular, this proposition implies that
\be\label{Gconj}
({\mathtt G}_U^*)^{-1}={\mathtt G}_{-U}
\ee
and the dual BA function  is
\be
\Psi^\bot(z,x)=e^{x z}\sqrt{-\frac{U(z)}{2\pi}}e^{-\frac{1}{\hbar}V(z) }\int_{\tilde\gamma(-U)} d\varphi \, e^{\frac{1}{\hbar}\left(V(\varphi+z)-\varphi V'(z)\right)+x\varphi},
\ee
or $\Psi^\bot(z,x)=\left<e^{x(z+\varphi)}\right>_{-U}$.
From Proposition \ref{Kernel_pr} we immediately have the asymptotic expansion expression for the Cauchy-Baker-Akhieser kernel for the GKM 
\be
K_U(z,w,x)=\frac{\sqrt{-U(z)U(w)}}{2\pi}e^{x(w-z)+\frac{1}{\hbar}(V(z)-V(w)) }\\
\times\int_{\tilde\gamma(U)} d\varphi_1 \int_{\tilde\gamma(-U)} d\varphi_2 \, e^{-\frac{1}{\hbar}\left(V(\varphi_1+z)- V(\varphi_2+w)-\varphi_1 V'(z)+\varphi_2 V'(w)\right)}\frac{e^{x(\varphi_2-\varphi_1)}}{w+\varphi_2-z-\varphi_1}.
\ee
For the polynomial $U$ it was established in \cite{ACM}.

Let us describe the action of the group generated by the operators  $\widehat{L}_{0}$ and $\widehat{L}_{1}$ on the space of GKM tau-functions. 
From (\ref{conn0}) and Lemma \ref{uniquecan}, it follows that  $\tau_{\tilde{U}}({\bf t})=e^{-r \widehat{L}_{-1}}\cdot \tau_U({\bf t})$, where 
\be\label{trans1}
\tilde{U}(z)=U(z+r).
\ee
Similarly, from (\ref{psd}) and Lemma \ref{uniquecan}, it follows that $\tau_{\tilde{U}}({\bf t})=e^{-s \widehat{L}_0}\cdot \tau_U({\bf t})$, where 
\be\label{trans2}
\tilde{U}(z)=e^{2s}U(e^{s}z).
\ee
With this transformation one can always fix the leading coefficient of $U$.

The GKM tau-function depends only on the combinations $b_j/\hbar$, not on these variables separately. With integration by parts it is also easy to show that 
\be\label{Dilat}
\left(\hbar^2 \frac{\p}{\p \hbar}+V(z)-V({\mathtt Q}_U)+V'(z)({\mathtt Q}_U-z) \right) \cdot {\mathcal W}_U \subset {\mathcal W}_U.
\ee

%%%%%%%%%%%%%%%%%%%%%%%%%%%%%%%%%%%%%%%%%%%%%%%%%%%%%%%%%%

\subsection{Topological expansion and spectral curve}\label{S_topex}

Let us consider the topological (or $\hbar$) expansion of the GKM tau-function.  From Definition \ref{Phidef} it follows that $\tau_U$ is a formal series in $\hbar$
\be\label{topex}
\tau_U({\bf t})=\exp\left(\sum_{m=1}^\infty \hbar^{m} S_{m+1}({\bf t})\right).
\ee

We introduce 
\be\label{unst1}
S_0(z):=zV'(z)-V(z)=\int V''(z) z \,dz ,\\
S_1(z):=-\frac{1}{2}\log V''(z).
\ee 
We will call them the {\em unstable contributions}. We claim, that these terms properly describe the unstable contributions, which appear on the $A$-side enumerative geometry problems. To include the unstable contributions one can redefine the partition function as 
\be
\tilde{Z}_U:=e^{\frac{1}{\hbar} \Tr S_0(Z)+\Tr S_1(Z)}\,\tau_U({\bf t}). 
\ee
It is also convenient to introduce the {\em modified wave function}
\be\label{unst}
\tilde{\Psi}_U := e^{\frac{1}{\hbar}S_0(z)+S_1(z)} \, {\Psi}_{U},
\ee
which does not belong to ${\mathbb C}((z^{-1}))$ anymore. 

Depending on the context, in the literature both $\Psi_U$ and $\tilde{\Psi}_U$ are called the wave functions, and the operators annihilating them are referred to as the quantum spectral curve operators. Sometimes $\tilde{\Psi}$ leads to a more natural semi-classical limit of the operator and better describes the spectral curve. 
To distinguish two possibilities we will call (\ref{unst}) the {\em modified wave function}. The {\em modified quantum spectral curve} operator is given by
\be
\tilde{\mathtt P}_U:=\hbar\,  e^{\frac{1}{\hbar}S_0(z)+S_1(z)}\, {\mathtt P}_U \, e^{-\frac{1}{\hbar}S_0(z)-S_1(z)}
\ee
so that
\be\label{modq}
\tilde{\mathtt P}_U\cdot \tilde{\Psi}_U=0.
\ee

Let $x:=V'(z)$. Then  the operators
\begin{equation}\label{xy}
\begin{split}
\hat{x}&:=x,\\
\hat{y} &:=\hbar \frac{\p}{\p x},
\end{split}
\end{equation}
satisfy $ \left[\hat{y}, \hat{x}\right]=\hbar$. Then
\be
\hat{y}= e^{\frac{1}{\hbar}S_0(z)+S_1(z)} \, {\mathtt Q}_U \, e^{-\frac{1}{\hbar}S_0(z)-S_1(z)}.
\ee
For GKM the modified quantum spectral curve operator is given by
\be\label{MQSC}
\tilde{\mathtt P}_U=V'(\hat{y})-\hat{x}
\ee
and the modified wave function is
\be\label{defpsi}
 \tilde{\Psi}_U=\frac{1}{\sqrt{2\pi\hbar }}\int_{\gamma(U)} e^{\frac{1}{\hbar}(x\varphi-V(\varphi))}.
\ee
In the semi-classical limit operator (\ref{MQSC}) reduces to the classical spectral curve
\be\label{ClassSC}
V'(y)=x.
\ee
For a general Laurent series $U$ this equation describes only a local behaviour of the spectral curve at $y=\infty$. However, in some cases, in particular, if $V'$ is a rational function, this equation defines an algebraic plane curve. 

The unstable contributions (\ref{unst1}) can be rewritten as
\begin{equation}
\begin{split}
S_0&= \int y dx,\\
S_1&=-\frac{1}{2}\log \left(\frac{\p x}{\p y}\right).
\end{split}
\end{equation}
\begin{remark}\label{hbardef}
One can consider a more general situation, when the coefficients $b_k$ depend on $\hbar$, $b_k\in {\mathbb C}[[\hbar]]$ with $\left.b_n\right|_{\hbar=0}\neq 0$. In this case the modified quantum spectral curve equation (\ref{MQSC}) acquires further corrections. 
\end{remark}
%%%%%%%%%%%%%%%%%%%%%%%%%%%%%%%%%%%%%%%%%%%%%%%%%%%%%%%%%%%%%%%%%%%%%%

\subsection{Polynomial potential}\label{Polyc}

In this section we consider polynomial potentials
\be\label{polypot}
U=\frac{1}{\hbar} \sum_{k=0}^{n} b_k z^k
\ee
or, equivalently $V(z)=\sum_{k=0}^n \frac{b_k z^{k+2}}{(k+1)(k+2)}$. The transformations (\ref{trans1}) and (\ref{trans2}), generated by the linear changes of variables ${\bf t}$, allow us to put $b_n=1$, $b_0=0$ in order to consider
\be\label{polypar}
U=\frac{1}{\hbar}\left(z^{n}+b_{n-1}z^{n-1}+\dots+b_1 z\right).
\ee
Below we will mostly work with such normalized potentials. 
\begin{remark}
Alternative form of the normalized potential with the vanishing subleading term,  
\be\label{Udid}
U=\frac{1}{\hbar}\left(z^{n}+b_{n-2}z^{n-2}+\dots+b_0\right),
\ee
 is equivalent to (\ref{polypar}), but less convenient for our purposes.
\end{remark}

Polynomial GKM is very well investigated \cite{AM,AMMP,IZ,Kh,KM,KMMMZ,KMMM,LG} because of its close relation to the interesting models of theoretical physics and enumerative geometry. Namely, for $n=1$ it corresponds to the Kontsevich matrix model, which originates in Kontsevich's proof \cite{Konts} of Witten's conjecture \cite{W} on the generating function of the intersection numbers on the moduli spaces. We will consider this case in Section \ref{Sec_Kon}.

For higher $n$ the GKM with the monomial potential 
\be\label{MON}
U=\frac{1}{\hbar} z^{n}
\ee
corresponds to $r$-spin intersection theory with $r=n+1$.  $r$-spin intersection theory is described by  the generalized Witten conjecture \cite{W3}. 
Connection between intersection theory and corresponding topological string models on one side, and GKM on another was one of the main reasons for the development of the GKM in the 90's, but the rigorous statement follows only from the proof of the r-spin Witten's conjecture by Faber, Shadrin an Zvonkine \cite{FSZ}: 
 \begin{proposition}
GKM for $V''(z)=z^{r-1}$ describes r-spin intersection numbers.
\end{proposition}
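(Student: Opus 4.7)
The plan is to characterize the GKM tau-function with monomial potential as a very rigid KP tau-function and then appeal to the Faber-Shadrin-Zvonkine theorem, which provides the corresponding intrinsic characterization of the r-spin generating function. In the GKM language, the key structural inputs are all encoded in the canonical pair of Kac-Schwarz operators already computed in Lemma \ref{Qform}.

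First I would establish the r-reduction. Normalize $V(z) = z^{r+1}/(r(r+1))$, so that $V''(z) = z^{r-1}$ and $U = z^{r-1}/\hbar$. Lemma \ref{Qform} then gives
\begin{equation}
{\mathtt Q}_U = z + \hbar\, z^{1-r}\frac{\p}{\p z} - \frac{(r-1)\hbar}{2}\, z^{-r},
\qquad
{\mathtt P}_U = \frac{1}{\hbar r}\bigl({\mathtt Q}_U^{\,r} - z^{r}\bigr),
\end{equation}
so that $z^{r} = {\mathtt Q}_U^{\,r} - r\hbar\,{\mathtt P}_U$ lies in ${\mathcal A}_{{\mathcal W}_U}$. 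Hence $z^{r}\cdot {\mathcal W}_U \subset {\mathcal W}_U$, which is the standard r-reduction condition on the Sato Grassmannian; equivalently, $\tau_U$ is a tau-function of the r-th Gelfand-Dickey (r-KdV) hierarchy.

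Next I would translate the remaining Kac-Schwarz structure into Virasoro/$W$-constraints on $\tau_U$. By Lemma \ref{Lem_Vir}, every KS operator in the Heisenberg-Virasoro subalgebra becomes a linear constraint on $\tau_U$; the operator ${\mathtt P}_U$ produces the string equation in the appropriately rescaled times, and more generally the operators ${\mathtt Q}_U^{\,k}{\mathtt P}_U$ generate the full set of constraints that, together with the r-reduction, single out the r-spin Witten-Kontsevich tau-function. The initial normalization $\tau_U({\bf 0})=1$ discussed in Section \ref{Sec_GGKM} fixes the overall constant.

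Finally, I would invoke the Faber-Shadrin-Zvonkine proof of Witten's r-spin conjecture \cite{FSZ}, which asserts that the generating function of r-spin intersection numbers is the unique tau-function of the r-KdV hierarchy satisfying the string equation with this initial value. Combined with the two steps above, this identifies $\tau_U$ with the r-spin generating function. The main obstacle I expect is not integrability per se but the bookkeeping of normalizations: carefully matching the Miwa times $t_k = \frac{1}{k}\Tr \Lambda^{-k}$ of the GKM with the standard r-spin descendant times (which differ by $r$-dependent rescalings and the special role of $t_{rk}$ after reduction), and reconciling the unstable prefactor $e^{\hbar^{-1}\Tr S_0 + \Tr S_1}$ of Section \ref{S_topex} with the conventions used on the intersection-theory side.
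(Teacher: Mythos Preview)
Your proposal is correct and follows the same route as the paper: both rely on the Faber--Shadrin--Zvonkine theorem \cite{FSZ} as the decisive input. The paper in fact does not spell out any of the intermediate steps; it simply states the proposition as a direct consequence of \cite{FSZ}, noting that the identification of the monomial GKM with $r$-spin intersection theory was long expected but only became rigorous once Witten's $r$-spin conjecture was proved.

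Your write-up is a reasonable expansion of what the paper leaves implicit. The $r$-reduction you derive from $z^{r}={\mathtt Q}_U^{\,r}-r\hbar\,{\mathtt P}_U\in{\mathcal A}_{{\mathcal W}_U}$ is exactly the paper's observation that ${\mathtt X}_U=V'(z)=z^{r}/r$ is a KS operator (equation (\ref{KS0})), and the string equation is the $m=-1$ case of the Virasoro constraints in (\ref{monomvir}). One small caution: what FSZ literally prove is that the $r$-spin generating function is a tau-function of the $r$-KdV hierarchy; the uniqueness statement you invoke (that $r$-KdV plus the string equation plus the normalization $\tau({\bf 0})=1$ pins down the tau-function) is a separate, standard fact about Gelfand--Dickey hierarchies, so you should cite it as such rather than fold it into the FSZ theorem.
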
 
Matrix model approach allows us to deform the monomial potential by the lower polynomial terms \cite{LG}. From the intersection theory point of view the general polynomial case corresponds to the intersection theory with  shifted Witten class. In physics it corresponds to the deformed $(r,1)$ minimal models coupled to topological gravity aka topological Landau-Ginzburg models. In Section \ref{redvirs} below we derive the Heisenberg-Virasoro constraints for the tau-function for arbitrary polynomial potential $U$.
\begin{remark}
Shifted r-spin theory may be more convenient for the Givental construction \cite{Giv1,Giv2}, as it allows to consider  a semi-simple point of the Frobenius manifold.
\end{remark}

For the polynomial GKM the operator $ {\mathtt P}_U$, given by (\ref{Pgenn}), is a differential operator of order $n+1$.  
Moreover, operator of multiplication by
\be\label{bfromP}
{\mathtt X}_U:=V'({\mathtt Q}_U) - \hbar {\mathtt P}_U 
\ee
is a KS operator, which is particularly convenient and simple,
\be\label{KS0}
{\mathtt X}_U=V'(z)=\hbar \int_0^z U(\eta) d\eta  \in {\mathcal A}_U.
\ee
It can be identified with the Landau-Ginzburg superpotential. 
Operators ${\mathtt Q}_U$ and ${\mathtt X}_U$ satisfy the commutation relation
\be\label{KCR}
\left[{\mathtt Q}_U,{\mathtt X}_U\right]=\hbar.
\ee
We see, that  for a polynomial $U$ the pair of the KS operators $({\mathtt X}_U,\mathtt{Q}_U)$ and the canonical pair  $({\mathtt P}_U,\mathtt{Q}_U)$ provide equivalent description of the KS algebra. In particular, the pair $({\mathtt X}_U,\mathtt{Q}_U)$ generates the KS algebra and uniquely specifies the point of the Sato Grassmannian.
\begin{lemma}\label{lem_unib}
For polynomial $U$, ${\mathcal W}_U$ is a unique point in $\Gr_+^{(0)}$, which is invariant under the action of the operators ${\mathtt X}_U$ and ${\mathtt Q}_U$.
\end{lemma}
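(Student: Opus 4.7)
The strategy is to reduce the uniqueness question to the uniqueness of the canonical pair of KS operators supplied by Lemma \ref{uniquecan}, and then invoke the bijection $\rho$ from Theorem \ref{bijecl}. We already know from the preceding discussion that ${\mathcal W}_U$ itself is invariant under $\mathtt{X}_U$ and $\mathtt{Q}_U$, so existence is settled; we only need to rule out any other candidate.

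Suppose ${\mathcal W}' \in \Gr_+^{(0)}$ is a second point stabilized by both operators, i.e.\ $\mathtt{X}_U, \mathtt{Q}_U \in \mathcal{A}_{{\mathcal W}'}$. Since the KS algebra is an algebra, formula (\ref{bfromP}) immediately gives
\[
\mathtt{P}_U \;=\; \hbar^{-1}\!\left(V'(\mathtt{Q}_U) - \mathtt{X}_U\right) \;\in\; \mathcal{A}_{{\mathcal W}'}.
\]
On the other hand, because $(\mathtt{P}_U,\mathtt{Q}_U) = \rho({\mathcal W}_U)$ for the point ${\mathcal W}_U \in \Gr^{(0)}_+$, we already know from (\ref{PQop}) that this pair lies in $\Gr_{\mathcal D}$, that is, $\mathtt{Q}_U - z \in \mathcal{D}_-$, $\mathtt{P}_U - \partial/\partial z \in z^{-1}\mathcal{D}_-$, and $[\mathtt{P}_U, \mathtt{Q}_U] = 1$. (If needed, these can also be checked directly from the explicit formula (\ref{Qforgkm}) for polynomial $U$, using that $1/U(z) = O(z^{-n})$.)

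Thus $(\mathtt{P}_U, \mathtt{Q}_U)$ is an element of $\Gr_{\mathcal D} \cap \mathcal{A}_{{\mathcal W}'}^2$. Lemma \ref{uniquecan} asserts that this intersection contains exactly one pair, namely the canonical pair associated with ${\mathcal W}'$. Therefore
\[
\rho({\mathcal W}') = (\mathtt{P}_{{\mathcal W}'}, \mathtt{Q}_{{\mathcal W}'}) = (\mathtt{P}_U, \mathtt{Q}_U) = \rho({\mathcal W}_U).
\]
Since $\rho$ is a bijection by Theorem \ref{bijecl}, we conclude ${\mathcal W}' = {\mathcal W}_U$, which is what was to be shown.

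The argument has essentially no obstacle once the machinery of the canonical KS pair is in place; the only point that deserves a quick verification is the membership $(\mathtt{P}_U, \mathtt{Q}_U) \in \Gr_{\mathcal D}$ for polynomial $U$, which follows either from the general theory applied to ${\mathcal W}_U$ or by direct inspection of the leading symbols. The crucial structural input is the identity (\ref{bfromP}), which says that the alternative KS pair $(\mathtt{X}_U, \mathtt{Q}_U)$ determines the canonical pair algebraically, so the problem collapses onto Lemma \ref{uniquecan}.
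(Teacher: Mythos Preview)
Your proof is correct and follows essentially the same approach as the paper: both use (\ref{bfromP}) to deduce that invariance under $\mathtt{X}_U$ and $\mathtt{Q}_U$ forces invariance under $\mathtt{P}_U$, and then appeal to the uniqueness in Theorem~\ref{bijecl}. Your version is simply more explicit in routing the uniqueness through Lemma~\ref{uniquecan} before invoking the bijection $\rho$, whereas the paper compresses this into a single reference to Theorem~\ref{bijecl}.
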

\begin{proof}
If a point of Sato Grassmannian is invariant under the action of ${\mathtt X}_U$ and ${\mathtt Q}_U$, then, according to (\ref{bfromP}), it is also invariant under the action of ${\mathtt P}_U$. From Theorem \ref{bijecl} such a point is unique, so it coincides with ${\mathcal W}_U$.
\end{proof}

\begin{remark}
KS operators correspond to the certain relations between the Lax-Orlov-Schulman operators, see e.g. \cite{ACM}. Let us consider
\begin{equation}
\begin{split}
P&=\frac{1}{\hbar}V'(L),\\
Q&=L+M \frac{1}{U(L)} -\frac{U'(L)}{2U(L)^2},
\end{split}
\end{equation}
where $L$ and $M$ are given by (\ref{Lax}) and (\ref{OS}). From the existence of  $({\mathtt X}_U,\mathtt{Q}_U) \in {\mathcal A}_W$ it follows that 
\be\label{Doug}
P_-=Q_-=0.
\ee
It is obvious, that these operators satisfy Douglas's string equation 
\be\label{douglass}
[P,Q]=1.
\ee
\end{remark}
For the polynomial case the equation for classical spectral curve (\ref{ClassSC}) is polynomial.
Equation (\ref{Dilat}) in the polynomial case reduces to 
\be\label{dilap}
\left(\hbar^2 \frac{\p}{\p \hbar}+V(z)-zV'(z)\right) \cdot {\mathcal W}_U \subset {\mathcal W}_U.
\ee

Let us consider the case of monomial potential (\ref{MON}) in more detail. 
For the monomial case the KS operator ${\mathtt X}_U$ describes the reduction of the KP hierarchy to the 
$(n+1)$-KdV  aka Gelfand-Dickey hierarchy. In this case we have
\begin{equation}
\begin{split}
{\mathtt Q}_U&= z + \frac{\hbar}{z^{n}}\frac{\p}{\p z}-\frac{n\hbar}{2 z^{n+1}}, \\
{\mathtt X}_U&=\frac{z^{n+1}}{n+1}.
\end{split}
\end{equation}
The quantum spectral curve operator is given by
\be\label{qscmon}
{\mathtt P}_U=\frac{({\mathtt Q}_U)^{n+1}-z^{n+1}}{(n+1) \hbar}.
\ee
The modified quantum spectral curve equation is the higher Airy equation
\be
\left(\frac{\hat{y}^{n+1}}{n+1}-\hat{x}\right)\cdot \tilde{\Psi}=0,
\ee
so that the classical spectral curve is given by
\be\label{rspinc}
\frac{y^{n+1}}{n +1}=x.
\ee
The BA function and its dual up to a simple prefactor depend only on the combination of $z$  and $x$ variables:
\begin{equation}
\begin{split}
\Psi(z,x)&=\sqrt{\frac{z^{n}}{\hbar^{\frac{n}{n+2}}}}e^{-\frac{z^{n+2}}{(n+2)\hbar}}F_{n+1}(z^{n+1}\hbar^{-\frac{n+1}{n+2}}/(n+1)-x\hbar^{\frac{1}{n+2}}),\\
\Psi^\bot(z,x)&=\sqrt{\frac{z^{n}}{\hbar^{\frac{n}{n+2}}}} F_{n+1}^\bot (z^{n+1}\hbar^{-\frac{n+1}{n+2}}/(n+1)-x\hbar^{\frac{1}{n+2}}),
\end{split}
\end{equation}
where $F_{n+1}$ and $F_{n+1}^\bot$ are the asymptotic expansions of the particular solutions of the higher Airy equation. Note, however, that the variables play different roles: while we expand the function into the positive powers of $x$, the coefficients of this expansion are the formal Laurent series in $z^{-1}$.

%%%%%%%%%%%%%%%%%%%%%%%%%%%%%%%%%%%%%%%%%%%%%%%%%

\subsection{Equivalent hierarchies}\label{S_eq}

All GKMs with polynomial potentials of the same degree can be related to each other by the action of Heisenberg-Virasoro group considered in Section \ref{Virsec}. Let us describe a relation between GKM with a general polynomial potential (\ref{polypar}) and another GKM with a monomial potential $\tilde{U}(z)=z^n/\hbar$.
This relation was investigated in \cite{LG} (some proofs are omitted there).
 \begin{remark}
 In this and the next section we do not require that $b_0=0$, so the results are valid for arbitrary $U=\frac{1}{\hbar}\left(z^{n}+b_{n-1}z^{n-1}+\dots+b_1 z+b_0\right)$.
 \end{remark}

Let $f(z)\in z +{\mathbb C}[[z^{-1}]]$ be a change of local parameter $z$ given by 
\be\label{ffun1}
f(z)=\left((n+1)V'(z)\right)^{\frac{1}{n+1}},
\ee
so that $f(z)=z+\frac{b_{n-1}}{n}+O(z^{-1})$. It can be represented as
\be
f(z):=e^{\sum_{k \in {\z}_{<0}} a_k {\mathtt l}_k } \,z \, e^{-\sum_{k \in {\z}_{<0}} a_k {\mathtt l}_k }
\ee
for some $a_k$. Let us introduce an element of the Heisenberg-Virasoro group
\be
\tilde{\mathtt G}:= e^{\int^z (f(\eta)-\eta)U(\eta) d\eta}\,  e^{\sum_{k \in {\z}_{<0}} a_k {\mathtt l}_k } .
\ee
\begin{lemma}\label{lem_equiv}
\be
{\mathcal W}_U= \tilde{\mathtt G} \cdot {\mathcal W}_{\tilde U}
\ee
\end{lemma}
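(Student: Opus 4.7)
The plan is to identify both sides via the uniqueness statement of Lemma \ref{lem_unib}, which asserts that $\mathcal{W}_U$ is the only point of $\Gr^{(0)}_+$ stabilized simultaneously by the Kac--Schwarz operators $\mathtt{X}_U = V'(z)$ and $\mathtt{Q}_U$. Since $\mathcal{W}_{\tilde U}$ is stabilized by $(\mathtt{X}_{\tilde U},\mathtt{Q}_{\tilde U})$, the transported space $\tilde{\mathtt G}\cdot\mathcal{W}_{\tilde U}$ is stabilized by the conjugates $\tilde{\mathtt G}\mathtt{X}_{\tilde U}\tilde{\mathtt G}^{-1}$ and $\tilde{\mathtt G}\mathtt{Q}_{\tilde U}\tilde{\mathtt G}^{-1}$; the whole proof reduces to showing these two conjugates equal $\mathtt{X}_U$ and $\mathtt{Q}_U$ respectively (and a brief remark that $\tilde{\mathtt G}\cdot\mathcal{W}_{\tilde U}$ lies in $\Gr^{(0)}_+$, which follows from the general Heisenberg--Virasoro action of Section \ref{Virsec}).

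Write $\tilde{\mathtt G}=e_1\, T_f$, where $e_1:=e^{\int^z(f(\eta)-\eta)U(\eta)\,d\eta}$ is the multiplication operator and $T_f:=e^{\sum_{k<0}a_k\mathtt{l}_k}$ is the Virasoro exponential. Using the half-density convention of the generators $\mathtt{l}_m=-z^m(z\partial_z+(m+1)/2)$, the operator $T_f$ acts on functions by $T_f\cdot g(z)=\sqrt{f'(z)}\,g(f(z))$, so conjugation by $T_f$ sends multiplication by $g(z)$ to multiplication by $g(f(z))$ and satisfies
\[
T_f\,\frac{\partial}{\partial z}\,T_f^{-1}=\frac{1}{f'(z)}\frac{\partial}{\partial z}-\frac{f''(z)}{2f'(z)^{2}}.
\]
For the $\mathtt{X}$ operator the verification is immediate: since $\mathtt{X}_{\tilde U}=\tilde V'(z)$ and by the defining equation (\ref{ffun1}) one has $\tilde V'(f(z))=V'(z)$, the Virasoro conjugation gives $T_f\mathtt{X}_{\tilde U}T_f^{-1}=V'(z)=\mathtt{X}_U$, and $e_1$ commutes with multiplication operators.

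The nontrivial check is the $\mathtt{Q}$ part. Differentiating $\tilde V'(f(z))=V'(z)$ gives $\tilde U(f(z))f'(z)=U(z)$, hence $1/\tilde U(f(z))=f'(z)/U(z)$. Inserting the transformation of $\partial_z$ above into $\mathtt{Q}_{\tilde U}=z+\tilde U(z)^{-1}\partial_z-\tilde U'(z)/(2\tilde U(z)^2)$ and using $\tilde U'(f(z))f'(z)^2=U'(z)-U(z)f''(z)/f'(z)$, the $f''$ pieces cancel and one gets
\[
T_f\,\mathtt{Q}_{\tilde U}\,T_f^{-1}=\mathtt{Q}_U+(f(z)-z).
\]
The remaining shift by $(f(z)-z)$ is then killed by $e_1$: since $e_1$ is multiplication by an antiderivative of $(f(z)-z)U(z)$, one has $e_1\,\partial_z\,e_1^{-1}=\partial_z-(f(z)-z)U(z)$, so $e_1\mathtt{Q}_U e_1^{-1}=\mathtt{Q}_U-(f(z)-z)$, and the two shifts exactly cancel to give $\tilde{\mathtt G}\,\mathtt{Q}_{\tilde U}\,\tilde{\mathtt G}^{-1}=\mathtt{Q}_U$.

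Combining the two computations, $\tilde{\mathtt G}\cdot\mathcal{W}_{\tilde U}$ is a point of $\Gr^{(0)}_+$ invariant under both $\mathtt{X}_U$ and $\mathtt{Q}_U$, and Lemma \ref{lem_unib} forces it to coincide with $\mathcal{W}_U$. The main technical obstacle is the bookkeeping of the half-density factor $\sqrt{f'(z)}$ in the Virasoro conjugation, and the precise matching between the $f''$-corrections in $T_f\,\partial_z\,T_f^{-1}$ and the logarithmic derivatives inside $\mathtt{Q}_{\tilde U}$; once one verifies that both of these vanish identically, the remaining multiplicative shift is exactly what the Heisenberg factor $e_1$ is designed to absorb, so the choice of its exponent in the definition of $\tilde{\mathtt G}$ is dictated by this cancellation.
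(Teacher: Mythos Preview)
Your proof is correct and follows exactly the paper's approach: verify that $\tilde{\mathtt G}\,\mathtt{Q}_{\tilde U}\,\tilde{\mathtt G}^{-1}=\mathtt{Q}_U$ and $\tilde{\mathtt G}\,\mathtt{X}_{\tilde U}\,\tilde{\mathtt G}^{-1}=\mathtt{X}_U$, then invoke Lemma~\ref{lem_unib}. The paper merely states that these conjugation identities hold ``from the direct computations'', whereas you have actually carried out the computation (the half-density transformation of $\partial_z$, the cancellation of the $f''$-terms, and the absorption of the residual shift $f(z)-z$ by the Heisenberg factor), so your write-up supplies the details the paper omits.
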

\begin{proof}
From the direct computations it follows that
\begin{equation}
\begin{split}
{\mathtt Q}_{U}&= \tilde{\mathtt G}\,{\mathtt Q}_{\tilde{U}}\, \tilde{\mathtt G}^{-1},\\
{\mathtt X}_{U}&= \tilde{\mathtt G}\, {\mathtt X}_{\tilde{U}}\, \tilde{\mathtt G}^{-1}.
\end{split}
\end{equation}
Then, the couples of the KS operators $({\mathtt X}_U,\mathtt{Q}_U)$ and $({\mathtt X}_{\tilde U},\mathtt{Q}_{\tilde U})$ uniquely define ${\mathcal W}_U\in Gr_+^{(0)}$ and  ${\mathcal W}_{\tilde U}\in Gr_+^{(0)}$, which completes the proof.
\end{proof}
Let us stress that  $ \tilde{\mathtt G}\notin {\mathcal G}$. We can also rewrite $ \tilde{\mathtt G}$ as
\be
\tilde{\mathtt G}=  e^{-\frac{1}{\hbar}{S}_0(z)} \,  e^{\sum_{k \in {\z}_{<0}} a_k {\mathtt l}_k } \, e^{\frac{1}{\hbar}\tilde{S}_0(z)},
\ee
where ${S}_0$ and $\tilde{S}_0$ are given by (\ref{unst1}) for $U$ and $\tilde{U}$ correspondingly. Using ({\ref{1Vir}}) from Lemma \ref{lem_equiv} we get a relation between tau-functions
\be\label{tautot}
\tau_U= C\,  e^{\sum_{k\leq n+1}  u_k \widehat{J}_k}  e^{\sum_{k \in {{\z}_{<0}}} a_k \widehat{L}_k }  \cdot \tau_{\tilde U},
\ee
where the coefficients  $u_k$ are given by (\ref{uk}). Tau-functions for $U$ and $\tilde{U}$ are equivalent in the sense of Shiota \cite{Shiota}. 

\begin{remark}
We are unable to find a similar relation between the GKM with a monomial potential and a general Laurent series case, considered in Section \ref{Sec_GGKM}. We expect, polynomial and general formal Laurent series GKM's with the same leading term are not equivalent in the sense of Shiota \cite{Shiota}.
\end{remark}

%%%%%%%%%%%%%%%%%%%%%%%%%%%%%%%%%%%%%%%%%%%%%%%%%

\subsection{Heisenberg and Virasoro constraints for polynomial GKM}\label{redvirs}

It is well-known that the GKM with polynomial potential satisfies Heisenberg and Virasoro (and, actually, the whole family of the $W^{(\infty)}$ constraints, of which 
 $W^{(r)}$ part provides full independent subset). Equations were derived for the monomial potentials \cite{KS,F,KMMMZ,AM,Sch}, however, for the polynomial potentials only the string equation is known explicitly \cite{KMMMZ}. Here for completeness we provide expressions for the Heisenberg and Virasoro constraints for the GKM with arbitrary polynomial potential.

We use Lemma \ref{Lem_Vir} to derive the constraints. The multiplication operators ${\mathtt X}_U^k$ describe the reduction of the KP hierarchy and correspond to the linear combinations of $\widehat{J}_m$ operators, while the first order differential operators ${\mathtt X}_U^k {\mathtt Q}_U$ describe the Virasoro constraints, satisfied by the tau-function. For a polynomial $U$ let us consider the KS operators
\be\label{KSj}
{\mathtt j}_m^U:={\mathtt X}_U^m,\,\,\,\,\,\,\,\,\,\,\,\, m\geq 1.
\ee
Then
\be
{\mathtt j}_m^U=\sum_{j=m}^{m (n+1)} g_{m,j} z^{j},
\ee
where
\be\label{gcoefd}
g_{m,j}:=\left[z^{j}\right] V'(z)^m,
\ee
or
\be
g_{m,j}=\sum_{l_1+\dots + l_m=j-m} \frac{b_{l_1}}{l_1+1}\dots \frac{b_{l_m}}{l_m+1}.
\ee

Let us also introduce the KS operators
\be\label{KSl0}
{\mathtt l}_m^U:= -\frac{1}{\hbar}{\mathtt X}_U^{m+1} {\mathtt Q}_U -\frac{m+1}{2} {\mathtt X}_U^m,\,\, \,\,\,\,\,\,\,\,\,\,\,\, m\geq -1.
\ee
In terms of operators (\ref{virw}) we have
\be\label{KSl}
{\mathtt l}_m^U:=\sum_{j=-\infty}^{m (n+1)} h_{m,j} {\mathtt l}_j  - \frac{1}{\hbar}\sum_{j=m+1}^{(m+1)(n+1)} g_{m+1,j} {\mathtt j}_{j+1},
\ee
where $g_{0,j}:=\delta_{j,0}$ and
\be\label{hcoefd}
h_{m,j}:=[z^{j+1}] \frac{V'(z)^{m+1}}{V''(z)}.
\ee
From the canonical commutation relation between ${\mathtt Q}_U$ and ${\mathtt X}_U$ one immediately concludes that
\begin{align}
\left[{\mathtt j}_m^U,{\mathtt j}_l^U\right]&=0,\nn\\
\left[{\mathtt l}_m^U,{\mathtt j}_l^U\right]&=-l\, {\mathtt j}_{m+l},\\
\left[{\mathtt l}_m^U, {\mathtt l}_l^U\right]&=(m-l) \, {\mathtt l}_{m+l}^U.\nn
\end{align}

From Lemma \ref{Lem_Vir} we have
\begin{equation}
\begin{split}
\label{eigen}
\widehat{J}_m^U\cdot \tau_U &=c_m^{(1)} \tau_U \,\,\,\,\,\,\,\,\,\,\,\, m\geq 1,\\
\widehat{L}_m^U\cdot \tau_U &=c_m^{(2)} \tau_U \,\,\,\,\,\,\,\,\,\,\,\, m\geq -1,
\end{split}
\end{equation}
for some $c_m^{(i)}$ independent of ${\bf t}$. Here
\begin{equation}
\begin{split}\label{JLop}
\widehat{J}_m^U&:=\sum_{j=m}^{m (n+1)} g_{m,j} \frac{\p}{\p t_{j}},\\
\widehat{L}_m^U&:=\sum_{j=-\infty}^{m(n+1)} h_{m,j} \widehat{L}_j  - \frac{1}{\hbar}\sum_{j=m+1}^{(m+1)(n+1)} g_{m+1,j} \frac{\p}{\p t_{j+1}} + \theta_m.
\end{split}
\end{equation}
The constant $\theta_m$ is given by
\begin{equation}
\begin{split}\label{thet}
\theta_m&=\frac{1}{12(m+2)}\sum_{j=n+1}^{(m+1)(n+1)} (j^3-j) h_{m+1,j} h_{-1,-j}\\
&=\frac{1}{12(m+2)}\res_{z=0} \left(\frac{1}{V''}\frac{\p^3}{\p z^3}\frac{(V')^{m+1}}{V''}\right).
\end{split}
\end{equation}
It is chosen to satisfy the commutation relations
\begin{align}\label{constcom}
\left[\widehat{J}_k^U,\widehat{J}_m^U\right]&=0,\,\,\,\,\,\,\,\,\,\,\,\,\,\,\,\,\,\,\,\,\,\,\,\, &k,m\geq 1,\nn\\
\left[\widehat{L}_k^U,\widehat{J}_m^U\right]&=-m \widehat{J}_{k+m}^U,\,\,\,\,\,&k\geq -1, m\geq 1,\\
\left[\widehat{L}_k^U,\widehat{L}_m^U\right]&=(k-m)\widehat{L}_{k+m}^U,\,\,\,\,\,\,\,\,\,\,\,\,\, &k,m\geq 1,\nn
\end{align}
where $\widehat{J}_0^U:=0$. From these commutation relations it follows that 
all operators $\widehat{J}_m^U$ and $\widehat{L}_m^U$ can be obtained by the commutation of other operators from this algebra, and thus the eigenvalues in (\ref{eigen}) vanish and we have:
\begin{proposition}\label{Lem_virpol}
Tau-function of the GKM for polynomial potentials satisfies the constraints
\begin{equation}
\begin{split}
\widehat{J}_m^U\cdot \tau_U &=0 \,\,\,\,\,\,\,\,\,\,\,\, m\geq 1,\\
\widehat{L}_m^U\cdot \tau_U&=0 \,\,\,\,\,\,\,\,\,\,\,\, m\geq -1.
\end{split}
\end{equation}
\end{proposition}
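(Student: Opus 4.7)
The plan is to apply Lemma \ref{Lem_Vir} to the KS operators $\mathtt{j}_m^U$ and $\mathtt{l}_m^U$ to derive eigenvalue equations for the corresponding operators acting on $\tau_U$, and then to use the subalgebra commutation relations (\ref{constcom}) together with the elementary observation that $[A,B]\cdot\tau_U = 0$ whenever both $A$ and $B$ act diagonally on $\tau_U$ to force all eigenvalues to vanish.

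First, since $\mathtt{X}_U \in \mathcal{A}_U$ by (\ref{KS0}), $\mathtt{Q}_U \in \mathcal{A}_U$ by Lemma \ref{Qform}, and the KS algebra is closed under products, the operators $\mathtt{j}_m^U = \mathtt{X}_U^m$ and $\mathtt{l}_m^U = -\hbar^{-1}\mathtt{X}_U^{m+1}\mathtt{Q}_U - \frac{m+1}{2}\mathtt{X}_U^m$ both lie in $\mathcal{A}_U$. Their expansions (\ref{KSj}) and (\ref{KSl}) in the basis $\{\mathtt{j}_k,\mathtt{l}_k\}$, combined with Lemma \ref{Lem_Vir} applied termwise, yield
\[
\widehat{J}_m^U \cdot \tau_U = c_m^{(1)} \tau_U, \qquad \widehat{L}_m^U \cdot \tau_U = c_m^{(2)} \tau_U,
\]
for some ${\bf t}$-independent constants $c_m^{(i)}$, where the correction $\theta_m$ in (\ref{JLop}) is chosen precisely so that the resulting operators close on the subalgebra (\ref{constcom}) without central extension. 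Establishing this closure amounts to checking that the residue formula (\ref{thet}) for $\theta_m$ exactly cancels the Virasoro cocycle $\frac{1}{12}(k^3-k)\delta_{k,-m}$ from (\ref{comre}) summed against the coefficients $h_{k,j}$ appearing in the infinite negative tail of (\ref{KSl}).

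Once (\ref{constcom}) is in hand, each generator of the subalgebra is realized as a nonzero scalar multiple of a commutator of two other generators within the same family: $\widehat{J}_m^U = -\tfrac{1}{m}[\widehat{L}_0^U, \widehat{J}_m^U]$ and $\widehat{L}_m^U = -\tfrac{1}{m}[\widehat{L}_0^U, \widehat{L}_m^U]$ for $m \geq 1$, $\widehat{L}_0^U = -\tfrac{1}{2}[\widehat{L}_{-1}^U, \widehat{L}_1^U]$, and $\widehat{L}_{-1}^U = [\widehat{L}_0^U, \widehat{L}_{-1}^U]$. Since the eigenvalue equations for any two operators $A, B$ acting diagonally on $\tau_U$ force $[A,B]\cdot \tau_U = 0$, every constant $c_m^{(i)}$ is pushed to zero, yielding the desired constraints.

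The principal obstacle is the central-charge computation that pins down (\ref{thet}). The infinite sum $\sum_{j\leq m(n+1)} h_{m,j}\widehat{L}_j$ has a negative tail that contributes nontrivial cocycle terms through the pairings $[\widehat{L}_j, \widehat{L}_{-j}]$, and repackaging them into the compact residue form (\ref{thet}) requires some care. Everything else in the argument is either a direct consequence of the setup in Sections \ref{KSsec}--\ref{S:GKM} or a one-line algebraic manipulation once the subalgebra structure is confirmed.
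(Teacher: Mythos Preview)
Your proposal is correct and follows essentially the same route as the paper: derive eigenvalue equations from Lemma \ref{Lem_Vir} applied to the KS operators $\mathtt{j}_m^U$ and $\mathtt{l}_m^U$, fix the constant shifts $\theta_m$ so that the operators close on the centerless algebra (\ref{constcom}), and then kill the eigenvalues by observing that every generator is a nontrivial commutator of diagonally acting operators. You are in fact more explicit than the paper about which commutators realize each generator, and you correctly flag the cocycle bookkeeping behind (\ref{thet}) as the only place any real work is hidden.
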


For the monomial GKM (\ref{MON}) the operators (\ref{eigen}), annihilating the tau-function, are
\be\label{monomvir}
\widehat{J}_m^U:=\frac{1}{(n+1)^m}\frac{\p}{\p t_{m (n+1)}},\\
\widehat{L}_m^U:=\frac{1}{(n+1)^{m+1}}\left(\widehat{L}_{m(n+1)}  - \frac{1}{\hbar}\frac{\p}{\p t_{(m+1)(n+1)+1}} + \frac{n^2+2n}{24}\,\delta_{m,0}\right).
\ee

\begin{remark} Constraints for the polynomial case can also be obtained from those for monomial case by conjugation with the operator (\ref{tautot}).
\end{remark}

\begin{remark}
For a non-polynomial potential ${\mathtt X}_{U}\notin {\mathcal A}_U$, so we cannot construct the Heisenberg-Virasoro algebra of constraints for $\tau_U$. However, we always have a string equation $\widehat{L}_{-1}^U\cdot \tau_U=0$.
\end{remark}

%%%%%%%%%%%%%%%%%%%%%%%%%%%%%%%%%%%%%%%%%%

\subsection{Kontsevich model}\label{Sec_Kon}
In this section we consider the simplest polynomial GKM model with $n=1$. Namely, let us consider in more detail the cubic case
\be
U=\frac{z}{\hbar},
\ee
or, equivalently, $V(z)=\frac{z^3}{3!}$.  This case describes the Kontsevich-Witten (KW) tau-function, that governs the intersection theory on the moduli space of punctured Riemann surfaces \cite{Konts,W}, see Section \ref{S_Inters}.

The main goal of this section is to describe the structure of the Sato group operator for the KW tau-function. Namely, we derive a simple recursion relations for the coefficients of the Sato operator. This structure is not obvious from the integral formula, described by Lemma \ref{lemma_Gel}. 

Let us introduce
\be
{\mathtt W}:=\frac{1}{z}\frac{\p}{\p z} -\frac{1}{2z^2}.
\ee
Then the canonical pair of the KS operators is given by
\begin{equation}\label{bKW}
\begin{split}
{\mathtt Q}_{KW}&=z+\hbar\,  {\mathtt W},\\
{\mathtt P}_{KW}&=\frac{1}{\hbar}\left(\frac{{\mathtt Q}_{KW}^2}{2}-{\mathtt X}_{KW}\right)=\frac{\p}{\p z} +\hbar \frac{{\mathtt W}^2}{2},
\end{split}
\end{equation}
where ${\mathtt X}_{KW}=z^2/2$.

The modified quantum spectral curve (\ref{MQSC}) is given by the Airy equation
\be\label{KWmqsc}
\left(\frac{\hat{y}^2}{2}-\hat{x}\right)\cdot \tilde{\Psi}_{KW}=0,
\ee
which in the semi-classical limit yields
\be\label{KWcsc}
\frac{y^2}{2}=x.
\ee
The distinguished basis is given by the integrals
\be
\check{\Phi}_k^{KW}(z)=\sqrt{\frac{z}{2 \pi \hbar}} \int_{\tilde{\gamma}(U)} d\varphi\, (\varphi+z)^{k-1} e^{-\frac{1}{\hbar}\left(\frac{\varphi^3}{3!}+\frac{z\varphi^2}{2}\right)}.
\ee
%\be
%\Phi_1(z) = \sum_{k=0}^\infty \frac{2^k}{9^k} \frac{\Gamma(3k+1/3)}{(2k)!\, \Gamma(1/2)}\frac{1}{z^{3k}}.
%\ee
The Sato group operator is given by Lemma \ref{lemma_Gel}, as
\begin{equation}\label{cgz}
\begin{split}
{\mathtt G}_{KW}&=\sqrt{\frac{z}{2 \pi \hbar}}  \int_{\tilde{\gamma}(U)} d\varphi\,  e^{-\frac{1}{\hbar}\left(\frac{\varphi^3}{3!}+\frac{z\varphi^2}{2}\right)} \, e^{\varphi \p_z}\\
&=\frac{1}{\sqrt{2 \pi}} \int_{\rr} d\varphi\,  e^{-\frac{\varphi^2}{2}-\frac{\varphi^3 \hbar^{1/2}}{z^{3/2}3!}} \, 
\sum_{k=0}^\infty\frac{1}{k!} \left(\frac{\varphi \hbar^{1/2}}{z^{1/2}}\right)^k \p_z^k\\
&=1+\hbar \left(\frac{5}{24z^3}-\frac{1}{2z^2}\p_z +\frac{1}{2z}\p_z^2\right)+O(\hbar^2).
\end{split}
\end{equation}
\begin{remark}
It would be interesting to find a relation between this group element and the cut-and-join operator, derived in \cite{CAJ}.
\end{remark}

The wave function and its dual are given by the Airy functions
\begin{equation}
\begin{split}
\Psi(z,x)&=\frac{\sqrt{\pi z}}{(2\hbar)^{1/6}}e^{-\frac{z^3}{3\hbar}}\Bi((z^2-2\hbar x)/(2\hbar)^{2/3}),\\
\Psi^\bot(z,x)&=2\frac{\sqrt{\pi z}}{(2\hbar)^{1/6}}e^{\frac{z^3}{3\hbar}}\Ai((z^2-2\hbar x)/(2\hbar)^{2/3}).\\
\end{split}
\end{equation}

It is well-known that the kernel for the KW model, as for any other solution of the KdV hierarchy, can be easily expressed in terms of the basis vectors of the Sato Grassmannian
\begin{equation}
\begin{split}
K_{KW}(z,w)&=\frac{\tau_{KW}([z^{-1}]-[w^{-1}])}{w-z}\\
&=\frac{\tau_{KW}([z^{-1}]+[-w^{-1}])}{w-z}\\
&=\frac{\Phi_1^{KW}(z) \Phi_2^{KW}(-w)- \Phi_1^{KW}(-w) \Phi_2^{KW}(z)}{z^2-w^2}.
\end{split}
\end{equation}
There exists a similar expression in terms of the dual point of the Sato Grassmannian ${\mathcal W}^\bot$.

Consider a  KS operator
\be\label{QPKW}
{\mathtt Q}_{KW} {\mathtt P}_{KW}=\left(z+\hbar {\mathtt W}\right)\left(\p_z+\hbar \frac{{\mathtt W}^2}{2}\right)= z\p_z +\hbar {\mathtt W}_1+\hbar^2 {\mathtt W}_2
\ee
with ${\mathtt W}_1=z\frac{{\mathtt W}^2}{2}+{\mathtt W} \p_z= \frac{5}{8z^3}-\frac{3}{2z^2}\p_z +\frac{3}{2z}\p_z^2$ and ${\mathtt W}_2=\frac{{\mathtt W}^3}{2}$. Note, that the operators ${\mathtt W}_1$ and ${\mathtt W}_2$ do not commute with each other, $\left[{\mathtt W}_1,{\mathtt W}_2\right]\neq 0$.

Let us consider the expansion of the Sato group element
\be
{\mathtt G}_{KW}=1+\sum_{k=1}^\infty {\mathtt G}_{KW}^{(k)},
\ee
where $\deg {\mathtt G}_{KW}^{(k)}=-3k$.

\begin{proposition}
Operators ${\mathtt G}_{KW}^{(k)}$ satisfy the recursion relation
\be
{\mathtt G}_{KW}^{(k)}=\frac{\hbar {\mathtt W}_1\,{\mathtt G}_{KW}^{(k-1)}+\hbar^2 {\mathtt W}_2 \,{\mathtt G}_{KW}^{(k-2)}}{3k}
\ee
with ${\mathtt G}_{KW}^{(k)}=0$ for negative $k$ and ${\mathtt G}_{KW}^{(0)}=1$.
\end{proposition}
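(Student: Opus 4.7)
The plan is to extract the recursion from a single operator identity that couples (\ref{QPKW}) with the conjugation definition of $\mathtt{G}_{KW}$. Since $\mathtt{P}_{KW}=\mathtt{G}_{KW}\,\tfrac{\p}{\p z}\,\mathtt{G}_{KW}^{-1}$ and $\mathtt{Q}_{KW}=\mathtt{G}_{KW}\,z\,\mathtt{G}_{KW}^{-1}$ by (\ref{PQdef}), one has $\mathtt{Q}_{KW}\mathtt{P}_{KW}=\mathtt{G}_{KW}\bigl(z\tfrac{\p}{\p z}\bigr)\mathtt{G}_{KW}^{-1}$. Comparing this with the explicit expansion in (\ref{QPKW}) and transposing $\mathtt{G}_{KW}$ across produces the key commutator identity
\begin{equation*}
\bigl[z\tfrac{\p}{\p z},\,\mathtt{G}_{KW}\bigr]=-\hbar\,\mathtt{W}_1\mathtt{G}_{KW}-\hbar^2\,\mathtt{W}_2\mathtt{G}_{KW}.
\end{equation*}

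The central step is to establish that each $\mathtt{G}_{KW}^{(k)}$ is homogeneous of $(z,\tfrac{\p}{\p z})$-degree $-3k$, so that the $\hbar$-grading and the operator grading are locked together. I would argue this via a scaling symmetry. Consider the automorphism $\sigma_s$ of $\mathcal{D}[[\hbar]]$ acting by $z\mapsto e^s z$, $\tfrac{\p}{\p z}\mapsto e^{-s}\tfrac{\p}{\p z}$, $\hbar\mapsto e^{3s}\hbar$. Using $\deg\mathtt{W}=-2$, direct inspection of (\ref{bKW}) gives $\sigma_s(\mathtt{Q}_{KW})=e^s\mathtt{Q}_{KW}$ and $\sigma_s(\mathtt{P}_{KW})=e^{-s}\mathtt{P}_{KW}$; the factors of $e^{\pm s}$ cancel upon substitution into the conjugation relations, so $\sigma_s(\mathtt{G}_{KW})\in 1+\mathcal{D}_-$ conjugates $z$ and $\tfrac{\p}{\p z}$ to the same $\mathtt{Q}_{KW}$ and $\mathtt{P}_{KW}$ as $\mathtt{G}_{KW}$ does. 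Uniqueness of the Sato group element (Sato's theorem, or equivalently Lemma \ref{l_GuniQ}) then forces $\sigma_s(\mathtt{G}_{KW})=\mathtt{G}_{KW}$ for every $s$, which is precisely the homogeneity claim on the $\hbar^k$-coefficients. A more pedestrian alternative reads the claim off the integral (\ref{cgz}) directly: every half-power $\hbar^{1/2}$ appears paired either with $z^{-3/2}\varphi^3$ or with $z^{-1/2}\varphi\tfrac{\p}{\p z}$, each contributing $-3/2$ to the $(z,\tfrac{\p}{\p z})$-degree per half-unit of $\hbar$.

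Once homogeneity is in hand, the conclusion is mechanical. Since $[z\tfrac{\p}{\p z},z^a\tfrac{\p^b}{\p z^b}]=(a-b)z^a\tfrac{\p^b}{\p z^b}$, we get $[z\tfrac{\p}{\p z},\mathtt{G}_{KW}^{(k)}]=-3k\,\mathtt{G}_{KW}^{(k)}$. Equating the $\hbar^k$-components on both sides of the key identity yields
\begin{equation*}
-3k\,\mathtt{G}_{KW}^{(k)}=-\hbar\,\mathtt{W}_1\mathtt{G}_{KW}^{(k-1)}-\hbar^2\,\mathtt{W}_2\mathtt{G}_{KW}^{(k-2)},
\end{equation*}
which is the claimed recursion after division by $-3k$, valid for $k\geq 1$; the boundary data $\mathtt{G}_{KW}^{(0)}=1$ and $\mathtt{G}_{KW}^{(k)}=0$ for $k<0$ are immediate from the definition. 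The homogeneity step is the only genuinely non-trivial part of the argument, and that is where the effort should be concentrated.
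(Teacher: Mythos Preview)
Your proof is correct and matches the paper's approach: both derive $\mathtt{Q}_{KW}\mathtt{P}_{KW}\,\mathtt{G}_{KW}=\mathtt{G}_{KW}\,z\tfrac{\p}{\p z}$ from the conjugation definition, combine it with (\ref{QPKW}), and then use $\deg\mathtt{G}_{KW}^{(k)}=-3k$ to extract the recursion component by component. The only difference is that the paper takes the degree decomposition as the \emph{definition} of $\mathtt{G}_{KW}^{(k)}$ (so $[z\tfrac{\p}{\p z},\mathtt{G}_{KW}^{(k)}]=-3k\,\mathtt{G}_{KW}^{(k)}$ is tautological), whereas you interpret $\mathtt{G}_{KW}^{(k)}$ via the $\hbar$-expansion and supply the scaling-symmetry argument to lock the two gradings together; this extra step is correct but not needed given the paper's framing.
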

\begin{proof}

From the definition of the canonical pair of the KS operators it follows that
\be
{\mathtt Q}_{KW} \, {\mathtt P}_{KW}\, {\mathtt G}_{KW} ={\mathtt G}_{KW} \, z\frac{\p}{\p z}.
\ee
Using the commutation relation
\be
{\mathtt G}_{KW}^{(k)} z\frac{\p}{\p z}=\left(z\frac{\p}{\p z}+3k\right){\mathtt G}_{KW}^{(k)}
\ee
and (\ref{QPKW}) we get
\be
\left(\hbar {\mathtt W}_1+\hbar^2 {\mathtt W}_2\right) {\mathtt G}_{KW} -3 \sum_{k=1}^\infty k {\mathtt G}_{KW}^{(k)}=0.
\ee
Let us take a component of degree $-3k$ for some $k>0$:
\be
\hbar {\mathtt W}_1{\mathtt G}_{KW}^{(k-1)}+\hbar^2 {\mathtt W}_2{\mathtt G}_{KW}^{(k-2)}-3k {\mathtt G}_{KW}^{(k)}=0.
\ee
This completes the proof.
\end{proof}
In particular, 
\begin{equation}
\begin{split}
{\mathtt G}_{KW}^{(1)}&=\hbar \frac{{\mathtt W}_1}{3},\\
{\mathtt G}_{KW}^{(2)}&=\hbar^2\left( \frac{{\mathtt W}_1^2}{18}+\frac{{\mathtt W}_2}{6}\right).
\end{split}
\end{equation}
From the proposition we see that
\be
{\mathtt G}_{KW}\in{\mathbb Q}[[\hbar{\mathtt W}_1, \hbar^2{\mathtt W}_2]].
\ee
Using this Sato group element, one can immediately obtain ${\mathtt g}_{KW}=\log{\mathtt G}_{KW}=\hbar \frac{{\mathtt W}_1}{3}+ \hbar^2 \frac{{\mathtt W}_2}{6}+O(
\hbar^3)$, and  the fermionic group element from Remark \ref{rmk_ferm}. It would be interesting to derive it  from the integral representation (\ref{cgz}).
\begin{remark}
Similar $n+1$-term recursion relations define the Sato operators for GKM with higher monomial potentials.
\end{remark}

%%%%%%%%%%%%%%%%%%%%%%%%%%%%%%%%%%%%%%
\subsection{Rational potential}\label{LPC}

Our setup allows us to consider more general, non-polynomial potentials of the GKM tau-functions. Potentially interesting class of models corresponds to rational $V'(z)$, 
\be
V'(z)=\frac{p_1(z)}{p_2(z)},
\ee
where $p_1$ and $p_2$ are two polynomials.
The modified quantum spectral curve (\ref{MQSC}) leads to the differential equation
\be\label{csqlar}
\left(p_1(\hat{y})-p_2(\hat{y}) \hat{x}\right) \cdot \tilde{\Psi}_U=0,
\ee
where the operator in the parenthesis is polynomial in $\hat{x}$ and $\hat{y}$. In the semi-classical limit it leads to a genus zero algebraic spectral curve
\be
p_1(y)-p_2(y)x=0.
\ee

\begin{remark}
Let us consider a simple example with 
\be
V''(z)=\sum_{k=0}^n b_{k} z^{k}+b_{-2}z^{-2}.
\ee
Then 
\be
V'(z)=\frac{\sum_{k=0}^n b_{k} \frac{z^{k+2}}{k+1}-b_{-2}}{z},
\ee
and the modified quantum spectral curve equation (\ref{csqlar}) is given by
\be\label{hyper}
\left(\sum_{k=0}^n \frac{b_{k}}{k+1} \hat{y}^{k+2 }-\hat{y} \hat{x} -b_{-2} \right)\cdot \tilde{\Psi}_U=0.
\ee

Let us note that the modified quantum spectral curve (\ref{hyper}) is similar to that for the hypermaps, 
\be
\left( \sum_{k=1}^a t_k \hat{y}^{k}-  \hat{x} \hat{y} +1 \right)\cdot \tilde{\Psi}=0
\ee
derived in \cite{Do,Dunin}. It is easy to identify semi-classical limits of two quantum spectral curves. However, quantum spectral curves are different.
One can identify two quantum spectral curves if the coefficients $b_{k}$ depend of $\hbar$. However, it is not clear if the hypermaps can be described by the GKM with the potential explicitly dependent on $\hbar$, see Remark \ref{hbardef}. 

\end{remark}

%%%%%%%%%%%%%%%%%%%%%%%%%%%%%%%%%%%%%%%%%%%%%%%%

\section{Deformed GKM}\label{Secdef}

In this section we consider a new, more general class of KP tau-functions, which can be described by the GKM. They share some common properties with the classical GKM considered in the previous section. However, other properties are essentially different. These tau-functions will again be labeled by the potential $U$, which is given by a certain deformation of the polynomial potential (\ref{polypar}). In addition to the parameters $b_j$ this deformed potential
depends on auxiliary formal parameters ${\bf w}=\{w_1,w_2,\dots,w_{n+1}\}$ in such a way, that the deformed potential, as well as all other elements of the construction, is formal series in ${\bf w}$. All the objects in this section reduce to the associated object in the previous section at ${\mathbf w}={\mathbf 0}$.  As it will be discussed below in Section \ref{Sec:Hodge}, this deformation is motivated by the Hodge integrals.

%%%%%%%%%%%%%%%%%%%%%%%%%%%%%%%%%%%%%%%%%%%%%%%%%
\subsection{Deformed potential}

The main goal of this section is to give a motivation for Definition \ref{DefPot} of the deformed potential from the point of view of KS description. It can be skipped on first reading.
We focus on the deformation of the polynomial GKM model, considered in Section \ref{Polyc}. We denote the undeformed polynomial potential by $U_0$,
\be\label{U0}
U_0=\frac{1}{\hbar}\left(z^n+b_{n-1}z^{n-1}+\dots+b_1z\right),
\ee
where $n\geq 1$. The deformed potential $U(z)$ will be a formal series in ${\bf w}$:
\be\label{Vdoubleu}
U(z)=U_0(z)+\sum_{\vec{l}\neq\vec{0}} U_{\vec{l}}(z)w_1^{l_1}\dots w_{n+1}^{l_{n+1}},
\ee
where $\vec{l}\in {\mathbb Z}_{\geq 0}^{n+1}$.
For arbitrary $U_{\vec{l}}(z) \in {\mathbb C} ((z^{-1}))$ the series expansion of integrals (\ref{phidef}) defines 
\be\label{Gdefb}
\Phi_k^{U}(z)=\Phi_k^{U_0}(z)+\sum_{\vec{l}\neq\vec{0}} \Phi_{k,\vec{l}}^{U_0}(z)  w_1^{l_1}\dots w_{n+1}^{l_{n+1}},
\ee
where $\Phi_{k,\vec{l}}^{U_0}(z) \in {\mathbb C} ((z^{-1}))$. However, for general $U$ it is not clear if (\ref{Gdefb}) is an admissible basis for any point of the Sato Grassmannian. We will construct a class of deformations such that this basis is admissible for a certain point of the Sato Grassmannian, ${\mathcal W}_U \in \Gr_+^{(0)}$. 

Denote by ${\mathbf K} ={\mathbb C}[[{\bf w}]]$ the
ring of formal series in the $w_k$'s and by  ${\mathbf K}^*=\bigoplus_{k=1}^L \,w_k {\mathbf K}$ its ideal of formal series with trivial constant term. All objects considered below are formal power series in ${\bf w}$. 

\begin{remark}\label{trivdef}
For one particular class of deformations the integrals (\ref{phidef}) obviously define an admissible basis. Namely, let
\be
U(z)=\frac{1}{\hbar}\sum_{k=0}^{n} b_{k}({\bf w}) z^{k},
\ee
 and $b_k ({\bf w}) \in{\mathbf K}$ for all $k$ with  $\left.U(z)\right|_{{\bf w}={\bf 0}} = U_0(z)$. Then (\ref{phidef}) defines a point of the Sato Grassmannian. 
However, this type of deformations can be obtained from the undeformed case by consideration of the coefficients $b_k$  dependent of $w_j$ in all expressions of the previous section. Thus, it does not give any new solutions and is not very interesting. Below we will neglect the deformations, which can be absorbed by the redefinition of the coefficients of $U_0$.  
\end{remark}

We consider a class of deformations that leads to the natural deformations of the KS operators ${\mathtt Q}_{U_0}$ and ${\mathtt X}_{U_0}$,
given by  (\ref{Qforgkm}) and  (\ref{KS0}). 
Namely, we require that for the deformed $U$ the operators defined by the right hand sides of (\ref{Qforgkm}) and (\ref{KS0}) satisfy certain properties similar to ones for the polynomial potential considered in Section \ref{Polyc}. In particular, we want these operators to stabilize the space, generated by (\ref{phidef}). The operator
\be\label{Qdef}
\tilde{\mathtt Q}_{U}:=z+\frac{1}{U(z)}\frac{\p}{\p z} -\frac{U'(z)}{2 U(z)^2}
\ee
always stabilizes this space, because from the integration by parts we have
\be\label{Qtildac}
\tilde{\mathtt Q}_{U} \cdot \Phi_k^U(z)=\Phi_{k+1}^U(z).
\ee
We also want this operator to generate the basis from the wave function $\Psi_U$ (to be constructed below). For this purpose we require it to have $z$ as the top degree term, or
\be\label{atrec}
\tilde{\mathtt Q}_{U}-z\in {\mathcal D}^{(0)}.
\ee
From (\ref{Qdef}) it follows that this requirement is equivalent to
\be\label{req2}
\frac{1}{U} \in z\,{\mathbf K}[[z^{-1}]].
\ee
\begin{remark} In general, the operator $\tilde{\mathtt Q}_{U}$ for the deformed potential does not belong to $z+{\mathcal D}_-$,
so it does not coincide with the canonical KS operator ${\mathtt Q}_{U}$. The operator $\tilde{\mathtt Q}_{U}$ satisfies (\ref{PQop}), so it can be
a canonical KS operator ${\mathtt Q}_{U}$ for some point of the Sato Grassmannian only if $\frac{1}{U} \in z^{-1}{\mathbf K}[[z^{-1}]]$.
\end{remark}
Then, consider 
\be
{\mathtt X}_{U}=\hbar \int^z U(\eta) d\eta.
\ee 
From the integration by parts one has
\be
\left(V'(\tilde{\mathtt Q}_{U})-{\mathtt X}_{U}\right)\cdot \Phi_k^U(z)=(k-1)\Phi_{k-1}^U(z).
\ee
Hence the operator of multiplication by ${\mathtt X}_{U}$ stabilizes the space generated by $\Phi_k^{U}$ if it does not contain negative powers of $z$, or
\be\label{req1}
U \in  {\mathbf K}[[z]].
\ee

Let us consider the classes of equivalence of the deformations satisfying (\ref{req1}), that differ by the trivial deformations (see Remark \ref{trivdef}). Let
\be
{\mathcal U}_0^n:=\left.\{ z^{n}+b_{n-1}z^{n-1}+\dots+b_1z \, \right|\, b_k\in {\mathbb C}, b_{n}\neq 0   \}
\ee
and
\be
\tilde{\mathcal U}_0^n:=\left.\{\tilde b_n z^{n}+\tilde b_{n-1}z^{n-1}+\dots+\tilde b_0 \, \right|\, \tilde b_k\in {\mathbf K}^*  \}.
\ee
Consider the quotient space
\be
{\mathcal U}^n:=\left.\left.\{U\in {\mathbf K}[[z]]  \,\right|\,  \left.U\right|_{{\bf w}={\bf 0}} \in  {\mathcal U}_0^n  \}\right/\tilde{\mathcal U}_0^n.
\ee

\begin{lemma}
For any $U_0 \in {\mathcal U}_0^n$ and $U\in {\mathbf K}[[z]]$ with  $\left.U\right|_{{\bf w}={\bf 0}}=U_0$,  there exists a unique $\tilde{U} \in \tilde{\mathcal U}_0^n$ such that $U+\tilde U=\frac{U_0}{1-\Delta U}$ with $\Delta U \in z{\mathbf K}^* [[z]]$.
\end{lemma}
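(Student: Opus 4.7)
The plan is to recast the condition as the polynomial-type identity $(1 - \Delta U)(U + \tilde U) = U_0$ and then determine the coefficients of $\tilde U$ and $\Delta U$ recursively in $z$. Setting $U + \tilde U = \sum_{k \geq 0} a_k z^k$ with $a_k = u_k + \tilde u_k$ (and $\tilde u_k = 0$ for $k > n$), $\Delta U = \sum_{k \geq 1} d_k z^k$, and $U_0 = \sum_{k=1}^n b_k z^k$ with $b_n = 1$, matching the coefficient of $z^k$ on both sides gives the recurrence
\begin{equation}\label{planrec}
a_k - \sum_{j=1}^{k} d_j\, a_{k-j} = b_k, \qquad \text{with } b_k = 0 \text{ for } k > n.
\end{equation}
The hypothesis $U|_{{\bf w}={\bf 0}} = U_0$ translates into $u_0 \in {\mathbf K}^*$, $u_k - b_k \in {\mathbf K}^*$ for $1 \leq k \leq n$, and $u_k \in {\mathbf K}^*$ for $k > n$.

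First I would extract the two lowest-order coefficients of $\tilde U$ directly from \eqref{planrec}: the $k = 0$ equation forces $a_0 = 0$, so $\tilde u_0 = -u_0 \in {\mathbf K}^*$; using $a_0 = 0$, the $k = 1$ equation simplifies to $a_1 = b_1$, giving $\tilde u_1 = b_1 - u_1 \in {\mathbf K}^*$. Then I would continue by induction on $k$: for $2 \leq k \leq n$, \eqref{planrec} expresses $a_k$ (hence $\tilde u_k$) as a polynomial combination of previously determined data and the unknowns $d_1, \dots, d_{k-1}$; for $k > n$, the coefficient $a_k = u_k$ is already fixed by $U$, so \eqref{planrec} becomes a linear constraint in which the highest-index new $d$ appears multiplied by $a_1 = b_1 \in \mathbb{C}^\times$. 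Invertibility of $b_1$ in ${\mathbf K}$ lets us solve for the next $d$-coefficient, and membership $d_j \in {\mathbf K}^*$ is guaranteed because at ${\bf w} = {\bf 0}$ the system degenerates to the tautology $U_0 = U_0$ with $\tilde U = 0$ and $\Delta U = 0$.

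The main obstacle is uniqueness: the equations for $2 \leq k \leq n$ each introduce one new $\tilde u_k$ \emph{and} one new $d_{k-1}$, naively yielding an apparent ambiguity. I expect this to be resolved by the compatibility conditions coming from the higher-order equations ($k > n$): the overdetermined joint system on $\{\tilde u_k\}_{k=2}^n$ and $\{d_j\}_{j \geq 1}$, combined with the ${\mathbf K}^*$-membership constraints, forces all the remaining coefficients to take uniquely determined values. A clean packaging of the argument will characterize $\tilde U$ as the unique element of $\tilde{\mathcal U}_0^n$ for which the formal series $U_0/(U + \tilde U)$ lies in $1 + z{\mathbf K}^*[[z]]$ (which is exactly the condition $\Delta U \in z{\mathbf K}^*[[z]]$), after which the inductive procedure above provides the explicit formula, closing the proof.
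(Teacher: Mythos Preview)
Your recursive setup is fine, but the proposed resolution of the ambiguity for $2 \leq k \leq n$ does not work: the system is \emph{under}determined, not overdetermined. With $a_0 = 0$ your recurrence at level $k$ reads $a_k = b_k + \sum_{j=1}^{k-1} d_j\, a_{k-j}$, so for $2 \leq k \leq n$ each step introduces \emph{two} fresh unknowns ($\tilde u_k$ and $d_{k-1}$) against one equation, while for $k > n$ each step introduces one fresh $d_{k-1}$ against one equation; no higher-order equation ever comes back to pin down $d_1, \dots, d_{n-1}$, and $\mathbf K^*$-membership only forces vanishing at ${\bf w} = {\bf 0}$, not a specific value. (Your side assumption $b_1 \in \mathbb C^\times$ is also not part of the hypotheses, but that is secondary.) Concretely, take $n = 2$, $U_0 = z + z^2$, $U = U_0 + w z^3$: both $\tilde U = 0$ with $1 - \Delta U = (1+z)/(1+z+wz^2)$ and $\tilde U = w z^2$ with $1 - \Delta U = 1/(1 + w z)$ satisfy $U+\tilde U = U_0/(1-\Delta U)$, $\tilde U \in \tilde{\mathcal U}_0^2$, $\Delta U \in z\mathbf K^*[[z]]$.

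The paper's proof is much terser---it rewrites the condition as $(U - U_0 + \tilde U)/U_0 \in z\mathbf K^*[[z]]$ and then simply asserts existence and uniqueness---but it shares the same gap. Writing $U_0 = z^m g(z)$ with $g(0) = b_m \neq 0$, that divisibility condition amounts to $U - U_0 + \tilde U \in z^{m+1}\mathbf K^*[[z]]$, which fixes only $\tilde u_0, \dots, \tilde u_m$ and leaves $\tilde u_{m+1}, \dots, \tilde u_n$ free. Uniqueness therefore holds only for the monomial $U_0 = z^n$ (i.e.\ $m = n$); for a general $U_0 \in \mathcal U_0^n$ the uniqueness assertion is false as stated, though existence survives.
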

\begin{proof}
Let us consider $U^*:=U-U_0\in {\mathbf K}^* [[z]]$. Then $\tilde{U}$ should satisfy $1+\Delta U =\frac{1}{1+\frac{U^*+\tilde{U}}{U_0}}$.  We see that $\Delta U \in z{\mathbf K}^* [[z]]$  if and only if $\frac{U^*+\tilde{U}}{U_0}\in z{\mathbf K}^* [[z]]$. Such $\tilde{U} \in \tilde{\mathcal U}_0^n$ always exists, and moreover, it is unique.
\end{proof}

Hence ${\mathcal U}^n$ can be parametrized by $\Delta U \in z{\mathbf K}^* [[z]]$ through $U=\frac{U_0}{1-\Delta U}$. To satisfy (\ref{req2}), we should further require $\Delta U$ to be a polynomial in $z$ of degree at most $n+1$. So the deformations are parametrised by polynomials of degree $n+1$. It is convenient to define the parameters $w_i$ to be
the inverse zeros of these polynomials
\be
1+\Delta U=:(1-w_1z)(1-w_2z)\dots(1-w_{n+1}z). 
\ee
This leads us to Definition \ref{DefPot}. We can rewrite $U$ as
\be
U(z)=U_0(z) \sum_{k=0}^\infty h_k({\mathbf w}) z^k,
\ee
where $h_k$ are the complete symmetric functions
\be
\sum_{k=0}^\infty h_k({\mathbf w}) z^k=\prod_{j=1}^{n+1}\frac{1}{1-w_iz}.
\ee
If $w_i\neq w_j$ for $i\neq j$ and $w_i\neq 0$, we have
\be
V'(z)=\sum_{k=1}^\infty \chi_k({\bf w}) \log (1-w_k z),
\ee
where $\chi_k({\bf w})\in {\mathbb C}(({\bf w}))$. For the deformation of the monomial potential we can rewrite it as
\be
V'(z)=\sum_{k=0}^\infty \frac{h_k({\bf w})z^{k+n+1}}{k+n+1}.
\ee

%%%%%%%%%%%%%%%%%%%%%%%%%%%%%%%%%%%%%%%
\subsection{Admissible basis for the deformed potential }	

For the deformed potential the integrals (\ref{phidef}), $\Phi_k^U\in {\mathbb C}((z^{-1}))[[{\bf w},\hbar]]$ are series in $z$, infinite in both directions. So, contrary to the case considered in Section \ref{S:GKM}, they do not constitute a distinguished basis. In this section we will show that for a deformed potential (\ref{gendefor}) the integrals (\ref{phidef}) after a certain change of normalization form an admissible basis of a point of the Sato Grassmannian denoted by ${\mathcal W}_U$. By construction,  operators $\tilde{\mathtt Q}_U$ and ${\mathtt X}_U$ are the KS operators for this point. We prove that similar to the polynomial case these operators completely specify this point of the Sato Grassmannian. We also construct a canonical pair of the KS operators.

It is convenient to introduce one formal small parameter $w$ such that $w_i=\beta_j w$ for some $\beta_j \in {\mathbb C}$.

\begin{lemma}\label{wass}
\be
\Phi_k^{U}(z)=\Phi_k^{U_0}(z)+\sum_{j=1}^\infty \Phi_{k,j}(z) w^j,
\ee
where $\Phi_{k,j}(z) \in z^{k+j-3-n} {\mathbb C}[[z^{-1},\hbar]]$.
\end{lemma}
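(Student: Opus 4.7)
The plan is to expand the Gaussian integral representation of $\Phi_k^U(z)$ from Definition \ref{Phidef} using Wick's theorem, writing $\Phi_k^U$ as a formal sum of monomials in $z$, $U$, and derivatives of $U$, and then track the $z$-degree of the $w^j$-coefficient of each monomial. Applying the binomial theorem to $(z+\varphi/\sqrt{U})^{k-1}$ and expanding the interaction $\exp(-\sum_{l\ge 3} s_l \varphi^l)$ with $s_l := U^{(l-2)}/(l!\,U^{l/2})$ as a formal series, Wick's theorem ($\mathbb{E}_0[\varphi^{2a}]=(2a-1)!!$ and odd moments vanish) gives
\[
\Phi_k^U(z) = \sum_{m,\,p,\,\vec l} c_{m,\vec l}\; z^{k-1-m}\, U(z)^{-(m+L)/2}\, \prod_{i=1}^{p} U^{(l_i-2)}(z),
\]
with $0\le m\le k-1$, $p\ge 0$, $l_i\ge 3$, $L := \sum_i l_i$, and $c_{m,\vec l}=0$ unless $m+L$ is even.

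To bound the $z$-degree of the $w^j$-coefficient of such a monomial, use $U = U_0/q(z,{\bf w})$ with $q(z,{\bf w}):=\prod_i(1-w_iz)$. Every $w$-monomial in $q$ carries equal $w$- and $z$-degrees, so for any integer $s$ the $w^{j_0}$-coefficient of $U^{-s}=q^s/U_0^s$ lies in $z^{j_0-ns}\,{\mathbb C}[[z^{-1},\hbar]]$, while the $w^{j_i}$-coefficient of $U^{(l)}$, computed as the $l$-th $z$-derivative of $U_0\cdot h_{j_i}({\boldsymbol\beta})z^{j_i}$, is a polynomial in $z$ of degree at most $n+j_i-l$ with $\hbar^{-1}$ prefactor. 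Distributing $j = j_0+\sum_i j_i$ among the factors, the $w^j$-coefficient of the monomial lies in $z^d{\mathbb C}[[z^{-1},\hbar]]$ with
\[
d = (k-1-m) + \bigl(j_0 - \tfrac{n}{2}(m+L)\bigr) + \sum_i (n-l_i+2+j_i) = k-1+j+\tfrac{n+2}{2}\bigl(2p-m-L\bigr).
\]

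The inequality $d\le k+j-3-n$ is equivalent to $m+L\ge 2p+2$. For every surviving monomial this holds except the trivial one $(m,p,L)=(0,0,0)$: if $p=0$, the parity constraint forces $m$ even, so a nontrivial term has $m\ge 2 = 2p+2$; if $p\ge 1$, then $L\ge 3p$, giving $m+L\ge 3p+m\ge 2p+2$ whenever $m+p\ge 2$, and in the remaining subcase $p=1,\,m=0$ the parity of $L=l_1$ with $l_1\ge 3$ forces $l_1\ge 4$, so again $m+L\ge 4 = 2p+2$. The excluded monomial is just $z^{k-1}$, which is $w$-independent and contributes only to $\Phi_k^{U_0}$. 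Hence for $j\ge 1$ every contributing monomial satisfies the bound; the parity $m+L\in 2{\mathbb Z}$ with $m+L\ge 2p+2$ also forces $(m+L)/2-p\ge 1$, so each contribution carries a positive integer power of $\hbar$; and for any fixed coefficient in $z^{-1}$ and $\hbar$ only finitely many $(m,\vec l)$ contribute, so $\Phi_{k,j}(z)\in z^{k+j-3-n}{\mathbb C}[[z^{-1},\hbar]]$. The main conceptual point is the parity / Wick-vanishing argument that excludes precisely the leading monomial $z^{k-1}$ from the bound; without it one recovers only the naive estimate $d\le k-1+j$, which is too weak by $n+2$.
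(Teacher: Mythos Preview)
Your proof is correct and is precisely the detailed degree-counting argument the paper compresses into the single sentence ``The statement immediately follows from the expansion of the basis vectors (\ref{phidef}).'' The paper takes the same route---expand the Gaussian integral and track the $z$-degree of each $w^j$-coefficient---but leaves the Wick combinatorics and the parity argument implicit; you have unpacked exactly what the authors regard as routine.
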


\begin{proof}
The statement immediately follows from the expansion of the basis vectors (\ref{phidef}).
\end{proof}

\begin{lemma}\label{lemmacanon} 
For any deformed potential there exists $\alpha \in 1+ {\mathbf K}^*[[\hbar]]$ such that $\alpha \Phi_k^U$ define an admissible basis of a point of the Sato Grassmannian ${\mathcal W}_U \in  \Gr^{(0)}_+$. 
\end{lemma}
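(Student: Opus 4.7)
The plan is to exploit the multiplicative structure of the integral~(\ref{phidef}) to exhibit a universal normalization scalar, invoke the recursion~(\ref{Qtildac}) to propagate admissibility across all $k$, and use the degree bounds of Lemma~\ref{wass} to verify the trace-class condition.

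First, I would expand $(z+\varphi/\sqrt{U(z)})^{k-1}=z^{k-1}(1+\varphi/(z\sqrt{U(z)}))^{k-1}$ binomially inside the integrand of~(\ref{phidef}), obtaining
\be
\Phi_k^U(z)=\sum_{m=0}^{k-1}\binom{k-1}{m}\frac{z^{k-1-m}}{U(z)^{m/2}}\,\mu_m(z),
\ee
where $\mu_m(z):=\frac{1}{\sqrt{2\pi}}\int_{\mathbb{R}}d\varphi\,\varphi^m\exp\bigl(-\varphi^2/2-\sum_{j\geq 3}U^{(j-2)}\varphi^j/(j!\,U^{j/2})\bigr)$ are $k$-independent moments of the non-quadratic weight, with $\mu_0(z)=\Phi_1^U(z)$. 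This identifies the leading behavior of $\Phi_k^U$ as $z^{k-1}\Phi_1^U$, with the remaining terms of strictly lower $z$-degree at each fixed order in $w,\hbar$.

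Next, I would set $\alpha:=1/[z^0]\Phi_1^U$. At $w=0$, $\Phi_1^{U_0}\in 1+z^{-1}\mathbb{C}[[z^{-1}]][[\hbar]]$ so $[z^0]\Phi_1^{U_0}=1$ and $\alpha\bigr|_{w=0}=1$; the nontrivial corrections to $[z^0]\Phi_1^U$ only appear starting at order $w^{n+2}$ by Lemma~\ref{wass}, confirming $\alpha\in 1+\mathbf{K}^*[[\hbar]]$ and that $\alpha$ is invertible as a formal series.

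Then I would verify admissibility. The matrix $M_{jk}:=[z^{j-1}](\alpha\Phi_k^U)$ satisfies $M\bigr|_{w=0}=I$, and at each order $(w^\ell,\hbar^r)$ Lemma~\ref{wass} yields $M_{jk}\bigr|_{w^\ell}=0$ for $j>k+\ell-n-2$. Combined with the distinguished-basis structure at $w=0$ (strictly lower triangular corrections for $j<k$), at each fixed order $(w,\hbar)$ the support of $M-I$ lies in a finite band around the diagonal, so $M-I$ is of trace class at each order. Invertibility of $M$, injectivity of $z^{j-1}\mapsto\alpha\Phi_j^U$, and identification of the span as a well-defined point ${\mathcal W}_U\in\Gr_+^{(0)}$ all follow by formal perturbation from the $w=0$ case, where these properties are guaranteed by Proposition~\ref{lemmaphi}.

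The main obstacle is that the single scalar $\alpha$ fixed by $k=1$ must make $M-I$ trace class for \emph{all} $k$ simultaneously, equivalently that the diagonal entries $M_{kk}$ are controlled uniformly in $k$. The key input to resolve this is that $\tilde{\mathtt Q}_U-z\in\mathcal{D}^{(0)}$ is a degree-zero operator, so applying the recursion $\Phi_{k+1}^U=\tilde{\mathtt Q}_U\Phi_k^U$ preserves the leading behavior $z^{k-1}\Phi_1^U\mapsto z^k\Phi_1^U$ at each order and transports the normalization from $\Phi_1^U$ uniformly to all $\Phi_k^U$; the $k$-dependent pieces of the binomial factorization in Step~1 contribute strictly below the diagonal and are absorbed into the trace-class off-diagonal of $M$ rather than affecting the choice of $\alpha$.
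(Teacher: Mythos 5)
Your Step 1 and the observation that $\tilde{\mathtt Q}_U-z$ has degree $0$ are fine, but the admissibility verification in Step 3 contains a genuine gap, and it is exactly where the deformation becomes nontrivial. By Lemma \ref{wass}, $[w^\ell]\Phi_k^U$ has top $z$-degree $k+\ell-3-n$, which \emph{reaches} the diagonal position $z^{k-1}$ at $\ell=n+2$ and \emph{exceeds} it for $\ell\geq n+3$. So from order $w^{n+2}$ on the diagonal entries $M_{kk}$ acquire corrections, and from your own binomial expansion these corrections carry the factors $\binom{k-1}{m}$ (equivalently, each application of $\frac{1}{U}\frac{\p}{\p z}$ to $z^{k-1}(\cdots)$ produces a factor of order $k$, and the positive powers of $z$ in $U^{-m/2}\mu_m$ at order $w^{\ell}$ let these terms hit the diagonal). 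Hence $M_{kk}-1$ is a nonzero element of ${\mathbf K}^*$ that grows with $k$ at fixed order in ${\bf w}$: the support of $M-I$ is not a finite band (it is the half-plane $j\leq k+\ell-2-n$ together with the entire strictly upper-triangular part), $\sum_k (M_{kk}-1)$ diverges order by order in ${\bf w}$, and no single scalar $\alpha$ can normalize a $k$-dependent diagonal. Your closing claim that the recursion ``transports the normalization uniformly'' only controls the \emph{top} coefficient of each $\Phi_k^U$, whereas the diagonal entry at order $w^{\ell}$, $\ell\geq n+2$, sits $\ell-n-2$ places below the top and is not preserved $k$-independently. (A smaller slip: $M|_{{\bf w}={\bf 0}}$ is unitriangular, not the identity.)

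The missing idea, which is how the paper proceeds, is that one should not compare $\alpha\Phi_k^U$ directly to $z^{k-1}$. Instead one first builds a monic wave function as an \emph{infinite} ${\mathbf K}^*$-linear combination $\Psi_U=\alpha\bigl(\Phi_1^U+\sum_{m\geq 2}v_m\Phi_m^U\bigr)$, choosing the $v_m$ recursively (order by order in $w$, using Lemma \ref{wass}) to cancel all positive powers of $z$; the correct normalization is therefore $\alpha=1/[z^0]\bigl(\Phi_1^U+\sum v_m\Phi_m^U\bigr)$, which generically differs from your $1/[z^0]\Phi_1^U$. By (\ref{Qtildac}) the \emph{same} coefficients $v_m$ work for every $k$, so $\tilde{\mathtt Q}_U^{k-1}\cdot\Psi_U=\alpha\bigl(\Phi_k^U+\sum_{m\geq 2}v_m\Phi_{k+m-1}^U\bigr)$ is a monic basis of the form (\ref{goodbas}), hence admissible, and the matrix relating it to $\alpha\Phi_k^U$ is unitriangular with determinant $1$. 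Admissibility of $\alpha\Phi_k^U$ then follows from condition 2 of Definition \ref{Defadm} applied to this factorization rather than to the matrix $M$ you consider.
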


\begin{proof}
Let us show that for the deformed potential the integrals (\ref{phidef}) are consistent with Definition \ref{Defadm}. 
First, the integrals (\ref{phidef}) define the point of the Sato Grassmannian $\rm{Gr}^{(0)}_+$. This is obvious for the case $w_k=0$, and the $w_k\neq 0$ deformation does not break the isomorphism between $\{\Phi_i\}$ and $H_+$. The multiplication by any $\alpha \in 1+{\mathbf K}^*[[\hbar]]$ also respects the isomorphism.
Hence, condition 1 of Definition \ref{Defadm} is satisfied. 

It remains to show that condition 2 of Definition \ref{Defadm} is also satisfied. Let us construct some admissible basis for ${\mathcal W}_U$. We can find the wave function recursively. Namely, we want to construct a linear combination of $\Phi_k^U$ that does not contain positive powers of $z$.
From Lemma \ref{wass} it follows that $\pi_+(z^{-1}\Phi_1^U)=c w^{n+3}+\dots$, where by $\dots$ we denote terms with $w^{n+4}$ or higher powers of $w$. Then $\Phi_1^U(z)-c w^{n+3}\Phi_2^U(z)$ does not contain positive powers of $z$ up to $w^{n+4}$. Repeating this argument we see, that there exist $v_m\in {\mathbf K}^*$ for $m>1$ such that
\be\label{psiassum}
\Phi^U_1+\sum_{m=2}^\infty v_m \Phi^U_m \in {\mathbf K}[[z^{-1}, \hbar]].
\ee
We can normalize this sum to define
\be
\Psi_U:=\alpha(\Phi^U_1+\sum_{m=2}^\infty v_m \Phi^U_m)
\ee 
with some $\alpha \in1 + {\mathbf K}^*[[\hbar]]$. Then $ \tilde{\mathtt Q}_{U}^k\cdot\Psi_U$ constitute an admissible basis 
\be\label{basisnon}
{\mathcal W}_U:= \sppan_{\cc}\{\Psi_U, \tilde{\mathtt Q}_U\cdot\Psi_U, \tilde{\mathtt Q}^2_U \cdot \Psi_U,\dots \}.
\ee
Let us find a determinant, connecting this basis to $\alpha \Phi^U_k$. From (\ref{Qtildac}) one has 
\begin{equation}
\begin{split}
\tilde{\mathtt Q}_U^{k-1} \cdot \Psi_U =\alpha(\Phi^U_{k}+\sum_{m=2}^\infty v_m \Phi^U_{k+m-1}).
\end{split}
\end{equation}
So, the matrix relating the two basis is upper-triangular. The determinant of the matrix, relating bases $\tilde{\mathtt Q}_U^{k-1} \cdot \Psi_U$ and $\alpha \Phi^U_{k}$, is equal to $1$. Hence, the basis $\alpha \Phi^U_{k}$ is admissible,
\be
{\mathcal W}_U:= \sppan_{\cc}\{\alpha \Phi_1^U, \alpha \Phi_2^U, \alpha \Phi_3^U,\dots \}  \in \Gr_+^{(0)}.
\ee
\end{proof}
\begin{remark} For the deformed potential the basis $\Phi_k^U$ is admissible. However,  the basis vectors contain positive powers of $z$,
so the right hand side of the finite determinant formula (\ref{detform}) contains positive powers of ${\bf \lambda}$. Therefore it is not equal to tau-functions in the Miwa parametrization. The tau-function appears only in the limit $N\rightarrow\infty$.
\end{remark}
From this Lemma it follows, in particular, that the wave function is a linear combination of $\Phi_k^U$, or
\be\label{Waved}
\Psi_U = \left< e^{\frac{1}{\hbar} R(\varphi+z)}\right>_U
\ee
for some $R(z)\in {\mathbf K}^*[[z,\hbar]]$, see (\ref{aver}). The normalization factor is $\alpha=e^{\frac{1}{\hbar}R(0)}$.
\begin{remark}
By construction the tau-function for the deformed potential satisfies the topological expansion (\ref{topex}). We extend the definition of the unstable contributions (\ref{unst1}) to the deformed case.
\end{remark}

Let us describe the canonical pair of the KS operators for the deformed potentials as formal series of the KS operators $\tilde{\mathtt Q}_{U}$ and ${\mathtt X}_U$. Consider an expansion
\be\label{1Uexp}
\frac{1}{U}=c_1 z+c_2 +O(z^{-1}),
\ee
where $c_1=\hbar \prod_{j=1}^{n+1}(-w_j)$ and $c_2=-c_1\left(b_{n-1}+\sum_{j=1}^{n+1} w_j^{-1}\right)$. 
Denote
\be
{\mathtt P}_U^0:=\frac{1}{\hbar}\left(V'(\tilde{\mathtt Q}_U)-{\mathtt X}_U\right) \in {\mathcal A}_U.
\ee
Then
\begin{lemma}\label{deforlemma}
KS operators $\tilde{\mathtt Q}_{U}$ and ${\mathtt X}_U$ uniquely specify ${\mathcal W}_{U}$. Moreover,
\be\label{Pgeg}
{\mathtt P}_U= \left(\frac{1}{c_1}   -\frac{\tilde{R}'(\tilde{\mathtt Q}_U)}{\hbar} \right ) e^{c_1 {\mathtt P}_U^0}   -\frac{1}{c_1},
\ee
where $\tilde{R}'(z) \in {\mathbf K}^*[[z,\hbar]]$, and
\be\label{Qgeg}
{\mathtt Q}_U=\left(\tilde{\mathtt Q}_U-c_2 {\mathtt P}_U -\frac{c_1}{2}\right)
\frac{1}{1+c_1{\mathtt P}_U}.
\ee
\end{lemma}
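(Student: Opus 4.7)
The strategy is to verify directly that (\ref{Pgeg}), (\ref{Qgeg}) give the canonical KS pair for $\mathcal W_U$, so that Lemma \ref{uniquecan} and Theorem \ref{bijecl} imply uniqueness. Two preliminaries will be used throughout. First, the commutation $[{\mathtt P}_U^0,\tilde{\mathtt Q}_U]=1$ reduces to $[{\mathtt X}_U,\tilde{\mathtt Q}_U]=[V'(z),\tfrac{1}{U(z)}\tfrac{\p}{\p z}]=-V''(z)/U(z)=-\hbar$ combined with $[V'(\tilde{\mathtt Q}_U),\tilde{\mathtt Q}_U]=0$. Second, the averaging (\ref{aver}) intertwines $\tilde{\mathtt Q}_U,{\mathtt P}_U^0$ with multiplication by $\xi:=z+\varphi$ and differentiation $\p/\p\xi$ on the integrand: for any formal series $f,g$,
\begin{equation*}
f(\tilde{\mathtt Q}_U)\cdot\langle g(\xi)\rangle_U=\langle f(\xi)g(\xi)\rangle_U,\qquad e^{c_1{\mathtt P}_U^0}\cdot\langle g(\xi)\rangle_U=\langle g(\xi+c_1)\rangle_U,
\end{equation*}
which follows from $\Phi_k^U=\langle\xi^{k-1}\rangle_U$, $\tilde{\mathtt Q}_U\cdot\Phi_k^U=\Phi_{k+1}^U$ (see (\ref{Qtildac})), and the analogous ${\mathtt P}_U^0\cdot\Phi_k^U=(k-1)\Phi_{k-1}^U$.

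Writing $\Psi_U=\alpha\langle e^{\tilde R(\xi)/\hbar}\rangle_U$ as in (\ref{Waved}), the shift identity yields $e^{c_1{\mathtt P}_U^0}\cdot\Psi_U=e^{c_1 F(\tilde{\mathtt Q}_U)/\hbar}\cdot\Psi_U$ with $F(z):=(\tilde R(z+c_1)-\tilde R(z))/c_1\in{\mathbf K}^*[[z,\hbar]]$. The series $\tilde R'$ appearing in the lemma (reusing the symbol) is then \emph{defined} by
$\tilde R'(z):=\tfrac{\hbar}{c_1}\bigl(1-e^{-c_1 F(z)/\hbar}\bigr)=F(z)-\tfrac{c_1 F(z)^2}{2\hbar}+\cdots$,
where the apparent $c_1^{-1}$ cancels since $c_1\in\hbar{\mathbf K}^*$ and $F\in{\mathbf K}^*[[z,\hbar]]$, yielding a genuine formal series $\tilde R'\in{\mathbf K}^*[[z,\hbar]]$. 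This is precisely the solution to the algebraic identity $(\tfrac{1}{c_1}-\tfrac{\tilde R'(z)}{\hbar})e^{c_1 F(z)/\hbar}=\tfrac{1}{c_1}$, so substituting into (\ref{Pgeg}) gives ${\mathtt P}_U\cdot\Psi_U=0$; consequently ${\mathtt P}_U\in\mathcal A_{\mathcal W_U}$, as it is built from $\tilde{\mathtt Q}_U,{\mathtt P}_U^0\in\mathcal A_{\mathcal W_U}$.

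For the degree conditions, the expansion $1/U=c_1z+c_2+O(z^{-1})$ gives $\tilde{\mathtt Q}_U=z+(c_1z+c_2)\tfrac{\p}{\p z}+c_1/2+O(z^{-1})$, and in the deformed case ${\mathtt P}_U^0$ is \emph{not} itself in $\tfrac{\p}{\p z}+z^{-1}\mathcal D_-$: the non-commutative Taylor expansion of $V'(\tilde{\mathtt Q}_U)-V'(z)$, with large commutator $[T,z]=1/U$ of degree $+1$, produces bad degree-$0$ contributions such as $-\tfrac{c_1}{2}(\p/\p z)^2$. The resummation $c_1^{-1}(e^{c_1{\mathtt P}_U^0}-1)={\mathtt P}_U^0+\tfrac{c_1}{2}({\mathtt P}_U^0)^2+\cdots$ in (\ref{Pgeg}) is arranged so these bad terms cancel, landing ${\mathtt P}_U$ in $\tfrac{\p}{\p z}+z^{-1}\mathcal D_-$; the $\tilde R'$-correction, valued in ${\mathbf K}^*$, only contributes subleading terms. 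For ${\mathtt Q}_U$, the combination $\tilde{\mathtt Q}_U-c_2{\mathtt P}_U-c_1/2=z(1+c_1\tfrac{\p}{\p z})+O(z^{-1})$ cancels the three unwanted degree-$\geq 0$ non-$z$ parts of $\tilde{\mathtt Q}_U$, and right-multiplication by $(1+c_1{\mathtt P}_U)^{-1}$ removes the residual $(1+c_1\tfrac{\p}{\p z})$ factor (since $z(1+c_1\p_z)(1+c_1\p_z)^{-1}=z$ in the formal calculus), giving ${\mathtt Q}_U\in z+\mathcal D_-$.

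Finally, $[{\mathtt P}_U,{\mathtt Q}_U]=1$ is a one-line algebraic calculation. Setting $A(z):=\tfrac{1}{c_1}-\tfrac{\tilde R'(z)}{\hbar}$ so that ${\mathtt P}_U=A(\tilde{\mathtt Q}_U)e^{c_1{\mathtt P}_U^0}-\tfrac{1}{c_1}$, we use $[A(\tilde{\mathtt Q}_U),\tilde{\mathtt Q}_U]=0$ and $[e^{c_1{\mathtt P}_U^0},\tilde{\mathtt Q}_U]=c_1 e^{c_1{\mathtt P}_U^0}$ (a direct consequence of $[{\mathtt P}_U^0,\tilde{\mathtt Q}_U]=1$) to find $[{\mathtt P}_U,\tilde{\mathtt Q}_U]=c_1 A(\tilde{\mathtt Q}_U)e^{c_1{\mathtt P}_U^0}=c_1\bigl({\mathtt P}_U+\tfrac{1}{c_1}\bigr)=1+c_1{\mathtt P}_U$, whence $[{\mathtt P}_U,{\mathtt Q}_U]=[{\mathtt P}_U,\tilde{\mathtt Q}_U-c_2{\mathtt P}_U-c_1/2](1+c_1{\mathtt P}_U)^{-1}=(1+c_1{\mathtt P}_U)(1+c_1{\mathtt P}_U)^{-1}=1$. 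With $({\mathtt P}_U,{\mathtt Q}_U)\in\Gr_{\mathcal D}\cap\mathcal A_{\mathcal W_U}^2$, Lemma \ref{uniquecan} identifies them as the canonical pair of $\mathcal W_U$, and Theorem \ref{bijecl} then yields the first assertion: $\tilde R'$ is itself pinned down by $\tilde{\mathtt Q}_U$ and ${\mathtt X}_U$ via the requirement that ${\mathtt P}_U\in\tfrac{\p}{\p z}+z^{-1}\mathcal D_-$, so the full canonical pair (hence $\mathcal W_U$) is recovered from these two operators alone. The main obstacle is precisely this degree bookkeeping: verifying that the non-commutative Taylor expansion of $V'(\tilde{\mathtt Q}_U)-V'(z)$ combined with the exponential resummation and the $-1/c_1$ subtraction in (\ref{Pgeg}) yields an operator exactly in $\tfrac{\p}{\p z}+z^{-1}\mathcal D_-$, which requires careful tracking of the large commutator $[T,z]=1/U$ specific to the deformed setting.
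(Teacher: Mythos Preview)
Your approach differs substantially from the paper's, and the difference matters because of the gap you yourself flag at the end.

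The paper argues in the opposite direction. It takes the canonical pair $({\mathtt P}_U,{\mathtt Q}_U)$ as already given by the general theory (Definition \ref{defPQ}), so that ${\mathtt P}_U\in\tfrac{\p}{\p z}+z^{-1}\mathcal D_-$ and ${\mathtt Q}_U\in z+\mathcal D_-$ hold by fiat. For (\ref{Qgeg}) it simply checks that the right-hand side, built from the KS operator $\tilde{\mathtt Q}_U$ and the abstract canonical ${\mathtt P}_U$, lands in $z+\mathcal D_-$; Lemma \ref{uniquecan} then forces it to equal ${\mathtt Q}_U$. For (\ref{Pgeg}) it uses (\ref{Qgeg}) to deduce $[{\mathtt P}_U,\tilde{\mathtt Q}_U]=1+c_1{\mathtt P}_U$, writes ${\mathtt P}_U={\mathtt P}_U^0+f(\tilde{\mathtt Q}_U,{\mathtt P}_U^0)$ by recursion in $w$, and solves the resulting ODE $\p_y f=c_1(y+f)$; the free function $C(x)$ is then \emph{defined} as the unique choice putting the resulting operator in the correct degree class. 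Existence of that choice is automatic, since the canonical ${\mathtt P}_U$ was already known to have the right degree.

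You instead construct $\tilde R'$ explicitly from the wave-function datum $R$ of (\ref{Waved}) and attempt to verify the axioms. Your commutator computation $[{\mathtt P}_U,{\mathtt Q}_U]=1$ and the annihilation ${\mathtt P}_U\cdot\Psi_U=0$ are both correct, and your formula for $\tilde R'$ is essentially (\ref{Rtotil}) of Lemma \ref{lemma_onpsi}, which the paper proves only afterwards---so your route would yield that relation for free. But the step you label ``the main obstacle''---showing directly that $(\tfrac{1}{c_1}-\tfrac{\tilde R'(\tilde{\mathtt Q}_U)}{\hbar})e^{c_1{\mathtt P}_U^0}-\tfrac{1}{c_1}$ lies in $\tfrac{\p}{\p z}+z^{-1}\mathcal D_-$---is exactly the crux. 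Without it Lemma \ref{uniquecan} cannot be invoked, so you cannot identify your operators with the canonical ones; and since your degree check for ${\mathtt Q}_U$ presupposes this condition on ${\mathtt P}_U$, that argument is also incomplete. In the deformed setting ${\mathtt P}_U^0$ itself contains terms of unbounded positive degree (each power of $\tilde{\mathtt Q}_U$ in $V'(\tilde{\mathtt Q}_U)$ contributes a $c_1 z\p_z$ piece), so the asserted cancellation is not a routine verification. The paper's route sidesteps this entirely: the degree condition is an input, not something to be checked.
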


\begin{proof}
From  (\ref{Qdef}) and (\ref{1Uexp}) we have
\be
\tilde{\mathtt Q}_U\in z+ (c_1 z+c_2)\frac{\p}{\p z} +\frac{c_1}{2} + {\mathcal D}_-.
\ee
Then (\ref{Qgeg}) follows, because the right hand side is a KS operator belonging to $z+{\mathcal D}_-$, and from Lemma \ref{uniquecan} such operator is unique.

Let us construct the canonical KS operator ${\mathtt P}_U$ recursively as a series in $w$. Operator ${\mathtt P}^{0}_U$ reduces to the quantum spectral curve operator for the undeformed potential at $w=0$:
\be
{\mathtt P}^{0}_U\big|_{w=0}=\frac{1}{\hbar}\left(V_0'({\mathtt Q}_{U_0})-{\mathtt X}_{U_0}\right)={\mathtt P}_{U_0} \in \frac{\p}{\p z}+{\mathcal D}_-.
\ee
Moreover, $\tilde{\mathtt Q}_U\big|_{w=0}= {\mathtt Q}_{U_0}\in z+ {\mathcal D}_-$, and by definition $[w^k]V'(z)\in z^{n+k+1}{\mathbb C}[[z^{-1}]]$.
Hence, we can recursively construct
\be\label{Pdelta}
{\mathtt P}_U={\mathtt P}_U^0+ \Delta {\mathtt P}, 
\ee
where $\Delta {\mathtt P}= \sum_{k=1}^\infty w^k r_k(\tilde{\mathtt Q}_U, {\mathtt P}_U^0)$
and $r_k$ are some polynomials. Let us write $\Delta {\mathtt P}=f(\tilde{\mathtt Q}_U, {\mathtt P}_U^0)$, where
\be
f(x,y)=\sum_{k,m=0}^\infty f_{km} x^k y^m
\ee
for some $f_{km}\in {\mathbf K}^*$.

To find $\Delta {\mathtt{P}}$ let us use the commutation relation $\left[{\mathtt P}_U,{\mathtt Q}_U\right]=1$. Inserting (\ref{Qgeg}) we have
\be
\left[{\mathtt P}_U,\tilde{\mathtt Q}_U\right]=1+c_1{\mathtt P}_U.
\ee
To get an equation satisfied by $\Delta {\mathtt P}$ we substitute (\ref{Pdelta})
\be
\left[\Delta{\mathtt P},\tilde{\mathtt Q}_U\right]=c_1({\mathtt P}_U^0+ \Delta {\mathtt P}).
\ee
Here we also use the definition of ${\mathtt P}_U^0$ and the commutation relation between $\tilde{\mathtt Q}_U$ and ${\mathtt X}_U$. 
Then the series $f$ satisfies the differential equation
\be
\frac{\p}{\p y} f (x, y)= c_1 (y+f (x, y)).
\ee
The general solution is
\be
f (x, y)=\left(\frac{1}{c_1}-\frac{1}{\hbar}C(x)\right)e^{c_1 y}-\frac{1}{c_1}-y,
\ee
where $C(x)$ is arbitrary function. Here we introduce the factor $\hbar^{-1}$ for the convenience. To get the correct limit it should satisfy $\left.C(x)\right|_{w=0}=0$.
There is a unique series $C(x)$  such that ${\mathtt P}_U\in \frac{\p}{\p z}+{\mathcal D}_-$. We denote it by $\tilde{R}'(x)$, $\tilde{R}(x)\in {\mathtt C}[z][[w]]$.

We see, that the operators $\tilde{\mathtt Q}_{U}$ and ${\mathtt X}_U$ uniquely specify the canonical pair of the KS operators $({\mathtt P}_U,{\mathtt Q}_U)$, hence they uniquely specify
${\mathcal W}_{U}$. This competes the proof.
\end{proof}
The modified quantum spectral curve equation for the deformed potential is given by
\be
\left(\left(\frac{\hbar}{c_1}   -\tilde{R}'(\hat{y}) \right ) e^{\frac{c_1}{\hbar} \left(V'(\hat{y})-\hat{x}\right)}   -\frac{\hbar}{c_1}\right)\cdot \tilde{\Psi}_U=0.
\ee

Operators ${\mathtt P}_U$ and ${\mathtt Q}_U$ are deformations of the operators ${\mathtt P}_{U_0}$ and ${\mathtt Q}_{U_0}$
\begin{equation}
\begin{split}
\left.{\mathtt P}_U\right|_{{\bf w}={\bf 0}}={\mathtt P}_{U_0}, \,\,\,\,\,\,\,\,\,\, \left.{\mathtt Q}_U\right|_{{\bf w}={\bf 0}}={\mathtt Q}_{U_0}.
\end{split}
\end{equation}

\begin{remark}
For a general deformed potential we do not have an explicit expression for the series $\tilde{R}'(z)$. However, it is constructed in
Section \ref{defmon} for the simplest deformation of the monomial potential.
\end{remark}

\begin{remark}
Similar to the polynomial case, for the deformed GKM we have the relations (\ref{Doug}) and the Douglas string equation (\ref{douglass}). However, in this case the operator $V'(L)=V'(L)_+$ includes the differential operators of arbitrary large positive order.
\end{remark}

\begin{lemma}\label{lemma_unique}
In the deformed case the quantum spectral curve equation 
\be\label{defQSC}
{\mathtt P}_U \cdot  \Psi_U=0
\ee
has a unique monic solution  $ \Psi_U \in{\mathbb C}((z^{-1}))[[{\bf w}]]$. This solution is of the form $\Psi=1+O(z^{-1})$. 
\end{lemma}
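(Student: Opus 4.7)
The plan is to adapt the argument of Lemma \ref{QSCLem} to the deformed setting, with the base ring ${\mathbb C}$ replaced by ${\mathbf K} = {\mathbb C}[[{\bf w}]]$. By the construction in Lemma \ref{deforlemma}, the pair $({\mathtt P}_U,{\mathtt Q}_U)$ lies in $\Gr_{\mathcal D}$, so ${\mathtt P}_U - \frac{\p}{\p z} \in z^{-1}{\mathcal D}_-$; writing ${\mathtt P}_U$ with coefficients to the left of derivatives,
\be
{\mathtt P}_U = \frac{\p}{\p z} + \sum_{j \geq 2,\, k \geq 0} c_{j,k}({\bf w})\, z^{-j}\frac{\p^k}{\p z^k}, \qquad c_{j,k} \in {\mathbf K}.
\ee
The crucial consequence is that the correction terms strictly lower the $z$-degree by $j+k \geq 2$, so ${\mathtt P}_U \cdot z^l = l\, z^{l-1} + (\text{terms of } z\text{-degree} \leq l - 2)$.

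First I would pin down the top $z$-degree. For a nontrivial ansatz $\Psi_U = \sum_{l \leq l_0} a_l({\bf w})\, z^l$ with $a_{l_0} \neq 0$ in ${\mathbf K}$, the coefficient of $z^{l_0 - 1}$ in ${\mathtt P}_U \cdot \Psi_U = 0$ is simply $l_0\, a_{l_0}$. Since ${\mathbf K}$ is an integral domain, this forces $l_0 = 0$. The monic normalization $a_0 = 1$ therefore uniquely fixes the leading term and automatically rules out positive powers of $z$.

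Next I would set up the recursion at lower $z$-orders. Extracting the coefficient of $z^{-m-1}$ from ${\mathtt P}_U \cdot \Psi_U = 0$ for $m \geq 1$ yields
\be
-m\, a_{-m}({\bf w}) + \Lambda_m\bigl(a_0, a_{-1}, \dots, a_{-m+1}\bigr) = 0,
\ee
where $\Lambda_m$ is ${\mathbf K}$-linear in the previously determined coefficients; the bound $j+k \geq 2$ guarantees that $\Lambda_m$ involves only $a_l$ with $l > -m$, so no unknowns leak in from below. Since $m \geq 1$ is invertible in ${\mathbf K}$, this uniquely determines $a_{-m} \in {\mathbf K}$. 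Iterating produces the unique $\Psi_U \in 1 + z^{-1}{\mathbf K}[[z^{-1}]] \subset {\mathbb C}((z^{-1}))[[{\bf w}]]$ annihilated by ${\mathtt P}_U$.

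No serious obstacle appears beyond carefully verifying the membership ${\mathtt P}_U - \frac{\p}{\p z} \in z^{-1}{\mathcal D}_-$, which is what separates the \emph{integer} $l_0$ from the coefficients $c_{j,k} \in {\mathbf K}$ in the leading equation and makes the subsequent division by $-m$ in ${\mathbf K}$ legitimate. This membership is built into the definition \eqref{GrD} of $\Gr_{\mathcal D}$ and the construction of the canonical pair in Lemma \ref{deforlemma}, so once granted the argument is purely algebraic recursion over ${\mathbf K}$, directly mirroring the proof of Lemma \ref{QSCLem}.
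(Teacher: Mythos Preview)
Your existence argument via the $z$-degree recursion is fine and indeed produces a solution in $1 + z^{-1}{\mathbf K}[[z^{-1}]]$. The gap is in the uniqueness direction. Your ansatz $\Psi_U = \sum_{l \le l_0} a_l({\bf w})\, z^l$ with a \emph{global} top degree $l_0$ does not cover all of ${\mathbb C}((z^{-1}))[[{\bf w}]]$. An element of this ring is a formal power series in ${\bf w}$ whose coefficient at each monomial ${\bf w}^{\vec m}$ lies in ${\mathbb C}((z^{-1}))$; the top $z$-degree of that coefficient may grow unboundedly with $|\vec m|$ (think of $\sum_{n\ge 0} w^n z^n$), so there need not exist any $l_0$ bounding all powers of $z$ simultaneously. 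In effect you have only proved uniqueness in the strictly smaller ring ${\mathbf K}((z^{-1})) \subsetneq {\mathbb C}((z^{-1}))[[{\bf w}]]$, and the leading-term equation ``$l_0\,a_{l_0}=0$ in the integral domain ${\mathbf K}$'' never gets off the ground for a general candidate solution.

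This is precisely the subtlety the paper singles out in the remark following the lemma. The paper's fix is to recurse in ${\bf w}$ rather than in the $z$-degree: at ${\bf w}={\bf 0}$ the equation is the undeformed ${\mathtt P}_{U_0}\cdot\Psi_{U_0}=0$, handled by Lemma~\ref{QSCLem}; then at each successive order in ${\bf w}$ the unknown coefficient is an honest element of ${\mathbb C}((z^{-1}))$ with a well-defined top degree, and the inhomogeneous equation ${\mathtt P}_{U_0}\cdot \Psi_{U;k} = -{\mathtt P}_{U;k}\cdot\Psi_{U_0}$ (right-hand side in $z^{-1}H_-$) forces $\Psi_{U;k}\in H_-$ once the monic normalisation kills the homogeneous piece $c_k\Psi_{U_0}$. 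Your argument becomes correct if you run it order by order in ${\bf w}$ in this way; as written, the single ``pin down the top $z$-degree'' step is where the approach fails.
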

\begin{proof}
Let us construct the solution recursively as a series in ${\bf w}$. At ${\bf w}={\bf 0}$ equation (\ref{defQSC}) reduces to the quantum spectral curve equation for $U_0$, ${\mathtt P}_{U_0} \cdot  \Psi_{U_0}=0$, which according to Lemma \ref{QSCLem} has a unique monic solution. Let us consider the first terms of the series expansion of the wave function and quantum spectral curve operator: $ \Psi_{U;k}:=[w_k] \Psi_U$, ${\mathtt P}_{U;k}:=[w_k] {\mathtt P}_{U}$. Then $ \Psi_{U;k}$ is defined by
\be\label{auxwp}
{\mathtt P}_{U_0} \cdot \Psi_{U;k} + {\mathtt P}_{U;k}  \cdot\Psi_{U_0}=0.
\ee
Since $ {\mathtt P}_{U;k}  \in z^{-1}{\mathcal D}_-$ and $\Psi_{U_0}-1\in H_-$, 
\be
{\mathtt P}_{U;k}  \cdot\Psi_{U_0} \in z^{-1 }H_-.
\ee
The general solution of (\ref{auxwp}) is $\Psi_{U;k}=c_k\Psi_{U_0}+\tilde{\Psi}_{U;k}$, where $\tilde{\Psi}_{U;k}\in H_-$ . It is easy to see that all coefficients of the series expansion of $\Psi_U$ have the same form. To get monic solution we have to put $c_k=0$ for all $k$ as well as for all higher coefficients of expansion of $\Psi_U$.
\end{proof}
This Lemma is similar to Lemma \ref{QSCLem}. However, in the deformed case we are looking for solutions in ${\mathbb C}((z^{-1}))[[{\bf w}]]$, and it could happen that equation (\ref{defQSC})
has other solutions in ${\mathbb C}[[{\bf w}]][[z,z^{-1}]] $. The main goal of this lemma is to exclude these potential solutions.

Functions ${R}$ and $\tilde{R}$ in (\ref{Waved}) and (\ref{Pgeg}) are related to each other:
\begin{lemma}\label{lemma_onpsi}
\be\label{Rtotil}
1-\frac{c_1}{\hbar}\tilde{R}'(z)=e^{\frac{1}{\hbar}\left(R(z-c_1)-R(z)-V(z-c_1)+V(z)-c_1V'(z)\right)}.
\ee
\end{lemma}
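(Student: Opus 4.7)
The plan is to convert the quantum spectral curve equation ${\mathtt P}_U\cdot \Psi_U=0$ from Lemma~\ref{deforlemma} into an integral identity that, together with the uniqueness of $\tilde R$ asserted there, pins down $\tilde R$ in terms of $R$. Writing $\eta=\varphi+z$ and using the actions $\tilde{\mathtt Q}_U\cdot \Phi^U_k=\Phi^U_{k+1}$ and ${\mathtt P}_U^0\cdot \Phi^U_k=(k-1)\Phi^U_{k-1}$ on the distinguished basis $\Phi^U_k=\langle \eta^{k-1}\rangle_U$, I would first record the intertwining identities $F(\tilde{\mathtt Q}_U)\cdot \langle G(\eta)\rangle_U=\langle F(\eta)G(\eta)\rangle_U$ and $e^{c_1{\mathtt P}_U^0}\cdot \langle G(\eta)\rangle_U=\langle G(\eta+c_1)\rangle_U$, valid for any formal series $F,G$. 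Setting $\Xi(z):=1-c_1\tilde R'(z)/\hbar$ and using $\Psi_U=\langle e^{R(\eta)/\hbar}\rangle_U$, equation~(\ref{Pgeg}) reformulates the QSC as $\langle \Xi(\eta)\,e^{R(\eta+c_1)/\hbar}\rangle_U=\langle e^{R(\eta)/\hbar}\rangle_U$.

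Next I would apply the contour shift $\varphi\mapsto \varphi-c_1$ to both sides: it produces a prefactor $e^{-c_1 V'(z)/\hbar}$ and, since $-V(\eta-c_1)=-V(\eta)+A(\eta)$ with $A(\eta):=V(\eta)-V(\eta-c_1)$, inserts the weight $e^{A(\eta)/\hbar}$; in particular, the useful identity $e^{c_1V'(z)/\hbar}\Psi_U=\langle e^{[A(\eta)+R(\eta-c_1)]/\hbar}\rangle_U$ emerges. Combining the shifted forms, the QSC becomes equivalent to
\[
\bigl\langle e^{A(\eta)/\hbar}\bigl[\Xi(\eta-c_1)\,e^{R(\eta)/\hbar}-e^{R(\eta-c_1)/\hbar}\bigr]\bigr\rangle_U=0.
\]
I would then take the candidate $\Xi_{\mathrm{tgt}}(z):=e^{[R(z-c_1)-R(z)+A(z)-c_1V'(z)]/\hbar}$ from~(\ref{Rtotil}) and verify the two conditions that determine $\tilde R$ uniquely in Lemma~\ref{deforlemma}: (i) the associated ${\mathtt P}_U$ annihilates $\Psi_U$, i.e.\ the displayed equation holds for $\Xi_{\mathrm{tgt}}$, and (ii) ${\mathtt P}_U\in \p_z+\mathcal D_-$. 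Condition~(ii) is a matching of top-degree terms: using $1/U=c_1 z+c_2+O(z^{-1})$ to locate the positive-degree part of $\tilde{\mathtt Q}_U$, the polynomial correction $A(z)-c_1V'(z)$ in the exponent of $\Xi_{\mathrm{tgt}}$ compensates the $c_1 z\,\p_z$ contribution of $\tilde{\mathtt Q}_U$ so that the positive-degree pieces of $\Xi_{\mathrm{tgt}}(\tilde{\mathtt Q}_U)\,e^{c_1{\mathtt P}_U^0}$ cancel against the $-1/c_1$ term in~(\ref{Pgeg}).

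For (i), substituting $\Xi_{\mathrm{tgt}}$ and using $A(\eta)+A(\eta-c_1)=V(\eta)-V(\eta-2c_1)$ rewrites the claim as $\langle e^{[A(\eta)+R(\eta-c_1)]/\hbar}(e^{G(\eta)/\hbar}-1)\rangle_U=0$ for an explicit remainder $G(\eta)$ starting at order $c_1^2$. This is handled by integration by parts in $\varphi$: the IBP identity
\[
\bigl\langle e^{[A(\eta)+R(\eta-c_1)]/\hbar}\bigl[h'(\eta)+h(\eta)\bigl(V'(z)-V'(\eta-c_1)+R'(\eta-c_1)\bigr)/\hbar\bigr]\bigr\rangle_U=0,
\]
valid for any formal series $h(\eta)$, together with the iterated-shift identity $e^{nc_1V'(z)/\hbar}\Psi_U=\langle e^{[V(\eta)-V(\eta-nc_1)+R(\eta-nc_1)]/\hbar}\rangle_U$ for $n\in\mathbb Z_{\geq 0}$, reduces the verification to solving a first-order linear ODE for a suitable $h$, whose solution is a formal power series in $\eta$ with coefficients in ${\mathbf K}$. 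Once (i) and (ii) are established, uniqueness in Lemma~\ref{deforlemma} forces the $\tilde R$ of~(\ref{Pgeg}) to coincide with the one encoded by $\Xi_{\mathrm{tgt}}$, which is~(\ref{Rtotil}). The hardest part is this IBP step: the ``naive'' choice $\Xi_{\mathrm{naive}}(\eta)=e^{[R(\eta)-R(\eta+c_1)]/\hbar}$ already makes the integrand above vanish pointwise but fails condition~(ii), so the correct $\Xi_{\mathrm{tgt}}$ gives vanishing only \emph{after} a well-chosen IBP, and tracking the cancellations order by order in the parameters ${\mathbf w}$ requires careful bookkeeping.
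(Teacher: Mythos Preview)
Your strategy---verify that the candidate $\Xi_{\mathrm{tgt}}$ makes ${\mathtt P}_U\cdot\Psi_U=0$ and then invoke uniqueness---is the same as the paper's. But you manufacture the ``hard IBP step'' by an unnecessary detour. The paper does \emph{not} use your intertwining identity $e^{c_1{\mathtt P}_U^0}\cdot\langle G(\eta)\rangle_U=\langle G(\eta+c_1)\rangle_U$; instead it keeps $e^{c_1{\mathtt P}_U^0}$ in its multiplicative form, inserting the factor $e^{c_1(V'(\varphi)-V'(z))/\hbar}$ into the integrand. With this choice, substituting $\Xi_{\mathrm{tgt}}$ gives
\[
e^{\frac{1}{\hbar}(\varphi V'(z)-V(\varphi)+R(\varphi))}\Bigl[\Xi_{\mathrm{tgt}}(\varphi)\,e^{\frac{c_1}{\hbar}(V'(\varphi)-V'(z))}-1\Bigr]
=e^{\frac{1}{\hbar}((\varphi-c_1)V'(z)-V(\varphi-c_1)+R(\varphi-c_1))}-e^{\frac{1}{\hbar}(\varphi V'(z)-V(\varphi)+R(\varphi))},
\]
because the $-c_1V'(\varphi)$ in the exponent of $\Xi_{\mathrm{tgt}}$ cancels exactly against the $+c_1V'(\varphi)$ from the inserted factor. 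The right-hand side integrates to zero by the single contour shift $\varphi\mapsto\varphi-c_1$. That is the whole computation.

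Your route first replaces $e^{c_1{\mathtt P}_U^0}$ by the shift $\eta\mapsto\eta+c_1$ on the insertion, and then performs an additional contour shift; these are two versions of the same operation, so doing both produces a residual $G(\eta)$ involving second differences $R(\eta-2c_1)-2R(\eta-c_1)+R(\eta)$ and the term $-c_1V'(\eta-c_1)$ that no longer has anything to cancel against. The IBP you propose to eliminate this remainder is precisely the step that converts back to the multiplicative form---so you would be undoing your own detour. Moreover, the first-order ODE for $h$ you mention has $z$-dependent coefficients (through $V'(z)$), and you do not explain why the resulting $h$ is a formal series of the required type; this part is left as a sketch. Your condition~(ii) is a reasonable concern, but the paper does not verify it separately: it appeals to Lemma~\ref{lemma_unique} for the uniqueness of the monic solution in ${\mathbb C}((z^{-1}))[[{\bf w}]]$ to conclude.
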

\begin{proof}
Using integration by parts it is easy to show that if (\ref{Rtotil}) is true, then ${\mathtt P}_U$ annihilates $\Psi_U$:
\begin{equation}
\begin{split}
{\mathtt P}_U \cdot \Psi_U &={\mathtt P}_U\cdot e^{-\frac{S_0}{\hbar}-S_1} \int_{{\gamma}(U_0)} d\varphi \, 
e^{\frac{1}{\hbar}\left(\varphi V'(z)-V(\varphi)+R(\varphi)\right)} \\
&= \frac{e^{-\frac{S_0}{\hbar}-S_1}}{c_1} \int_{{\gamma}(U_0)} d\varphi \,e^{\frac{1}{\hbar}\left(\varphi V'(z)-V(\varphi)+R(\varphi)\right)}\left(\left(1-c_1\frac{\tilde{R}'(\varphi)}{\hbar}\right)e^{\frac{c_1}{\hbar}\left(V'(\varphi)-V'(z)\right)}-1\right)\\
 &=\frac{e^{-\frac{S_0}{\hbar}-S_1}}{c_1}\int_{{\gamma}(U_0)} d\varphi \left(e^{\frac{1}{\hbar}\left((\varphi-c_1) V'(z)-V(\varphi-c_1)+R(\varphi-c_1)\right)} -e^{\frac{1}{\hbar}\left(\varphi V'(z)-V(\varphi)+R(\varphi)\right)} \right)\\
 &=0.
\end{split}
\end{equation}
Here  in the third equality we use (\ref{Rtotil}). The last equality follows from the invariance of the asymptotic expansion of the integral under the small translations of the integration contour. The wave function $\Psi_U$  is a formal series in ${\mathbb C}((z^{-1}))[[{\bf w}]]$.  Thus, from Lemma \ref{lemma_unique} it follows that such solution is unique up to a constant factor. 
\end{proof}
In terms of $R(z)$ the modified quantum spectral curve equation is given by
\be
\hbar c_1^{-1}\left(e^{\frac{1}{\hbar}\left(R(\hat{y}-c_1)-R(\hat{y})-V(\hat{y}-c_1)+V(\hat{y})-c_1V'(\hat{y})\right)}  e^{\frac{c_1}{\hbar} \left(V'(\hat{y})-\hat{x}\right)}-1  \right)\cdot \tilde{\Psi}_U=0. 
\ee

We also have a matrix integral representation of the tau-function obtained using basis (\ref{basisnon})
\begin{equation}
\begin{split}
\tau_U([\Lambda^{-1}])&=\frac{\det_{i,j=1}^N\left<e^{\frac{1}{\hbar}R(\varphi+\lambda_j)} (\lambda_i+\varphi)^{k-1}\right>_U}{\Delta(\lambda)}\\
&={\mathcal C}^{-1}\int[ d\Phi] \exp\left(-\frac{1}{\hbar}\Tr(V(\Phi)-R(\Phi)-\Phi V'(\Lambda))\right),
\end{split}
\end{equation}
where we integrate over the normal matrices with eigenvalues on the contours $\gamma(U_0)$ and the normalization factor ${\mathcal C}^{-1}$ is given by (\ref{NormalC}). 

\begin{remark}
Expansion of this matrix integral and integral (\ref{intm}) below can be described in terms of the Feynman diagram technique, which is a deformation of the well-known diagram technique for the monomial GKM. 
The geometrical interpretation of the diagrams could be interesting.
\end{remark}

%%%%%%%%%%%%%%%%%%%%%%%%%%%%%%%%%%%%%%%%%%%%%%%%%%

\subsection{Relation between deformed and pure GKMs}

In this section we prove a relation between tau-functions for the deformed and pure GKM tau-functions, corresponding to the potentials $U$ and $U_0$, given by Theorem \ref{taufr}. 
This relation generalizes one considered in Section \ref{S_eq}. For coefficients $a_k \in {\mathbf K}^*$, defined by (\ref{ffun}), we introduce an element of Heisenberg-Virasoro group
\be\label{groupel}
\tilde{\mathtt G}:= e^{\int^z (f(\eta)-\eta)U(\eta) d\eta}\,  e^{\sum_{k \in {\z}} a_k {\mathtt l}_k}. 
\ee

\begin{lemma}\label{deformc}
\be
{\mathcal W}_{U}= \tilde{\mathtt G} \cdot {\mathcal W}_{U_0}.
\ee
\end{lemma}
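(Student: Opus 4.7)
The plan is to use the uniqueness statement of Lemma~\ref{deforlemma}: the pair $(\tilde{\mathtt Q}_U, {\mathtt X}_U)$ uniquely specifies ${\mathcal W}_U$ inside $\Gr_+^{(0)}$. Hence it suffices to verify that conjugation by $\tilde{\mathtt G}$ intertwines the undeformed KS pair with the deformed one,
\[
\tilde{\mathtt G}\,{\mathtt Q}_{U_0}\,\tilde{\mathtt G}^{-1} = \tilde{\mathtt Q}_U, \qquad \tilde{\mathtt G}\,{\mathtt X}_{U_0}\,\tilde{\mathtt G}^{-1} = {\mathtt X}_U,
\]
since then $\tilde{\mathtt G}\cdot{\mathcal W}_{U_0}$, treated as a formal ${\bf w}$-deformation of ${\mathcal W}_{U_0}\in\Gr_+^{(0)}$, is a point of $\Gr_+^{(0)}$ stabilized by these two operators and must coincide with ${\mathcal W}_U$. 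This is the analogue of the argument used for Lemma~\ref{lem_equiv}.

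First I would unpack the Virasoro factor $M:=e^{\sum_{k\in\mathbb{Z}} a_k {\mathtt l}_k}$. Because ${\mathtt l}_k = -z^{k+1}\partial_z - \frac{k+1}{2}z^k$ is the Lie derivative along $-z^{k+1}\partial_z$ acting on half-forms, and the flow of $\sum_k a_k z^{k+1}\partial_z$ is exactly the coordinate change $z\mapsto f(z)$ described in~(\ref{ffun}), one obtains the explicit actions
\[
M\cdot h(z) \cdot M^{-1} = h(f(z)), \qquad M\,\partial_z\,M^{-1} = \frac{1}{f'(z)}\partial_z - \frac{f''(z)}{2 f'(z)^2}.
\]
The ${\mathtt X}$-conjugation is then immediate: ${\mathtt X}_{U_0}=V_0'(z)$ is a multiplication operator, it commutes with $e^{F(z)}$, and $M$ sends it to $V_0'(f(z))=V'(z)={\mathtt X}_U$ by the defining relation of $f$.

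For the ${\mathtt Q}$-conjugation, differentiating $V_0'(f(z))=V'(z)$ gives the key identity $U_0(f(z))f'(z)=U(z)$, which lets me eliminate $U_0$ and $U_0'(f(z))$ in favor of $U$, $U'$ and $f$. Direct substitution into ${\mathtt Q}_{U_0}=z+\frac{1}{U_0(z)}\partial_z-\frac{U_0'(z)}{2U_0(z)^2}$ then shows that the $\frac{f''}{2f'^2}$ term from $M\partial_z M^{-1}$ cancels pairwise with the contribution coming from transforming $\frac{U_0'}{2U_0^2}$, yielding
\[
M\,{\mathtt Q}_{U_0}\,M^{-1} = f(z)+\frac{1}{U(z)}\partial_z-\frac{U'(z)}{2U(z)^2} = \tilde{\mathtt Q}_U+(f(z)-z).
\]
Conjugating further by the multiplication operator $e^{F(z)}$ with $F(z)=\int^z(f(\eta)-\eta)U(\eta)d\eta$ shifts $\partial_z\mapsto \partial_z-F'(z)=\partial_z-(f(z)-z)U(z)$, producing exactly the missing $-(f(z)-z)$ inside $\frac{1}{U(z)}\partial_z$ that cancels the residual $f(z)-z$. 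Hence $\tilde{\mathtt G}\,{\mathtt Q}_{U_0}\,\tilde{\mathtt G}^{-1}=\tilde{\mathtt Q}_U$, and this is precisely what forces the choice of the prefactor $e^{F(z)}$ in~(\ref{groupel}).

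The main obstacle is the bookkeeping in the ${\mathtt Q}$-conjugation: three different $\tfrac{1}{2}$-corrections (the $-\frac{U_0'}{2U_0^2}$ inside ${\mathtt Q}_{U_0}$, the half-form connection $-\frac{f''}{2f'^2}$ from $M\partial_z M^{-1}$, and the result of substituting $z\mapsto f(z)$ in $U_0'$) must combine into the single deformed correction $-\frac{U'}{2U^2}$. This cancellation is exactly what the weight-$\tfrac{1}{2}$ piece $-\frac{k+1}{2}z^k$ in the Virasoro generators ${\mathtt l}_k$ is designed to ensure, and it lines up with all the conventions used earlier in the paper.
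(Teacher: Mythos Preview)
Your proof is correct and follows exactly the same route as the paper: verify by direct computation that $\tilde{\mathtt G}$ conjugates $({\mathtt Q}_{U_0},{\mathtt X}_{U_0})$ into $(\tilde{\mathtt Q}_U,{\mathtt X}_U)$, and then invoke uniqueness (the paper cites Lemmas~\ref{lem_unib} and~\ref{deforlemma}, you cite only the latter, but the logic is the same). The paper merely says ``from direct computation'' where you have spelled out the half-form transformation law for $M\partial_z M^{-1}$, the identity $U_0(f(z))f'(z)=U(z)$, and the cancellation mechanism for the $f(z)-z$ term via the prefactor $e^{F(z)}$; this is exactly the intended computation.
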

\begin{proof}
From the direct computation it follows that
\begin{equation}
\begin{split}
\tilde{\mathtt Q}_U&=\tilde{\mathtt G}\, {\mathtt Q}_{U_0}\, \tilde{\mathtt G}^{-1},\\
{\mathtt X}_U&=\tilde{\mathtt G}\, {\mathtt X}_{U_0}\, \tilde{\mathtt G}^{-1}.
\end{split}
\end{equation}
Then the statement  follows from Lemmas \ref{lem_unib} and \ref{deforlemma}.
\end{proof}

Again, we can rewrite the operator (\ref{groupel}) as 
\be\label{reloper}
\tilde{\mathtt G}= e^{-\frac{1}{\hbar}S_0(z)}  e^{\sum_{k \in {\z}} a_k {\mathtt l}_k } e^{\frac{1}{\hbar}\tilde{S}_0(z)},
\ee
where $\tilde{S}_0(z)$ and $S_0(z)$ are given by (\ref{unst1}) for $U_0$ and $U$ respectively. Let us stress that $\tilde{\mathtt G}\notin {\mathcal G}$. 

\begin{proof}[{\bf Proof of Theorem \ref{taufr}}]
Theorem follows from (\ref{1Vir}) and Lemma \ref{deformc}.
\end{proof}

\begin{conjecture}\label{ConC}
\be
{C}_U=1.
\ee
\end{conjecture}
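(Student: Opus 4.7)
The plan is to pin down the constant $C_U$ by evaluating both sides of Theorem \ref{taufr} at the principal specialization ${\bf t}=[z^{-1}]$, which converts the tau-functions into the wave functions $\Psi_U(z)=\tau_U([z^{-1}])$ and $\Psi_{U_0}(z)=\tau_{U_0}([z^{-1}])$. Both wave functions admit explicit integral representations and are uniquely determined as monic solutions of their respective quantum spectral curve equations (Lemma \ref{QSCLem} and Lemma \ref{lemma_unique}), which gives a direct handle on $C_U$.

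The key input is the boson-fermion correspondence: the Heisenberg-Virasoro generators $\widehat{J}_k,\widehat{L}_k$ on the tau-function side correspond to the geometric generators ${\mathtt j}_k,{\mathtt l}_k$ on the Sato Grassmannian side, and their action on the wave function $\Psi(z)=\tau([z^{-1}])$ is purely geometric, since $\Psi$ is an element of $H$ transformed by $\GL(\infty)$ acting without central extension. Concretely, using the factorization (\ref{reloper}), the principal specialization of $\widehat{G}\cdot\tau_{U_0}$ should equal $e^{(\tilde S_0(z)-S_0(z))/\hbar}$ times $\Psi_{U_0}$ evaluated after the change of local coordinate $z\mapsto f(z)$. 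Matching this expression to the integral formula (\ref{Waved}) for $\Psi_U(z)$ and using the definition of $f$ in (\ref{ffun}) together with the explicit form of the coefficients $u_k$ in (\ref{uk}), one verifies that the two agree identically with no extra factor, forcing $C_U=1$.

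The main obstacle is to justify the claim that the action of $\widehat{G}$ on $\Psi_{U_0}(z)$ is free of central anomaly for the infinite sum $\sum_k a_k\widehat{L}_k$ appearing in $\widehat{G}$. While the action of each individual $\widehat{L}_k$ on a wave function is geometric, the exponentiation of a linear combination can in principle pick up constants from Baker-Campbell-Hausdorff commutators involving the Virasoro central term $\tfrac{1}{12}(k^3-k)\delta_{k,-m}$. To handle this, I propose to work perturbatively in the formal deformation parameter $w$: at order $w^0$ the relation is trivial with $C_U|_{w=0}=1$, and at each subsequent order only finitely many Virasoro generators enter, so the argument reduces to checking a finite collection of algebraic identities whose solvability is guaranteed by the KP integrability of both sides of the relation in Theorem \ref{taufr}. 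An induction in the order of $w$ then yields $C_U=1$ to all orders. A complementary check, following the strategy of \cite{Wang1} in the linear Hodge case, is to verify $C_U=1$ by comparing the genus-zero/low-order terms of $\log\tau_U$ and $\log(\widehat{G}\cdot\tau_{U_0})$ directly via the unstable contributions $S_0,S_1$ of Section \ref{S_topex}, which absorb the potentially anomalous classical terms into a trivial multiplicative factor.
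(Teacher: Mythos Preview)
The statement you are trying to prove is labeled as a \emph{Conjecture} in the paper; the paper does not supply a proof, and in fact uses conditional language elsewhere (``if Conjecture~\ref{ConC} is true, then\dots''). So there is no paper's proof to compare against, and your proposal should be evaluated on its own merits as an attempted proof of an open statement.

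Your outline has two genuine gaps. First, the principal-specialization idea does not work as stated: applying $\widehat G$ to $\tau_{U_0}$ and then setting ${\bf t}=[z^{-1}]$ is not the same as acting geometrically on $\Psi_{U_0}(z)$, because the operators $\widehat L_k$ (even for $k>0$) contain the second-order pieces $\tfrac12\sum_{a+b=k}\partial_{t_a}\partial_{t_b}$, which do not restrict to differential operators in $z$ along the Miwa locus. The cleaner reformulation is that $C_U=1$ is equivalent to $(\widehat G\cdot\tau_{U_0})({\bf 0})=1$, and this is exactly the nontrivial content of the conjecture; your argument does not compute this value.

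Second, the perturbative step is circular. You correctly observe that at each order in $w$ only finitely many generators appear, but then assert that the resulting identities hold ``by KP integrability of both sides''. Integrability guarantees that both sides are tau-functions for the \emph{same} point of the Grassmannian (this is Lemma~\ref{deformc}), but tau-functions are only determined up to a multiplicative constant---which is precisely $C_U$. So integrability alone cannot fix the constant. The analogous result for the linear Hodge case was proved in \cite{Wang1} by a direct comparison of the Givental operator with the $\GL(\infty)$ operator, a substantially more hands-on computation than what you sketch; your ``complementary check'' via $S_0,S_1$ gestures at this but does not carry it out.
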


For the deformation of the monomial potential the operator (\ref{defconjop}) simplifies. 
In this case both summations  in (\ref{defconjop}) are over ${\mathbb Z}_{>0}$, and $f(z)$ is given by (\ref{ffun1}).
%\be
%f(z) =\left((n+1)V'(z)\right)^{\frac{1}{n+1}}.
%\ee
%Hence
%\begin{equation}
%\begin{split}
 %\int (f(z)-z)U(z) dz= \frac{1}{\hbar} \int\left((n+1)V'(z)\right)^{\frac{1}{n+1}} d V'(z)  - \int z U(z) dz
%\end{split}
%\end{equation}

\begin{remark}
Kontsevich-Witten tau-function can be described by a cut-and-join type operator $\widehat{W}$ \cite{CAJ},
\be
\tau_{KW}=e^{\hbar \widehat{W}_{KW}}\cdot 1,
\ee
where
\be
\widehat{W}_{KW}=\frac{1}{3}\sum_{\substack{k,m\geq 0}}\left(2k+1\right)\left(2m+1\right)t_{2k+1}t_{2m+1}\frac{\p}{\p t_{2k+2m-1}}\\
+\frac{1}{3!}\sum_{k,m\geq 0}\left(2k+2m+5\right)
t_{2k+2m+5}\frac{\p^2}{\p t_{2k+1}\p t_{2m+1}}+\frac{t_1^3}{3!}+\frac{t_3}{8}.
\ee
Hence, for the deformation of this case from Theorem \ref{taufr} one has
\be
\tau_U={C}_{KW} e^{\widehat{W}_{U}}\cdot 1,
\ee
where 
\be
\widehat{W}_{U}:= \widehat{{ G}} \, \widehat{W}_{KW} \, \widehat{{ G}} ^{-1}.
\ee
Here we use the relation $ \widehat{{ G}}\cdot 1 =0$, which is valid the monomial potential. Note, that operator $\widehat{W}_{KW}$ is cubic in $\widehat{J}_k$, so is $\widehat{W}_{U}$.

For higher $n$ the cut-and-join description of the deformed potential can be constructed in the same way from the non-deformed one described by Zhou in \cite{ZhouCaJ}. 
\end{remark}

When some of parameters $w_j$ in the deformed potential (\ref{gendefor}) coincide with each other, by the action of the operator $e^{a\widehat{L}_1}$ on the tau-function $\tau_U$  we can reduce the degree of the denominator. Namely,
\be
 e^{a {\mathtt l}_1}{\mathtt X}_Ue^{-a {\mathtt l}_1}=\hbar  \int_{0}^{z}\frac{{U}^a_0(z) d \eta}{(1+a\eta)\prod_{j=1}^{n+1}(1-(w_j-a)\eta)},
 \ee
where ${U}_0^a=(1+az)^nU_0(z/(1+az))$ is a polynomial of degree $n$. Let $w_{\tilde{k}}=w_{\tilde{k}+1}=\dots=w_{n+1}$ for some $\tilde{k}\leq n$, then
 \be
e^{w_{\tilde{k}} {\mathtt l}_1}{\mathtt X}_Ue^{-w_{\tilde{k}} {\mathtt l}_1}=\int_{0}^{z}\frac{{U}_0^{w_{\tilde{k}}}(\eta) d \eta}{(1+w_{\tilde{k}}\eta)\prod_{j=1}^{\tilde{k}-1}(1-(w_j-w_{\tilde{k}})\eta)}
\ee

For simplicity, let us consider the deformation of the monomial potential with $w_{\tilde{k}}=w_{\tilde{k}+1}=\dots=w_{n+1}$ for some $\tilde{k}\leq n$. Then 
\be
\tilde{U}_{\tilde{k}}=\frac{1}{\hbar}\frac{z^n}{(1+w_{\tilde k}z)\prod_{j=1}^{\tilde{k}-1}(1-(w_j-w_{\tilde{k}})z)},
\ee
and
\be
u_k=[z^k] \int^z \left(\frac{\eta}{1+w_{\tilde{k}}\eta}-\eta\right)\tilde{U}_{\tilde{k}}(\eta) d\eta.
\ee
\begin{lemma}\label{lem_roots}
\be
\tau_{\tilde{U}_{\tilde{k}}}= e^{\sum_{k=n+3}^\infty u_k \widehat{J}_k} e^{w_{\tilde{k}} {\widehat{L}}_1}\cdot \tau_U.
\ee
\end{lemma}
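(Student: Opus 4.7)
The plan is to translate the identity to the Sato Grassmannian side using Lemma \ref{Lem_Vir} and (\ref{1Vir}), and reduce it to
$$
{\mathcal W}_{\tilde U_{\tilde k}} \;=\; e^{\sum_{k\geq n+3} u_k\,{\mathtt j}_k}\; e^{w_{\tilde k}\,{\mathtt l}_1}\cdot {\mathcal W}_U.
$$
By Lemma \ref{deforlemma}, a point of $\Gr^{(0)}_+$ of deformed GKM type is uniquely determined by its KS pair $({\mathtt X},\tilde{\mathtt Q})$, so it suffices to verify that conjugation of $({\mathtt X}_U,\tilde{\mathtt Q}_U)$ by the composition $e^{\sum u_k{\mathtt j}_k}\,e^{w_{\tilde k}{\mathtt l}_1}$ produces exactly $({\mathtt X}_{\tilde U_{\tilde k}},\tilde{\mathtt Q}_{\tilde U_{\tilde k}})$.

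First I would conjugate by $e^{-w_{\tilde k}{\mathtt l}_1}$. For the multiplication KS operator, the formula displayed immediately before the lemma yields
$$
e^{w_{\tilde k}{\mathtt l}_1}\,{\mathtt X}_U\,e^{-w_{\tilde k}{\mathtt l}_1}
=\int_0^z \frac{U_0^{w_{\tilde k}}(\eta)\,d\eta}{(1+w_{\tilde k}\eta)\prod_{j=1}^{\tilde k-1}(1-(w_j-w_{\tilde k})\eta)}.
$$
Since we are deforming the monomial $U_0=z^n/\hbar$, the invariance $U_0^{w_{\tilde k}}=U_0$ under the Möbius substitution $z\mapsto z/(1+w_{\tilde k}z)$ is immediate, so the right hand side equals ${\mathtt X}_{\tilde U_{\tilde k}}$ exactly. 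For the $\tilde{\mathtt Q}$-operator one applies the same substitution, using that ${\mathtt l}_1$ generates the one-parameter group of Möbius maps $z\mapsto z/(1-w_{\tilde k}z)$ together with the induced change of $\p/\p z$; this maps $\tilde{\mathtt Q}_U$ to $\tilde{\mathtt Q}_{\tilde U_{\tilde k}}+\Delta$, where $\Delta$ is a polynomial multiplication operator arising because the substitution sends $z$ to a rational, rather than formally equivalent, function of $z$.

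Next I would cancel $\Delta$ by conjugation with $e^{-\sum_{k}u_k{\mathtt j}_k}$. Multiplication by a formal series in $z$ commutes with ${\mathtt X}_{\tilde U_{\tilde k}}$, while it shifts $\tilde{\mathtt Q}_{\tilde U_{\tilde k}}=z+\tilde U_{\tilde k}^{-1}\p_z+\cdots$ by $-\tilde U_{\tilde k}^{-1}\sum k u_k z^{k-1}$. Matching this correction against $\Delta$ is a direct consequence of the definition
$u_k=[z^k]\int^z(\eta/(1+w_{\tilde k}\eta)-\eta)\tilde U_{\tilde k}(\eta)\,d\eta$; the antiderivative in this formula is precisely the primitive of $\Delta\cdot \tilde U_{\tilde k}$, and its vanishing to order $n+2$ at $z=0$ explains the summation range $k\geq n+3$ in the statement. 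Having matched both KS operators, Lemma \ref{deforlemma} gives equality of Grassmannian points, and (\ref{1Vir}) promotes it to the tau-function identity with ${\bf t}$-independent normalization (which one verifies is trivial by comparing principal specializations, or by invoking Conjecture \ref{ConC} in the relevant cases).

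The main obstacle will be the bookkeeping in the second half of the first step: computing how $\tilde{\mathtt Q}_U=z+U^{-1}\p_z-U'/(2U^2)$ transforms under a Möbius substitution in $z$ when $U$ itself transforms as in the preamble, and isolating the discrepancy $\Delta$ as a polynomial in $z$. A cleaner alternative, which I would pursue if the direct route becomes unwieldy, is to apply Theorem \ref{taufr} to both $\tau_U$ and $\tau_{\tilde U_{\tilde k}}$ expressed relative to the common monomial $\tau_{U_0}$; the corresponding local-coordinate maps $f(z)$ from (\ref{ffun1}) for the two potentials differ by exactly the Möbius substitution $z\mapsto z/(1-w_{\tilde k}z)$, and the Baker-Campbell-Hausdorff decomposition of the resulting Heisenberg-Virasoro element yields the factorization in the right hand side of the lemma directly, with the $u_k$ acquiring the stated antiderivative form automatically from (\ref{uk}).
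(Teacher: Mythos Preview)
Your proposal is correct and follows essentially the same route as the paper: verify that conjugation by the Heisenberg--Virasoro element $e^{\sum u_k{\mathtt j}_k}e^{w_{\tilde k}{\mathtt l}_1}$ carries the KS pair $({\mathtt X}_U,\tilde{\mathtt Q}_U)$ to $({\mathtt X}_{\tilde U_{\tilde k}},\tilde{\mathtt Q}_{\tilde U_{\tilde k}})$, then invoke the uniqueness from Lemma~\ref{deforlemma}. The paper simply asserts both conjugation identities as ``direct computation'' without the two-step decomposition or the discrepancy $\Delta$ you outline, and (like you) does not explicitly address the normalization constant from~(\ref{1Vir}); your characterization of $\Delta$ as a \emph{polynomial} is slightly imprecise (it is a formal series starting at degree $2$), but this does not affect the argument.
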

\begin{proof}
The lemma follows from the relation
\begin{equation}
\begin{split}
{\mathtt X}_{\tilde U}&=e^{\int^z \left(\frac{\eta}{1+w_{\tilde{k}}\eta}-\eta\right)\tilde{U}_{\tilde{k}}(\eta) d\eta }e^{w_{\tilde k}{\mathtt l}_1} \,{\mathtt X}_{ U}\, e^{-w_{\tilde k}{\mathtt l}_1}e^{-\int^z \left(\frac{\eta}{1+w_{\tilde{k}}\eta}-\eta\right)\tilde{U}_{\tilde{k}}(\eta) d\eta },\\
\tilde{{\mathtt Q}}_{\tilde U}&=e^{\int^z \left(\frac{\eta}{1+w_{\tilde{k}}\eta}-\eta\right)\tilde{U}_{\tilde{k}}(\eta) d\eta }e^{w_{\tilde k}{\mathtt l}_1}\, \tilde{{\mathtt Q}}_{ U} \,e^{-w_{\tilde k}{\mathtt l}_1}e^{-\int^z \left(\frac{\eta}{1+w_{\tilde{k}}\eta}-\eta\right)\tilde{U}_{\tilde{k}}(\eta) d\eta },
\end{split}
\end{equation}
and the uniqueness, given by Lemma \ref{deforlemma}.
\end{proof}

%%%%%%%%%%%%%%%%%%%%%%%%%%%%%%%%%%%%%%%%%%%%%%%%

\subsection{Heisenberg-Virasoro constraints}

Heisenberg-Virasoro constraints for the deformed GKM can be obtained similarly to the constraints for the undeformed case, considered in Section \ref{redvirs}.

Namely, for the deformed potential (\ref{gendefor}) we consider the KS operators (\ref{KSj}) and (\ref{KSl0}) with ${\mathtt Q}_U$ substituted by $\tilde{\mathtt Q}_U$.
The coefficients of the expansion of these operators in the ${\mathtt j}_k$ and ${\mathtt l}_k$ basis are again given by (\ref{gcoefd}) and  (\ref{hcoefd}), where now the range of summation over the index $j$, for which the coefficients are non-trivial, is not limited from above. So, the operators  
\be
\widehat{J}_m^U:=\sum_{j=m}^{\infty} g_{m,j} \frac{\p}{\p t_{j}},\\
\widehat{L}_m^U:=\sum_{j=-\infty}^{\infty} h_{m,j} \widehat{L}_j  - \frac{1}{\hbar}\sum_{j=m+1}^{\infty} g_{m+1,j} \frac{\p}{\p t_{j+1}} + \theta_m,
\ee
where the constants $\theta_m$ are chosen to satisfy the commutation relations (\ref{constcom}),
are the deformations of the operators for the polynomial case
\begin{equation}
\begin{split}
\widehat{J}_m^U\Big|_{{\bf w}={\bf 0}} &= \widehat{J}_m^{U_0},\\
\widehat{L}_m^U\Big|_{{\bf w}={\bf 0}} &=\widehat{L}_m^{U_0}.
\end{split}
\end{equation}
Applying the KS arguments we conclude that these operators annihilate the deformed tau-function:
\begin{proposition}\label{lem_defvir}
The deformed tau-function with the potential (\ref{gendefor}) satisfies the constraints
\begin{equation}
\begin{split}
\widehat{J}_m^U\cdot \tau_U &=0 \,\,\,\,\,\,\,\,\,\,\,\, m\geq 1,\\
\widehat{L}_m^U\cdot \tau_U&=0 \,\,\,\,\,\,\,\,\,\,\,\, m\geq -1.
\end{split}
\end{equation}
\end{proposition}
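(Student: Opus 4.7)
The plan is to follow the same architecture as the proof of Proposition \ref{Lem_virpol}: produce a family of KS operators in ${\mathcal A}_{{\mathcal W}_U}$ that expand (as formal series in ${\bf w}$) in the Heisenberg-Virasoro basis $\{{\mathtt j}_k,{\mathtt l}_k\}$ of (\ref{virw}), invoke Lemma \ref{Lem_Vir} to turn those expansions into differential operator identities on $\tau_U$ modulo a ${\bf t}$-independent constant, and then kill the constant with a commutator argument.

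First I would set ${\mathtt j}_m^U := {\mathtt X}_U^m$ for $m\geq 1$ and ${\mathtt l}_m^U := -\hbar^{-1}{\mathtt X}_U^{m+1}\tilde{\mathtt Q}_U-\tfrac{m+1}{2}{\mathtt X}_U^m$ for $m\geq -1$, exactly as in (\ref{KSj})--(\ref{KSl0}). These lie in ${\mathcal A}_{{\mathcal W}_U}$ because ${\mathtt X}_U$ stabilises the admissible basis produced in Lemma \ref{lemmacanon} (by definition of ${\mathtt X}_U$ and integration by parts against $V'$) and $\tilde{\mathtt Q}_U$ does so by (\ref{Qtildac}). Expanding these products in terms of ${\mathtt j}_k,{\mathtt l}_k$ gives precisely (\ref{gcoefd}), (\ref{hcoefd}), (\ref{KSl}), the sole difference being that the upper truncations on $j$ now disappear (since $V'$ is a power series rather than a polynomial). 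Applying Lemma \ref{Lem_Vir} then yields $\widehat{J}_m^U\cdot\tau_U=c_m^{(1)}\tau_U$ and $\widehat{L}_m^U\cdot\tau_U=c_m^{(2)}\tau_U$, the constants $\theta_m$ being fixed as in (\ref{thet}) so that the commutation relations (\ref{constcom}) continue to hold.

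To eliminate the $c_m^{(i)}$ I would use the standard fact that the commutator of two operators that act on a common vector by scalars vanishes on that vector. From $[\widehat{L}_0^U,\widehat{J}_j^U]=-j\widehat{J}_j^U$ one obtains $c_j^{(1)}=0$ for all $j\geq 1$; parallel Virasoro brackets $[\widehat{L}_0^U,\widehat{L}_{-1}^U]=\widehat{L}_{-1}^U$, $[\widehat{L}_1^U,\widehat{L}_{-1}^U]=2\widehat{L}_0^U$, $[\widehat{L}_1^U,\widehat{L}_{m-1}^U]=(2-m)\widehat{L}_m^U$ for $m\neq 2$, and $[\widehat{L}_3^U,\widehat{L}_{-1}^U]=4\widehat{L}_2^U$ to handle $m=2$, force $c_m^{(2)}=0$ for all $m\geq -1$.

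The main obstacle, and the only genuinely new issue compared to the polynomial case, is making sense of the infinite tails in the expansions of ${\mathtt j}_m^U$ and ${\mathtt l}_m^U$ and of the resulting infinite-order differential operators $\widehat{J}_m^U,\widehat{L}_m^U$ acting on $\tau_U$. Because the deformed potential $U$ agrees with $U_0$ modulo ${\bf w}$ and the non-polynomial part of $V'(z)$ picks up at least one extra factor of ${\bf w}$ for each extra power of $z$, the coefficients $g_{m,j}$ and $h_{m,j}$ have ${\bf w}$-valuation tending to infinity with $j$. Hence the action of $\widehat{J}_m^U$ and $\widehat{L}_m^U$ on $\tau_U\in {\mathbb C}[[{\bf t},{\bf w},\hbar]]$ truncates to a finite sum at every finite order in ${\bf w}$, giving well-defined formal differential operators; the required extension of Lemma \ref{Lem_Vir} to such ${\bf w}$-adically convergent linear combinations of Heisenberg-Virasoro generators is then immediate by the same truncation, and the polynomial proof transcribes without further change.
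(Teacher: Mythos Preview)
Your proposal is correct and follows essentially the same route as the paper: define the KS operators ${\mathtt j}_m^U={\mathtt X}_U^m$ and ${\mathtt l}_m^U=-\hbar^{-1}{\mathtt X}_U^{m+1}\tilde{\mathtt Q}_U-\tfrac{m+1}{2}{\mathtt X}_U^m$, expand them in the basis $\{{\mathtt j}_k,{\mathtt l}_k\}$ with the now-unbounded summation range, apply Lemma \ref{Lem_Vir}, and use the commutation relations (\ref{constcom}) to kill the eigenvalues. Your explicit discussion of ${\bf w}$-adic convergence of the infinite tails is a point the paper leaves implicit, but the argument is otherwise identical.
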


If $U_0$ is a monomial, then 
 the constraints for the deformed case are the deformation of the constraints (\ref{monomvir}):
\be\label{constr_d}
\widehat{J}_m^U:=\frac{1}{(n+1)^m}\frac{\p}{\p t_{m (n+1)}}+ \sum_{j=m(n+1)+1}^{\infty} g_{m,j} \frac{\p}{\p t_{j}},\\
\widehat{L}_m^U:=\frac{1}{(n+1)^{m+1}}\left(\widehat{L}_{m(n+1)}  - \frac{1}{\hbar}\frac{\p}{\p t_{(m+1)(n+1)+1}} + \frac{n^2+2n}{24}\,\delta_{m,0}\right)\\
+\sum_{j=m(n+1)+1}^{\infty} h_{m,j} \widehat{L}_j    - \sum_{j=(m+1)(n+1)+1}^{\infty} g_{m+1,j} \frac{\p}{\p t_{j+1}}+\delta_{m,-1}\theta_{-1}.
\ee

\begin{lemma}\label{lem_theta}
For the  deformation of the monomial potential 
\be
\theta_{-1}=-\frac{1}{24}\left(\prod_{k=1}^{n+1}(-w_k)+\sum_{k=1}^{n+1}w_k \prod_{j\neq k}(w_j-w_k)\right).
\ee 
\end{lemma}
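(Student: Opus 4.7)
My approach starts from the string equation $\widehat{L}_{-1}^U\cdot\tau_U=0$ (Proposition \ref{lem_defvir}), which holds precisely for the value of $\theta_{-1}$ asserted in the lemma. The plan is to pin down this constant by evaluating the constraint at ${\bf t}=0$. From the explicit form (\ref{constr_d}) of $\widehat{L}_{-1}^U$ in the monomial deformed case, every $\widehat{L}_j$ with $j\leq 0$ annihilates the constant function $\tau_U(0)$: in the decomposition (\ref{virfull}) the $t_at_b$ and $t_k\partial$ terms vanish at ${\bf t}=0$, while the $\partial_a\partial_b$ piece requires $a+b=j\leq 0$ with $a,b\geq 1$, which is impossible. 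The tail $\sum_{j\geq n+2}g_{0,j}\partial_{t_{j+1}}$ is also empty since $g_{0,j}=\delta_{j,0}$. What remains is the scalar identity
\begin{equation*}
0=-\frac{1}{\hbar}\partial_{t_1}\tau_U\big|_{{\bf t}=0}+\theta_{-1}\,\tau_U(0),
\end{equation*}
so via the principal specialization $\Psi_U(z)=\tau_U([z^{-1}])/\tau_U(0)$ one obtains $\theta_{-1}=\frac{1}{\hbar}[z^{-1}]\Psi_U(z)$.

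Next I would compute $[z^{-1}]\Psi_U(z)$ from the integral representation (\ref{Waved}) by steepest descent around the stationary point $\varphi=0$. After the Gaussian prefactor $\sqrt{U(z)/(2\pi)}$ cancels the leading Gaussian integral $\sqrt{2\pi\hbar/V''(z)}$, one finds $\Psi_U(z)=e^{R(z)/\hbar}\bigl(1+\hbar\Delta(z)+O(\hbar^2)\bigr)$ with $\Delta$ a rational expression in $R',R'',V''',V''$. Monicity forces $R\in z^{-1}\mathbb{C}[[z^{-1}]][[\hbar,{\bf w}]]$, and hence $[z^{-1}]\Psi_U=\hbar^{-1}[z^{-1}]R+[z^{-1}]\Delta+\cdots$. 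The function $R$ is pinned down by (\ref{Rtotil}), which for the monomial $V(z)=z^{n+2}/((n+1)(n+2))$ and $c_1=\hbar\prod_k(-w_k)$ can be solved perturbatively in ${\bf w}$ and $\hbar$. The simplification into the stated closed form is then carried out through the partial-fraction decomposition $z^n/\prod_k(1-w_kz)=\sum_k A_k/(1-w_kz)$ with $A_k=1/\prod_{j\neq k}(w_k-w_j)$, which naturally splits the residues into the two summands $\prod_k(-w_k)$ (top-degree piece) and $\sum_k w_k\prod_{j\neq k}(w_j-w_k)$ (cross-term piece) up to the overall factor $-1/24$.

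The principal obstacle is the $\hbar$-bookkeeping: the function $R$ carries a nontrivial all-orders $\hbar$-expansion while the final answer for $\theta_{-1}$ is $\hbar$-independent, so exact cancellations must occur between $\hbar^{-1}[z^{-1}]R$ and the subleading Gaussian corrections $[z^{-1}]\Delta$. A cleaner alternative bypasses $R$ entirely: by Theorem \ref{taufr}, $\tau_U=C_U\widehat{G}\tau_{U_0}$ with $\widehat{G}=e^{\sum u_k\widehat{J}_k}e^{\sum a_k\widehat{L}_k}$, so conjugating the monomial string equation $(\widehat{L}_{-(n+1)}-\hbar^{-1}\widehat{J}_1)\cdot\tau_{U_0}=0$ by $\widehat{G}$ yields an equation of the form $\bigl(\widehat{L}_{-1}^U-\theta_{-1}\bigr)\cdot\tau_U=0$, where $\theta_{-1}$ accumulates from the BCH central contributions $[\widehat{J}_k,\widehat{J}_{-k}]=k$ and from the Virasoro central charge $[\widehat{L}_k,\widehat{L}_{-k}]\ni(k^3-k)/12$ as $\widehat{G}$ passes through the string operator. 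Reading off $\theta_{-1}$ with the coefficients $u_k,a_k$ determined by $V_0'(f(z))=V'(z)$ in (\ref{ffun1}) and by (\ref{uk}), and assembling via elementary symmetric function identities, should then reproduce the claimed formula directly.
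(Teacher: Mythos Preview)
Your proposal does not actually prove the lemma: you outline two strategies and complete neither. The first approach (extracting $\theta_{-1}$ from $[z^{-1}]\Psi_U$) is abandoned because, as you correctly note, the $\hbar$-bookkeeping with $R$ is delicate and you do not carry out the cancellations. The second approach (conjugating the monomial string equation by $\widehat{G}$ and collecting the central terms) ends with ``should then reproduce the claimed formula,'' with no computation. In both cases the hard step---the one that would actually produce the two terms $\prod_k(-w_k)$ and $\sum_k w_k\prod_{j\neq k}(w_j-w_k)$---is not done.

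More importantly, both routes are unnecessarily circuitous. The constant $\theta_{-1}$ is not something to be \emph{discovered} from the string equation; it is \emph{defined} by the residue formula (\ref{thet}), and the lemma is a direct evaluation of that residue for the specific potential $U=\hbar^{-1}z^n/\prod_k(1-w_kz)$. The paper's proof does exactly this: one integration by parts gives
\[
\theta_{-1}=-\frac{1}{24}\,\res_{z=0}\,\frac{(V^{(3)})^2}{(V'')^3},
\]
and since $V^{(3)}/V''=\partial_z\log U=n/z+\sum_k w_k/(1-w_kz)$ is rational with simple poles, the residue at $0$ is converted to a sum of residues at $\infty$ and at $z=w_k^{-1}$, each of which is read off immediately. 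Your partial-fraction idea is in the right spirit, but applying it directly to the defining residue---rather than routing through $\Psi_U$ or BCH---is the efficient path.
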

\begin{proof}
With integration by parts the last line of (\ref{thet}) for $m=-1$ can be simplified,
\be
\theta_{-1}=-\frac{1}{24}\res_{z=0} \frac{(V^{(3)})^2}{(V'')^3}.
\ee
For the deformation of the monomial potential one has 
\be
\frac{V^{(3)}}{V''}=\frac{\p}{\p z} \log U =\frac{n}{z} +\sum_{k=1}^{n+1} \frac{w_k}{1-w_k z}.
\ee
Hence we can rewrite the residue as a sum of the residues at $z=\infty$ and $z=\frac{1}{w_k}$,
\be
\res_{z=0} \frac{(V^{(3)})^2}{(V'')^3}=-\left(\res_{z=0}+\sum_{k=1}^{n+1} \res_{z=w_k^{-1}}\right) \frac{\prod_{k=1}^{n+1}(1-w_k z)}{z^n}\left(\frac{n}{z} +\sum_{k=1}^{n+1} \frac{w_k}{1-w_k z}\right)^2.
\ee
The residue at $z=\infty$ yields $\prod_{k=1}^{n+1}(-w_k)$, the residue at $z=w_k^{-1}$ yields $w_k \prod_{j\neq k}(w_k-w_j)$. This completes the proof.
\end{proof}

Let $e_k({\bf w})$ be the elementary symmetric functions of $w_j$,
\be
\sum_{k=0}^{n+1} e_k({\bf w}) z^k =\prod_{j=1}^{n+1}(1+w_iz).
\ee
The simplest Heisenberg-Virasoro constraints for the deformation of the monomial potential are given by the operators

\begin{equation}
\begin{split}\label{Stringd}
\widehat{J}_{1}^U&:=\sum_{k=0}^\infty \frac{h_k({\bf w})}{k+n+1} \frac{\p}{\p t_{k+n+1}},\\
\widehat{L}_{-1}^U&:= \widehat{L}_{-n-1}-e_1({\bf w})\widehat{L}_{-n}+\dots \pm e_{n+1}({\bf w})\widehat{L}_{0}-\frac{1}{\hbar}\frac{\p}{\p t_1} + \theta_{-1}.
\end{split}
\end{equation}
\begin{remark}
Using the KS operators it is also possible to construct a complete family of the $W^{(n+1)}$ constraints for the deformed potential.
\end{remark}

%%%%%%%%%%%%%%%%%%%%%%%%%%%%%%%%%

\subsection{Sato's group element}\label{czero}

In this section we assume that
 \be
 c_1=0
 \ee 
 or equivalently, that at least one of the $w_k$ vanishes. In this case the deformed GKM simplifies, in particular,  Lemma \ref{lemma_onpsi} yields
 \be
 R(z)=\tilde{R}(z)
 \ee
and the canonical pair of the KS operators is given by
\begin{equation}\label{PQredd}
\begin{split}
{\mathtt P}_U&=\frac{1}{\hbar}\left(V'(\tilde{\mathtt Q}_U)-{R}'(\tilde{\mathtt Q}_U)-{\mathtt X}_U\right),\\
{\mathtt Q}_U&=\tilde{\mathtt Q}_U-c_2 {\mathtt P}_U.
\end{split}
\end{equation}
The modified quantum spectral curve is given by
\be
\left(V'(\hat{y})-{R}'(\hat{y})-\hat{x} \right)\cdot \tilde{\Psi}_U=0
\ee
with the semi-classical limit
\be
V'(y)-\left.{R}'(y)\right|_{\hbar=0}=x.
\ee

Let
\be\label{Boper}
{\mathtt B}_U:= \left<e^{\frac{1}{\hbar}R(\varphi+z)} e^{\varphi \frac{\p}{\p z}}\right>_U e^{-\frac{c_2}{2}\frac{\p^2}{\p z^2}}.
\ee
At ${\bf w}={\bf 0}$ operator ${\mathtt B}_U$ reduces to the undeformed Sato's group element $\left< e^{\varphi \frac{\p}{\p z}}\right>_{U_0}$ (see Lemma \ref{lemma_Gel}.) The following operator identities hold true  for operators (\ref{PQredd}):
\begin{lemma}\label{lemmaBasG}
\begin{equation}
\begin{split}
{\mathtt P}_U\,{\mathtt B}_U&={\mathtt B}_U\, \frac{\p}{\p z},\\
{\mathtt Q}_U\,{\mathtt B}_U&={\mathtt B}_U\, z.
\end{split}
\end{equation}
\end{lemma}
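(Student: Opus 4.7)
The idea is to write ${\mathtt B}_U = \tilde{\mathtt B}_U\, e^{-\frac{c_2}{2}\frac{\p^2}{\p z^2}}$ where $\tilde{\mathtt B}_U := \langle e^{R(\varphi+z)/\hbar}e^{\varphi \frac{\p}{\p z}}\rangle_U$, so that $\tilde{\mathtt B}_U$ acts on a test function $f$ by $\tilde{\mathtt B}_U\cdot f(z) = \langle e^{R(\varphi+z)/\hbar}f(z+\varphi)\rangle_U$. The first step is to establish the two intermediate identities
\[
{\mathtt P}_U\,\tilde{\mathtt B}_U \;=\; \tilde{\mathtt B}_U\,\tfrac{\p}{\p z},\qquad \tilde{\mathtt Q}_U\,\tilde{\mathtt B}_U \;=\; \tilde{\mathtt B}_U\,z.
\]
The second step is to conjugate by $e^{-\frac{c_2}{2}\frac{\p^2}{\p z^2}}$ and combine with the relation ${\mathtt Q}_U = \tilde{\mathtt Q}_U - c_2 {\mathtt P}_U$ from (\ref{PQredd}) to assemble the two claims of the lemma.

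For the $\tilde{\mathtt Q}_U$ identity I will first prove the auxiliary rule $\tilde{\mathtt Q}_U\cdot\langle F(\varphi+z)\rangle_U = \langle (\varphi+z)F(\varphi+z)\rangle_U$ for any formal series $F$. This follows by computing $\tfrac{\p}{\p z}\langle F\rangle_U$ directly from (\ref{aver}) and performing a single integration by parts in $\varphi$: the identity $V'' = \hbar U$ guarantees that the terms coming from differentiating the prefactor $\sqrt{U/(2\pi)}\,e^{V/\hbar}$ recombine with the $V'(\varphi+z)$-terms into the exact shape required by (\ref{Qdef}). This is the same mechanism that underlies (\ref{Qtildac}). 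Specializing to $F = e^{R(\varphi+z)/\hbar}f(\varphi+z)$ then yields $\tilde{\mathtt Q}_U\,\tilde{\mathtt B}_U = \tilde{\mathtt B}_U\,z$.

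For the ${\mathtt P}_U$ identity I will instead differentiate the full integrand $e^{R(\varphi+z)/\hbar}f(z+\varphi)\,e^{(\varphi V'(z)-V(\varphi+z))/\hbar}$ with respect to $\varphi$ and integrate. Because we work with asymptotic expansions along the steepest-descent contours $\tilde\gamma(U_0)$ and $R\in{\mathbf K}^*[[z,\hbar]]$ is a formal perturbation, the boundary contribution vanishes order by order. The resulting identity reads
\[
\tfrac{1}{\hbar}\langle(V'(\varphi+z)-R'(\varphi+z)-V'(z))\,e^{R/\hbar}f(z+\varphi)\rangle_U \;=\; \langle e^{R/\hbar}f'(z+\varphi)\rangle_U.
\]
Iterating the auxiliary rule shows $g(\tilde{\mathtt Q}_U)\cdot\langle F\rangle_U = \langle g(\varphi+z)F\rangle_U$ for any formal power series $g$, so in view of (\ref{PQredd}) together with ${\mathtt X}_U = V'(z)$ the left-hand side is precisely ${\mathtt P}_U\,\tilde{\mathtt B}_U\cdot f$, while the right-hand side is $\tilde{\mathtt B}_U\,\tfrac{\p}{\p z}\cdot f$.

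To finish, the P-identity lifts immediately to ${\mathtt P}_U\,{\mathtt B}_U = {\mathtt B}_U\,\tfrac{\p}{\p z}$ since $\tfrac{\p}{\p z}$ commutes with $e^{-\frac{c_2}{2}\frac{\p^2}{\p z^2}}$. For the Q-identity, the commutator expansion $e^{\frac{c_2}{2}\frac{\p^2}{\p z^2}}\,z\,e^{-\frac{c_2}{2}\frac{\p^2}{\p z^2}} = z + c_2\tfrac{\p}{\p z}$ terminates because $\bigl[\tfrac{\p^2}{\p z^2},\bigl[\tfrac{\p^2}{\p z^2},z\bigr]\bigr] = 0$; it yields $\tilde{\mathtt Q}_U\,{\mathtt B}_U = {\mathtt B}_U\bigl(z+c_2\tfrac{\p}{\p z}\bigr)$, and subtracting $c_2$ times the P-identity already proved produces ${\mathtt Q}_U\,{\mathtt B}_U = {\mathtt B}_U\,z$ via ${\mathtt Q}_U = \tilde{\mathtt Q}_U - c_2 {\mathtt P}_U$. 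The only delicate point — and the main potential obstacle — is the rigorous justification of the integration-by-parts manipulations at the level of asymptotic expansions; this is routine given the formal nature of the parameters $\hbar$ and $\mathbf w$, and all boundary contributions vanish order by order.
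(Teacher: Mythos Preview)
Your proof is correct and follows essentially the same route as the paper: both arguments reduce to the single integration-by-parts identity
\[
\Bigl\langle e^{R(\varphi+z)/\hbar}\Bigl(\tfrac{\p}{\p z}+\tfrac{V'(z)-V'(z+\varphi)+R'(z+\varphi)}{\hbar}\Bigr)e^{\varphi\p_z}\Bigr\rangle_U=0,
\]
and then combine the $\tilde{\mathtt Q}_U$- and ${\mathtt P}_U$-relations via ${\mathtt Q}_U=\tilde{\mathtt Q}_U-c_2{\mathtt P}_U$. The only organizational difference is that you first strip off the Gaussian factor $e^{-\frac{c_2}{2}\p_z^2}$ and prove the intermediate identities $\tilde{\mathtt Q}_U\tilde{\mathtt B}_U=\tilde{\mathtt B}_U z$ and ${\mathtt P}_U\tilde{\mathtt B}_U=\tilde{\mathtt B}_U\p_z$, whereas the paper keeps the factor in place and obtains $\tilde{\mathtt Q}_U{\mathtt B}_U={\mathtt B}_U(z+c_2\p_z)$ directly; the two are equivalent after the conjugation you describe.
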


\begin{proof} 
From the definition (\ref{aver}) we have
\be
\tilde{\mathtt Q}_U \left< \dots \right>_U=\left<\left(\frac{1}{U(z)}\left(\frac{\p}{\p z}+ \frac{V'(z)-V'(z+\varphi)}{\hbar }\right) +z +\varphi \right) \dots \right>_U,
\ee
hence
\begin{equation}
\begin{split}
&\tilde{\mathtt Q}_U{\mathtt B}_U-{\mathtt B}_U \left(z+c_2\frac{\p}{\p z}\right)
=\tilde{\mathtt Q}_U  \left<e^{\frac{1}{\hbar}R(\varphi+z)} e^{\varphi \frac{\p}{\p z}}\right>_U e^{-\frac{c_2}{2}\frac{\p^2}{\p z^2}}
%\left(\frac{1}{U(z)}\left(\frac{\p}{\p z}+ \frac{V'(z)-V'(z+\varphi)}{\hbar }\right) +z +\varphi \right)e^{\frac{1}{\hbar}R(\varphi+z)} e^{\varphi \frac{\p}{\p z}} 
-  \left<e^{\frac{1}{\hbar}R(\varphi+z)} e^{\varphi \frac{\p}{\p z}} z\right>_U e^{-\frac{c_2}{2}\frac{\p^2}{\p z^2}} \\
&=\frac{1}{U(z)}\left< e^{\frac{1}{\hbar}R(\varphi+z)}  \left(\frac{\p}{\p z}+\frac{V'(z)- V'(z+\varphi) +R'(z+\varphi)}{\hbar}\right)e^{\varphi \frac{\p}{\p z}}\right>_{U} e^{-\frac{c_2}{2}\frac{\p^2}{\p z^2}}\\
&=0,
\end{split}
\end{equation}
where in the last line we have used the integration by parts in (\ref{Boper}). Using this relation we have
\begin{equation}
\begin{split}
{\mathtt P}_U\,{\mathtt B}_U-{\mathtt B}_U\, \frac{\p}{\p z}& 
=-\frac{1}{\hbar}{\mathtt X}_U{\mathtt B}_U +{\mathtt X}_U\, \left( \frac{V'(z+c_2\frac{\p}{\p z}) -R'(z+c_2\frac{\p}{\p z})}{\hbar}-\frac{\p}{\p z}\right)\\
&=-\left< e^{\frac{1}{\hbar}R(\varphi+z)}  \left(\frac{\p}{\p z}+\frac{V'(z)- V'(z+\varphi)+R'(z+\varphi)}{\hbar}\right)e^{\varphi \frac{\p}{\p z}}\right>_{U} e^{-\frac{c_2}{2}\frac{\p^2}{\p z^2}}\\
&=0,
\end{split}
\end{equation}
where the last line again follows from the integration by parts. The second identity of Lemma follows from linear combination of these two, this completes the proof.
\end{proof}

We see, that ${\mathtt B}_U$ defines the canonical KS operators for ${\mathcal W}_U$ by (\ref{PQdef}). Let us prove that  ${\mathtt X}_U$ is indeed the Sato group operator.
\begin{proposition}\label{Th_BasG}
Operator ${\mathtt B}_U$ is a Sato group operator for the deformed potential:
\be
{\mathtt G}_U={\mathtt B}_U.
\ee
\end{proposition}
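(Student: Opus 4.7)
The strategy is to invoke Lemma \ref{l_GuniQ}, which characterizes Sato's group element $\mathtt{G}_{\mathcal W}$ uniquely as the operator $\mathtt{B} = \sum_{k\geq 0}\mathtt{B}_k \frac{\p^k}{\p z^k}$ whose zeroth-order coefficient is the wave function and which satisfies $\mathtt{Q}_{\mathcal W}\mathtt{B} = \mathtt{B} z$. Applied to ${\mathcal W} = {\mathcal W}_U$, one of these two hypotheses is already established for $\mathtt{B}_U$: the second identity of Lemma \ref{lemmaBasG} states exactly $\mathtt{Q}_U\mathtt{B}_U = \mathtt{B}_U z$. The entire proof thus reduces to verifying that the $k=0$ coefficient of $\mathtt{B}_U$ in its expansion as a formal series in $\frac{\p}{\p z}$ is the wave function $\Psi_U$.

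To extract this coefficient, I would simply apply $\mathtt{B}_U$ to the constant function $1$, since all operators of order $\geq 1$ in $\frac{\p}{\p z}$ annihilate constants. The rightmost factor $e^{-\frac{c_2}{2}\frac{\p^2}{\p z^2}}$ then acts as the identity, and inside the average in (\ref{Boper}) the shift operator $e^{\varphi \frac{\p}{\p z}}$ also fixes $1$. What remains is
\[
\mathtt{B}_U \cdot 1 = \left<e^{\frac{1}{\hbar}R(\varphi+z)}\right>_U,
\]
which is precisely the integral representation (\ref{Waved}) of $\Psi_U$. Therefore $\mathtt{B}_{U,0} = \Psi_U$, and Lemma \ref{l_GuniQ} yields $\mathtt{B}_U = \mathtt{G}_U$.

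The only mild technicality worth noting is that $\mathtt{B}_U$ must be a legitimate element of $\mathcal{D}$ with coefficients of the appropriate formal type in ${\mathbf K}$; this is transparent from the definition, since expanding $e^{\varphi\frac{\p}{\p z}}$ and $e^{-\frac{c_2}{2}\frac{\p^2}{\p z^2}}$ as power series in $\frac{\p}{\p z}$ and averaging produces a well-defined series $\sum_k \mathtt{B}_{U,k} \frac{\p^k}{\p z^k}$ with $\mathtt{B}_{U,k} \in \mathbb{C}((z^{-1}))[[{\mathbf w},\hbar]]$. I anticipate no serious obstacle here: the bulk of the computational work has been absorbed into Lemma \ref{lemmaBasG}, and the remaining step is a one-line evaluation that matches the wave function to the leading coefficient of $\mathtt{B}_U$.
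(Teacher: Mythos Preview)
Your proof is correct and follows essentially the same approach as the paper: both invoke Lemma~\ref{l_GuniQ} after checking that the zeroth coefficient of $\mathtt{B}_U$ equals $\Psi_U$ (your application to the constant $1$ is exactly the comparison of (\ref{Waved}) and (\ref{Boper}) that the paper cites) and that $\mathtt{Q}_U\mathtt{B}_U=\mathtt{B}_Uz$ from Lemma~\ref{lemmaBasG}.
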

\begin{proof}
Consider the expansion ${\mathtt B}_U=\sum_{k=0}^\infty {\mathtt B}_k \frac{\p^k}{\p z^k}$. From the comparison of (\ref{Waved}) and (\ref{Boper}) we see that 
\be\label{inc}
{\mathtt B}_0=\Psi_U.
\ee 
The statement of the proposition follows from the initial condition (\ref{inc}) and Lemmas \ref{lemmaBasG} and \ref{l_GuniQ}.
\end{proof}

Let $\He_k(z)$ be the Hermite polynomials
\be
\He_k(z):=e^{-\frac{1}{2}\frac{\p^2}{\p z^2}} \cdot z^k.
\ee
Then the following corollary immediately follows
\begin{corollary}
For the deformed potential with $c_1=0$ the distinguished basis is given by
\be
\check{\Phi}_k^U=  \left<e^{\frac{1}{\hbar}R(\varphi+z)} c_2^{(k-1)/2}\He_{k-1} \left(\frac{z+\varphi}{\sqrt{c_2}}\right)\right>_U.
\ee
\end{corollary}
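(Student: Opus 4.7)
The plan is to derive the formula as a direct application of Proposition~\ref{Th_BasG} together with the definition (\ref{canbas}) of the distinguished basis, $\check{\Phi}_k^U = {\mathtt G}_U \cdot z^{k-1}$. Since Proposition~\ref{Th_BasG} identifies ${\mathtt G}_U = {\mathtt B}_U$, where ${\mathtt B}_U$ is given by (\ref{Boper}), the task reduces to evaluating ${\mathtt B}_U \cdot z^{k-1}$ explicitly.

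First, I would compute the action of the Gaussian-like factor $e^{-\frac{c_2}{2}\frac{\p^2}{\p z^2}}$ on $z^{k-1}$. By the definition $\He_k(z) = e^{-\frac{1}{2}\frac{\p^2}{\p z^2}} \cdot z^k$ given just before the corollary, a simple rescaling $z \mapsto \sqrt{c_2}\, u$ converts $e^{-\frac{c_2}{2}\frac{\p^2}{\p z^2}}$ into $e^{-\frac{1}{2}\frac{\p^2}{\p u^2}}$, yielding
\[
e^{-\frac{c_2}{2}\frac{\p^2}{\p z^2}} \cdot z^{k-1} = c_2^{(k-1)/2}\, \He_{k-1}\!\left(\frac{z}{\sqrt{c_2}}\right).
\]

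Next, I would apply the shift operator $e^{\varphi \frac{\p}{\p z}}$ inside the bracket $\langle \cdot \rangle_U$, which simply translates $z \mapsto z+\varphi$ in the Hermite polynomial. Combining these two steps with the $e^{\frac{1}{\hbar}R(\varphi+z)}$ factor in (\ref{Boper}) gives
\[
\check{\Phi}_k^U = {\mathtt B}_U \cdot z^{k-1} = \left\langle e^{\frac{1}{\hbar}R(\varphi+z)}\, c_2^{(k-1)/2}\, \He_{k-1}\!\left(\frac{z+\varphi}{\sqrt{c_2}}\right)\right\rangle_U,
\]
which is exactly the formula claimed.

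There is no genuine obstacle here: the nontrivial content has already been done in Proposition~\ref{Th_BasG} (the identification of ${\mathtt B}_U$ with the Sato group element) and Lemma~\ref{Gtocan}/Definition~\ref{canbas} (which characterize the distinguished basis as ${\mathtt G}_U \cdot z^{k-1}$). The only thing worth double-checking is that the formal series interpretation is consistent: the Hermite polynomial expansion is finite for each $k$, and the bracket $\langle \cdot \rangle_U$ is understood as the asymptotic expansion (\ref{aver}), so no convergence issues arise. The rescaling identity for Hermite polynomials is the only genuine computation, and it is routine.
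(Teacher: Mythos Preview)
Your proposal is correct and follows exactly the route the paper intends: the paper states that the corollary ``immediately follows'' from Proposition~\ref{Th_BasG}, and your argument is precisely the unpacking of that immediacy---apply ${\mathtt G}_U={\mathtt B}_U$ to $z^{k-1}$, use the rescaled Hermite identity for the Gaussian factor, then the shift operator. There is nothing to add.
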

In particular, if $c_2=0$
\be\label{canbss}
\check{\Phi}_k^U=  \left<e^{\frac{1}{\hbar}R(\varphi+z)} (z+\varphi)^{k-1}\right>_U.
\ee

\begin{remark}
Because the operators ${\mathtt X}_U$ and ${\tilde {\mathtt Q}}_U$ uniquely define the point of the Sato Grassmannian, (\ref{Gconj}) is true for the deformed GKM. 
Then one can use Proposition \ref{Th_BasG} and Proposition \ref{Kernel_pr} to derive an integral representation for the Cauchy-Baker-Akhieser kernel for the deformed case with $c_1=0$, to therefore get affine coordinates of the point of the Sato Grassmannian. 
\end{remark}

%%%%%%%%%%%%%%%%%%%%%%%%%%%%%%%%%%%%%%%%%%%%%%%%%%%%%%%%%

\subsection{Deformation of the monomial GKM}\label{defmon}

In this section we consider the simplest 1-parameter deformation of the monomial potentials (\ref{MON}):
\be\label{nmondef}
U=\frac{1}{\hbar}\frac{z^n}{1-wz}.
\ee
For this deformation $c_1=0$, so the results of the previous section are applicable. It remains to find $R(z)$.
The KS operators $\tilde{\mathtt Q}_U$ and ${\mathtt X}_U$ are given by
\begin{equation}
\begin{split}
\tilde{\mathtt Q}_U&=z+\hbar\left(\frac{1-wz}{z^n}\frac{\p}{\p z} -\frac{1-wz}{2z^n}\left(\frac{n}{z}+\frac{w}{1-wz}\right)\right),\\
{\mathtt X}_U&=-\frac{1}{w^{n+1}}\left(\log(1-wz)+\sum_{k=1}^n \frac{(wz)^k}{k}\right).
\end{split}
\end{equation}

Let us define
\be
{\mathcal N}:=z^{-1} {\mathcal D}_-  \cap H_-[\hbar, w,\frac{1-wz}{z}\frac{\p}{\p z}].
\ee
It is closed under commutator, $\left[{\mathcal N},{\mathcal N}\right] \subset {\mathcal N}$. Let 
\be
\chi:=w^{n+1}\hbar.
\ee
We  define ${\mathtt \Omega}$ by
\be\label{Rex}
e^{w^{n+1}{\mathtt X}_U+\sum_{k=1}^n \frac{(w\tilde{\mathtt Q}_U)^k}{k}}=e^{-\chi\frac{\p}{\p z}+\delta_{n,1}\hbar\frac{w}{z}\frac{\p}{\p z}-\frac{\chi}{2z}+{\mathtt \Omega}}\,\frac{1}{1-wz}.
\ee

\begin{lemma}
\be
{\mathtt \Omega} \in z^{-1}{\mathcal D}_-.
\ee
\end{lemma}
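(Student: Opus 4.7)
Proof plan: The plan is to reduce everything modulo $z^{-1}\mathcal{D}_-$ and use the Baker-Campbell-Hausdorff (BCH) formula to combine the exponential on the LHS of (\ref{Rex}) with the multiplication operator $(1-wz)$.

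First I expand $\mathtt{M}:=w^{n+1}\mathtt{X}_U+\sum_{k=1}^n(w\tilde{\mathtt Q}_U)^k/k$. Using the explicit form $w^{n+1}\mathtt{X}_U=-\log(1-wz)-\sum_{k=1}^n(wz)^k/k$ and $\tilde{\mathtt Q}_U=z+\hbar(1-wz)z^{-n}\partial_z+\mu$ with $\mu\in z^{-2}\mathbb{C}[[z^{-1}]]$, this becomes
\[
\mathtt{M}=-\log(1-wz)+\sum_{k=1}^n\frac{(w\tilde{\mathtt Q}_U)^k-(wz)^k}{k}.
\]
The bookkeeping of $\partial_z^m$-coefficients modulo $z^{-2}\mathbb{C}[[z^{-1}]]$ reveals that only the linear-in-$\partial_z$ contribution $k(wz)^{k-1}\cdot w\hbar(1-wz)z^{-n}\partial_z$ from the binomial expansion escapes $z^{-2}\mathbb{C}[[z^{-1}]]$; all other contributions, coming from $\mu$, commutators, and higher powers of the differential, produce coefficients in $z^{-2}\mathbb{C}[[z^{-1}]]$. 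Applying the geometric identity
\[
\sum_{k=1}^n w^k\hbar(1-wz)z^{k-n-1}=w\hbar z^{-n}(1-(wz)^n)/(1-wz)\cdot(1-wz)^{-1}\cdot(1-wz)=w\hbar z^{-n}-\chi,
\]
one deduces
\[
\mathtt{M}\equiv -\log(1-wz)-\chi\partial_z+\delta_{n,1}\hbar w\partial_z/z\pmod{z^{-1}\mathcal{D}_-},
\]
since $w\hbar z^{-n}\in z^{-2}\mathbb{C}[[z^{-1}]]$ precisely for $n\geq 2$.

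Next, applying BCH to $e^{\mathtt{M}}\cdot(1-wz)=e^{\mathtt{M}}e^{\log(1-wz)}$, the zeroth-order sum $\mathtt{M}+\log(1-wz)$ cancels the $-\log(1-wz)$ in $\mathtt{M}$ and leaves $-\chi\partial_z+\delta_{n,1}\hbar w\partial_z/z$ plus an element of $z^{-1}\mathcal{D}_-$. For $\tfrac12[\mathtt{M},\log(1-wz)]$ only the differential part of $\mathtt{M}$ contributes, and using $\partial_z\log(1-wz)=-w/(1-wz)$ together with the reduction above one computes
\[
[\mathtt{M},\log(1-wz)]\equiv(w\hbar z^{-n}-\chi)\cdot(-w)/(1-wz)=-w^2\hbar\sum_{j=0}^{n-1}(wz)^jz^{-n}\pmod{z^{-1}\mathcal{D}_-}.
\]
Only the $j=n-1$ term $-w^{n+1}\hbar/z=-\chi/z$ is not in $z^{-2}\mathbb{C}[[z^{-1}]]$, so $\tfrac12[\mathtt{M},\log(1-wz)]\equiv-\chi/(2z)\pmod{z^{-1}\mathcal{D}_-}$.

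Finally, all higher BCH terms are iterated commutators between $\mathtt{M}$ and $-\chi/z+(\text{element of }z^{-1}\mathcal{D}_-)$. The latter is a pure multiplication operator with coefficients in $z^{-1}\mathbb{C}[[z^{-1}]]$, and one checks from $[\partial_z^m,f]=\sum_{k\geq 1}\binom{m}{k}f^{(k)}\partial_z^{m-k}$ that commuting $\mathtt{M}$ with such an $f$ keeps us inside $z^{-1}\mathcal{D}_-$ (since $f^{(k)}\in z^{-k-1}\mathbb{C}[[z^{-1}]]\subset z^{-2}\mathbb{C}[[z^{-1}]]$ for $k\geq 1$). Commutators of $\log(1-wz)$ with a multiplication vanish, so by induction every BCH term of order $\geq 2$ lies in $z^{-1}\mathcal{D}_-$. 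Collecting yields $\log\bigl(e^{\mathtt{M}}(1-wz)\bigr)=-\chi\partial_z+\delta_{n,1}\hbar w\partial_z/z-\chi/(2z)+\mathtt{\Omega}$ with $\mathtt{\Omega}\in z^{-1}\mathcal{D}_-$, proving the lemma. The principal technical difficulty is the telescoping cancellation that produces the clean $-\chi\partial_z$ (and boundary $\hbar w\partial_z/z$ for $n=1$), together with the simplification of the commutator to a pure $-\chi/(2z)$; both rely critically on the specific factor $(1-wz)$ that appears in the deformation of $U$.
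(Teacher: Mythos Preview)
Your overall strategy---apply Baker--Campbell--Hausdorff to $e^{\mathtt M}(1-wz)$ and track the residue modulo $z^{-1}\mathcal D_-$---matches the paper's, and your identification of the zeroth-order term and of $-\chi/(2z)$ from the first commutator is correct. However, the inductive step for the higher BCH terms has a genuine gap.

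The problem is that $z^{-1}\mathcal D_-$ is \emph{not} stable under commutation with $\log(1-wz)$. For instance, $z^{-2}\partial_z^2\in z^{-1}\mathcal D_-$, but
\[
[z^{-2}\partial_z^2,\ \log(1-wz)]=z^{-2}\Bigl(-\tfrac{2w}{1-wz}\partial_z-\tfrac{w^2}{(1-wz)^2}\Bigr),
\]
and the $\partial_z$-coefficient $-2wz^{-2}/(1-wz)=-2w(z^{-2}+wz^{-1}+w^2+w^3z+\cdots)$ is not in $z^{-2}\mathbb C[[z^{-1}]]$. Consequently you cannot first reduce $\mathtt M$ modulo $z^{-1}\mathcal D_-$ and then compute $[\mathtt M,\log(1-wz)]$; the discarded piece can produce terms outside $z^{-1}\mathcal D_-$. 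The same issue recurs at every level of the BCH series, so the claim ``by induction every BCH term of order $\geq 2$ lies in $z^{-1}\mathcal D_-$'' is not justified by the argument you give. (A secondary issue: $[\mathtt M,\log(1-wz)]$ is \emph{not} a pure multiplication operator when $n\geq 2$, since $\mathtt M$ contains $\partial_z^m$ with $m\geq 2$.)

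The paper closes this gap by introducing the smaller space
\[
\mathcal N=z^{-1}\mathcal D_-\cap H_-\bigl[\hbar,w,\tfrac{1-wz}{z}\partial_z\bigr],
\]
i.e.\ polynomials in $D=\tfrac{1-wz}{z}\partial_z$ with coefficients in $H_-$. The point is that $[D,\log(1-wz)]=-w/z\in H_-$ and $D$ preserves $H_-$, so $\mathcal N$ \emph{is} closed under commutation with $\log(1-wz)$ and with $\tilde{\mathtt\Omega}$. One first checks that $\tilde{\mathtt\Omega}$ differs from an explicit multiple of $D$ by an element of $\mathcal N$, and then the BCH induction runs inside $\mathcal N\subset z^{-1}\mathcal D_-$. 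Your computation essentially recovers the explicit leading terms, but the stability argument needs this refinement.
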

\begin{proof}
Let
\be
\tilde {\mathtt \Omega} :=
w^{n+1}{\mathtt X}_U+\sum_{k=1}^n \frac{(w\tilde{\mathtt Q}_U)^k}{k}+\log(1-wz).
\ee
Then a straightforward calculation yields $\tilde {\mathtt \Omega}-\hbar w^n(1+\frac{1-\delta_{n,1}}{wz})\frac{1-wz}{z}\frac{\p}{\p z}  \in {\mathcal N}$. 
Let us apply the Baker-Campbell-Hausdorff formula to the left hand side of (\ref{Rex}).
\begin{equation}
\begin{split}\label{prott}
e^{w^{n+1}{\mathtt X}_U+\sum_{k=1}^n \frac{(w\tilde{\mathtt Q}_U)^k}{k}}&=
e^{\tilde {\mathtt \Omega}-\log(1-wz)}\\
&=e^{\tilde {\mathtt \Omega}+\frac{1}{2}\left[\tilde {\mathtt \Omega},\log(1-wz)\right]+\dots}\,\frac{1}{1-wz},
\end{split}
\end{equation}
where by $\dots$ we denote a combination of the nested commutators of $\log(1-wz)$ and $ \tilde {\mathtt \Omega}$.  
For the simplest commutator we have
\be
\left[\tilde {\mathtt \Omega},\log(1-wz)\right]\in{\mathcal N}- \frac{\chi}{z}.
\ee
 Again, straightforward caclulations show that $\left[{\mathcal N},\log(1-wz)\right] \subset {\mathcal N}$,  $\left[{\mathcal N},\tilde {\mathtt \Omega}\right] \subset {\mathcal N}$,
and
\begin{equation}
\begin{split}
\left[\tilde {\mathtt \Omega},\left[\tilde {\mathtt \Omega},\log(1-wz)\right]\right] &\in {\mathcal N},\\
\left[\log(1-wz),\left[\tilde {\mathtt \Omega},\log(1-wz)\right]\right] &\in {\mathcal N}.
\end{split}
\end{equation}
By induction all higher nested commutators belong to ${\mathcal N}$, and hence to $z^{-1} {\mathcal D}_-$. The statement of the lemma follows from the comparison of (\ref{Rex}) and (\ref{prott}).
\end{proof}

Consider the KS operator
\be
1-w \tilde{\mathtt Q}_U-\frac{\chi w}{2}
=(1-wz)\left(1-\hbar\left(\frac{w}{z^n}\frac{\p}{\p z}-\frac{wn}{2z^{n+1}}-\frac{w^2}{2z^n}\sum_{j=1}^{n-1}(wz)^j\right)\right).
\ee
From (\ref{Rex}) one immediately has
\begin{equation}\label{Slozhnop}
\begin{split}
&e^{w^{n+1}{\mathtt X}_U+\sum_{k=1}^n \frac{(w\tilde{\mathtt Q}_U)^k}{k}}\left(1-w \tilde{\mathtt Q}_U-\frac{\chi w}{2}\right)\\
&=e^{-\chi\frac{\p}{\p z}+\delta_{n,1}\hbar\frac{w}{z}\frac{\p}{\p z}-\frac{\chi}{2z}+\dots}
\left(1-\delta_{n,1}\hbar\frac{w}{z}\frac{\p}{\p z}+\frac{\chi}{2z}+\dots\right)\\
&=e^{-\chi\frac{\p}{\p z}+\dots}
\end{split}
\end{equation}
where by $\dots$ we denote the terms from $z^{-1}{\mathcal D}_-$. Hence, from the uniqueness of the quantum spectral curve operator we have
\be\label{Plog}
{\mathtt P}_U=-\frac{1}{\chi}\log\left(e^{w^{n+1}{\mathtt X}_U+\sum_{k=1}^n \frac{(w\tilde{\mathtt Q}_U)^k}{k}}\left(1-w \tilde{\mathtt Q}_U-\frac{\chi w}{2}\right)\right).
\ee
From the commutation relation between $\tilde{\mathtt Q}_U$ and ${\mathtt X}_U$ it follows, that this operator is indeed of the form given by the first equation of (\ref{PQredd}). We will derive $R'(z)$ from the comparison of these two expressions. 

We start from a simple implication of the Baker-Campbell-Hausdorff formula. 
\begin{lemma}\label{commlem}
If $[a,b]=1$, then for any function $f$
\be
e^{a+f'(b)}= e^{a} e^{f(b)-f(b-1)}
\ee
\end{lemma}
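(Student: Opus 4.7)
The identity is a Baker--Campbell--Hausdorff type splitting, and the natural way to verify it is by a $t$-parameter interpolation. I would introduce the one-parameter family
\begin{equation*}
G(t) := e^{t(a+f'(b))},
\end{equation*}
and look for $F(t,b)$, depending on $t$ and on $b$ only (so that $F$ commutes with itself and with all its $t$-derivatives), such that $G(t) = e^{ta}\, e^{F(t,b)}$. Evaluation at $t=1$ will give the desired identity provided we can show $F(1,b)=f(b)-f(b-1)$.

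The key input is that $[a,b]=1$ makes $a$ act like $\partial/\partial b$, so that for any formal series $g$,
\begin{equation*}
e^{ta}\, g(b)\, e^{-ta} = g(b+t).
\end{equation*}
Differentiating the ansatz gives
\begin{equation*}
G'(t) = a\,G(t) + e^{ta}\,\bigl(\partial_t F(t,b)\bigr)\, e^{F(t,b)}
      = \bigl(a + \partial_t F(t,b+t)\bigr)\,G(t).
\end{equation*}
Comparison with $G'(t)=(a+f'(b))G(t)$ forces the PDE $\partial_t F(t,y)\big|_{y=b+t}=f'(b)$, i.e.\ $\partial_t F(t,y)=f'(y-t)$. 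This integrates (with initial condition $F(0,y)=0$ coming from $G(0)=1$) to
\begin{equation*}
F(t,y)=\int_{y-t}^{y}f'(u)\,du = f(y)-f(y-t),
\end{equation*}
and setting $t=1$ yields the claimed formula.

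The only subtlety, which I would address briefly at the start, is that the manipulations take place in a ring of formal operators (here the ring generated by $\tilde{\mathtt Q}_U$ and ${\mathtt X}_U$, or more abstractly any associative algebra with a derivation), so there are no convergence issues: the identity $e^{ta}g(b)e^{-ta}=g(b+t)$ and the commutativity of $F(t,b)$ with $\partial_t F(t,b)$ hold term by term in the formal expansions. The step that really does the work is recognising that $a$ is a derivation of the commutative subalgebra generated by $b$; once this is in hand, the proof reduces to a one-line ODE integration, so I do not expect a genuine obstacle.
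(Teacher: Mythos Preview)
Your argument is correct: the interpolation $G(t)=e^{t(a+f'(b))}$, the shift identity $e^{ta}g(b)e^{-ta}=g(b+t)$, the ODE $\partial_tF(t,y)=f'(y-t)$, and its integration all go through exactly as you describe, and there is no hidden obstacle in the formal setting.

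The paper, however, reaches the same identity by a shorter conjugation trick rather than an interpolation. It simply observes that
\[
e^{a}\,e^{f(b)-f(b-1)} \;=\; e^{-f(b)}\,e^{a}\,e^{f(b)} \;=\; e^{\,e^{-f(b)}a\,e^{f(b)}} \;=\; e^{a+f'(b)}.
\]
The first equality uses $e^{a}e^{f(b)}=e^{f(b+1)}e^{a}$ (the shift by $1$), the second is conjugation of an exponential, and the third uses $e^{-f(b)}a\,e^{f(b)}=a+[a,f(b)]=a+f'(b)$, since $[a,b]=1$ makes $[a,\,\cdot\,]$ the formal derivative in $b$ and the higher nested commutators vanish. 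So the paper's proof is literally one line, exploiting that the conjugated exponent can be computed exactly. Your $t$-parameter method is a perfectly good alternative---it is the standard BCH-splitting technique and would generalise more readily to situations where the conjugation does not close so neatly---but here it is working harder than necessary.
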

\begin{proof}
\be
e^{a} e^{f(b)-f(b-1)}=e^{-f(b)}e^{a}e^{f(b)}=e^{e^{-f(b)}a e^{f(b)}}=e^{a+f'(b)}.
\ee
\end{proof}

Let
\be
F(z):=\frac{1}{w^{n+1}}\sum_{k=1}^\infty \frac{\left(-\chi w\right)^k}{k}B_k\left(\frac{1}{2}-\frac{z}{\chi}\right),
\ee
where $B_n$ are the Bernoulli polynomials, satisfying
\be
B_n(x+1)-B_n(x)=n x^{n-1}.
\ee
\begin{proposition}\label{Theor1}
For the deformation of the monomial potential (\ref{nmondef}) the quantum spectral curve operator is given by
\be
{\mathtt P}_U=\frac{1}{\hbar}\left(F(\tilde{\mathtt Q}_U)-\frac{1}{w^{n+1}}\sum_{k=1}^n \frac{(w\tilde{\mathtt Q}_U)^k}{k}-{\mathtt X}_U\right).
\ee
\end{proposition}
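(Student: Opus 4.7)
The plan is to derive the proposition directly from the logarithm representation (\ref{Plog}) by applying Lemma \ref{commlem} to disentangle the exponential inside the log. Write $G(z) := \sum_{k=1}^n (wz)^k/k$ so that the claimed formula for ${\mathtt P}_U$ is equivalent to the operator identity
\[
e^{w^{n+1}{\mathtt X}_U + G(\tilde{\mathtt Q}_U) - w^{n+1}F(\tilde{\mathtt Q}_U)} = e^{w^{n+1}{\mathtt X}_U + G(\tilde{\mathtt Q}_U)}\Bigl(1 - w\tilde{\mathtt Q}_U - \tfrac{\chi w}{2}\Bigr).
\]
Taking $-\chi^{-1}\log$ of both sides reproduces (\ref{Plog}), and uniqueness of the quantum spectral curve operator within $\frac{\partial}{\partial z}+{\mathcal D}_-$ (from Lemmas \ref{uniquecan} and \ref{deforlemma}) means it suffices to verify this operator identity.

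To apply Lemma \ref{commlem}, I would set $\mathtt{a} := w^{n+1}{\mathtt X}_U + G(\tilde{\mathtt Q}_U)$ and $\mathtt{b} := -\tilde{\mathtt Q}_U/\chi$. Since $G(\tilde{\mathtt Q}_U)$ commutes with $\mathtt{b}$ and the canonical commutation $[\tilde{\mathtt Q}_U,{\mathtt X}_U]=\hbar$ holds on $\mathcal{W}_U$, one gets $[\mathtt{a},\mathtt{b}] = -\tfrac{w^{n+1}}{\chi}[{\mathtt X}_U,\tilde{\mathtt Q}_U] = \tfrac{w^{n+1}\hbar}{\chi} = 1$. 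The lemma then yields $e^{\mathtt{a}+f'(\mathtt{b})} = e^{\mathtt{a}} e^{f(\mathtt{b})-f(\mathtt{b}-1)}$, so matching with the target identity above reduces the problem to solving the scalar functional equation
\[
f(\mathtt{b}) - f(\mathtt{b}-1) = \log\bigl(1 + w\chi(\mathtt{b} - \tfrac12)\bigr),
\]
and then checking that the resulting $f'(\mathtt{b})$ equals $-w^{n+1}F(\tilde{\mathtt Q}_U)$.

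The solution of this functional equation is the key technical step and uses the defining property $B_{k+1}(x+1) - B_{k+1}(x) = (k+1)x^k$ of Bernoulli polynomials with the shift $x = \mathtt{b} - 1/2$: the function $g_k(\mathtt{b}) := \frac{1}{k+1}B_{k+1}(\mathtt{b}+1/2)$ satisfies $g_k(\mathtt{b}) - g_k(\mathtt{b}-1) = (\mathtt{b}-1/2)^k$. Expanding $\log\bigl(1+w\chi(\mathtt{b}-1/2)\bigr) = \sum_{k\geq 1}\frac{(-1)^{k-1}}{k}(w\chi)^k(\mathtt{b}-1/2)^k$ and summing termwise gives
\[
f(\mathtt{b}) = \sum_{k=1}^\infty \frac{(-1)^{k-1}(w\chi)^k}{k(k+1)} B_{k+1}(\mathtt{b}+1/2), \qquad f'(\mathtt{b}) = \sum_{k=1}^\infty \frac{(-1)^{k-1}(w\chi)^k}{k} B_k(\mathtt{b}+1/2).
\]
Substituting $\mathtt{b} = -\tilde{\mathtt Q}_U/\chi$ and using the reflection identity $B_k(1-x) = (-1)^k B_k(x)$ to convert $B_k(1/2+\tilde{\mathtt Q}_U/\chi)$ into $B_k(1/2-\tilde{\mathtt Q}_U/\chi)$ produces $f'(\mathtt{b}) = -w^{n+1}F(\tilde{\mathtt Q}_U)$ exactly, completing the verification. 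The main obstacle is the Bernoulli bookkeeping: ensuring the correct shift inside $B_k$ and reconciling the sign conventions between $(-\chi w)^k$ and $(w\chi)^k$ via the reflection identity; once these are handled, the rest is a direct application of Lemma \ref{commlem} and the uniqueness of ${\mathtt P}_U$.
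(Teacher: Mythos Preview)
Your approach is essentially the same as the paper's: both apply Lemma~\ref{commlem} with $a = w^{n+1}{\mathtt X}_U + G(\tilde{\mathtt Q}_U)$ to reduce the operator identity behind (\ref{Plog}) to a scalar difference equation solved by Bernoulli polynomials. The paper takes $b = -\tilde{\mathtt Q}_U/\chi - 1/2$ so that $1 - w\tilde{\mathtt Q}_U - \chi w/2 = 1 + \chi w b$, while you take $\mathtt{b} = -\tilde{\mathtt Q}_U/\chi$; this half-integer shift is the only structural difference, and it just moves the $1/2$ into the argument of the Bernoulli polynomials rather than into the definition of $b$.

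One small correction: the reflection identity is not needed. Substituting $\mathtt{b} = -\tilde{\mathtt Q}_U/\chi$ into $B_k(\mathtt{b}+1/2)$ gives $B_k(1/2 - \tilde{\mathtt Q}_U/\chi)$ directly, which already matches the argument in the definition of $F$; the sign reconciliation between $(-\chi w)^k$ and $(-1)^{k-1}(w\chi)^k$ is purely algebraic. The paper does add one point you omit: the difference equation determines $f$ only up to a periodic function (hence $f'$ up to an additive constant), and the paper fixes this ambiguity by checking the $w=0$ limit. In your framing, since you are verifying the claimed formula against the already-established (\ref{Plog}) rather than deriving $R'$ from scratch, this step is implicitly absorbed, but it is worth noting that the Bernoulli series you write down is a particular solution, not the general one.
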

\begin{proof}
We apply Lemma \ref{commlem}. Namely, let $a:=w^{n+1} {\mathtt X}_U +\sum_{k=1}^n \frac{(w\tilde{\mathtt Q}_U)^k}{k}$ and $b:=-\frac{\tilde{\mathtt Q}_U}{\chi}-\frac{1}{2}$. Then $[a,b]=1$ and we can rewrite the left hand side of (\ref{Slozhnop}) as
\be
e^a(1+\chi wb)=e^a\,  e^{\sum_{k=1}^\infty (-1)^{k+1}\frac{(\chi w b)^k}{k}}.
\ee
To apply Lemma \ref{commlem} we have to solve the difference equation
\be
e^{f(b)-f(b-1)}=e^{\sum_{k=1}^\infty (-1)^{k+1}\frac{(\chi w b)^k}{k}}.
\ee
The solution is given by a sum of the Bernoulli polynomials: 
\be
 f(b)=\sum_{k=1}^\infty (-1)^{k+1}\frac{(\chi w)^k}{k(k+1)} B_{k+1}(b+1) +2\pi i m x
\ee
for some $m \in {\mathbb Z}$. Note that the series $f(x)$ is unique up to any periodic function $f(x)$ such that $f(x) = f(x + 1)$. However, if we require the coefficient of each power of $w$
to be a polynomial in $x$, we can only add a linear term. To find $f'$ one can apply $B'_k(x)=kB_{k-1}(x)$:
\be
f'(b)=\sum_{k=1}^\infty (-1)^{k+1}\frac{(\chi w )^k}{k} B_{k}(b+1)+2 \pi i m.
\ee
From the comparison of the result with case $w=0$ we conclude, that the linear term must vanish, $m=0$. Hence
\be\label{Pmon}
{R}'(z)=-F(z)-\frac{1}{w^{n+1}}\log(1-wz).
\ee
This completes the proof of the proposition.
\end{proof}

Using (\ref{Plog}) we can write the equation for the modified wave function as (\ref{pochqq}).
The classical spectral curve
\be
- w^{-n-1}\left( \log \left(1-w y\right)+  \sum_{k=1}^n\frac{w^k}{k} y^k\right)=x
\ee
reduces to (\ref{rspinc}) at $w=0$. Being exponentiated, it can be rewritten as
\be
e^{\sum_{k=1}^n\frac{w^k}{k}{y}^k}\left(1-w{y}\right)=e^{-w^{n+1}x}.
\ee
\begin{remark}
These classical and quantum spectral curves are similar to the curves for simple Hurwitz numbers \cite{ZhouQSC} and r-spin Hurwitz numbers \cite{MSS} (for $r=n+1>2$). In particular,  for $n=1$, after change of variables $x\mapsto -(1+x)/w^2$, $y\mapsto (1-y)/w$ the classical spectral curve coincides with the Lambert curve $x=y^{-y}$.
\end{remark}

Let us consider integral representation  (\ref{Waved}) of the wave function. Namely, from Proposition \ref{Theor1} we see that
\be
\frac{1}{\hbar}\left(R(\varphi)-V(\varphi)\right)=\sum_{k=1}^\infty \frac{(-\chi w)^k}{k(k+1)}B_{k+1}\left(\frac{1}{2}-\frac{\varphi}{\chi}\right)+\frac{1}{\chi}\sum_{k=1}^n \frac{w^k\varphi^{k+1}}{k(k+1)}+\dots,
\ee
where by $\dots$ we denote terms independent of $\varphi$.
Comparing it to Stirling's expansion of the gamma function
\be\label{Gammaas}
\Gamma(z+1/2-n)=\sqrt{2\pi}z^{z-n}e^{-z}e^{\sum_{k=1}^\infty \frac{B_{k+1}(n+1/2)}{(k+1)kz^k}},
\ee
valid for large values of $|z|$ with $|\arg(z)|<\pi$ and an arbitrary finite complex number $n$, we conclude that
\be
e^{\frac{1}{\hbar}\left(R(\varphi)-V(\varphi)\right)}=\alpha\, e^{\frac{1}{\chi}\left(\sum_{k=1}^n \frac{w^k\varphi^{k+1}}{k(k+1)}+\varphi\log(-\chi w)\right)}\frac{\Gamma\left(\frac{1}{2}+\frac{\varphi}{\chi}-\frac{1}{\chi w}\right)}{\Gamma\left(\frac{1}{2}-\frac{1}{\chi w}\right)}.
\ee
So, the wave function is given by
\be\label{QRT}
\Psi_U=\alpha \frac{e^{-\frac{S_0}{\hbar}-S_1}}{\sqrt{2 \pi \hbar}}\int_{\gamma(U_0)}d \varphi \, e^{\frac{1}{\hbar}V'(z)\varphi+\frac{1}{\chi}\left(\sum_{k=1}^n \frac{w^k\varphi^{k+1}}{k(k+1)}+\varphi\log(-\chi w)\right)}\frac{\Gamma\left(\frac{1}{2}+\frac{\varphi}{\chi}-\frac{1}{\chi w}\right)}{\Gamma\left(\frac{1}{2}-\frac{1}{\chi w}\right)}.
\ee
For $n>1$ the distingushed basis is given by (\ref{canbss}):
\be\label{disb_dd}
\check{\Phi}^U_k=\alpha \frac{e^{-\frac{S_0}{\hbar}-S_1}}{\sqrt{2 \pi \hbar}}\int_{\gamma(U_0)}d \varphi \,\varphi^{k-1}\, e^{\frac{1}{\hbar}V'(z)\varphi+\frac{1}{\chi}\left(\sum_{k=1}^n \frac{w^k\varphi^{k+1}}{k(k+1)}+\varphi\log(-\chi w)\right)}\frac{\Gamma\left(\frac{1}{2}+\frac{\varphi}{\chi}-\frac{1}{\chi w}\right)}{\Gamma\left(\frac{1}{2}-\frac{1}{\chi w}\right)}.
\ee
The case with $n=1$ will be considered in Section \ref{S_DifK1}.

Simplest operators of the Heisenberg-Virasoro constraints (\ref{Stringd}) for deformation (\ref{nmondef}) are given by
\begin{equation}
\begin{split}\label{strings}
\widehat{J}^U_1&=\sum_{k=n+1}^\infty \frac{w^{k-n-1}}{k}\frac{\p}{\p t_{k}},\\
\widehat{L}_{-1}^U&= \widehat{L}_{-n-1}-w\widehat{L}_{-n}-\frac{1}{\hbar}\frac{\p}{\p t_1} -\frac{w^{n+1}}{24}.
\end{split}
\end{equation}
They annihilate the tau-function $\tau_U$.

%%%%%%%%%%%%%%%%%%%%%%%%%%%%%%%%%%%%%%%%%%%%%%%%%

\subsection{Dilaton equation}

In addition to the Heisenberg-Virasoro constraints tau-function of the deformed GKM $\tau_U$ and associated point of the Sato Grassmannian ${\mathcal W}_U$ satisfy some other, independent constraints. Let
\be
E_w:=\sum_{j=1}^{n+1}w_j\frac{\p}{\p w_j}
\ee
be the Euler operator. Let us introduce operators
\begin{equation}
\begin{split}\label{genks}
\Theta_1&=z\frac{\p}{\p z} - E_w + \sum_{j=1}^{n-1} (n-j) b_{j}\frac{\p}{\p b_{j}}+\frac{n+2}{\hbar }(zV'(z)-V(z)),\\
\Theta_2&=\hbar \frac{\p}{\p \hbar}+\frac{1}{\hbar}(V(z)-zV'(z))
\end{split}
\end{equation}
with $\left[\Theta_1,\Theta_2\right]=0$. We have
\begin{equation}
\begin{split}
\left[\Theta_1,{\mathtt X}_U\right]&=(n+1){\mathtt X}_U,\\
\left[\Theta_1,\tilde{\mathtt Q}_U\right]&=-(n+1)\tilde{\mathtt Q}_U
\end{split}
\end{equation}
and
\begin{equation}
\begin{split}
\left[\Theta_2,{\mathtt X}_U\right]&=0,\\
\left[\Theta_2,\tilde{\mathtt Q}_U\right]&=\tilde{\mathtt Q}_U.
\end{split}
\end{equation}

\begin{proposition}
The operators $\Theta_k$ for $k=1,2$ stabilize the point of the Sato Grassmannian for the deformed GKM, $\Theta_k\cdot {\mathcal W}_U \subset  {\mathcal W}_U$.
\end{proposition}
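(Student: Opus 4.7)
The plan is to act with each $\Theta_k$ directly on the admissible basis $\{\alpha\Phi_m^U\}_{m\ge1}$ of ${\mathcal W}_U$ from Lemma~\ref{lemmacanon}, via the integral representation (\ref{phidef}), and show that the result lies in ${\mathcal W}_U$. The coefficients of the multiplicative $z$-dependent terms $\frac{n+2}{\hbar}(zV'(z)-V(z))$ in $\Theta_1$ and $\frac{1}{\hbar}(V(z)-zV'(z))$ in $\Theta_2$ are chosen precisely so as to cancel the parameter-derivative of the exponential prefactor $e^{(V(z)-zV'(z))/\hbar}$ in (\ref{phidef}); after the cancellations, the only contributions left are either manifest linear combinations of $\Phi_j^U$'s or terms of the form $\frac{c}{\hbar}{\mathtt X}_U\cdot\Phi^U_{k+1}$, both of which lie in ${\mathcal W}_U$ because ${\mathtt X}_U$ is a KS operator.

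For $\Theta_2$ the computation is direct: since $V$ has no explicit $\hbar$-dependence, $\hbar\partial_\hbar$ acting on (\ref{phidef}) produces $-\tfrac{1}{2}\Phi_k^U -\tfrac{1}{\hbar}(V(z)-zV'(z))\Phi_k^U$ from the prefactor, plus an integral in which the integrand acquires the extra factor $\tfrac{1}{\hbar}(V(\varphi)-\varphi V'(z))$. Adding the multiplication operator $\tfrac{1}{\hbar}(V(z)-zV'(z))$ kills the middle term; expanding $V(\varphi)=\sum_m v_m\varphi^m$ and using the identification $\sqrt{V''(z)/(2\pi\hbar)}\,e^{(V(z)-zV'(z))/\hbar}\int\varphi^{j-1}e^{-(V(\varphi)-\varphi V'(z))/\hbar}d\varphi=\Phi_j^U$ then gives
\be
\Theta_2 \Phi_k^U = -\tfrac{1}{2}\Phi_k^U + \tfrac{1}{\hbar}\sum_m v_m \Phi^U_{k+m} - \tfrac{1}{\hbar}{\mathtt X}_U\cdot\Phi^U_{k+1} \in {\mathcal W}_U.
\ee

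For $\Theta_1$ I will exploit the homogeneity of the deformed potential (\ref{gendefor}) under the grading $\deg z = 1$, $\deg b_j = n-j$, $\deg w_j = -1$, which assigns $V''$, $V'$, $V$ the weights $n$, $n+1$, $n+2$ respectively. Setting $\theta_1:= z\partial_z + \sum_{j=1}^{n-1}(n-j)b_j\partial_{b_j} - E_w$, the key identity is
\be
\theta_1 V(\varphi;b,w) = (n+2)V(\varphi)-\varphi V'(\varphi),
\ee
valid when $\varphi$ is treated as a fixed parameter (so that $z\partial_z$ annihilates $V(\varphi;b,w)$); together with $\theta_1 V'(z) = (n+1)V'(z)$ and $\theta_1 V'' = nV''$ this determines the action of $\theta_1$ on the prefactor and on $S(\varphi,z)=V(\varphi)-\varphi V'(z)$. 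One integration by parts of the integrand, analogous to the derivation of (\ref{KSaaction}), then yields
\be
\Theta_1\Phi_k^U = \bigl(k+\tfrac{n}{2}\bigr)\Phi_k^U - \tfrac{n+2}{\hbar}\sum_m v_m\Phi^U_{k+m} + \tfrac{n+2}{\hbar}{\mathtt X}_U\cdot\Phi^U_{k+1} \in {\mathcal W}_U,
\ee
where the multiplicative counter-term $\tfrac{n+2}{\hbar}(zV'-V)$ in $\Theta_1$ cancels exactly the contribution $-\tfrac{n+2}{\hbar}(zV'-V)\Phi_k^U$ produced by $\theta_1$ acting on the exponential prefactor. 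The main technical point, and the only nontrivial bookkeeping, is that $V(\varphi;b,w)$ regarded as a function of $(b,w)$ with $\varphi$ fixed does not have pure weight $n+2$; rather $\theta_1$ produces the extra $-\varphi V'(\varphi)$ correction above, reflecting that the integration variable $\varphi$ is not rescaled with $z$. Once this distinction between $\varphi$ and $z$ is respected, every other step is algebraic.
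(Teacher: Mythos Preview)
Your proof is correct and follows essentially the same approach as the paper: both use the homogeneity identity (the paper's (\ref{comc}), your $\theta_1 V^{(k)}=(n+2-k)V^{(k)}$) together with direct application of $\Theta_k$ to the integral representation (\ref{basis}) and one integration by parts. Your version is considerably more explicit---you actually write out the resulting linear combinations $\Theta_2\Phi_k^U=-\tfrac12\Phi_k^U+\tfrac{1}{\hbar}V(\tilde{\mathtt Q}_U)\Phi_k^U-\tfrac{1}{\hbar}{\mathtt X}_U\Phi_{k+1}^U$ and $\Theta_1\Phi_k^U=(k+\tfrac{n}{2})\Phi_k^U-\tfrac{n+2}{\hbar}V(\tilde{\mathtt Q}_U)\Phi_k^U+\tfrac{n+2}{\hbar}{\mathtt X}_U\Phi_{k+1}^U$, whereas the paper leaves this implicit---but the underlying argument is identical.
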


\begin{proof}
For the potential (\ref{gendefor}) we have
\be
\Theta_1\cdot U(z) = n U(z)
\ee
and, more generally, 
\be\label{comc}
\Theta_1\cdot V^{(k)}(z)= (n+2-k) V^{(k)}(z).
\ee
The statement of proposition follows directly from the application of the operators to the basis (\ref{phidef}) and from the integration by parts.
\end{proof}

\begin{remark}
Operators $\Theta_k$ contain derivatives with respect to $w_j$, $b_j$, and $\hbar$ so they are not Kac-Schwarz operators. 
\end{remark}

We derive the {\em dilaton equation} associated with $\Theta_1$ for the deformation of the monomial potential using Theorem \ref{taufr}. Similar idea was used by Guo and Wang in derivation of the dilation equation for the linear Hodge integrals in \cite{Wang2}.
Let 
\be
v_k:=(n+2)[z^{k}]\left(z V'(z) -V(z)\right)=(n+2)\hbar[z^{k}]\int^z \eta U(\eta) d\eta.
\ee
\begin{proposition}
\be\label{dile}
\left(\widehat{L}_0  +E_w-\frac{1}{\hbar}\sum_{k=n+2}^\infty  v_k \frac{\p}{\p t_{k}}+\frac{n^2+2n}{24}\right)\cdot \tau_U({\bf t})= \left(E_w \cdot \log {C}_U\right)\tau_U({\bf t}).
\ee
\end{proposition}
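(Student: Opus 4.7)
The plan is to translate the Kac-Schwarz stabilization $\Theta_1\cdot\mathcal W_U\subset\mathcal W_U$ from the preceding proposition into the constraint (\ref{dile}) on $\tau_U$, using Theorem \ref{taufr} to absorb the deformation through the Heisenberg-Virasoro dressing. I decompose $\Theta_1 = \Xi - E_w$ where $\Xi = z\p_z + \frac{n+2}{\hbar}(zV'-V)$ is a pure $z$-operator. Using $z\p_z = -\mathtt l_0 - \frac12$ from (\ref{virw}) and the expansion $(n+2)(zV'-V) = \sum_{k\geq n+2} v_k z^k$, we have $\Xi = -\mathtt l_0 - \frac12 + \frac{1}{\hbar}\sum_{k\geq n+2} v_k\mathtt j_k$ inside the Heisenberg-Virasoro algebra, with tau-side counterpart $\widehat\Xi = -\widehat L_0 - \frac12 + \frac{1}{\hbar}\sum_{k\geq n+2} v_k\widehat J_k$.

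By Lemma \ref{deformc}, $\mathcal W_U = \tilde{\mathtt G}\cdot\mathcal W_{U_0}$, so $\Theta_1\mathcal W_U\subset\mathcal W_U$ is equivalent to the stabilization of $\mathcal W_{U_0}$ by $\tilde{\mathtt G}^{-1}\Theta_1\tilde{\mathtt G}$; since $\mathcal W_{U_0}$ is $\mathbf w$-independent and $E_w$ annihilates its elements, the effective pure $z$-operator acting on $\mathcal W_{U_0}$ is
\begin{equation*}
A := \tilde{\mathtt G}^{-1}\Xi\tilde{\mathtt G} - \tilde{\mathtt G}^{-1}(E_w\tilde{\mathtt G}),
\end{equation*}
which lies in the Heisenberg-Virasoro algebra because $\tilde{\mathtt G}$ is a Heisenberg-Virasoro group element by (\ref{groupel}). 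Lemma \ref{Lem_Vir} then gives $\widehat A\tau_{U_0} = \tilde C\tau_{U_0}$ for a constant $\tilde C$, with tau-side lift $\widehat A = \widehat G^{-1}\widehat\Xi\widehat G - \widehat G^{-1}(E_w\widehat G)$. Conjugating back by $\widehat G$ and using that $\tau_{U_0}$ is $\mathbf w$-independent yields $(\widehat\Xi - E_w)\widehat G\tau_{U_0} = \tilde C\,\widehat G\tau_{U_0}$. Passing from $\widehat G\tau_{U_0} = \tau_U/C_U$ to $\tau_U$ via the Leibniz rule for $E_w$ (noting that $\widehat\Xi$ commutes with $C_U$) produces
\begin{equation*}
(\widehat\Xi - E_w)\tau_U = (\tilde C - E_w\log C_U)\tau_U,
\end{equation*}
which after substitution of $\widehat\Xi$ and rearrangement is exactly (\ref{dile}) provided $\tilde C + \frac12 = \frac{n^2+2n}{24}$. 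Evaluating at $\mathbf w = 0$ (where $\widehat G|_0 = 1$, $C_U|_0 = 1$, $v_k|_0 = \delta_{k,n+2}$) collapses the identity to the standard monomial Virasoro constraint from (\ref{monomvir}), pinning down $\tilde C = \frac{n^2+2n}{24} - \frac12$.

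The principal obstacle is to justify that $\tilde C$ is genuinely $\mathbf w$-independent, since $\widehat A$ depends on $\mathbf w$ through both $\widehat\Xi$ (via $v_k(\mathbf w)$) and $\widehat G(\mathbf w)$. This can be established by a direct Baker-Campbell-Hausdorff computation using (\ref{comre}), verifying that all $\mathbf w$-dependent contributions to the eigenvalue cancel between the conjugation $\widehat G^{-1}\widehat\Xi\widehat G$ and the logarithmic derivative $\widehat G^{-1}(E_w\widehat G)$; more conceptually, one may observe that $\tilde C$ records the conformal anomaly of the Virasoro constraint, which is an invariant of the Heisenberg-Virasoro conjugation class connecting $\mathcal W_{U_0}$ and $\mathcal W_U$. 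A similar balancing of Heisenberg-Virasoro conjugation against parameter derivatives is precisely the argument used by Guo and Wang in \cite{Wang2} for the linear Hodge case.
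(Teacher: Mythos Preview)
Your route is genuinely different from the paper's, and the difference exposes a real gap. The paper does not go through the preceding proposition about $\Theta_1$ at all: it begins from the known \emph{zero-eigenvalue} constraint $(E_w+(n+1)\widehat L_0^{U_0})\cdot\tau_{U_0}=0$ (combining $E_w\tau_{U_0}=0$ with Proposition~\ref{Lem_virpol}) and then conjugates forward by $C_U\,\widehat G$. The conjugation is made explicit by the observation that for the monomial $U_0$ the coefficients $a_k$ are homogeneous of degree $k$ in $\mathbf w$, so $[E_w+\widehat L_0,\sum_{k>0}a_k\widehat L_k]=0$; this, together with an analogous homogeneity for the Heisenberg part, turns the computation into a finite calculation with no undetermined constant. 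Thus the constant $\frac{n^2+2n}{24}$ is carried from the known monomial constraint; it is never an eigenvalue to be fixed a posteriori.

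Your argument, by contrast, produces an eigenvalue $\tilde C(\mathbf w)$ for the operator $\widehat A$ on $\tau_{U_0}$ via Lemma~\ref{Lem_Vir}, and that lemma only guarantees $\mathbf t$-independence. The ``principal obstacle'' you flag is the proof: neither of the two justifications you offer is completed. The BCH route would in fact require exactly the homogeneity identity $[E_w+\widehat L_0,\sum a_k\widehat L_k]=0$ that the paper uses, so the direct argument collapses to the paper's computation. The conceptual claim that the ``conformal anomaly is a conjugation invariant'' is not correct as stated: conjugation of $\widehat L_0$ by a Heisenberg--Virasoro element can shift the central term, and here you are also mixing in the $\mathbf w$-derivative $\widehat G^{-1}(E_w\widehat G)$, which has no reason to be $\mathbf w$-constant on its own. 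A separate subtlety is that the lift from $A$ to $\widehat A$ via Lemma~\ref{Lem_Vir} is only defined up to a central term, and identifying $\widehat A$ with $\widehat G^{-1}\widehat\Xi\,\widehat G-\widehat G^{-1}(E_w\widehat G)$ already absorbs a possibly $\mathbf w$-dependent central contribution into $\tilde C$.

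In short: your decomposition $\Theta_1=\Xi-E_w$ and the translation to the tau side are correct in structure, but the identification of $\tilde C$ is not established. The paper's approach sidesteps this entirely by working with an operator whose eigenvalue on $\tau_{U_0}$ is already zero; what it buys is precisely the avoidance of the step you left open.
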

\begin{proof}
By construction $\tau_{U_0}$ does not depend on $w_i$. Hence $E_w\cdot \tau_{U_0}=0$. Combining it with Proposition \ref{Lem_virpol} one gets
\be
\left(E_w+(n+1) \widehat{L}_0^{U_0}\right)\cdot \tau_{U_0}=0.
\ee
From Theorem \ref{taufr} it follows that
\be
\left({C}_U\,  \widehat{ G} \left(E_w+(n+1)\widehat{L}_0^{U_0}\right)  \widehat{ G}^{-1} {C}_U^{-1} \right)\cdot \tau_{U}=0.
\ee
Let us find the operator acting in this formula. First, we conjugate $E_w+(n+1) \widehat{L}_0^{U_0}$ with the Virasoro part of (\ref{defconjop}). From the definition of $f(z)$ it follows that for a monomial $U_0$ coefficients $a_k$ are homogeneous (and symmetric) functions of $w_k$ of degree $k$. Hence
\be
\left[E_w+ \widehat{L}_0,\sum_{k\in \z_{>0}} a_k \widehat{L}_k\right]=0
\ee
and
\be\label{firstconj}
e^{\sum_{k \in {\z}_{>0}} a_k \widehat{L}_k }  \left(E_w+\widehat{L}_0-\frac{1}{\hbar}\frac{\p}{\p t_{n+2}}+\frac{n^2+2n}{24}\right) e^{-\sum_{k \in {\z}_{>0}} a_k \widehat{L}_k } =\\
=E_w+\widehat{L}_0-\frac{1}{\hbar}\sum_{k=n+2}^\infty \tilde{v}_k \frac{\p}{\p t_{k}}+\frac{n^2+2n}{24},
\ee
where
\be
\tilde{v}_k:=[z^{k}]f(z)^{n+2}.
\ee
Now we conjugate it with $e^{\sum_{k \in {\z}_{>0}}  u_k \widehat{J}_k} $ in (\ref{defconjop}). The last two terms of the last line in (\ref{firstconj}) commute with this operator, so we only need
\be
e^{\sum_{k \in {\z}}  u_k \widehat{J}_k}   \left(E_w+\widehat{L}_0\right)   e^{- \sum_{k \in {\z}}  u_k \widehat{J}_k} = E_w+\widehat{L}_0 +\frac{1}{\hbar} \sum_{k=n+2}^\infty \left( \tilde{v}_k -v_k\right)\frac{\p}{\p t_{k}}.
\ee
Since ${C}_U$ does not depend on ${\bf t}$, only $E_w$ can contribute in conjugation with  ${C}_U$, which concludes the proof. 
\end{proof}
In particular, if Conjecture \ref{ConC} is true, then the right hand side of (\ref{dile}) vanishes.  Vice versa, if the right hand side of (\ref{dile}) vanishes, then Conjecture \ref{ConC} is true for the deformation of the monomial potential.

%%%%%%%%%%%%%%%%%%%%%%%%%%%%%%%%

\subsection{Deformation of Kontsevich model}\label{DefK}
The simplest example of the deformed GKM is given by deformation of the Kontsevich model with $n=1$ and $U_0=z$, considered in Section \ref{Sec_Kon}.
In this case (\ref{gendefor}) yields
\be\label{genKond}
U=\frac{1}{\hbar}\frac{z}{(1-w_1z)(1-w_2z)}=\frac{1}{\hbar}\sum_{k=0}^\infty h_k(w_1,w_2)z^{k+1},
\ee 
where $h_k$ are complete symmetric functions. Then
\begin{equation}
\begin{split}
\tilde{\mathtt Q}_U&=    z+ \hbar\left(\frac{(1-w_1z)(1-w_2z)}{z}\frac{\p}{\p z} - \frac{1-w_1w_2z^2}{2z^2}\right)\\
&={\mathtt Q}_{KW}+\hbar\left(\left(w_1w_2z-w_1-w_2\right)\frac{\p}{\p z}+\frac{w_1w_2}{2}\right).
\end{split}
\end{equation}
If $w_1\neq0$, $w_2\neq 0$ and $w_1\neq w_2$, 
\be\label{bdefK}
{\mathtt X}_U=\hbar \int_0^z U(\eta) d\eta =\frac{w_1\log(1-w_2z)-w_2\log(1-w_1z)}{w_1w_2(w_1-w_2)}.
\ee

\begin{remark}
We can always rewrite the operator $\tilde{\mathtt G}$ in (\ref{reloper}) as
\be
\tilde{\mathtt G}= e^{\sum_{k \in {\z}} a_k {\mathtt l}_k } e^{\frac{1}{\hbar}\left(\tilde{S}_0(z)-S_0(f^{-1}(z))\right)}.
\ee
Then, for deformation of the  Kontsevich model using the reduction constraints and Virasoro constraints (\ref{monomvir}) we can rewrite the relation of Theorem \ref{taufr} 
in terms of Virasoro operators only:
\be
\tau_U=C_U e^{\sum_{k \in {\z}_{>0}} \tilde{a}_k \widehat{L}_k } \tau_{KW}.
\ee
\end{remark}

The Heisenberg-Virasoro constraints are described by Proposition \ref{lem_defvir}, and the simplest operators, that describe reduction and string equations, are given by
\begin{equation}
\begin{split}
{\widehat{J}}_1^U&=\frac{\p}{\p t_2} +\sum_{k=3}^\infty \frac{h_{k-2}(w_1,w_2)}{k}\frac{\p}{\p t_k},\\
{\widehat{L}}_{-1}^U&=\widehat{L}_{-2}-(w_1+w_2)\widehat{L}_{-1}+w_1w_2 \widehat{L}_{0} -\frac{1}{\hbar}\frac{\p}{\p t_1}-\frac{w_1^2+w_2^2-w_1w_2}{24}.
\end{split}
\end{equation}
The Heisenberg-Virasoro constraints completely specify the deformed tau-function, so that the correlation functions can be obtained by their solution in the naive topological recursion way.

From (\ref{bdefK}) it follows, that there are two special cases:
\begin{itemize}
\item $w_1=w_2=w$. In this case
\begin{equation}
\begin{split}
{\mathtt X}_U&=\frac{1}{w^2(1-wz)}+\frac{1}{w^2}\log(1-wz),\\
\tilde{\mathtt Q}_U&= z+\hbar\left(\frac{(1-wz)^2}{z}\frac{\p}{\p z}-\frac{1-w^2z^2}{2z^2}\right).
\end{split}
\end{equation}
By Lemma \ref{lem_roots} it can be reduced to the next case by the linear change and a shift of variables.

\item $w_2=0$ (or $w_1=0$). The model depends on one deformation parameter $w$. We consider this case in the next section.

\end{itemize}

%%%%%%%%%%%%%%%%%%%%%%%%%%%%%%%%%%

\subsection{Simplest deformation of the Kontsevich model}\label{S_DifK1}
Let us consider
\be\label{Wlinear}
U=\frac{1}{\hbar}\frac{z}{1-wz}
\ee
with
\be\label{Vforlinear}
V(z)=\frac{(1-wz)\log(1-wz)}{w^3}+\frac{2z-w z^2}{2 w^2}=\sum_{k=3}^\infty \frac{z^kw^{k-3}}{k(k-1)}.
\ee
This case is the simplest example of the family of the deformed potentials, considered in Section \ref{defmon}.
Then
\begin{equation}
\begin{split}\label{Simpbq}
{\mathtt X}_U&=-\frac{z}{w}-\frac{\log(1-wz)}{w^2}=\frac{z^2}{2}+\frac{1}{w^2}\sum_{k=3}^\infty \frac{(wz)^k}{k},\\
\tilde{\mathtt Q}_U&=z+\hbar\left(\left(\frac{1}{z}-w\right)\frac{\p}{\p z} -\frac{1}{2z^2}\right)\\
&={\mathtt Q}_{KW}-\hbar w \frac{\p}{\p z}.
\end{split}
\end{equation}

Equation (\ref{pochqq}) for the modified wave function reduces to 
\be\label{qscdefk}
e^{w^{2}\hat{x} +w\hat{y}}\left(1-w\hat{y}-\hbar \frac{w^{3}}{2}\right)\cdot \tilde{\Psi}=\tilde{\Psi}.
\ee
 The classical spectral curve 
\be
x = -w^{-2}\left( \log \left(1-w y\right)+wy \right)
\ee
reduces to (\ref{KWcsc}) at $w=0$. It can be rewritten as
\be
e^{-w^2 x}=(1-wy)e^{wy}.
\ee

\begin{remark}
We expect that the same quantum spectral curve can be obtained by the conjugation of the KS operators
for the single simple Hurwitz generating function, described in \cite{A1}. 
\end{remark}
The wave function is given by (\ref{QRT}),
\be\label{psidefk}
\Psi_U=\alpha \frac{e^{-\frac{S_0}{\hbar}-S_1}}{\sqrt{2 \pi \hbar}}\int_{\gamma(U_0)}d \varphi \, e^{\frac{1}{\hbar}V'(z)\varphi+\frac{1}{w^2\hbar}\left( \frac{w\varphi^{2}}{2}+\varphi\log(-\hbar w^3)\right)}\frac{\Gamma\left(\frac{1}{2}+\frac{\varphi}{\hbar w^2}-\hbar w^3\right)}{\Gamma\left(\frac{1}{2}-\hbar w^3\right)}.
\ee
In this case we can also find a simpler integral expression for the wave function. Namely, after Fourier  transform, the differential-difference equation (\ref{qscdefk}) reduces to a first order differential equation. Let
\be
{\tilde{\Psi}}^*:=e^{\frac{wx^2}{2\hbar}}\tilde{\Psi}.
\ee
From equation (\ref{qscdefk}) one has
\be\label{qqq}
e^{w\hat{y}}\left(1-w\hat{y}+w^2 \hat{x}-\hbar \frac{w^{3}}{2}\right)\tilde{\Psi}^*=\tilde{\Psi}^*.
\ee
With the ansatz 
\be
\tilde{\Psi}^*=\frac{1}{\sqrt{2\pi\hbar}} \int d \varphi\, e^{\frac{1}{\hbar}(x\varphi-A(\varphi))}
\ee
the equation (\ref{qqq}) leads to the differential equation for $A(\varphi)$
\be
e^{-w\varphi}=1-w\varphi-\frac{\hbar w^3}{2}+w^2\frac{\p A}{\p \varphi}
\ee
with a general solution
\be
A(\varphi)=-\frac{e^{-w\varphi}-1+w\varphi -w^2\varphi^2/2}{w^3}+\frac{\hbar w \varphi}{2}+A_0.
\ee
Here $A_0$ does not depend on $\varphi$. Hence
\be\label{Psisimm}
\tilde{\Psi}=e^{-\frac{wx^2}{2\hbar}-\frac{A_0}{\hbar}}\int_{{\gamma}(U_0)}  d \varphi\, e^{\frac{1}{\hbar}\left(x\varphi+\frac{e^{-w\varphi}-1+w\varphi -w^2\varphi^2/2}{w^3}-\frac{\hbar w \varphi}{2}\right)}.
\ee
Then it is easy to see that 
\be
e^{-\frac{S_0}{\hbar}-S_1-\frac{wx^2}{2\hbar}-\frac{A_0}{\hbar}}\int_{{\gamma}(U_0)}  d \varphi \,e^{\frac{1}{\hbar}\left(x\varphi+\frac{e^{-w\varphi}-1+w\varphi -w^2\varphi^2/2}{w^3}-\frac{\hbar w \varphi}{2}\right)}  \in {\mathbb C}((z^{-1}))[[{\bf w}]].
\ee
From Lemma \ref{lemma_unique} it follows that this is indeed the wave function, and (\ref{Psisimm}) is a modified wave function. 
\begin{remark}
The same expression can be derived from (\ref{psidefk}), if one uses the standard integral expression for the gamma-function
\be
\Gamma(z)=\int_{\rr} d \varphi \, e^{\varphi z -e^\varphi}.
\ee
\end{remark}
At $w=0$ the modified wave function (\ref{Psisimm}) reduces to
\be
e^{-\frac{1}{\hbar}\left.A_0\right|_{w=0}} \frac{1}{\sqrt{2\pi\hbar}} \int_{\tilde{\gamma}(U_0)}  d \varphi\, e^{\frac{1}{\hbar}(x\varphi-\frac{\varphi^3}{3!})}
\ee
and from comparison with (\ref{defpsi}) we see that $\left.A_0\right|_{w=0}=0$.

Higher basis vectors can be obtained by the action of the operator ${\tilde{\mathtt Q}_U}$ with the help of integration by parts
\be\label{basisdkw}
\Phi_k^U=e^{-\frac{S_0}{\hbar}-S_1-\frac{wx^2}{2\hbar}-\frac{A_0}{\hbar}}\int_{{\gamma}(U_0)}  d \varphi\,\left(\frac{1-e^{-w\varphi}}{w}\right)^{k-1} e^{\frac{1}{\hbar}\left(x\varphi+\frac{e^{-w\varphi}-1+w\varphi -w^2\varphi^2/2}{w^3}-\frac{\hbar w \varphi}{2}\right)}.
\ee
This basis is of the form (\ref{goodbas}), which makes it suitable for construction of the matrix integral. Let
\be\label{measyr1}
\left[d \mu_w (\Phi)\right]:=\frac{1}{(2\pi)^\frac{N}{2}w^{\frac{N(N-1)}{2}}\prod_{k=1}^N k!}\Delta(\varphi)\Delta(e^{-w\varphi})\left[d { U}\right]\prod_{i=1}^N e^{-\frac{1}{2}w \varphi_i}d \varphi_i
\ee
be the measure on the space of the Hermitian matrices. At $w=0$ it reduces to the flat measure (\ref{flatmeas}), for $w\neq 0$ it is proportional to the flat measure, and the coefficient of proportionality is given by the exponential of the double trace potential \cite{ARP}.
\begin{remark}
Matrix integrals with the measure (\ref{measyr1}) naturally appear in description of the enumerative geometry invariants associated to the moduli spaces of Riemann surfaces, including simple Hurwitz numbers \cite{MS,ARP}, and Gromov-Witten invariants of ${\mathbb P}^1$ \cite{AP1,BR}. It is closely related to a matrix model for the linear Hodge integrals \cite{A1}.
\end{remark}
Using Harish-Chandra-Itzykson-Zuber formula one gets
 \begin{proposition}
 For $U=\frac{z}{1-wz}$ the tau-function in the Miwa parametrization is given by the matrix integral
\be\label{intm}
\tau_U\left(\left[\Lambda^{-1}\right]\right)=\tilde{c}\, \mathcal{C}^{-1}e^{-\frac{w}{2\hbar} \Tr V'(\Lambda)^2 } \int \left[d \mu_w (\Phi)\right] e^{\frac{1}{\hbar}\Tr\left(V'(\Lambda)\Phi+\frac{e^{-w\Phi}-1+w\Phi -w^2\Phi^2/2}{w^3}\right)}.
\ee
\end{proposition}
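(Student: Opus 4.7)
[Proof proposal for Proposition on the matrix integral representation of $\tau_U$ with $U=z/(1-wz)$]
The plan is to start from the determinantal Miwa formula~(\ref{miwatau}) applied to the explicit distinguished basis~(\ref{basisdkw}), and convert it into the claimed matrix integral by the Harish-Chandra--Itzykson--Zuber (HCIZ) trick, exactly as was done for the polynomial Kontsevich model but with the modified measure $[d\mu_w(\Phi)]$ from~(\ref{measyr1}) replacing the flat measure. The first step is to pull out of the determinant $\det_{i,j=1}^N\Phi_i^U(\lambda_j)$ the $k$-independent prefactor $e^{-S_0(\lambda_j)/\hbar-S_1(\lambda_j)-wV'(\lambda_j)^2/(2\hbar)-A_0/\hbar}$ associated with each column. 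These collect into $e^{-\Tr(\Lambda V'(\Lambda)-V(\Lambda))/\hbar}\sqrt{\det V''(\Lambda)}\,e^{-w\Tr V'(\Lambda)^2/(2\hbar)}e^{-NA_0/\hbar}$, producing (after the $\Delta(\lambda)^{-1}$ from the Miwa formula and an exchange $\Delta(\lambda)\to\Delta(V'(\Lambda))$ performed below) the factor $\mathcal{C}^{-1}e^{-w\Tr V'(\Lambda)^2/(2\hbar)}$ of~(\ref{NormalC}), up to an overall constant to be absorbed into~$\tilde c$.

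Next I use multilinearity to move the $N$ independent $\varphi$-integrations outside the determinant:
\[
\det_{i,j}\Bigl[\int_{\gamma(U_0)}\!d\varphi_j\,\bigl(\tfrac{1-e^{-w\varphi_j}}{w}\bigr)^{i-1}F(\varphi_j,\lambda_j)\Bigr]
=\int\!\prod_j d\varphi_j\,\det_{i,j}\!\bigl[(\tfrac{1-e^{-w\varphi_j}}{w})^{i-1}\bigr]\prod_j F(\varphi_j,\lambda_j),
\]
where $F(\varphi,\lambda)=\exp\!\bigl(V'(\lambda)\varphi/\hbar+(e^{-w\varphi}-1+w\varphi-w^2\varphi^2/2)/(w^3\hbar)-w\varphi/2\bigr)$. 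The inner determinant is the Vandermonde in the variables $(1-e^{-w\varphi_j})/w$, equal to $w^{-N(N-1)/2}\Delta(e^{-w\varphi})$ up to a sign. Then, since the rest of the integrand is already totally antisymmetric in $\varphi$, the product $\prod_j e^{V'(\lambda_j)\varphi_j/\hbar}$ can be antisymmetrized in the $\lambda$-labels at the cost of a factor $1/N!$, replacing it by $\det_{i,j}\bigl[e^{V'(\lambda_i)\varphi_j/\hbar}\bigr]$.

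The HCIZ formula then rewrites this second determinant as
\[
\det_{i,j}\bigl[e^{V'(\lambda_i)\varphi_j/\hbar}\bigr]
=\frac{\Delta(V'(\Lambda))\,\Delta(\varphi)}{\hbar^{N(N-1)/2}\prod_{k=1}^{N-1}k!}\int[dU]\,e^{\Tr(V'(\Lambda)U\Phi U^{-1})/\hbar},
\]
with $\Phi=U\,\diag(\varphi)\,U^{-1}$. Substituting this in, the factor $\Delta(V'(\Lambda))/\Delta(\lambda)$ completes the conversion of the column prefactors into $\mathcal{C}^{-1}$ (via~(\ref{NormalC})), while the remaining factors $\Delta(\varphi)\Delta(e^{-w\varphi})\,[dU]\prod_j e^{-w\varphi_j/2}d\varphi_j$ are, up to a numerical constant depending only on $N,\hbar,w,A_0$, precisely the measure $[d\mu_w(\Phi)]$ of~(\ref{measyr1}); the $-w\varphi_j/2$ term in the exponent of~(\ref{basisdkw}) combines with the measure factor so that no $e^{-w\Phi/2}$ term remains in the matrix potential, as required in~(\ref{intm}). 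Finally, the $\varphi$-dependent combination $V'(\lambda)\varphi+(e^{-w\varphi}-1+w\varphi-w^2\varphi^2/2)/w^3$ lifts via HCIZ to the trace $\Tr(V'(\Lambda)\Phi+(e^{-w\Phi}-1+w\Phi-w^2\Phi^2/2)/w^3)$ inside the exponent. All remaining $N!$, $\prod k!$, $(2\pi)^{N/2}$, $\hbar$-power, sign and $e^{-NA_0/\hbar}$ contributions are independent of $\Lambda$ and collect into the single constant~$\tilde c$. The main obstacle is purely bookkeeping: making sure the eight or so numerical/normalisation factors line up exactly so that $\tilde c$ is genuinely $\Lambda$-independent; no conceptual step beyond HCIZ and Vandermonde manipulations is needed.
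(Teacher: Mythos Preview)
Your proof is correct and follows exactly the approach the paper indicates: the paper merely writes ``Using Harish-Chandra--Itzykson--Zuber formula one gets'' before stating the proposition, leaving the determinantal Miwa formula, the Vandermonde identity for $\det_{i,j}\bigl[(1-e^{-w\varphi_j})/w\bigr]^{i-1}$, the antisymmetrization step, and the identification of the measure~(\ref{measyr1}) all implicit. Your write-up simply unpacks these standard manipulations in detail, and the bookkeeping of the $\Lambda$-independent factors into $\tilde c$ is sound.
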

Here ${\mathcal C}$ is given by (\ref{NormalC}) for the potential (\ref{Vforlinear}), and $\tilde{c}$ is some normalization factor that does not depend on $\Lambda$.

%%%%%%%%%%%%%%%%%%%%%%%%%%

\section{Deformed GKM and  Hodge integrals}\label{Sec:Hodge}

In this section we will discuss the relation between the deformed GKM and Hodge integrals. In particular, we prove that the deformation of the Kontsevich model, considered in Sections \ref{DefK}-\ref{S_DifK1}, describes triple Hodge integrals with imposed Calabi-Yau condition, which for certain values of parameters degenerate to linear Hodge integrals.

\subsection{KP tau-function for linear Hodge integrals}\label{S_Inters}

For linear Hodge integrals we consider the generating function:
\be\label{tildegf}
 {\mathcal F}^H_{g,n}({\bf T};u)= \sum_{a_1,\ldots,a_n\ge 0}\left<\Lambda_g(-u^2)  \tau_{a_1}\tau_{a_2}\cdots \tau_{a_n} \right>_g  \frac{\prod T_{a_i}}{n!}
\ee
and
\be\label{tildegf1}
Z({\bf T};u)=\exp\left(\sum_{g=0}^\infty \sum_{n=0}^\infty {\mathcal F}_{g,n}^H\right).
\ee
\begin{remark}
In this section we put $\hbar=1$ to simplify the connection with the existing literature \cite{Kaza,A1,Wang2,Wang1}. Dependence on $\hbar$ can be easily restored with the help of the dimensional constraint (\ref{dimc}).
\end{remark}
Consider the operator
\be\label{Dopp}
D=-(1+uz)^2 z^{-1} \frac{\p}{\p z}.
\ee
Then $\varphi_k:= D^k \cdot z^{-1}$ are polynomials in $z^{-1}$. On identification of $\varphi_k$ with $T_k$ and $z^{-k}$ with $k t_k$ one gets
a linear change of variables
\begin{equation}
\begin{split}\label{Kazatr}
T_0({\bf t})&=t_1,\\
T_{k}({\bf t})&=\sum_{m\geq 1} \left(m\,u^2 t_m+2(m+1)u\, t_{m+1}+(m+2)t_{m+2}\right) \frac{\p}{\p t_m}
 T_{k-1}({\bf t})\\
 &=\left(u^2 (\widehat{L}_0-\frac{ t_1^2}{2})+2u\,\widehat{L}_{-1}+\widehat{L}_{-2}\right)T_{k-1}({\bf t}),
\end{split}
\end{equation}
such that $T_{k}({\bf t})=(2k+1)!! t_{2k+1}+O(u)$. It allowed Kazarian to relate the generating function (\ref{tildegf1}) to the KP hierarchy \cite{Kaza}:
\begin{theorem*}[Kazarian]
\be\label{gf}
\tau_{H}({\bf t};u)=Z({\bf T}({\bf t});u).
\ee
is a tau-function of the KP hierarchy (identically in u).
\end{theorem*}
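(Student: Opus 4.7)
The plan is to realize $\tau_H(\mathbf{t};u)$ as the image of the Kontsevich--Witten tau-function $\tau_{KW}(\mathbf{t})$ under the action of a one-parameter family of elements $\widehat{G}(u)$ in the Heisenberg--Virasoro subgroup of $\GL(\infty)$. Since, by (\ref{1Vir}), this subgroup preserves the set of KP tau-functions, the result will follow from Witten--Kontsevich (Section \ref{Sec_Kon}) together with an identification of $\widehat{G}(u)$.

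First I would handle the base case $u=0$: since $\Lambda_g(0)=1$, the correlators in (\ref{tildegf}) collapse to pure $\psi$-class intersection numbers, and the change of variables (\ref{Kazatr}) specialises to $T_k({\bf t})\big|_{u=0}=(2k+1)!!\,t_{2k+1}$. Hence $\tau_H(\mathbf{t};0)=\tau_{KW}(\mathbf{t})$, which is a KP tau-function, giving the initial condition.

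Next I would decompose the differential operator $D$ of (\ref{Dopp}) as $D=-z^{-1}\partial_z-2u\,\partial_z-u^2 z\,\partial_z$, and compare with $\mathtt{l}_m=-z^m(z\partial_z+(m+1)/2)$ to obtain
\[
D=\mathtt{l}_{-2}+2u\,\mathtt{l}_{-1}+u^2\mathtt{l}_0-\tfrac{1}{2}\mathtt{j}_{-2}+\tfrac{u^2}{2}.
\]
Under the dictionary of Section \ref{Virsec}, iterating $D$ on the Laurent side corresponds to iterating the same combination of Heisenberg--Virasoro operators $\widehat{L}_m,\widehat{J}_m$ on the tau-function side. The explicit form of (\ref{Kazatr}) is precisely this correspondence written in $(t_k)$-coordinates -- in particular, the $t_1^2/2$ counter-term appearing there is exactly what is required for $\widehat{L}_{-2}$ to match the translation of $\mathtt{l}_{-2}-\tfrac12\mathtt{j}_{-2}$. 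Integrating this infinitesimal action in $u$ produces a well-defined group element $\widehat{G}(u)\in\mathcal V$ with $\widehat{G}(0)=1$.

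The substantive step is to verify the identification
\[
\tau_H(\mathbf{t};u)=\widehat{G}(u)\cdot\tau_{KW}(\mathbf{t}),
\]
which is the main obstacle: one must show that inserting $\Lambda_g(-u^2)$ into the intersection numbers on $\overline{\mathcal M}_{g,n}$ is equivalent, at the level of generating functions, to the Virasoro action above. The most direct route is Mumford's Grothendieck--Riemann--Roch formula, which writes $\Lambda_g(-u^2)=\exp\!\bigl(\sum_{\ell\ge 1}\tfrac{B_{2\ell}}{2\ell(2\ell-1)}u^{2\ell-1}(\kappa_{2\ell-1}-\sum_i\psi_i^{2\ell-1}+\text{bdry})\bigr)$; the $\kappa$-insertions on $\overline{\mathcal M}_{g,n}$ are generated by $\widehat{L}_m$ acting on $\tau_{KW}$, the $\psi$-insertions shift the times, and the boundary contributions reassemble into the polynomial structure of (\ref{Kazatr}). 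An alternative more in the spirit of the present paper is to apply the Givental-to-$\GL(\infty)$ dictionary of the companion paper \cite{H3_1} to the Givental $\widehat{R}$-action that encodes $\Lambda_g(-u^2)$, yielding $\widehat{G}(u)$ directly. Either way, once $\widehat{G}(u)$ is identified, (\ref{1Vir}) closes the argument.
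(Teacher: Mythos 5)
The paper itself does not prove this statement: it is quoted as Kazarian's theorem, and the proof in \cite{Kaza} runs through the ELSV formula, which expresses linear Hodge integrals through simple Hurwitz numbers; the Hurwitz generating function is a known KP (in fact Toda) tau-function, and the change of variables (\ref{Kazatr}) is designed to match the Hurwitz times. Your strategy --- exhibiting $\tau_H({\bf t};u)=\widehat{G}(u)\cdot\tau_{KW}({\bf t})$ with $\widehat{G}(u)$ in the Heisenberg--Virasoro group and then invoking (\ref{1Vir}) --- is a genuinely different route, essentially that of Liu--Wang \cite{Wang1} and of the companion paper \cite{H3_1}. It can be made to work, but as written there is a gap at exactly the step you yourself call substantive.

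Concretely: the operator $D$ of (\ref{Dopp}) only encodes the linear substitution $T_k\mapsto T_k({\bf t})$, so ``integrating its infinitesimal action in $u$'' does not produce the $\widehat{G}(u)$ you need. The function $\tau_H$ is obtained from $\tau_{KW}$ by composing that substitution with the Givental operator $\widehat{R}_u$ of (\ref{Givop}), and the paper stresses that $\widehat{R}_u$ does \emph{not} lie in $\GL(\infty)$: the boundary terms of Mumford's formula produce the second-order pieces $\partial^2/\partial T_m\partial T_{2k-m-2}$ of $\widehat{W}_k$. The claim that these ``reassemble into the polynomial structure of (\ref{Kazatr})'' is precisely the non-trivial theorem of \cite{A2,A1,Wang1}, not a routine verification. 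Moreover, the existence argument of \cite{A2,A1} (the free $\GL(\infty)$-action on KP solutions guarantees that \emph{some} group element relates the two tau-functions) already takes Kazarian's theorem as its input, so appealing to that line of work here would be circular; the non-circular version is the direct Givental-to-$\GL(\infty)$ computation of \cite{Wang1,H3_1}, which you cite but do not carry out. Two smaller slips: the expansion of $\Lambda_g(-u^2)$ carries $u^{4\ell-2}$, not $u^{2\ell-1}$ (compare $\widehat{R}_u$ in the paper), and the $t_1^2/2$ counter-term in (\ref{Kazatr}) removes the quadratic part of $\widehat{L}_{-2}$ from (\ref{virfull}); it is not the image of $-\tfrac12{\mathtt j}_{-2}$, which under the boson--fermion dictionary corresponds to multiplication by $-t_2$.
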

For $u=0$ only $\psi$-classes survive in (\ref{tildegf}), and $Z({\bf T};u)$ reduces to 
\be
\tau_{KW}:=\exp\left(\sum_{g=0}^\infty \sum_{n=1}^\infty \hbar^{2g-2+n}F_{g,n}\right),
\ee
where
\be
F_{g,n}:=\sum_{a_1,\ldots,a_n\ge 0}\<\tau_{a_1}\tau_{a_2}\cdots\tau_{a_n}\>_g\frac{\prod T_{a_i}}{n!}.
\ee
On substitution $T_k=(2k+1)!!\, t_{2k+1}$ this generating function yields the profound Kontsevich--Witten (KW) tau-function  \cite{Konts,W}  discussed in Section \ref{Sec_Kon}. KdV integrability also follows from the Kazarian theorem if one puts $u=0$.

It is known that linear Hodge integrals can be expressed through the intersection numbers of the $\psi$-classes \cite{M,FP}
\be\label{Givop}
Z({\bf T};u)
=\widehat{R}_u \cdot \tau_{KW}, 
\ee
where
\be
\widehat{R}_u=-\sum_{k=1}^\infty \frac{B_{2k}u^{4k-2}}{2k(2k-1)}\widehat{W}_k.
\ee
Here
\be
\widehat{W}_k=-\sum_{m}\tilde{T}_m\frac{\p}{\p T_{m+2k-1}}+\frac{1}{2}\sum_{m=0}^{-2k}(-1)^{l+1}\tilde{T}_m\tilde{T}_{-2k-m}+\frac{1}{2}\sum_{m=0}^{2k-2}(-1)^m\frac{\p^2}{\p T_m \p T_{2k-m-2}}.
\ee
The so-called {\em dilaton shift} of the variables is given by $\tilde{T}_k=T_k-\delta_{k,1}$. Here $B_{2k}$ are the Bernoulli numbers 
\be
\frac{xe^x}{e^x-1}=1+\frac{x}{2}+\sum_{k=1}^{\infty}\frac{B_{2k}x^{2k}}{(2k)!}.
\ee
Operator $\widehat{R}_u$ does not belong to the $\GL(\infty)$ symmetry group of the KP hierarchy. It belongs to Givental's group, acting on the space of semi-simple cohomological field theories  \cite{Giv1,Giv2}.

In \cite{A2} it was conjectured that the KW tau-function $\tau_{KW}$  and the Hodge tau-function  $\tau_H$ are related by a certain element of the Virasoro subgroup of the $\GL(\infty)$ group (see Section \ref{Virsec}). In \cite{A1} this conjecture was proved with the help of the Sato Grassmannian description. 
%\begin{remark}
%In this paper we use slightly different notations comparing to \cite{A1}. Namely, the index $H$, which labeled the objects related to the Hurwitz numbers in \cite{A1}, in the current paper labels the objects corresponding to the Hodge tau-function of Kazarian (which were labeled by {\em Hodge} in \cite{A1}).
%\end{remark}
Corresponding group element was constructed up to a ${\bf t}$-independent factor (which was conjectured to be equal to one). 
Moreover, with the help of the Virasoro constrains, satisfied by the KW tau-function, an infinite dimensional family of the group elements of the Heisenberg--Virasoro group, providing equivalent relations between two tau-functions, was described. Using the conjugation of the KS operators of the KW tau-function, in \cite{A1} we constructed the KS operators for the tau-function $\tau_H$ and the Heisenberg-Virasoro constraints, which govern $\tau_H$: 
\begin{equation}
\begin{split}\label{foc}
\widehat{J}_m^H \cdot \tau_H&=0,\,\,\,\,\, m\geq 1\\
\widehat{L}_m^H \cdot \tau_H&=0,\,\,\,\,\, m\geq -1.
\end{split}
\end{equation}

%%%%%%%%%%%%%%%%%%%%%%%%%%%%%%%%%

\subsection{KS algebra for the Hodge tau-function}

Guo, Liu, and Wang \cite{Wang1, Wang2} constructed explicitly a particularly convenient representative of the family of the group operators from \cite{A1}. Their construction is based on the direct relation between the Givental operator (\ref{Givop}) and an element of the Heisenberg-Virasoro group. Using this relation, they have proved that the multiplicative constant is indeed $1$, as it was conjectured in \cite{A1}. This particular choice of the representative of the group elements allowed the authors of \cite{Wang1, Wang2} to find a convenient basis in the space of the constraints (\ref{foc}).

In particular, they found the string equation 
\be
V_{-1}^{(H)} \cdot \tau_H({\bf t};u)=0,
\ee
where
\be
V_{-1}^{(H)}:=\widehat{L}_{-2}+2u\widehat{L}_{-1}+u^2\widehat{L}_{0}-\frac{u^2}{24}-\sum_{j=1}^\infty(-u)^{j-1} \frac{\p}{\p t_j}.
\ee
This operator was obtained by the conjugation of the $\widehat{L}_{-1}^{KW}$ for the KW tau-function. Let us denote the point of the Sato Grassmannian, associated to the tau-function (\ref{gf}), by ${\mathcal W}_H$. The KS operator, associated with the string equation (with the inverse sign) can also be obtained by conjugation of ${\mathtt Q}_{KW}$. It is equal to
\be
{\mathtt V}_H:=-\left({\mathtt l}_{-2}+2u{\mathtt l}_{-1}+u^2{\mathtt l}_0 -\sum_{j=1}^\infty (-u)^{j-1} z^{j}\right).
\ee
It satisfies $V_H\cdot {\mathcal W}_H \subset {\mathcal W}_H$. Using (\ref{virw}) for the operators ${\mathtt l}_j$, one gets
\be
{\mathtt V}_H=\left(\frac{1+uz}{z}\right)^2z \frac{\p}{\p z}-\frac{1}{2z^2}+\frac{u^2}{2}+\frac{z}{1+uz}.
\ee
The second KS operator can also be extracted from the spectral parameter transformation $\eta$ of \cite{Wang2}, namely the KS operator ${\mathtt X}_{KW}=z^2/2$ for the Kontsevich-Witten tau-function transforms to
\be
{\mathtt X}_H=\frac{1}{u^2}\left(\log(1+uz)-\frac{uz}{1+uz}\right).
\ee
These operators satisfy the commutation relation $[{\mathtt V}_H,{\mathtt X}_H]=1$.

Let us stress that there is a certain arbitrariness in this choice of the variable transformations (\ref{Kazatr}). Namely, as we discussed in Section \ref{Virsec} the algebra $\sll(2)$ generates the automorphisms of KP hierarchy given by the linear changes of the variables. Let us find an operator which simplifies the change of variables. We act on the Hodge tau-function by the $GL(\infty)$ group element
\be
{\tau}_{\tilde{H}}({\bf t};u):=e^{-u \widehat{L}_1}\cdot \tau_H({\bf t};u).
\ee
As the tau-function $\tau_H({\bf t};u)$ belongs to ${\mathbb C}[[{\bf t},u]]$, the left hand side is well-defined and ${\tau}_{\tilde{H}}({\bf t};u)\in {\mathbb C}[[{\bf t},u]]$.
As it is described in Section \ref{Virsec} by equation (\ref{Gtr}) this operator acts on any function by the linear change of variables
\be
{\tau}_{\tilde{H}}({\bf t};u)=\tau_H({\bf t}^{-1}({\bf t});u),
\ee
where the varsities ${\bf t}^{(-1)}$ are given by (\ref{Tchant}).

The combination of this linear transformation with Kazarian's transformation (\ref{Kazatr}) leads to 
\be\label{Dtild}
D_u=e^{u z^2 \frac{\p}{\p z}} D e^{-u z^2 \frac{\p}{\p z}}=-\frac{1-uz}{z} \frac{\p}{\p z}.
\ee
It is associated with the change of variables
\be\label{THo}
\tilde{T}_0({\bf t})=t_1,\\
\tilde{T}_{k}({\bf t})=\sum_{m\geq 1} \left((m+2)t_{m+2}-(m+1)u\, t_{m+1}\right) \frac{\p}{\p t_m}
 \tilde{T}_{k-1}({\bf t})\\
 =\left(\widehat{L}_{-2}-u\,\widehat{L}_{-1}-\frac{ t_1^2}{2}\right)\cdot \tilde{T}_{k-1}({\bf t})
\ee
which defines a tau-function of the KP hierarchy
\be
\tau_{\tilde{H}}({\bf t})=\left.Z({\bf T}; u)\right|_{T_k=\tilde{T}_{k}({\bf t})}.
\ee
\begin{remark}
This change of variables is used by Kazarian in his recent work \cite{Kaza2}. He also proves that ${\tau}_{\tilde{H}}$, obtained by this change of variables from $e^{\tilde F}$, is a tau-function of the KP hierarchy.
\end{remark}

The KS operators for the tau-function $\tilde{\tau}_{H}$ can be obtained by conjugation of operators ${\mathtt V}_H$ and ${\mathtt X}_H$
\begin{equation}\label{almostcan}
\begin{split}
\tilde{\mathtt Q}_{\tilde{H}}&:= e^{-u l_1}\left( {\mathtt V}_H-\frac{u^2}{2}\right) e^{u l_1}=z+\left(\frac{1}{z}-u\right)\frac{\p}{\p  z} - \frac{1}{2z^2},\\
{\mathtt X}_{\tilde{H}}&:= e^{-u l_1}\, {\mathtt X}_H \, e^{u l_1}= -\frac{z}{u}-\frac{1}{u^2}\log(1-uz).
\end{split}
\end{equation}
These operators, up to the identification $w=u$, coincide with the operators for the simplest deformation of Kontsevich model, defined at $\hbar=1$ by (\ref{Simpbq}). According to Lemma \ref{deforlemma}, these constrains uniquely define a point of the Sato Grassmannian. Let us now restore the dependence on $\hbar$ in a way, consistent with deformed GKM:
\be
{\tau}_{\tilde{H}}({\bf t};u)=\exp\left({\sum_{g=0}^\infty \sum_{n=1}^\infty \hbar^{2g-2+n} {\mathcal F}_{g,n}^H(\tilde{\bf T}({\bf t}),u)}\right),
\ee
Now we have
\begin{theorem}\label{Th_Hodge}
Tau-function for linear Hodge integrals is given by the deformed GKM with potential $U=\frac{1}{\hbar}\frac{z}{1-wz}$
\be
{\tau}_{\tilde{H}}({\bf t};w)=\tau_U({\bf t}).
\ee
\end{theorem}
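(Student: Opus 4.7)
The plan is to invoke the uniqueness result of Lemma \ref{deforlemma}. The paragraph preceding the theorem statement already observes that, at $\hbar = 1$ and upon identifying $w = u$, the KS operators $(\tilde{\mathtt Q}_{\tilde H}, {\mathtt X}_{\tilde H})$ in (\ref{almostcan}) coincide with the KS operators $(\tilde{\mathtt Q}_U, {\mathtt X}_U)$ in (\ref{Simpbq}) of the deformed GKM with potential $U = \frac{1}{\hbar}\frac{z}{1-wz}$. By Lemma \ref{deforlemma}, this pair of operators uniquely specifies a point ${\mathcal W} \in \Gr_+^{(0)}$, and hence determines the associated tau-function up to a ${\bf t}$-independent multiplicative constant.

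To make this rigorous, I first check that $\tau_{\tilde H}$ really is the tau-function of the Sato Grassmannian point stabilized by $(\tilde{\mathtt Q}_{\tilde H}, {\mathtt X}_{\tilde H})$. The operators $({\mathtt V}_H, {\mathtt X}_H)$ are the KS operators for $\tau_H$ by the construction recalled from \cite{A1, Wang1, Wang2}, and conjugation by $e^{u {\mathtt l}_1}$ on the Grassmannian side corresponds via (\ref{1Vir}) to the action of $e^{-u \widehat L_1}$ on the tau-function, which is precisely the definition of $\tau_{\tilde H}$. On the other side, $\tau_U$ is, by Lemma \ref{lemmacanon} and the discussion of Section \ref{S_DifK1}, the tau-function of the Sato Grassmannian point whose KS operators are given by (\ref{Simpbq}). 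A short dimensional check using the constraint (\ref{dimc}) and the $\hbar^{2g-2+n}$ prefactor in the topological expansion of ${\mathcal F}_{g,n}^H$ shows that the identification extends consistently to general $\hbar$: the $\hbar$-weights built into the operators in (\ref{Simpbq}) match the natural weights of the Hodge generating function after the variable change (\ref{THo}).

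The main remaining obstacle is to verify that the multiplicative constant relating the two tau-functions equals $1$. The cleanest approach is to compare principal specializations. By Lemma \ref{lemma_unique}, the wave function $\Psi_U$ is the unique monic solution of the quantum spectral curve equation ${\mathtt P}_U \cdot \Psi_U = 0$ in ${\mathbb C}((z^{-1}))[[w]]$, normalized so that $\Psi_U = 1 + O(z^{-1})$. The principal specialization of $\tau_{\tilde H}$ satisfies the same equation (after the conjugations that transport the KS operators between the two sides) with the same normalization, so the two wave functions must coincide. This forces the constant to be $1$. Alternatively, one can evaluate both sides at ${\bf t} = {\bf 0}$: the right-hand side satisfies $\tau_U({\bf 0}) = 1$ by the normalization of the deformed GKM, while on the left the connected Hodge correlators with no $\psi$-insertions contribute no constant term to the exponent, giving $\tau_{\tilde H}({\bf 0}; w) = 1$ as well. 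This completes the proposed proof.
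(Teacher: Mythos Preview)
Your proposal is correct and follows essentially the same route as the paper: identify the pair $(\tilde{\mathtt Q}_{\tilde H},{\mathtt X}_{\tilde H})$ obtained by conjugating the Hodge KS operators with $(\tilde{\mathtt Q}_U,{\mathtt X}_U)$ from (\ref{Simpbq}), and then invoke the uniqueness statement of Lemma~\ref{deforlemma}. The paper presents this argument in the paragraph immediately preceding the theorem rather than in a separate proof environment, and simply restores $\hbar$ by dimensional weight exactly as you do.

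The one place where you are more careful than the paper is the multiplicative constant. The paper does not discuss it explicitly here because of the blanket convention $\tau({\bf 0})=1$ declared in Section~\ref{KSsec}; your second argument (evaluating both sides at ${\bf t}={\bf 0}$ and noting that linear Hodge integrals have no $n=0$ contributions by the dimension count) is the right way to make this explicit. Your first argument via the wave function is also fine but slightly redundant once the ${\bf t}={\bf 0}$ check is in place.
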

Hence, the results of Section \ref{Secdef}, in particular, of Section \ref{DefK}, are immediately applicable to ${\tau}_{\tilde{H}}$.

%%%%%%%%%%%%%%%%%%%%%%%%%%%%%%%%%%%%%%%%%%%%

\subsection{Cubic Hodge integrals}

Let us consider the case of cubic Hodge integrals with an additional Calabi-Yau condition introduced in Section \ref{S_int_H}. Let
\be
D_{q,p}=-\frac{(1+\sqrt{p+q}z)(1+q z/\sqrt{p+q})}{z} \frac{\p}{\p z}.
\ee
Then the functions $\phi_k(z):=D_{q,p}^k\cdot \frac{1}{z}$
define a change of variables from $T_k^{q,p}$ to $t_k$ if we associate $k t_k$ with $z^{-k}$ and $\phi_k(z)$ with $T_k^{q,p}$. This change of variables is described by (\ref{changeof}).

For  $p\neq 0$ and $q\neq 0$ let us also consider
\be\label{fpq}
\frac{f(z)^2}{2}=\frac{p+q}{pq}\log\left(1+\frac{qz}{\sqrt{p+q}}\right)-\frac{1}{p}\log(1+\sqrt{p+q}z).
\ee
For $q=0$ it degenerates to
\be
\frac{f(z)^2}{2}=\frac{z}{\sqrt{p}}-\frac{1}{p}\log(1+\sqrt{p}z)
\ee
and for $p=0$ it degenerates to
\be
\frac{f(z)^2}{2}=\frac{1}{q}\log(1+\sqrt{q}z)-\frac{1}{\sqrt{q}}\frac{z}{1+\sqrt{q} z}.
\ee
For any $p+q\neq 0$ it determines an element of the Virasoro group $e^{\sum  a_k \widehat{L}_k }$ by (\ref{ffun}).

Let us also consider 
\be
v_k:=[z^k]\int_{0}^z (f(\eta)-y(\eta)) d x(\eta), 
\ee
where $x(z)=f(z)^2/2$ and
\be\label{ygr}
y(z) =\int^z \frac{d x(\eta)}{\eta},
\ee
which for $p\neq 0$ and $q\neq 0$ is given by
\be
y(z)  =\frac {\sqrt {p+q}}{p}   \left( \log  \left( 1+\sqrt {p+q}z \right)  -\log  \left(1+\frac{qz}{ \sqrt {p+q}} \right) \right).
\ee
For $q=0$ it reduces to
\be
y(z)=\frac {1}{\sqrt{p}} \log  \left( 1+\sqrt {p }z \right)
\ee
and for $p=0$ it degenerates to
\be
y(z)=\frac{z}{1+\sqrt{q}z}.
\ee

One of the main results of a companion paper \cite{H3_1} is the following 
\begin{theorem*}
Generating function of the triple Hodge integrals, satisfying the Calabi-Yau condition, in the variables (\ref{changeof}) is a tau-function of the KP hierarchy,
\be
\tau_{q,p}({\bf t})=Z_{q,p}({{\bf T}^{q,p}(\bf t)}).
\ee
It is related to the Kontsevich-Witten tau-function by an element of the Heisenberg-Virasoro group 
\be\label{RBY}
\tau_{q,p}({\bf t})=e^{\hbar^{-1} \sum_{k=4} v_k \frac{\p}{\p t_{k}}}e^{\sum_{k\in\z_{>0}} a_k \widehat{L}_k} \cdot  \tau_{KW}({\bf t}).
\ee
\end{theorem*}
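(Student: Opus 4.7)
The plan is to derive (\ref{RBY}) by identifying the Givental group action that produces the cubic Hodge generating function with a Heisenberg-Virasoro group element acting on $\tau_{KW}$. Once (\ref{RBY}) is in hand, KP integrability of $\tau_{q,p}({\bf t})=Z_{q,p}({\bf T}^{q,p}({\bf t}))$ is automatic: both $e^{\sum a_k\widehat{L}_k}$ and $e^{\hbar^{-1}\sum v_k\partial/\partial t_k}$ belong to the Heisenberg-Virasoro subgroup of $\GL(\infty)$ described in Section~\ref{Virsec}, hence they preserve the space of KP tau-functions, and $\tau_{KW}$ itself is such a tau-function.

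The first step is to apply Mumford's GRR computation to the triple Chern polynomial $\Lambda_g(-q)\Lambda_g(-p)\Lambda_g(pq/(p+q))$, rewriting it as an exponential of tautological classes weighted by Bernoulli polynomials of $u_1,u_2,u_3$. Imposing the Calabi-Yau relation $u_1^{-1}+u_2^{-1}+u_3^{-1}=0$ collapses this otherwise generic structure into a rank-one Givental $\widehat R$-action on the Kontsevich-Witten potential, yielding $Z_{q,p}({\bf T})=\widehat R_{q,p}\cdot Z({\bf T};0)$. This reduction is the geometric content of the companion paper~\cite{H3_1}: among rank-one Givental operators there is precisely a two-parameter family admitting a Heisenberg-Virasoro representative, and the triple Hodge case with (\ref{speccond}) lies inside it.

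The second step is to convert $\widehat R_{q,p}$ together with the change of variables (\ref{changeof}) into a single Heisenberg-Virasoro element acting on $\tau_{KW}({\bf t})$. The vector field $D_{q,p}=-(1+\sqrt{p+q}\,z)(1+qz/\sqrt{p+q})z^{-1}\partial/\partial z$ underlying (\ref{changeof}) should be recognized as the pullback of $-z^{-1}\partial/\partial z$, the KW raising operator in the spectral variable, along the change of local parameter $x(z)=f(z)^2/2$ with $f$ given by (\ref{fpq}). On the Sato Grassmannian side this reparametrization is realized by the operator $\exp(\sum_{k>0} a_k{\mathtt l}_k)$ with the $a_k$ fixed by $f(z)=\exp(-\sum a_k z^{k+1}\partial/\partial z)\cdot z$; the boson-fermion correspondence of Section~\ref{Virsec} then promotes it to $\exp(\sum a_k\widehat L_k)$ on $\tau_{KW}$. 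The residual Heisenberg piece of $\widehat R_{q,p}$ contributes a translation of the times whose coefficients are read off as $v_k=[z^k]\int_0^z(f(\eta)-y(\eta))\,dx(\eta)$, with $y(z)$ the antiderivative (\ref{ygr}) characterizing the shift direction.

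The main obstacle is Step~2: verifying that, after the change of local parameter, the Givental action $\widehat R_{q,p}$ reduces strictly to a Virasoro conjugation plus a Heisenberg translation, with no surviving higher $W$-algebra contribution. This is precisely where the Calabi-Yau condition (\ref{specialcond}) is essential — it forces the Bernoulli-polynomial kernel of $\widehat R$ to be at most quadratic in the relevant Kodaira-Spencer variable, thereby preserving the Heisenberg-Virasoro subalgebra. Matching the particular representatives $a_k$ and $v_k$, among the infinite-dimensional family of equivalent Heisenberg-Virasoro presentations noted in~\cite{A1}, forms the computational core of the argument; once this identification is rigorously established, (\ref{RBY}) and the KP property of $\tau_{q,p}$ follow immediately.
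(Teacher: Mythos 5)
This theorem is not proved in the present paper at all: it is quoted verbatim as ``one of the main results of a companion paper \cite{H3_1}'' and used as an input to the proof of Theorem \ref{H3U}. There is therefore no in-paper argument to compare your proposal against. That said, your outline does match the strategy the paper attributes to \cite{H3_1} (see Section \ref{S_int_H}): one identifies the rank-one Givental $\widehat{R}$-operator produced by Mumford's formula for the triple Hodge classes with an element of the Heisenberg--Virasoro subgroup of $\GL(\infty)$, with the Calabi--Yau condition (\ref{specialcond}) being exactly what singles out the two-parameter family of Givental operators for which such an identification is possible; the Virasoro part is encoded in the change of local parameter $f$ of (\ref{fpq}) and the Heisenberg part in the shift coefficients $v_k$. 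Your write-up remains a plan rather than a proof: the step you yourself flag as the ``computational core'' --- verifying that after the reparametrization no higher $W$-algebra contribution survives and pinning down the specific representatives $a_k$, $v_k$ among the infinite family of equivalent Heisenberg--Virasoro presentations --- is precisely the content of the companion paper and is not carried out here, so nothing in your proposal can be certified beyond consistency with the announced strategy.
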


\begin{remark}
Note that if $p\neq 0$ and $q\neq 0$, then $\tau_{q,p}({\bf 0})\neq 1$, because from the results of Faber and Pandharipande \cite{FP} one has
\be
\int_{\overline{\cal M}_{g}}  \Lambda_g (-q) \Lambda_g (-p) \Lambda_g (\frac{pq}{p+q})=\frac{1}{2}\left(\frac{q^2p^2}{q+p}\right)^{g-1}
\frac{B_{2g}}{2g}\frac{B_{2g-2}}{2g-2}\frac{1}{(2g-2)!}
\ee
for $g\geq 2$.
\end{remark}

Operators (\ref{Dopp}) and (\ref{Dtild}) are the specifications of $D_{q,p}$,
\be
D=D_{u^2,0},\,\,\,\,\,\,\,\,\,\,\,{D}_{-u}=D_{0,u^2}.
\ee
These operators are associated with the reduction to linear Hodge integrals at $p=0$ and $q=0$ correspondingly.

For the deformed Kontsevich model, considered in Section \ref{DefK} let us choose a parametrization 
\be
U=\frac{1}{\hbar}\frac{z}{(1+\sqrt{p+q}z)(1+q z/\sqrt{p+q})}
\ee
or, equivalentely
\be
V=\frac{(p+q)(qz\sqrt{p+q}+p+q)\ln(1+\frac{qz}{\sqrt{p+q}})-q^2(1+\sqrt{p+q}z)\ln(1+\sqrt{p+q}z)}{pq^2\sqrt{p+q}}-\frac{z}{q}.
\ee

Now we are ready to prove Theorem \ref{H3U}.
\begin{proof}[{\bf Proof of Theorem \ref{H3U}}]
Theorem follows from the comparison of equation (\ref{RBY}) with Theorem \ref{taufr}.
It is easy to see that the Virasoro parts of the Heisenberg-Virasoro group elements coincide. Hence it remains to compensate the difference of the translations.
The difference is described by the function
\be
\int_{0}^z (f(\eta)-y(\eta)d x(\eta)- \int_{0}^z (f(\eta)-\eta) dx(\eta)=\int_{0}^z (\eta-y(\eta)d x(\eta)
\ee
This completes the proof.
\end{proof}
KS operators for the tau-function $\tau_{q,p}({\bf t})$ are
\be
{\mathtt X}_{q,p}={\mathtt X}_{U}\\
\tilde{\mathtt Q}_{q,p}=\tilde{\mathtt Q}_{U}-z+y(z)=y(z)+\frac{1}{U}\frac{\p}{\p z}-\frac{U'}{2U^2}.
\ee
For $q=0$ these operators are related to the KS operator (\ref{almostcan}) by 
\be
\tilde{\mathtt Q}_{0,u^2}+u\, {\mathtt X}_{0,u^2}=\tilde{\mathtt Q}_{\tilde{H}}\Big|_{u\mapsto-u}.
\ee
We have a modification of Proposition \ref{lem_defvir}, namely the tau-function $\tau_{q,p}$ satisfies the constraints
\begin{equation}
\begin{split}
\widehat{J}_m^U\cdot \tau_{q,p} &=0 \,\,\,\,\,\,\,\,\,\,\,\, m\geq 1,\\
\widehat{\tilde{L}}_m^U\cdot \tau_{q,p}&=0 \,\,\,\,\,\,\,\,\,\,\,\, m\geq -1,
\end{split}
\end{equation}
where the Virasoro operators $\widehat{\tilde{L}}_m^U$ are obtained from $\widehat{{L}}_m^U$ by the shift of variables $t_k \mapsto t_k + \hbar^{-1}v_k$.

%%%%%%%%%%%%%%%%%%%%%%%%%%%%%%%%%%%%%%%%%%%%

\subsection{General conjecture}

\begin{conjecture}
Deformed GKM with potential (\ref{nmondef}), after a shift of variables,  describes interesting enumerative geometry invariants in the $r$-spin case.
\end{conjecture}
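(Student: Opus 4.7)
The plan is to mimic the strategy of Section~\ref{Sec:Hodge}: identify the target enumerative generating function, express it as a Givental-type deformation of the $r$-spin Witten tau-function, and then match that deformation to the Heisenberg-Virasoro conjugation appearing in Theorem~\ref{taufr}. For the first step, the natural candidate is the generating function of (connected) $r$-spin Hurwitz numbers with $r=n+1$, or equivalently a Hodge-type integral over the moduli of $r$-spin curves twisted by a suitable Chiodo/Hodge class; the remark after Proposition~\ref{Theor1} already pinpoints this by matching the classical spectral curve $e^{\sum_{k=1}^n w^k y^k/k}(1-wy)=e^{-w^{n+1}x}$ with the known spectral curve of $r$-spin Hurwitz numbers. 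I would first fix the target by verifying at low genus and low degree that the deformed GKM correlators $\tau_U$ reproduce the expected $r$-spin Hurwitz numbers after the change of variables generated by $D_{n,w}=-(1-wz)z^{-n}\frac{\p}{\p z}$, the obvious higher-$n$ analogue of (\ref{Dopp}) with $U_0=z^n/\hbar$.

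Next I would pass to the Givental side. By the Faber-Shadrin-Zvonkine proof of the $r$-spin Witten conjecture, $\tau_{U_0}$ with $U_0=z^n/\hbar$ is the generating function of $(n+1)$-spin intersection numbers. Inserting the Hodge/Chiodo class in the integrand corresponds, at the level of the cohomological field theory, to acting by an explicit upper-triangular Givental group element $\widehat R_w$, in direct analogy with (\ref{Givop}). The key geometric input, already present in the companion paper \cite{H3_1} for $n=1$, is the identification of this Givental element with the Heisenberg-Virasoro element $\widehat G$ of (\ref{defconjop}) associated to the change of local parameter $f(z)$ arising as the semi-classical limit of $\tilde{\mathtt Q}_U$. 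I would therefore compute $f$ explicitly for the deformed monomial potential (using Lemma~\ref{deforlemma} and equation~(\ref{ffun1})), extract the coefficients $a_k$ and $u_k$ of (\ref{defconjop}), and show that conjugation of the Witten $r$-spin $R$-matrix by the Laplace transform sending $\psi$-classes to $\hbar\p/\p T_k$ produces exactly this $\widehat G$.

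The final step is an application of Theorem~\ref{bijecl} and Lemma~\ref{uniquecan}: once the KS operators $(\tilde{\mathtt Q}_U,{\mathtt X}_U)$ of the deformed GKM are matched, after the shift $e^{\sum\tilde v_k\p/\p t_k}$, with the operators obtained by conjugating $({\mathtt Q}_{U_0},{\mathtt X}_{U_0})$ by $\widehat G$, the two tau-functions must agree up to a $\bf t$-independent constant, which is pinned down by the dilaton equation (\ref{dile}) together with the Heisenberg-Virasoro constraints (\ref{constr_d}). In parallel, the quantum spectral curve (\ref{pochqq}) should be matched to the known cut-and-join operator for $r$-spin Hurwitz numbers, giving an independent cross-check and providing the $W^{(r)}$-algebra structure on the Hurwitz side.

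The hard part is twofold. First, one must prove that the candidate enumerative generating function becomes a KP tau-function after the chosen change of variables; this is the $r$-spin extension of Kazarian's theorem and, outside the linear Hodge case ($n=1$), it is not yet available in the literature. Second, one must show that the Givental $R$-operator for the Chiodo class on $r$-spin moduli lies in the Heisenberg-Virasoro subgroup of $\GL(\infty)$ rather than in the strictly larger Givental group; in the $n=1$ and cubic Calabi-Yau Hodge cases this collapse was ultimately a consequence of the one-dimensional state space, and for general $n$ it will require either a direct operator identity (along the lines of \cite{Wang1}) or an a posteriori argument exploiting the uniqueness statement in Lemma~\ref{deforlemma}. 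A reasonable first tractable case would be $n=2$, where the classical spectral curve specializes to the three-spin Hurwitz curve of \cite{MSS} and the relevant Chiodo class has an explicit formula, making both obstacles concrete.
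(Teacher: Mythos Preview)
The statement you are trying to prove is labeled as a \emph{Conjecture} in the paper, and the paper does not offer a proof of it. After stating the conjecture, the author only adds a paragraph of expectations: that the change of variables should be governed by $D_U=\frac{1}{U(z)}\frac{\p}{\p z}$, that an ELSV-type relation to Hurwitz-type invariants would be desirable, and that the open analogue should arise from a logarithmic insertion. These are heuristics, not arguments; the problem is explicitly deferred (``These topics will be considered elsewhere'').

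Your proposal is therefore not to be compared against an existing proof but assessed as a research plan. As such it is sensible and follows the template the paper establishes for $n=1$: match spectral curves, identify the enumerative target, realise it as a Givental deformation of the $r$-spin Witten tau-function, and then invoke Theorem~\ref{taufr} plus the uniqueness in Lemma~\ref{deforlemma}. You have correctly isolated the two genuine obstacles, neither of which the paper resolves: (i) an $r$-spin analogue of Kazarian's theorem showing the candidate generating function is a KP tau-function in the right variables, and (ii) the collapse of the relevant Givental $R$-operator to the Heisenberg--Virasoro subgroup for $n>1$. The second point is particularly delicate: for $r>2$ the state space of the $r$-spin CohFT is $(r{-}1)$-dimensional, so the mechanism that forces the Givental element into Heisenberg--Virasoro in the rank-one cases of \cite{H3_1} and \cite{Wang1} does not apply verbatim, and one should expect to need either a reduction argument or a different characterisation of the target tau-function. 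Until those two ingredients are supplied, your outline remains a plausible strategy rather than a proof.
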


We expect that change of variables from from $T_k$ to $t_k$ is always described by the operator $D_U=\frac{1}{U(z)}\frac{\p}{\p z}=\frac{\p}{\p x(z)}$. It would be interesting to find a ELSV-type relation between the GKM's with deformed potentials and generating functions of Hurwitz-type enumerative invariants. We expect that open versions of the corresponding Hodge integrals can be described by the insertion of the logarithmic term into the deformed GKM. Moreover, the corresponding points of the Sato Grassmannian should be completely fixed by the KS operators
\begin{equation}
\begin{split}
{\mathtt X}&=x(z)\\
\tilde{\mathtt Q}&=y(z)+\frac{1}{\hbar}\left(\frac{\p}{\p x(z)}-\frac{1}{2}\frac{\p}{\p x(z)}\log\left(\frac{\p x(z)}{\p z}\right)\right).
\end{split}
\end{equation}
These topics will be considered elsewhere.

%\section*{Data availability statement}
%Data sharing not applicable to this article as no datasets were generated or analysed during the current study.

\end{document}